\documentclass[a4paper,11pt]{article} 
  
\usepackage[english]{babel}
\usepackage[a4paper]{geometry}
\usepackage{amsmath}
\usepackage{amsthm}
\usepackage{amssymb}

\def\bR{\mathbb{R}}
\def\bN{\mathbb{N}}
\def\NN{\mathbb{N}}
\def\bZ{\mathbb{Z}}
\def\cU{\mathcal{U}}
\def\cW{\mathcal{W}}
\def\cV{\mathcal{V}}
\def\cF{\mathcal{F}}
\def\cG{\mathcal{G}}
\def\cL{\mathcal{L}}
\def\cN{\mathcal{N}}
\def\cE{\mathcal{E}}
\def\cK{\mathcal{K}}
\def\cH{\mathcal{H}}

\def\ph{\varphi}

\def\wtph{\widetilde{\varphi}}
\def\wt{\widetilde}


\def\indic{\hbox{\raise-2pt \hbox{\indbf 1}}}

\let\==\equiv

\let\0=\noindent

\def\*{{\hfill\break\null\hfill\break}}

\def\tende#1{\,\vtop{\ialign{##\crcr\rightarrowfill\crcr
             \noalign{\kern-1pt\nointerlineskip}
             \hskip3.pt${\scriptstyle #1}$\hskip3.pt\crcr}}\,}
\def\otto{\,{\kern-1.truept\leftarrow\kern-5.truept\to\kern-1.truept}\,}

\def\tr{{\rm tr}}

\newtheorem{theorem}{Theorem}[section]  
\newtheorem{prop}[theorem]{Proposition}
\newtheorem{lemma}[theorem]{Lemma}

\numberwithin{equation}{section}

%
%
%


\newcommand{\lb}[0]{\left}
\newcommand{\rb}[0]{\right }

\newcommand{\lt}[1]{\ensuremath{ \left \| #1 \right  \|_{2 }  }}
\newcommand{\hnn}[2]{\ensuremath{ \left \| #1 \right  \|_{H^{#2} }  }}
\newcommand{\product}[2]{\ensuremath{\left\langle #1, #2 \right \rangle}}
\newcommand{\re}[0]{\ensuremath{\operatorname{Re}}}

\newcommand{\delt}[0]{\ensuremath{\partial_t}}
\newcommand{\np}[0]{\ensuremath{\mathcal{N}}}

\newcommand{\ay}[0]{\ensuremath{a_y}}
\newcommand{\asy}[0]{\ensuremath{a_y^*}}
\newcommand{\az}[0]{\ensuremath{a_z}}
\newcommand{\asz}[0]{\ensuremath{a_z^*}}
\newcommand{\bx}[0]{\ensuremath{b_x}}
\newcommand{\bsx}[0]{\ensuremath{b_x^*}}
\newcommand{\by}[0]{\ensuremath{b_y}}
\newcommand{\bsy}[0]{\ensuremath{b_y^*}}

\newcommand{\bsz}[0]{\ensuremath{b_z^*}}
\newcommand{\HN}[0]{\ensuremath{\mathcal{H}_N}}
\newcommand{\pt}[0]{\ensuremath{\varphi_t}}
\newcommand{\et}[0]{\ensuremath{\eta_t}}
\newcommand{\Gnt}[0]{\ensuremath{\mathcal{G}_{N,t}}}

\begin{document}

\title{Gross-Pitaevskii Dynamics for Bose-Einstein Condensates}

\author{Christian Brennecke and Benjamin Schlein \\
\\
Institute of Mathematics, University of Zurich\\
Winterthurerstrasse 190, 8057 Zurich, Switzerland}

\maketitle

\begin{abstract}
We study the time-evolution of initially trapped Bose-Einstein condensates in the Gross-Pitaevskii regime. We show that condensation is preserved by the many-body evolution and that the dynamics of the condensate wave function can be described by the time-dependent Gross-Pitaevskii equation. With respect to previous works, we provide optimal bounds on the rate of condensation (i.e. on the number of excitations of the Bose-Einstein condensate). To reach this goal, we combine the method of \cite{LNS}, where fluctuations around the Hartree dynamics for $N$-particle initial data in the mean-field regime have been analyzed, with ideas from \cite{BDS}, where the evolution of Fock-space initial data in the Gross-Pitaevskii regime has been considered.
\end{abstract}

\section{Introduction and Main Results}
\label{sec:intro}

Trapped gases of $N$ bosons in the Gross-Pitaevskii regime can be described by the Hamilton operator 
\begin{equation}\label{eq:Htrap} H^\text{trap}_N = \sum_{j=1}^N \left[ -\Delta_{x_j} + V_\text{ext} (x_j) \right] + \sum_{i<j}^N N^2 V(N(x_i -x_j)) 
\end{equation}
acting on the Hilbert space $L^2_s (\bR^{3N})$, the subspace of $L^2 (\bR^{3N})$ consisting of functions that are symmetric with respect to permutations of the $N$ particles. Here, $V_\text{ext}$ is a confining external potential. As for the interaction potential $V$, we assume it to be non-negative, spherically symmetric and compactly supported (but our results could be easily extended to potentials decaying sufficiently fast at infinity). 

Characteristically for the Gross-Pitaevskii regime, the interaction $N^2 V(N.)$ appearing in (\ref{eq:ham0}) scales with $N$ so that its scattering length is of the order $N^{-1}$. 
The scattering length $a_0$ of the unscaled potential $V$ is defined by the condition that the solution of the zero-energy scattering equation
\begin{equation}\label{eq:0en} \left[ -\Delta + \frac{1}{2} V (x)  \right] f (x) = 0 ,\end{equation}
with the boundary condition $f (x)  \to 1$ for $|x| \to \infty$, has the form  
\begin{equation}\label{eq:fdef} f(x) = 1 - \frac{a_0}{|x|} \end{equation}
outside the support of $V$. Equivalently, $a_0$ is determined by  
\begin{equation}\label{eq:scat-len}
8\pi a_0 = \int V(x) f(x) dx 
\end{equation}
By scaling, (\ref{eq:0en}) also implies that  
\[  \left[ -\Delta + \frac{N^2}{2} V (Nx) \right] f (Nx) = 0 \]
with $f(Nx) \to 1$ for $|x| \to \infty$. In particular, this means that the rescaled potential $N^2 V(N.)$ in (\ref{eq:ham0}) has scattering length $a_0/N$. 

It has been shown in \cite{LSY} (and more recently in \cite{NRS}) that the ground state energy $E_N$ of the Hamilton operator (\ref{eq:Htrap}) is such that 
\begin{equation}\label{eq:GPen} \lim_{N \to \infty} \frac{E_N}{N} = \min_{\substack{\ph \in L^2 (\bR^3): \\ \| \ph \|_2 = 1}} 
\cE_\text{GP}^\text{trap} (\ph) \end{equation}
with the Gross-Pitaevskii energy functional 
\begin{equation}\label{eq:GPen-functr} \cE^\text{trap}_\text{GP} (\ph) = \int \left[ |\nabla \ph (x)|^2 + V_\text{ext} (x)|\ph (x)|^2 + 4\pi a_0 |\ph (x)|^4 \right]  dx \end{equation}

Furthermore, Bose-Einstein condensation in the ground state of (\ref{eq:Htrap}) has been established in \cite{LS}. More precisely, if $\gamma^{(1)}_N = \tr_{2,\dots, N} |\psi_N \rangle \langle \psi_N|$ denotes the one-particle reduced density associated with the ground state of (\ref{eq:Htrap}), it has been shown in \cite{LS} that  
\begin{equation}\label{eq:cond0} \gamma_N^{(1)} \to |\phi_\text{GP} \rangle \langle \phi_\text{GP}| \end{equation}
where $\phi_\text{GP} \in L^2 (\bR^3)$ is the unique non-negative minimizer of (\ref{eq:GPen-functr}), among all $\ph \in L^2 (\bR^3)$ with $\| \ph \|_2 = 1$. The interpretation of (\ref{eq:cond0}) is straightforward: in the ground state of (\ref{eq:Htrap}), all particles, up to a fraction vanishing in the limit of large $N$, are in the same one-particle state $\phi_\text{GP}$. 

In typical experiments, one observes the time-evolution of trapped Bose gases prepared in (or close to) their ground state, resulting from a change of the external fields. As an example, consider the situation in which the trapping potential is switched off at time $t=0$. In this case, the dynamics is described, at the microscopic level, by the many-body Schr\"odinger equation 
\begin{equation}\label{eq:schr0} i\partial_t \psi_{N,t} = H_N \psi_{N,t}  \end{equation}
with the translation invariant Hamilton operator 
\begin{equation}\label{eq:ham0}
H_N = \sum_{j=1}^N -\Delta_{x_j} + \sum_{i<j}^N N^2 V(N(x_i -x_j)) \end{equation}
and with the ground state of (\ref{eq:Htrap}) as initial data. The next theorem shows how the solution of (\ref{eq:schr0}) can be described in terms of the time-dependent Gross-Pitaevskii equation. 
\begin{theorem}\label{thm:main}
Let $V_\text{ext} : \bR^3 \to \bR$ be locally bounded with $V_\text{ext} (x) \to \infty$ as $|x| \to \infty$. Let $V \in L^3 (\bR^3)$ be non-negative, compactly supported and spherically symmetric. Let $\psi_N$ be a sequence in $L^2_s (\bR^{3N})$, with one-particle reduced density $\gamma^{(1)}_N = \tr_{2,\dots , N} |\psi_N \rangle \langle \psi_N|$. We assume that, as $N \to \infty$,    
\begin{equation}\label{eq:assN}
\begin{split} 
a_N &= 1 - \langle \phi_\text{GP} , \gamma^{(1)}_N \phi_\text{GP} \rangle  \to 0  \qquad \text{and } \\
b_N &= \Big| N^{-1} \langle \psi_N, H^\text{trap}_N \psi_N \rangle - \cE^\text{trap}_\text{GP} (\phi_\text{GP}) \Big| \to 0  \end{split} 
\end{equation}
where $\phi_\text{GP} \in H^4 (\bR^3)$ is the unique non-negative minimizer of the Gross-Pitaevskii energy functional (\ref{eq:GPen-functr}). Let $\psi_{N,t} = e^{-i H_N t} \psi_N$ be the solution of (\ref{eq:schr0}) with initial data $\psi_N$ and let $\gamma_{N,t}^{(1)}$ be the one-particle reduced density associated with $\psi_{N,t}$. Then there are constants $C,c > 0$ such that 
\begin{equation}\label{eq:main} 
1 - \langle \ph_t , \gamma^{(1)}_{N,t} \ph_t \rangle \leq C \left[ a_N + b_N + N^{-1} \right] \exp \, (c \exp \, (c|t|)) 
\end{equation}
for all $t \in \bR$. Here $\ph_t$ is the solution of the time-dependent Gross-Pitaevskii equation  
\begin{equation}\label{eq:GPtd} i\partial_t \ph_t = -\Delta \ph_t + 8\pi a_0 |\ph_t|^2 \ph_t \end{equation}
with the initial data $\ph_{t=0} = \phi_\text{GP}$.
\end{theorem}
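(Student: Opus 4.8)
\emph{Strategy of the proof.} The plan is to combine the excitation--map formalism of \cite{LNS} with the Bogoliubov--transformation renormalization of \cite{BDS}. First I would factor out the condensate: using the Lewin--Nam--Serfaty map $U_{N,t}$, which identifies $L^2_s(\bR^{3N})$ with the truncated Fock space $\cF_{\perp \pt}^{\leq N}$ built over the orthogonal complement of $\pt$, set $\xi_{N,t} = U_{N,t}\psi_{N,t}$, so that
\begin{equation*}
1 - \langle \pt, \gamma_{N,t}^{(1)}\pt\rangle = \frac 1N \langle \xi_{N,t}, \cN \xi_{N,t}\rangle ,
\end{equation*}
and the theorem reduces to an estimate on the expected number of excitations $\langle\xi_{N,t},\cN\xi_{N,t}\rangle$. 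The vector $\xi_{N,t}$ solves $i\partial_t \xi_{N,t} = \cL_{N,t}\xi_{N,t}$ with the fluctuation generator $\cL_{N,t} = (i\partial_t U_{N,t})U_{N,t}^* + U_{N,t}H_N U_{N,t}^*$. Because the interaction is in the Gross--Pitaevskii regime, $\cL_{N,t}$ still carries the short--scale correlation structure associated with the scattering equation \eqref{eq:0en}; to extract it I would conjugate with a generalized Bogoliubov transformation $T_t = e^{B(\et)}$, where $\et$ is the (suitably cut--off) solution of the Neumann problem for $N^2 V(N\,\cdot)$, and work with $\wt\xi_{N,t} = T_t^* \xi_{N,t}$, which solves $i\partial_t \wt\xi_{N,t} = \Gnt \wt\xi_{N,t}$ with the renormalized generator $\Gnt = (i\partial_t T_t^*)T_t + T_t^* \cL_{N,t} T_t$. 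Since $\|\et\|_2$ is bounded uniformly in $N$, $T_t$ changes the moments of $\cN$ only by controlled factors, so it is enough to control $\langle \wt\xi_{N,t}, (\cN+1)\wt\xi_{N,t}\rangle$.

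The core is a coupled Gronwall argument for the number of excitations and the renormalized energy $\cH_N = \cK + \cV_N$, the sum of the kinetic and potential parts of $\Gnt$. For the energy one computes
\begin{equation*}
\frac{d}{dt}\langle\wt\xi_{N,t},(\cH_N+1)\wt\xi_{N,t}\rangle = \big\langle\wt\xi_{N,t},\, \big(\partial_t \cH_N + i[\,\Gnt,\cH_N\,]\big)\wt\xi_{N,t}\big\rangle
\end{equation*}
and, exploiting the structure of $\Gnt$ (where the leading cubic and quartic contributions cancel by the choice of $\et$), one bounds the right--hand side by $P(\|\pt\|_{H^4})\,\langle\wt\xi_{N,t},(\cH_N+1)\wt\xi_{N,t}\rangle$ for a polynomial $P$. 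For the number of excitations only the non--number--preserving parts of $\Gnt$ --- the off--diagonal quadratic terms and the renormalized cubic terms --- contribute to $\frac{d}{dt}\langle\wt\xi_{N,t},(\cN+1)\wt\xi_{N,t}\rangle = \langle\wt\xi_{N,t}, i[\,\Gnt,\cN\,]\wt\xi_{N,t}\rangle$, and the key bound to prove is
\begin{equation*}
\big| \langle\wt\xi_{N,t}, i[\,\Gnt,\cN\,]\wt\xi_{N,t}\rangle \big| \leq C\,P(\|\pt\|_{H^4}) \Big( \langle\wt\xi_{N,t},(\cN+1)\wt\xi_{N,t}\rangle + \frac 1N \langle\wt\xi_{N,t},(\cH_N+1)\wt\xi_{N,t}\rangle + \frac 1N \Big) .
\end{equation*}
Inserting global well--posedness and propagation of regularity for \eqref{eq:GPtd} (which, from $\phi_\text{GP}\in H^4$, gives $\|\pt\|_{H^4}\leq C\exp(c|t|)$ together with comparable bounds on $\partial_t\pt$) and combining the two estimates --- Gronwall first for the energy, then for the number --- yields $\langle\wt\xi_{N,t},(\cN+1)\wt\xi_{N,t}\rangle \leq C(a_N+b_N+N^{-1})\,N\exp(c\exp(c|t|))$; the outer exponential is forced by the exponential--in--time growth of the Sobolev norms of $\pt$ entering the Gronwall constant.

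It remains to bound the initial data. The condensation assumption gives $\langle\xi_{N,0},\cN\xi_{N,0}\rangle = N a_N$ directly, and one checks that passing to $\wt\xi_{N,0} = T_0^*\xi_{N,0}$ leaves this of order $N a_N + 1$. The energy assumption is turned into a bound on $\langle\wt\xi_{N,0},(\cH_N+1)\wt\xi_{N,0}\rangle$ through a coercivity estimate: since $H_N = H^\text{trap}_N - \sum_j V_\text{ext}(x_j)$ and, by \eqref{eq:assN}, $N^{-1}\langle\psi_N, H_N\psi_N\rangle$ converges to the translation--invariant Gross--Pitaevskii energy that is conserved along \eqref{eq:GPtd}, a lower bound of the form $T_0^*\,(\text{excitation Hamiltonian})\,T_0 \geq c\,\cH_N - C(\cN+1)$ --- with the contribution of $V_\text{ext}$ to the excitation energy controlled by the decay of $\phi_\text{GP}$ --- gives $\langle\wt\xi_{N,0},(\cH_N+1)\wt\xi_{N,0}\rangle \leq C(N b_N + N a_N + 1)$. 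Hence the initial value of $\langle\wt\xi_{N,0},(\cN + N^{-1}\cH_N + 1)\wt\xi_{N,0}\rangle$ is $\leq C(N a_N + N b_N + 1)$, and division by $N$ at the end produces exactly the factor $a_N + b_N + N^{-1}$ in \eqref{eq:main}.

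I expect the main obstacle to be the control of the renormalized cubic terms in $\Gnt$ and of their commutator with $\cN$: bounding these by $C\,P(\|\pt\|_{H^4})(\cN+1) + (C/N)(\cH_N+1)$ --- with a full power $N^{-1}$ in front of the energy, which is what produces the optimal rate --- requires a quantitative use of the scattering equation defining $\et$, namely the cancellation between the cubic term already present in $U_{N,t}H_N U_{N,t}^*$ and the cubic term generated by commuting $B(\et)$ with the kinetic energy and with the quartic interaction. Establishing these operator inequalities uniformly in $N$, with explicit and integrable--in--time dependence on the regularity of $\pt$, is the technical heart of the argument; the rest is bookkeeping of the error terms produced by the two conjugations.
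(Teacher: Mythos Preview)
Your overall architecture --- factor out the condensate with $U_{N,t}$, renormalize the short--scale structure by a generalized Bogoliubov transformation built from the Neumann scattering solution, and close by a Gronwall argument coupling number and energy, with the initial data controlled through the trapped energy assumption and the Euler--Lagrange equation for $\phi_{\mathrm{GP}}$ --- is exactly the paper's strategy, and your identification of the cubic cancellation coming from the scattering equation as the technical crux is accurate.

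Where your outline diverges is in the Gronwall mechanism, and the divergence matters. You propose to propagate $\langle\cH_N\rangle$ directly via $i[\cG_{N,t},\cH_N]$ and then feed this into a separate number estimate with a claimed $N^{-1}$ in front of $\cH_N$ in the bound for $i[\cG_{N,t},\cN]$. The paper does neither: it proves only $\pm i[\cN,\cG_{N,t}]\leq \cH_N + Ce^{c|t|}(\cN+1)$ (no $N^{-1}$), and it never computes $[\cG_{N,t},\cH_N]$, which would be awkward because commuting $\cH_N$ with the residual quadratic and cubic pieces of $\cG_{N,t}$ regenerates terms not obviously dominated by $\cH_N+\cN$. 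Instead the paper runs a \emph{single} Gronwall on the combined quantity $(\cG_{N,t}-C_{N,t})+Ce^{K|t|}(\cN+1)$: since $\cG_{N,t}$ commutes with itself, the time derivative reduces to $i[\cG_{N,t},\cN]$, $\partial_t(\cG_{N,t}-C_{N,t})$, and --- a subtlety you omit --- a term $2\,\mathrm{Re}\,[a^*(q_t\partial_t\wtph_t)a(\wtph_t),\cG_{N,t}]$ arising because the excitation space $\cF_{\perp\wtph_t}^{\leq N}$ itself depends on $t$. All three are bounded by $\cH_N+Ce^{c|t|}(\cN+1)$, which the coercivity $\frac12\cH_N\leq(\cG_{N,t}-C_{N,t})+Ce^{c|t|}(\cN+1)$ feeds back into the Gronwall quantity. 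Both $\langle\cN\rangle$ and $\langle\cH_N\rangle$ are thus propagated together at size $O(Na_N+Nb_N+1)$, and the single division by $N$ at the end gives the rate; no extra $N^{-1}$ gain in the commutator is needed or proved.
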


{\it Remarks:} 
\begin{itemize}
\item[1)] The condition $a_N = 1 - \langle \phi_\text{GP} , \gamma^{(1)}_N \phi_\text{GP} \rangle \to 0$ is equivalent with $\gamma^{(1)}_N \to |\phi_\text{GP} \rangle \langle \phi_\text{GP}|$. Similarly, the bound (\ref{eq:main}) implies that $\gamma_{N,t}^{(1)} \to |\ph_t \rangle \langle \ph_t|$. More precisely, using the fact that $|\ph_t \rangle \langle \ph_t|$ is a rank-one projection, it follows from (\ref{eq:main}) that 
\[ \begin{split} \tr \, \Big|\gamma^{(1)}_{N,t} - |\ph_t \rangle \langle \ph_t| \Big| &\leq 2 \Big\| \gamma^{(1)}_{N,t} - |\ph_t \rangle \langle \ph_t| \Big\|_\text{HS} \\ &\leq 2^{3/2} \big[ 1 - \langle \ph_t , \gamma^{(1)}_{N,t} \ph_t \rangle \big]^{1/2} \\ &\leq C \big[ a_N + b_N + N^{-1} \big]^{1/2} \exp (c \exp (c|t|)) \, . \end{split} \]
Hence, (\ref{eq:main}) is a statement about the stability of Bose-Einstein condensation with respect to the many-body Schr\"odinger equation (\ref{eq:schr0}). 
\item[2)] Existence, uniqueness and decay of the minimizer $\phi_\text{GP}$ of the Gross-Pitaevskii energy functional (\ref{eq:GPen-functr}) have been established in \cite{LSY}. In Theorem \ref{thm:main} we additionally assume that $\phi_\text{GP} \in H^4 (\bR^3)$. This condition follows from elliptic regularity and from the results of \cite{GY} (establishing decay of the derivatives of $\phi_\text{GP}$), under suitable assumptions on $V_\text{ext}$ (for example, if $V_\text{ext} \in C^2 (\bR^3)$ and its derivatives grow at most exponentially at infinity). 
\item[3)] As discussed above, it follows from \cite{LSY,LS} that the assumptions (\ref{eq:assN}) are satisfied if we take $\psi_N$ as the ground state of (\ref{eq:Htrap}). In this case, we expect both $a_N$ and $b_N$ to be of the order $N^{-1}$; indeed, $a_N, b_N \simeq N^{-1}$ has been recently shown in \cite{BBCS}, for systems of bosons trapped in a box with volume one (with periodic boundary conditions), interacting through a sufficiently small potential. In this case, (\ref{eq:main}) implies that 
\[ 1 - \langle \ph_t , \gamma^{(1)}_{N,t} \ph_t \rangle \leq C N^{-1} \exp (c \exp (c |t|)) \]
and therefore that, for every fixed time $t \in \bR$, Bose-Einstein condensation holds with the optimal rate  $N^{-1}$ (meaning that the number of excitations of the condensate remains bounded, uniformly in $N$).
\item[4)] To keep the notation as simple as possible, we consider the time evolution (\ref{eq:schr0}) generated by the translation invariant Hamiltonian (\ref{eq:ham0}). With the same techniques we use to prove Theorem \ref{thm:main}, we could also have included in (\ref{eq:ham0}) an external potential $W_\text{ext}$ (at least if the difference $W_\text{ext} - V_\text{ext}$ is bounded below). Under this assumption, the convergence (\ref{eq:main}) remains true, of course provided we introduce the external potential $W_\text{ext}$ also in the time-dependent Gross-Pitaevskii equation (\ref{eq:GPtd}). Physically, this would describe experiments where the system prepared at equilibrium (in the ground state) is perturbed by a change of the external potential, rather than by switching it off (we could also consider the situation where the external potential depends on time). 
\end{itemize}

Theorem \ref{thm:main} is meant to describe the time-evolution of data prepared in the ground state of the trapped Hamilton operator (\ref{eq:Htrap}). This is the reason why, in (\ref{eq:assN}), we assumed $\psi_N$ to exhibit Bose-Einstein condensation in the minimizer of the Gross-Pitaevskii energy functional (\ref{eq:GPen-functr}). {F}rom the mathematical point of view, one may ask more generally whether it is possible to show that the evolution of an initial data exhibiting Bose-Einstein condensate in an arbitrary one-particle wave function $\ph \in H^1 (\bR^3)$ (not necessarily minimizing the Gross-Pitaevskii functional (\ref{eq:GPen-functr})) continues to exhibit condensation in the solution of (\ref{eq:GPtd}) with initial data $\ph_{t=0} = \ph$, also for $t \not =0$. In the next theorem we show that the answer to this question is positive; the only difference with respect to (\ref{eq:main}) is the fact that, to get the same rate of convergence at time $t$, we need a stronger bound on the condensation of the initial data.
\begin{theorem}\label{thm:main2} 
Assume that $V \in L^3 (\bR^3)$ is non-negative, compactly supported and spherically symmetric. Let $\psi_N$ be a sequence in $L^2_s (\bR^{3N})$, with one-particle reduced density $\gamma^{(1)}_N = \tr_{2,\dots , N} |\psi_N \rangle \langle \psi_N|$. Assume that, for a $\ph \in H^4 (\bR^3)$,    
\begin{equation}\label{eq:wtassN}
\begin{split} 
\wt{a}_N &= \tr \, \big| \gamma^{(1)}_{N}  - | \ph \rangle \langle \ph| \big| \to 0 \qquad \text{and } \\
\wt{b}_N &= \Big| N^{-1} \langle \psi_N, H_N \psi_N \rangle - \cE_\text{GP} (\ph) \Big| \to 0  \end{split} 
\end{equation}
as $N \to \infty$. Here $\cE_\text{GP}$ is the translation invariant Gross-Pitaevskii functional 
\begin{equation}\label{eq:GP-func} \cE_\text{GP} (\ph) = \int \big[ |\nabla \ph|^2 + 4 \pi a_0 |\ph|^4 \big] dx  \end{equation}

Let $\psi_{N,t} = e^{-i H_N t} \psi_N$ be the solution of the Schr\"odinger equation (\ref{eq:schr0}) with initial data $\psi_N$ and let $\gamma_{N,t}^{(1)}$ denote the one-particle reduced density associated with $\psi_{N,t}$. Then  
\begin{equation}\label{eq:main2} 
1 - \langle \ph_t , \gamma^{(1)}_{N,t} \ph_t \rangle \leq C \left[ \wt{a}_N + \wt{b}_N + N^{-1} \right] \exp \, (c \exp \, (c|t|)) 
\end{equation}
where $\ph_t$ denotes the solution of the time-dependent Gross-Pitaevskii equation (\ref{eq:GPtd}), with initial data $\ph_0 = \ph$. 
\end{theorem}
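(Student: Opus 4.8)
\emph{Proof strategy.} The plan is to transport the excitation-vector method of \cite{LNS} from the mean-field to the Gross--Pitaevskii regime, incorporating the short-scale correlation structure used in \cite{BDS}. First I would observe that $\langle \psi_{N,t}, H_N\psi_{N,t}\rangle$ is conserved along (\ref{eq:schr0}) and that $\cE_\text{GP}(\ph_t)$ is conserved along (\ref{eq:GPtd}), so that the energy excess $|N^{-1}\langle\psi_{N,t},H_N\psi_{N,t}\rangle - \cE_\text{GP}(\ph_t)| = \wt{b}_N$ for all $t$. For each $t$ introduce the unitary $U_{N,t}:L^2_s(\bR^{3N})\to\cF^{\le N}_{\perp\ph_t}$ that factors out the condensate $\ph_t$ and maps onto the excitation Fock space over $\{\ph_t\}^\perp$ truncated at $N$ particles, and set $\xi_{N,t}=U_{N,t}\psi_{N,t}$. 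Then $1-\langle\ph_t,\gamma^{(1)}_{N,t}\ph_t\rangle = N^{-1}\langle\xi_{N,t},\cN_+\xi_{N,t}\rangle$, with $\cN_+$ the number-of-excitations operator, so (\ref{eq:main2}) follows once we control the growth of $\langle\xi_{N,t},\cN_+\xi_{N,t}\rangle$.

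The difficulty is that in the Gross--Pitaevskii regime $\xi_{N,t}$ still carries correlations on scale $N^{-1}$, so the generator $\cG_{N,t}=i(\partial_t U_{N,t})U_{N,t}^* + U_{N,t}H_N U_{N,t}^*$ of the dynamics of $\xi_{N,t}$ is not of order one on states with few excitations. To cure this I would conjugate with a generalized Bogoliubov transformation $e^{B_t}$, $B_t=\tfrac12\int dx\,dy\;\eta_t(x,y)\,(b^*_x b^*_y-\text{h.c.})$, with the kernel $\eta_t(x,y)\approx -N\,w_N(x-y)\,\ph_t(x)\ph_t(y)$ built as in \cite{BDS} from the solution of the zero-energy scattering equation (\ref{eq:0en}) rescaled to length $N^{-1}$ and cut off orthogonally to $\ph_t$. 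Passing to the renormalized excitation vector $\wt\xi_{N,t}=e^{-B_t}\xi_{N,t}$, with generator $\wt{\cG}_{N,t}=i(\partial_t e^{-B_t})e^{B_t}+e^{-B_t}\cG_{N,t}e^{B_t}$, I would establish, with constants depending only on $\|\ph_t\|_{H^4}$: (i) comparability of the excitation number under the conjugation, $\langle\xi,\cN_+\xi\rangle\le C\langle e^{-B_t}\xi,(\cN_++1)e^{-B_t}\xi\rangle$ and conversely; (ii) a coercivity estimate $\wt{\cG}_{N,t}\ge c(\cK+\cV_N)-C(\cN_++1)$, with $\cK$ and $\cV_N$ the kinetic and interaction energies on Fock space; and, most importantly, (iii) a propagation bound $\tfrac{d}{dt}\langle\wt\xi_{N,t},\cN_+\wt\xi_{N,t}\rangle\le C\langle\wt\xi_{N,t},(\cN_++1)\wt\xi_{N,t}\rangle+\delta\,\langle\wt\xi_{N,t},\cV_N\wt\xi_{N,t}\rangle$ for small $\delta$. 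Since the renormalized energy is controlled by the conserved quantities, $\langle\wt\xi_{N,t},(\cK+\cV_N)\wt\xi_{N,t}\rangle\le C(N\wt{b}_N+\langle\wt\xi_{N,t},\cN_+\wt\xi_{N,t}\rangle+1)$, the last term in (iii) is absorbable and one is left with the closed inequality $\tfrac{d}{dt}\langle\wt\xi_{N,t},\cN_+\wt\xi_{N,t}\rangle\le C\langle\wt\xi_{N,t},(\cN_++1)\wt\xi_{N,t}\rangle+C(N\wt{b}_N+1)$.

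For the initial data, the hypothesis $\wt{a}_N=\tr|\gamma^{(1)}_N-|\ph\rangle\langle\ph||\to0$ bounds $1-\langle\ph,\gamma^{(1)}_N\ph\rangle$, hence $\langle\xi_{N,0},\cN_+\xi_{N,0}\rangle\le N\wt{a}_N$; together with the energy bound $\langle\psi_N,H_N\psi_N\rangle-N\cE_\text{GP}(\ph)\le N\wt{b}_N$ and (i)--(ii) this gives $\langle\wt\xi_{N,0},(\cN_++\cK+\cV_N)\wt\xi_{N,0}\rangle\le C(N\wt{a}_N+N\wt{b}_N+1)$. Gronwall's lemma then yields $\langle\wt\xi_{N,t},\cN_+\wt\xi_{N,t}\rangle\le C(N\wt{a}_N+N\wt{b}_N+1)e^{C|t|}$ with $C$ depending on $\sup_{|s|\le|t|}\|\ph_s\|_{H^4}$; since global well-posedness of (\ref{eq:GPtd}) in $H^4$ (using $\ph\in H^4$) only controls this supremum by a single exponential of $|t|$, one obtains the factor $\exp(c\exp(c|t|))$. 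Undoing $e^{B_t}$ and $U_{N,t}$ and dividing by $N$ gives (\ref{eq:main2}).

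I expect step (iii) to be the crux: estimating the commutator $[\wt{\cG}_{N,t},\cN_+]$. The pieces of $\wt{\cG}_{N,t}$ cubic in $b,b^*$, and the singular interaction $N^2V(N\cdot)$, do not commute with $\cN_+$; it is precisely the choice of $\eta_t$ as the scattering solution --- so that the scattering-length identity (\ref{eq:scat-len}), $8\pi a_0=\int V f$, turns the surviving $O(1)$ contributions into the Gross--Pitaevskii nonlinearity of (\ref{eq:GPtd}) that $\ph_t$ solves --- which makes the remainder absorbable into $C(\cN_++1)$ plus a small multiple of $\cV_N$. Carrying this out, and extracting the full kinetic-plus-interaction energy from $\wt{\cG}_{N,t}$ in (ii) so that the $\cV_N$-term in (iii) can be re-absorbed by energy conservation, is the main technical obstacle; the remaining ingredients (mapping properties of $U_{N,t}$ and $e^{B_t}$, a priori $H^4$-bounds on $\ph_t$) follow the lines of \cite{LNS,BDS}.
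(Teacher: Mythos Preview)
Your strategy is essentially the one the paper carries out: factor out the condensate with $U_{N,t}$, dress with a generalized Bogoliubov transformation $e^{B(\eta_t)}$ built from the scattering solution, and control the growth of the number of excitations along the resulting fluctuation dynamics through operator bounds on its generator. Two points of your packaging differ from the paper and deserve comment. First, the paper replaces $\ph_t$ by the solution $\wtph_t$ of the \emph{modified} Gross--Pitaevskii equation (\ref{eq:GPmod}), built from the Neumann ground state $f_\ell$ on a ball of radius $N\ell$; the reason is that the modified $b$-fields admit no closed formula like (\ref{eq:act-Bog}), so one must expand $e^{-B(\eta_t)} b\, e^{B(\eta_t)}$ in nested commutators, and convergence requires $\|\eta_t\|_2$ small, which is obtained by taking $\ell$ small. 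Second, your Gronwall scheme---propagate $\langle\cN_+\rangle$ alone and control the $\delta\langle\cV_N\rangle$ leak by an energy bound $\langle\cK+\cV_N\rangle_t\le C(N\wt b_N+\langle\cN_+\rangle_t+1)$ valid at each time---is not quite how the paper closes. The quantity $\langle\wt\xi_{N,t},\wt\cG_{N,t}\wt\xi_{N,t}\rangle$ is \emph{not} conserved (the generator is time-dependent), so extracting $\langle\cH_N\rangle_t$ from the conserved $\langle\psi_{N,t},H_N\psi_{N,t}\rangle$ requires separating out the $O(N)$ scalar (Lemma~\ref{lm:CN}) and handling the $\sqrt N$-linear piece of $(i\partial_t U_{N,t})U_{N,t}^*$; the paper instead runs Gronwall directly on $\langle(\cG_{N,t}-C_{N,t})+Ce^{Kt}(\cN+1)\rangle$, which requires the additional bounds on $\partial_t(\cG_{N,t}-C_{N,t})$ and on $[a^*(\partial_t\wtph_t)a(\wtph_t),\cG_{N,t}]$ in Theorem~\ref{thm:gene} (the latter because $\cF^{\le N}_{\perp\wtph_t}$ itself moves in time). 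Finally, you have not isolated the one place where the \emph{trace-norm} assumption $\wt a_N$ (rather than $a_N=1-\langle\ph,\gamma_N^{(1)}\ph\rangle$) is actually used: it is the initial cross term $-2N\,\mathrm{Re}\,\langle\ph,\gamma_N^{(1)}q_0(i\partial_t\wtph_t)|_{t=0}\rangle$ coming from $(i\partial_t U_{N,t})U_{N,t}^*$, which is bounded by writing it as $\mathrm{tr}\,[(\gamma_N^{(1)}-|\ph\rangle\langle\ph|)\,\cdot\,]$; this is precisely what distinguishes Theorem~\ref{thm:main2} from Theorem~\ref{thm:main}.
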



A first proof of the convergence of the reduced density associated with the solution of the Schr\"odinger equation (\ref{eq:schr0}) towards the orthogonal projection onto the solution of the time-dependent Gross-Pitaevskii equation (\ref{eq:GPtd}) was obtained in \cite{ESY1,ESY2,ESY3,ESY4} 
(part of the proof was later simplified in \cite{CHPS}, using also ideas from \cite{KM}). In these works, convergence was established with no control on its rate. A new proof of the convergence towards the Gross-Pitaevskii dynamics was later given in \cite{P}; in this case, convergence was shown to hold with a rate $N^{-\eta}$, for an unspecified $\eta > 0$ (this approach was adapted to two-dimensional systems in \cite{JLP}, to systems with magnetic fields in \cite{O} and to pseudo-spinor condensates in \cite{MO}). More recently, convergence with a rate similar to (\ref{eq:main}), (\ref{eq:main2}) has been proven to hold in \cite{BDS}, for a class of Fock space initial data. The novelty of (\ref{eq:main}), (\ref{eq:main2}) is the fact that convergence is shown with an optimal rate determined by the properties of the $N$-particle initial data. 

More results are available about quantum dynamics in the mean-field regime. In this case, the evolution of the Bose gas is generated by an Hamilton operator of the form
\begin{equation}\label{eq:Hmf} H^{\text{mf}}_N = \sum_{j=1}^N -\Delta_{x_j} + \frac{1}{N} \sum_{i<j}^N V(x_i -x_j) \end{equation}
In the limit $N \to \infty$, the solution of the Schr\"odinger equation $\psi_{N,t} = e^{-iH_N^\text{mf} t} \psi_N$, for initial data $\psi_N$ exhibiting Bose-Einstein condensation in a one-particle wave function $\ph \in L^2 (\bR^3)$, can be approximated by products of the solution of the nonlinear Hartree equation 
\begin{equation}\label{eq:hartree} i\partial_t \ph_t = -\Delta \ph_t + (V*|\ph_t|^2) \ph_t 
\end{equation}
Convergence towards Hartree dynamics has been established in different settings and using different methods in several works, including  \cite{AGT,AB,AFP,AH,AN1,BGM,CH,ES,EY,FKP,FKS,Hepp,KP,
RS,Spohn}. In the mean-field regime, it is also possible to find a norm approximation of the many-body evolution by taking into account fluctuations around the Hartree dynamics (\ref{eq:hartree}); see, for example, \cite{BKS,XC,GMM1,GMM2,KSS,LNS, MPP}. 

It is also interesting to consider the many-body evolution in scaling limits interpolating between the mean-field regime described by the Hamilton operator (\ref{eq:Hmf}) and the Gross-Pitaevskii regime described by (\ref{eq:ham0}). A norm-approximation of the time-evolution in these intermediate regimes was recently obtained in \cite{BCS,GM,K,NN,NN2}. 

To prove Theorem \ref{thm:main} and Theorem \ref{thm:main2} we will combine the strategies used in \cite{BDS} and \cite{LNS}. Let us briefly recall the main ideas of these papers. In \cite{BDS}, the Bose gas was described on the Fock space $\cF = \bigoplus_{n \geq 0} L^2_s (\bR^{3n})$ by the Hamilton operator
\[ \cH_N = \int \nabla_x a_x^* \nabla_x a_x dx + \frac{1}{2} \int N^2 V (N (x-y)) a_x^* a_y^* a_y a_x \, dx dy \]
where $a_x^*, a_x$ are the usual operator valued distributions, creating and, respectively, annihilating a particle at the point $x \in \bR^3$. Notice that $\cH_N$ commutes with the number of particle operator $\cN = \int a_x^* a_x \, dx$, and that its restriction to the sector of $\cF$ with exactly $N$ particles  coincides with (\ref{eq:ham0}). 

On the Fock space $\cF$, a Bose-Einstein condensate can be described by a coherent state of the form $W (\sqrt{N} \ph) \Omega$, where $\Omega = \{ 1, 0, 0,\dots \}$ is the vacuum vector, $\ph \in L^2 (\bR^3)$ is a normalized one-particle orbital, and where, for every $f \in L^2 (\bR^3)$, \[ W(f) = \exp (a^* (f) - a(f)) \]
is a Weyl operator with wave function $f$. Here, we denoted by 
\[ a^* (f) = \int f(x) a_x^* \, dx  \qquad \text{and } a(f) = \int \bar{f} (x) a_x \]
the usual creation and annihilation operators on $\cF$, creating and annihilating a particle with wave function $f$. A simple computation shows that 
\[ W(\sqrt{N} \ph) \Omega = e^{-N/2} \left\{ 1 , N^{1/2} \ph , \dots , \frac{N^{n/2} \ph^{\otimes n}}{\sqrt{n!}} , \dots \right\} \]
In the coherent state $ W(\sqrt{N} \ph) \Omega$, the number of particles is Poisson distributed, with mean and variance equal to $N$. 

On the Fock space $\cF$, it is interesting to study the dynamics of approximately coherent initial states. In the Gross-Pitaevskii regime, however (in contrast with the mean field limit), we cannot expect the evolution of approximately coherent initial data to remain approximately coherent. On every sector of $\cF$ with a fixed number of particles, the coherent states $W(\sqrt{N} \ph) \Omega$ is factorized; it describes therefore uncorrelated particles. On the other hand, already from \cite{ESY2,EMS} and more recently also from \cite{CH2}, we know that, in the Gross-Pitaevskii regime, particles develop substantial correlations. To provide a better approximation of the many-body dynamics, Weyl operators were combined in \cite{BDS} with appropriate Bogoliubov transformations, leading to so-called squeezed coherent states. To be more precise, let $f$ denote the solution of the zero-energy scattering equation (\ref{eq:0en}) and $w= 1-f$ (keep in mind that, for $|x| \gg 1$, $w (x) = a_0 / |x|$). Using $w$, we define   
\begin{equation}\label{eq:kNt} k_{N,t} (x;y) = -N w (N (x-y)) \ph_t (x) \ph_t (y) 
\end{equation}
where $\ph_t$ is the solution of the time-dependent Gross-Pitaevskii equation (\ref{eq:GPtd}). In fact, in \cite{BDS} and also later in the present paper, it is more convenient to replace $\ph_t$ with the solution of the slightly modified, $N$-dependent, Gross-Pitaevskii equation (\ref{eq:GPmod}); to simplify the presentation, we neglect these technical details in this introduction. With (\ref{eq:kNt}), it is easy to check that $k_{N,t} \in L^2 (\bR^3 \times \bR^3)$, with $\| k_{N,t} \|_2$ bounded, uniformly in $N$ and in $t$. This implies that (\ref{eq:kNt}) is the integral kernel of a Hilbert-Schmidt operator. Hence, we can define, on $\cF$, the unitary Bogoliubov transformations
\begin{equation}\label{eq:bogo} T_t = \exp \, \left[ \frac{1}{2} \int dx dy \left( k_{N,t} (x;y)a_x^* a_y^* - \text{h.c.} \right) \right] \end{equation}
whose action on creation and annihilation operators is explicitly given by 
\begin{equation}\label{eq:act-bogo} T_t^* a^* (g) T_t = a^* (\cosh_{k_{N,t}} (g)) + a (\sinh_{k_{N,t}} (\bar{g})) 
\end{equation}
for all $g \in L^2 (\bR^3)$. Here $\cosh_{k_{N,t}}$ and $\sinh_{k_{N,t}}$ are the bounded operators ($\sinh_{k_{N,t}}$ is even Hilbert-Schmidt) defined by the convergent series
\begin{equation}\label{eq:chsh} \cosh_{k_{N,t}} = \sum_{n=0}^\infty \frac{(k_{N,t} \bar{k}_{N,t})^n}{(2n)!}, \quad \text{ and } \quad \sinh_{k_{N,t}} = \sum_{n=0}^\infty \frac{(k_{N,t} \bar{k}_{N,t})^n k_{N,t}}{(2n+1)!} \end{equation}

Using the Bogoliubov transformation $T_t$ to generate correlations at time $t$, it makes sense to study the time evolution of initial data close to the squeezed coherent state $W(\sqrt{N} \ph) T_0 \Omega$, and to approximate it with a Fock space vector of the same form. More precisely, for $\xi_N \in \cF$ close to the vacuum (in a sense to be made precise later), we may consider the time evolution 
\begin{equation}\label{eq:full-fock} 
e^{-i\cH_N t} W(\sqrt{N} \ph) T_0 \xi_N = W(\sqrt{N} \ph_t) T_t \xi_{N,t} \end{equation}
where we defined $\xi_{N,t} = \cU_N (t) \xi_{N}$ and the fluctuation dynamics 
\begin{equation}\label{eq:fluc-fock} \cU_N (t) = T_t^* W^* (\sqrt{N} \ph_t) e^{-i\cH_N t} W(\sqrt{N} \ph_0) T_0 
\end{equation}
In order to show that the one-particle reduced density $\gamma_{N,t}^{(1)}$ associated with the l.h.s. of (\ref{eq:full-fock}) is close to the orthogonal projection onto the solution of the Gross-Pitaevskii equation (\ref{eq:GPmod}), it is enough to prove that the expectation of the number of particles in $\xi_{N,t}$ is small, compared with the total number of particles $N$ (assuming this is true for $\xi_N$, at time $t=0$). In other words, the problem of proving convergence towards the Gross-Pitaevskii dynamics reduces to the problem of showing that the expectation of the number of particles remains approximately preserved by the fluctuation dynamics (\ref{eq:fluc-fock}). In \cite{BDS}, this strategy was used to show that the one-particle reduced density $\gamma^{(1)}_{N,t}$ associated with $\Psi_{N,t} = e^{-i \cH_N t} W(\sqrt{N} \ph) T_0 \xi_N$ is such that   
\[ \| \gamma_{N,t}^{(1)} - |\ph_t \rangle \langle \ph_t| \|_\text{HS} \leq C N^{-1/2} \exp (c \exp (c|t|)) \]
for any $\xi_N \in \cF$ with $\| \xi_N \| = 1$ and such that  
\[ \left\langle \xi_N, \left[ \cN + \cN^2/N + \cH_N \right] \xi_N \right\rangle \leq C \]
uniformly in $N$.  

While the method of \cite{BDS} works well to show convergence towards the Gross-Pitaevskii dynamics for the evolution of Fock space data of the form $W(\sqrt{N} \ph) T_0 \xi_N$, it is difficult to apply it to $N$-particle initial data in $L^2_s (\bR^{3N})$ (a special class of $N$-particle states for which this is indeed possible is discussed in Appendix C of \cite{BDS}). An alternative approach, tailored on $N$-particle initial data, was proposed in \cite{LNS} for bosons in the mean field limit. An important observation in \cite{LNS} (and already in \cite{LNSS}) is the fact that, for a fixed normalized $\ph \in L^2 (\bR^3)$, every $\psi_N \in L^2_s (\bR^{3N})$ can be uniquely represented as 
\begin{equation}\label{eq:UNdef} \psi_N = \sum_{n=0}^N \psi_N^{(n)} \otimes_s \ph^{\otimes (N-n)} 
\end{equation}
for a sequence $\{ \psi_N^{(n)} \}_{n=0}^N$ with $\psi_N^{(n)} \in L^2_{\perp \ph} (\bR^3)^{\otimes_s n}$, the symmetric tensor product of $n$ copies of the orthogonal complement of $\ph$ in $L^2 (\bR^3)$. 

This remark allows us to define a unitary map 
\begin{equation}\label{eq:Uph-def} U (\ph): L^2_s (\bR^{3N}) \to \cF_{\perp \ph}^{\leq N} \quad \text{ through } \quad  U(\ph) \psi_N = \{ \psi_N^{(0)}, \psi_N^{(1)}, \dots , \psi_N^{(N)} \}.\end{equation} 
Here $\cF_{\perp \ph}^{\leq N} = \bigoplus_{n=0}^N L^2_\perp (\bR^3)^{\otimes_s n}$ is the Fock space constructed on the orthogonal complement $L^2_{\perp \ph} (\bR^3)$ of $\ph$, truncated to have at most $N$ particles. The map $U(\ph)$ factors out the condensate described by the one-particle wave function $\ph$ and allows us to focus on its orthogonal excitations. Notice that a similar idea (but with no second quantization) was used in \cite{P, MPP} to identify excitations of the condensate. Using the unitary map (\ref{eq:Uph-def}), we can introduce, for the mean-field dynamics generated by (\ref{eq:Hmf}), a fluctuation dynamics 
\begin{equation}\label{eq:cWmf} \cW^\text{mf}_{N,t} = U (\ph_t) e^{-i H^\text{mf}_N t} U^* (\ph) : \cF_{\perp \ph}^{\leq N} \to \cF_{\perp \ph_t}^{\leq N} 
\end{equation}
where $\ph_t$ is the solution of the time-dependent Hartree equation (\ref{eq:hartree}). Similarly as above, to prove convergence towards Hartree dynamics, it is enough to control the growth of the expectation of the number of particles operator w.r.t. $\cW^\text{mf}_{N,t}$. This strategy was used in \cite{LNS} to find a norm-approximation for the many-body evolution in the mean-field regime. 

It is natural to ask whether the techniques developed in \cite{LNS} to study the time-evolution of bosonic systems in the mean-field regime can also be used to study the dynamics in the Gross-Pitaevskii limit. Similarly as above, where we argued that coherent states are not a good ansatz to describe the evolution of Fock space initial data, we cannot expect here that factorized $N$-particles states of the form $U_{\ph_t}^* \Omega = \ph_t^{\otimes N}$ provide a good approximation for the solution of the Schr\"odinger equation (\ref{eq:schr0}) in the Gross-Pitaevskii regime. Instead, similarly as in \cite{BDS}, we need to modify the ansatz to take into account correlations developed by the many-body evolution. As explained above, in \cite{BDS} correlations were modeled by means of Bogoliubov transformations of the form (\ref{eq:bogo}). Unfortunately, since they do not preserve the number of particles, these Bogoliubov transformations do not leave the space $\cF_{\perp \ph_t}^{\leq N}$, where excitations of the Bose-Einstein condensate are described, invariant. For this reason, to adapt the techniques of \cite{LNS} to the Gross-Pitaevskii regime that we are considering here, we are going to introduce, on $\cF_+^{\leq N}$ modified creation and annihilation operators, defined by  
\begin{equation}\label{eq:bb-intro} b^* (f) = a^* (f) \sqrt{\frac{N-\cN}{N}}, \qquad \text{and } \quad b (f) = \sqrt{\frac{N-\cN}{N}} a (f) 
\end{equation}
for all $f \in L^2_{\perp \ph_t} (\bR^3)$. As we will discuss in the next section, these new fields create and, respectively, annihilate excitations of the Bose-Einstein condensate leaving, at the same time, the total number of particles invariant. We will use the modified creation and annihilation operators to define generalized Bogoliubov transformation having the form 
\begin{equation}\label{eq:St} S_t = \exp  \left[ \frac{1}{2} \int dx dy \, (\eta_{t} (x;y) b_x^* b_y^* - \text{h.c.} ) \right] \end{equation}
for a kernel $\eta_{t} \in L^2 (\bR^3 \times \bR^3)$, orthogonal to $\ph_t$ in both its variables. Compared with the standard Bogoliubov transformations in (\ref{eq:bogo}), (\ref{eq:St}) has an important advantage: it maps $\cF_{\perp \ph_t}^{\leq N}$ back into itself. 

For this reason, with (\ref{eq:St}) we can define the modified fluctuation dynamics $\cW_{N,t} = S_t^* U (\ph_t) e^{-i H_N t} U^* (\ph_0) S_0 : \cF_+^{\leq N} \to \cF_+^{\leq N}$, which will play in our analysis a similar role as (\ref{eq:fluc-fock}) played in \cite{BDS}. To prove Theorem \ref{thm:main} and Theorem \ref{thm:main2} it will then be enough to show a bound for the growth of the expectation of the number of particles with respect to $\cW_{N,t}$. To achieve this goal, we will establish several properties of its generator. Technically, the main challenge we will have to face is the fact that, in contrast with (\ref{eq:act-bogo}), there is no explicit formula for the action of the generalized Bogoliubov transformation (\ref{eq:St}) on creation and annihilation operators. For this reason, we will have to expand expressions like $S_t^* b(g) S_t$ in absolutely convergent infinite series, and we will have to control the contribution of several different terms. The main tool to control these expansions is Lemma \ref{lm:indu} below.

\section{Fock Space}
\label{sec:fock}

In this section, we introduce some notations and we discuss 
some basic properties of operators on Fock space. Let 
\[ \cF = \bigoplus_{n \geq 0} L^2_s (\bR^{3n}) = \bigoplus_{n \geq 0} L^2 (\bR^3)^{\otimes_s n} \]
denote the bosonic Fock space over the one-particle space $L^2 (\bR^3)$. Here $L^2_s (\bR^{3n})$ is the subspace of $L^2 (\bR^{3n})$ consisting of all $\psi \in L^2 (\bR^{3n})$ with 
\[ \psi (x_{\pi 1}, x_{\pi 2}, \dots , x_{\pi n}) = \psi (x_1, \dots , x_n) \]
for all permutations $\pi \in S_n$. We use the notation $\Omega = \{ 1, 0, \dots \} \in \cF$ for the vacuum vector, describing a state with no particles. 

On $\cF$, it is convenient to introduce creation and annihilation operators. For $g \in L^2 (\bR^3)$, we define the creation operator $a^* (g)$ and the annihilation operator $a(g)$ by 
\[ \begin{split} 
(a^* (g) \Psi)^{(n)} (x_1, \dots , x_n) &= \frac{1}{\sqrt{n}} \sum_{j=1}^n g (x_j) \Psi^{(n-1)} (x_1, \dots , x_{j-1}, x_{j+1} , \dots , x_n) \\
(a (g) \Psi)^{(n)} (x_1, \dots , x_n) &= \sqrt{n+1} \int \bar{g} (x) \Psi^{(n+1)} (x,x_1, \dots , x_n) \end{split} \]
Notice that creation operators are linear in their argument, annihilation operators are antilinear. 
Creation and annihilation operators can be extended to closed unbounded operators on $\cF$; $a^* (g)$ is the adjoint of $a(g)$. They satisfy canonical commutation relations
\begin{equation}\label{eq:ccr} [a (g), a^* (h) ] = \langle g,h \rangle , \quad [ a(g), a(h)] = [a^* (g), a^* (h) ] = 0 \end{equation}
for all $g,h \in L^2 (\bR^3)$ (here $\langle g,h \rangle$ denote the usual inner product on $L^2 (\bR^3)$). It is also convenient to introduce operator valued distributions $a_x, a_x^*$, formally  creating and annihilating a particle at $x \in \bR$. They are such that  
\[ a(f) = \int \bar{f} (x) \, a_x \, dx , \quad a^* (f) = \int f(x) \, a_x^* \, dx  \]
and satisfy the commutation relations 
\[ [a_x , a_y^* ] = \delta (x-y) , \quad [a_x , a_y] = [a_x^* , a_y^*] = 0 \]

It is also useful to introduce on $\cF$ the number of particles operator, defined by $(\cN \Psi)^{(n)} = n \Psi^{(n)}$. In terms of operator valued distributions, $\cN$ can be written as
\[ \cN = \int a_x^* a_x \, dx \]
Creation and annihilation operators are bounded by the square root of the number of particles operator, i.e. we have 
\begin{equation}\label{eq:abd} \| a (f) \Psi \| \leq \| f \|_2 \| \cN^{1/2} \Psi \|, \quad \| a^* (f) \Psi \| \leq \| f \|_2 \| (\cN+1)^{1/2} \Psi \| 
\end{equation}
for every $f \in L^2 (\bR^3)$. 

For a one-particle operator $B:L^2 (\bR^3) \to L^2 (\bR^3)$ we define $d\Gamma (B): \cF \to \cF$ through $(d\Gamma (B) \Psi)^{(n)} = \sum_{j=1}^n B_j \psi^{(n)}$, for any $\Psi = \{ \psi^{(n)} \}_{n \in \bN} \in \cF$. Here $B_j = 1\otimes \dots \otimes B \otimes \dots \otimes 1$ acts as $B$ on the $j$-th particles and as the identity on all other particles. If $B$ has the integral kernel $B (x;y)$, we can write 
\[ d\Gamma (B) = \int B (x;y) a_x^* a_y \, dx dy \]
If $B$ is a bounded operator on the one-particle space $L^2 (\bR^3)$, $d\Gamma (B)$ can be bounded with respect to the number of particles operator, i.e. we have the operator inequality 
\begin{equation}\label{eq:dGamma1} \pm d\Gamma (B) \leq \| B \|_\text{op} \, \cN \end{equation}
and (since $d\Gamma (B)$ commutes with $\cN$) also
\[ \| d\Gamma (B) \Psi \| \leq \| B \|_\text{op} \| \cN \Psi \| \]

We will also need bounds for operators on the Fock space, quadratic in creation and annihilation operators, that do not necessarily preserve the number of particles. For $j \in L^2 (\bR^3 \times \bR^3)$, we introduce the notation
\begin{equation}\label{eq:AAdef} \begin{split} A_{\sharp_1, \sharp_2} (j) &= \int a^{\sharp_1} (j_{x}) a^{\sharp_2}_x \, dx = \int j^{\bar{\sharp}_1} (x;y) a_y^{\sharp_1} a_x^{\sharp_2} \, dx dy  \end{split} \end{equation}
where $j_x (y) := j (x;y)$, $\sharp_1 , \sharp_2 \in \{ \cdot,* \}$, $\bar{\sharp}_1 = \cdot$ if $\sharp_1 = *$ and $\bar{\sharp}_1 = *$ if $\sharp_1 = \cdot$, and where we use the notation $a^{\sharp} = a$ if $\sharp = \cdot$, $a^{\sharp} = a^*$ if $\sharp = *$ and, similarly, $j^\sharp = j$ if $\sharp = \cdot$ and $j^\sharp = \bar{j}$ if $\sharp = *$. If $\sharp_1 = \cdot$ and $\sharp_2 = *$ (i.e. if a creation operator lies on the right of an annihilation operator), in order to define $A_{\sharp_1, \sharp_2} (j)$ we also require that $x \to j (x;x)$ is integrable.  In the next lemma, which follows easily from (\ref{eq:abd}), we show how to bound these operators through the number of particles operator $\cN$. 
\begin{lemma} \label{lm:Abds} Let $j \in L^2 (\bR^3 \times \bR^3)$. Then for any $\Psi \in \cF$, 
\[ \begin{split} 
\| A^\flat_{\sharp_1, \sharp_2} (j) \Psi \| &\leq \sqrt{2} \| (\cN+1) \Psi \|   \left\{ \begin{array}{ll}  \| j \|_2 + \int |j(x;x)| dx \qquad &\text{if } \sharp_1 = \cdot, \sharp_2 = * \\  \| j \|_2  \qquad &\text{otherwise} \end{array} \right. \end{split} \]
\end{lemma}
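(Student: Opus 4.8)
The plan is to reduce everything to the elementary bounds (\ref{eq:abd}) on creation and annihilation operators, treating the four sign combinations $(\sharp_1,\sharp_2)\in\{\cdot,*\}^2$ separately but by a common scheme. First I would observe that $A_{\sharp_1,\sharp_2}(j)=\int a^{\sharp_1}(j_x)\,a^{\sharp_2}_x\,dx$, so for $\Psi\in\cF$ and $\Phi\in\cF$ I would estimate the sesquilinear form $\langle\Phi,A_{\sharp_1,\sharp_2}(j)\Psi\rangle$ by pairing $a^{\bar\sharp_1}(j_x)$ onto $\Phi$ and $a^{\sharp_2}_x$ onto $\Psi$, then applying Cauchy--Schwarz in $x$ together with $\int\|j_x\|_2^2\,dx=\|j\|_2^2$. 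Concretely, writing $a^{\sharp_2}_x\Psi$ and $a^{\bar\sharp_1}(j_x)\Phi$ and using (\ref{eq:abd}) in the $x$-integrated form, one gets
\[
|\langle\Phi,A_{\sharp_1,\sharp_2}(j)\Psi\rangle|\le\Big(\int\|a^{\sharp_1}(j_x)^*\Phi\|^2\,dx\Big)^{1/2}\Big(\int\|a^{\sharp_2}_x\Psi\|^2\,dx\Big)^{1/2}.
\]
The second factor is controlled by $\|\mathcal N^{1/2}\Psi\|$ if $\sharp_2=\cdot$ and by $\|(\mathcal N+1)^{1/2}\Psi\|$ if $\sharp_2=*$ (since $\int\|a^*_x\Psi\|^2dx=\langle\Psi,\mathcal N\Psi\rangle+\langle\Psi,\|\Psi\|^2\rangle$-type identity, i.e.\ $\int a_x a_x^*\,dx=\mathcal N+\text{(const)}$ on each sector); the first factor is handled the same way, yielding a factor $\|j\|_2\|(\mathcal N+1)^{1/2}\Phi\|$ in all cases. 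Taking the supremum over $\|\Phi\|=1$ and absorbing the two half-powers of $(\mathcal N+1)$ into one power of $(\mathcal N+1)$ (using that $A_{\sharp_1,\sharp_2}(j)$ shifts the particle number by at most two, so $(\mathcal N+1)$ acting afterwards is comparable to $(\mathcal N+1)$ acting before, up to the harmless numerical constant absorbed into the $\sqrt2$) gives the claimed bound $\sqrt2\,\|j\|_2\,\|(\mathcal N+1)\Psi\|$ in the three ``generic'' cases.

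The one case requiring extra care is $\sharp_1=\cdot$, $\sharp_2=*$, i.e.\ $A_{\cdot,*}(j)=\int a(j_x)\,a^*_x\,dx$, where the annihilation operator sits to the \emph{left} of the creation operator. Here I would first normal-order: using the canonical commutation relations, $\int a(j_x)a^*_x\,dx=\int a^*_x a(j_x)\,dx+\int[a(j_x),a^*_x]\,dx=A_{*,\cdot}(j)'+\int j(x;x)\,dx$, where the commutator $[a(j_x),a^*_x]=\overline{j_x}(x)=\overline{j(x;x)}$ (up to conjugation conventions) produces the diagonal term $\int j(x;x)\,dx$ as a bounded multiplication operator, with operator norm at most $\int|j(x;x)|\,dx$; this is exactly why integrability of $x\mapsto j(x;x)$ was imposed in the definition. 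The remaining normal-ordered piece is of the already-treated type (creation on the left), so it is bounded by $\sqrt2\,\|j\|_2\,\|(\mathcal N+1)\Psi\|$, and adding the diagonal contribution gives the stated $\|j\|_2+\int|j(x;x)|\,dx$ prefactor.

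The ``$\flat$'' superscript on $A^\flat_{\sharp_1,\sharp_2}(j)$ presumably indicates an arbitrary choice of ordering among the equivalent normal/anti-normal forms (or a symmetrization), so the final step is to note that changing the order only ever introduces a diagonal commutator term already accounted for in the $\sharp_1=\cdot,\sharp_2=*$ bound, hence the single estimate in the lemma covers all orderings uniformly. I do not expect any genuine obstacle here — the whole argument is a bookkeeping exercise around (\ref{eq:abd}) and the CCR — the only point demanding attention is making the Cauchy--Schwarz step in the $x$-variable rigorous (it is the operator identity $\int a_x^*\,M\,a_x\,dx\le\|M\|_{\mathrm{op}}\,\mathcal N$ applied with $M$ built from $j$), and correctly tracking that the diagonal term appears \emph{only} in the one anti-normal-ordered case.
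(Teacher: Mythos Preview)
Your overall strategy --- reduce to (\ref{eq:abd}), Cauchy--Schwarz in $x$, and normal-order in the one anti-normal case --- is exactly what the paper has in mind (the paper itself gives no detailed proof, saying only that the lemma ``follows easily from (\ref{eq:abd})''). The treatment of the case $\sharp_1=\cdot,\sharp_2=*$ via the commutator $[a(j_x),a_x^*]=\overline{j(x;x)}$ is correct and is precisely why the diagonal integrability hypothesis is imposed. The superscript $\flat$ in the statement is a typographical artifact; the operator is just $A_{\sharp_1,\sharp_2}(j)$ from (\ref{eq:AAdef}).

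There is, however, a concrete error in your form-bound step. You write
\[
|\langle\Phi,A_{\sharp_1,\sharp_2}(j)\Psi\rangle|\le\Big(\int\|a^{\sharp_1}(j_x)^*\Phi\|^2\,dx\Big)^{1/2}\Big(\int\|a^{\sharp_2}_x\Psi\|^2\,dx\Big)^{1/2}
\]
and then claim that for $\sharp_2=*$ the second factor is $\|(\cN+1)^{1/2}\Psi\|$ because ``$\int a_x a_x^*\,dx=\cN+\text{(const)}$''. This is false: $\int a_x a_x^*\,dx$ is formally $\cN+\delta(0)\cdot\mathrm{vol}(\bR^3)$, hence $\int\|a_x^*\Psi\|^2\,dx=+\infty$ for every nonzero $\Psi$. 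So your Cauchy--Schwarz splitting breaks down whenever $\sharp_2=*$ (the cases $(*,*)$ and, before normal-ordering, $(\cdot,*)$).

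The fix is easy and does not change the spirit of the argument. For $\sharp_2=*$ simply move the $a_x^*$ to the other side of the pairing before splitting: e.g.\ for $(\sharp_1,\sharp_2)=(*,*)$ write $\langle\Phi,a^*(j_x)a_x^*\Psi\rangle=\langle a_x a(j_x)\Phi,\Psi\rangle$ and estimate the norm of the two-annihilation operator acting on $\Phi$; equivalently, bound $\|A_{*,*}(j)\Psi\|$ directly by $\int\|j_x\|_2\,\|(\cN+1)^{1/2}a_x^*\Psi\|\,dx$ using the pull-through $(\cN+1)^{1/2}a_x^*=a_x^*(\cN+2)^{1/2}$, which gives $\|j\|_2\,\|(\cN+1)^{1/2}(\cN+2)^{1/2}\Psi\|\le\sqrt2\,\|j\|_2\,\|(\cN+1)\Psi\|$ and explains where the $\sqrt2$ comes from. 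With this correction your argument goes through and matches the paper's intended one-line derivation from (\ref{eq:abd}).
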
 

We will work on certain subspaces of $\cF$. For a fixed $\ph \in L^2 (\bR^3)$ ($\ph$ will later be the condensate wave function), we use the notation $L^2_{\perp \ph} (\bR^3)$ for the orthogonal complement of the one dimensional space spanned by $\ph$ in $L^2 (\bR^3)$. We denote by
\[ \cF_{\perp \ph} = \bigoplus_{n \geq 0} L^2_{\perp \ph} (\bR^3)^{\otimes_s n} \]
the Fock space constructed over $L^2_{\perp \ph} (\bR^3)$. A vector $\Psi = \{ \psi^{(0)}, \psi^{(1)}, \dots \} \in \cF$ lies in $\cF_{\perp \ph}$, if $\psi^{(n)}$ is orthogonal to $\ph$, in each of its coordinate, for all $n \geq 1$, i.e. if
\[ \int \bar{\ph} (x) \, \psi^{(n)} (x,y_1, \dots , y_{n-1}) dx = 0 \]
for all $n \geq 1$. We will also need Fock spaces with truncated number of particles. For $N \in \bN \backslash \{ 0 \}$, we define 
\[ \cF^{\leq N} = \bigoplus_{n=0}^N L^2 (\bR^3)^{\otimes_s n} \qquad \text{and } \quad  \cF_{\perp \ph}^{\leq N} = \bigoplus_{n=0}^N L^2_{\perp \ph} (\bR^3)^{\otimes_s n} \]
as the Fock spaces over $L^2 (\bR^3)$ and over $L^2_{\perp \ph} (\bR^3)$ consisting of states with at most $N$ particles. As already explained in the introduction (but see Section \ref{sec:fluc} for more details), on the space $\cF_{\perp \ph}^{\leq N}$ we will describe orthogonal fluctuations around a condensate with wave function $\ph \in L^2 (\bR^3)$.

On $\cF^{\leq N}$ and $\cF_{\perp \ph}^{\leq N}$, we introduce modified creation and annihilation operators. For $f \in L^2 (\bR^3)$, we define 
\begin{equation}\label{eq:bb-fie} b (f) = \sqrt{\frac{N- \cN}{N}} \, a (f), \qquad \text{and } \quad  b^* (f) = a^* (f) \, \sqrt{\frac{N-\cN}{N}} \end{equation}
We clearly have $b(f), b^* (f) : \cF^{\leq N} \to \cF^{\leq N}$. If moreover $f \perp \ph$ we also have $b(f), b^* (f): \cF^{\leq N}_{\perp \ph} \to \cF^{\leq N}_{\perp \ph}$. As we will discuss in the next section, the importance of these fields arises from the application of the map $U (\ph)$, defined in (\ref{eq:UNdef}), since 
\begin{equation}\label{eq:UaU} 
\begin{split} 
U (\ph) a^* (f) a(\ph) U^* (\ph) &= a^* (f) \sqrt{N-\cN}  = \sqrt{N} b^* (f)  \\
U (\ph) a^* (\ph) a(f) U^* (\ph) &=  \sqrt{N-\cN} \, a(f) = \sqrt{N} \, b (f)
\end{split} \end{equation}
If $\ph$ is the condensate wave function and $f \perp \ph$, the operator $b^* (f)$ excites a particle from the condensate to its orthogonal complement, while $b(f)$ annihilates an excitation back into the condensate. On states exhibiting Bose-Einstein condensation, we expect $a(\ph) , a^* (\ph) \simeq \sqrt{N}$ and thus that the action of modified $b^*$- and $b$-fields is close to the action of the original creation and annihilation operators. 

It is also convenient to introduce operator valued distributions 
\[ b_x = \sqrt{\frac{N-\cN}{N}} \, a_x, \qquad \text{and } \quad  b^*_x = a^*_x \, \sqrt{\frac{N-\cN}{N}} \]
so that 
\[ b(f) = \int \bar{f} (x) \, b_x \, dx , \qquad \text{and } \quad b^* (f) = \int f(x) b^*_x \, dx \]
We find the modified canonical commutation relations
\begin{equation}\label{eq:comm-b}
\begin{split}  [ b_x, b_y^* ] &= \left( 1 - \frac{\cN}{N} \right) \delta (x-y) - \frac{1}{N} a_y^* a_x \\ 
[ b_x, b_y ] &= [b_x^* , b_y^*] = 0 
\end{split} \end{equation}
Furthermore
\begin{equation}\label{eq:comm-b2}
\begin{split}
[b_x, a_y^* a_z] &=\delta (x-y) b_z, \qquad 
[b_x^*, a_y^* a_z] = -\delta (x-z) b_y^*
\end{split} \end{equation}
which leads to $[ b_x, \cN ] = b_x$ and $[ b_x^* , \cN ] = -b_x^*$. {F}rom (\ref{eq:abd}), we immediately obtain the following bounds for the $b$-fields. 
\begin{lemma}\label{lm:bbds}
Let $f \in L^2 (\bR^3)$. For any $\xi \in \cF^{\leq N}$, we have 
\[ \begin{split} 
\| b(f) \xi \| \leq \| f \|_2 \left\| \cN^{1/2} \left( \frac{N-\cN+1}{N} \right)^{1/2} \xi \right\|  \leq \| f \|_2 \| \cN^{1/2} \xi \| \\
\| b^* (f) \xi \| \leq \| f \|_2 \left\| (\cN +1)^{1/2} \left( \frac{N-\cN}{N} \right)^{1/2} \xi \right\|\leq \| f \|_2 \| (\cN + 1)^{1/2} \xi \|
\end{split} \]
Notice, moreover, that since $\cN \leq N$ on $\cF^{\leq N}$, $b(f), b^* (f) : \cF^{\leq N} \to \cF^{\leq N}$ are bounded operators with $\| b(f) \|, \| b^* (f) \| \leq (N+1)^{1/2} \| f \|_2$. 
\end{lemma}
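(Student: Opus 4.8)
The plan is to reduce both estimates to the elementary bounds (\ref{eq:abd}) for the ordinary creation and annihilation operators, since on $\cF^{\leq N}$ the scalar factor $\sqrt{(N-\cN)/N}$ is just a bounded function of $\cN$. For the $b$-field the one extra ingredient is the pull-through identity: because $a(f)$ lowers the number of particles by one, checking sector by sector (or iterating $[\cN,a(f)] = -a(f)$) one gets $h(\cN)\,a(f) = a(f)\,h(\cN-1)$ for any function $h$; applied with $h(\cN) = \big((N-\cN)/N\big)^{1/2}$ this yields the bounded operator identity $b(f) = \big((N-\cN)/N\big)^{1/2} a(f) = a(f)\,\big((N-\cN+1)/N\big)^{1/2}$ on $\cF^{\leq N}$.

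Given this, the first displayed chain of inequalities follows by applying (\ref{eq:abd}) to the vector $\big((N-\cN+1)/N\big)^{1/2}\xi$: one has $\|b(f)\xi\| = \big\|a(f)\,\big((N-\cN+1)/N\big)^{1/2}\xi\big\| \leq \|f\|_2\,\big\|\cN^{1/2}\big((N-\cN+1)/N\big)^{1/2}\xi\big\|$, using that $\cN^{1/2}$ and $\big((N-\cN+1)/N\big)^{1/2}$ commute, both being functions of $\cN$. The second inequality in that display is then the operator bound $\cN\,(N-\cN+1)/N \leq \cN$ on $\cF^{\leq N}$, which is immediate on the $n$-particle subspace: the left side equals $n(N-n+1)/N \leq n$ because $N-n+1 \leq N$ whenever $n \geq 1$, and both sides vanish for $n=0$.

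For the $b^*$-field no commutation is needed: from $b^*(f) = a^*(f)\,\big((N-\cN)/N\big)^{1/2}$ and (\ref{eq:abd}) applied to $\big((N-\cN)/N\big)^{1/2}\xi$ one obtains $\|b^*(f)\xi\| \leq \|f\|_2\,\big\|(\cN+1)^{1/2}\big((N-\cN)/N\big)^{1/2}\xi\big\|$, and since $0 \leq (N-\cN)/N \leq 1$ on $\cF^{\leq N}$ the extra factor can simply be dropped, giving $\|b^*(f)\xi\| \leq \|f\|_2\,\|(\cN+1)^{1/2}\xi\|$. The operator norm bounds follow from these estimates together with $\cN \leq N$ on $\cF^{\leq N}$, so that $\|\cN^{1/2}\xi\| \leq N^{1/2}\|\xi\|$ and $\|(\cN+1)^{1/2}\xi\| \leq (N+1)^{1/2}\|\xi\|$, whence $\|b(f)\|,\|b^*(f)\| \leq (N+1)^{1/2}\|f\|_2$. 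The only point requiring a little care — and the closest thing to an obstacle in an otherwise routine argument — is that the factor $(N-\cN+1)/N$ arising for $b(f)$ exceeds $1$ on the zero-particle sector, so it must be kept together with the $\cN$ standing in front of it rather than estimated on its own; the operator inequality $\cN(N-\cN+1)/N \leq \cN$ above disposes of this cleanly.
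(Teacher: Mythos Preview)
Your proof is correct and follows exactly the approach the paper indicates: the paper simply states that the lemma follows immediately from the bounds (\ref{eq:abd}), and you have spelled out precisely those details --- the pull-through identity to shift $\sqrt{(N-\cN)/N}$ past $a(f)$, the application of (\ref{eq:abd}) to the resulting vector, and the elementary operator inequality on the $n$-particle sectors.
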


We will also consider quadratic expressions in the $b$ fields. For an integral kernel $j \in L^2 (\bR^3 \times \bR^3)$, we define, similarly to (\ref{eq:AAdef}), 
\begin{equation}\label{eq:nota} \begin{split} 
B_{\sharp_1, \sharp_2} (j) &= \int b^{\sharp_1} (j_{x}) b^{\sharp_2}_x \, dx = \int j^{\bar{\sharp}_1} (x;y) b^{\sharp_1}_y b^{\sharp_2}_x \, dx dy 
\end{split}  \end{equation}
If $\sharp_1 = \cdot$ and $\sharp_2 = *$, we also require that $x \to j(x;x)$ is integrable. From Lemma \ref{lm:Abds}, we obtain the following bounds.
\begin{lemma}\label{lm:Bbds}
Let $j \in L^2 (\bR^3 \times \bR^3)$. Then
\[ \begin{split} \frac{\| B_{\sharp_1,\sharp_2} (j) \Psi \|}{\left\| (\cN+1) \left(\frac{N-\cN+2}{N} \right) \Psi \right\|} &
\leq \sqrt{2}   \left\{ \begin{array}{ll}  \| j \|_2 + \int |j(x;x)| dx \qquad &\text{if } \sharp_1 = \cdot, \sharp_2 = * \\  \| j \|_2  \qquad &\text{otherwise} \end{array} \right. 
\end{split} \]
for all $\Psi \in \cF^{\leq N}$. Since $\cN \leq N$ on $\cF^{\leq N}$, the operator $B_{\sharp_1 , \sharp_2} (j)$ is bounded, with \[ \begin{split} \| B_{\sharp_1, \sharp_2} (j) \| &\leq \sqrt{2} N \left\{ \begin{array}{ll} \| j \|_2 + \int |j(x;x)| dx \quad &\text{if } \sharp_1 = \cdot, \sharp_2 = * \\  \| j \|_2  \qquad &\text{otherwise} \end{array} \right. \end{split} \]
\end{lemma}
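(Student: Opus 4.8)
The plan is to deduce both estimates from the corresponding bounds for the $A$-operators in Lemma~\ref{lm:Abds}. The idea is that each $B_{\sharp_1,\sharp_2}(j)$ equals the corresponding operator $A_{\sharp_1,\sharp_2}$ (with a kernel having the same $L^2$-norm as $j$), multiplied on the right by a bounded function of $\cN$ coming from the weights $\big((N-\cN)/N\big)^{1/2}$ entering the definition (\ref{eq:bb-fie}) of the $b$-fields.

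First I would record the elementary shift relations $a_x\,F(\cN) = F(\cN+1)\,a_x$ and $a_x^*\,F(\cN) = F(\cN-1)\,a_x^*$, valid for every bounded measurable $F$, and deduce from them (together with (\ref{eq:bb-fie})) the identities $b(f) = a(f)\big((N-\cN+1)/N\big)^{1/2}$ and $b^*(f) = a^*(f)\big((N-\cN)/N\big)^{1/2}$ on $\cF^{\leq N}$. Inserting these into the definition (\ref{eq:nota}) and commuting, with the help of the shift relations, every $\cN$-weight through both field operators to the far right, one arrives, for $\Psi\in\cF^{\leq N}$, at an identity
\[ B_{\sharp_1,\sharp_2}(j) \;=\; A_{\sharp_1,\sharp_2}(j')\, G_{\sharp_1,\sharp_2}(\cN), \qquad \|j'\|_2 = \|j\|_2, \]
where $G_{\sharp_1,\sharp_2}(\cN)$ is a product of two factors of the form $\big((N-\cN+k)/N\big)^{1/2}$ with $k\leq 2$; explicitly $G_{*,\cdot} = (N-\cN+1)/N$, $G_{\cdot,\cdot} = \big((N-\cN+2)/N\big)^{1/2}\big((N-\cN+1)/N\big)^{1/2}$ and $G_{*,*} = \big((N-\cN-1)/N\big)^{1/2}\big((N-\cN)/N\big)^{1/2}$. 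In the remaining case $\sharp_1 = \cdot,\ \sharp_2 = *$ one has to normal-order first, $a(j_x)a_x^* = a_x^*\,a(j_x) + \overline{j(x;x)}$, which gives $B_{\cdot,*}(j) = \big(\int a_x^* a(j_x)\,dx\big)\tfrac{N-\cN}{N} + \big(\int \overline{j(x;x)}\,dx\big)\tfrac{N-\cN}{N}$; here the first term is an already normal-ordered quadratic expression to which Lemma~\ref{lm:Abds} applies in its ``otherwise'' form (with a kernel of $L^2$-norm $\|j\|_2$), while the second, diagonal, term is what produces the additional contribution $\int |j(x;x)|\,dx$.

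The one genuinely delicate point --- and the step I would expect to need the most care --- is the status of the weights $G_{\sharp_1,\sharp_2}(\cN)$: a single factor $\big((N-\cN-1)/N\big)^{1/2}$ is not defined on all of $\cF^{\leq N}$, but the composition $G_{\sharp_1,\sharp_2}(\cN)$, read as acting \emph{first}, from the right, on $\Psi\in\cF^{\leq N}$, is well-defined, non-negative, and bounded above by $(N-\cN+2)/N$ on $\cF^{\leq N}$ --- the rightmost factor maps $\cF^{\leq N}$ into a sector on which the remaining factor has non-negative argument, and on each sector the product of the two square roots is dominated by $(N-\cN+2)/N$. Granting this, the bounds follow: applying Lemma~\ref{lm:Abds} to $A_{\sharp_1,\sharp_2}(j')\,G_{\sharp_1,\sharp_2}(\cN)\Psi$, and then using that $\cN+1$, $G_{\sharp_1,\sharp_2}(\cN)$ and $(N-\cN+2)/N$ are commuting non-negative operators on $\cF^{\leq N}$ with $G_{\sharp_1,\sharp_2}(\cN)\leq (N-\cN+2)/N$ and $\cN+1\geq 1$, one obtains
\[ \big\| B_{\sharp_1,\sharp_2}(j)\,\Psi \big\| \;\leq\; \sqrt{2}\,\Big(\|j\|_2 + \textstyle\int |j(x;x)|\,dx\Big)\,\Big\| (\cN+1)\,\tfrac{N-\cN+2}{N}\,\Psi \Big\|, \]
where, exactly as in Lemma~\ref{lm:Abds}, the term $\int |j(x;x)|\,dx$ is retained only when $\sharp_1 = \cdot,\ \sharp_2 = *$; this is the asserted estimate. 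For the operator-norm bound, $\cN\leq N$ on $\cF^{\leq N}$ gives $(\cN+1)(N-\cN+2)\leq\big((N+3)/2\big)^2$, hence $\big\|(\cN+1)(N-\cN+2)/N\big\|\leq N$ for $N$ large, and therefore $\|B_{\sharp_1,\sharp_2}(j)\|\leq\sqrt{2}\,N\,\{\cdots\}$. Apart from the domain check above, everything is routine bookkeeping of the integer shifts produced by the commutators, with no analytic input beyond Lemma~\ref{lm:Abds}.
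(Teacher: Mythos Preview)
Your approach is correct and matches the paper's: the paper does not spell out the argument for Lemma~\ref{lm:Bbds} but indicates it follows from Lemma~\ref{lm:Abds}, and the pull-through computation you describe is exactly what the paper carries out explicitly in the proof of the more general Lemma~\ref{lm:Pi-bds}. One small simplification: in the case $\sharp_1=\cdot,\ \sharp_2=*$ there is no need to normal-order first, since the pull-through already gives $B_{\cdot,*}(j)=A_{\cdot,*}(j)\,(N-\cN)/N$ and Lemma~\ref{lm:Abds} applies directly to $A_{\cdot,*}(j)$ with the $\|j\|_2+\int|j(x;x)|\,dx$ constant.
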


{\it Remark:} For $\ph \in L^2 (\bR^3)$, let $q_\ph = 1- |\ph \rangle \langle \ph|$ be the orthogonal projection onto $L^2_{\perp \ph} (\bR^3)$. If $j \in (q_{\ph^{\bar{\sharp}_1}} \otimes q_{\ph^{\bar{\sharp}_2}} ) (L^2 (\bR^3 \times \bR^3))$, we have $B_{\sharp_1, \sharp_2} (j) : \cF^{\leq N}_{\perp \ph} \to \cF^{\leq N}_{\perp \ph}$ (here we use the notation $\bar{\sharp} = *$ if $\sharp = \cdot$ and $\bar{\sharp} = \cdot$ if $\sharp = *$, and $\ph^{\sharp} = \ph$ if $\sharp = *$, $\ph^{\sharp} = \bar{\ph}$ if $\sharp = \cdot$).  

\medskip

We will consider products of several creation and annihilation operators, as well. In particular, two types of monomials in creation and annihilation operators will play an important role in our analysis. We define  
\begin{equation}\label{eq:Pi2} \Pi^{(2)}_{\sharp, \flat} (j_1, \dots , j_n) = \int   b^{\flat_0}_{x_1} a_{y_1}^{\sharp_1} a_{x_2}^{\flat_1} a_{y_2}^{\sharp_2} a_{x_3}^{\flat_2} \dots  a_{y_{n-1}}^{\sharp_{n-1}} a_{x_n}^{\flat_{n-1}} b^{\sharp_n}_{y_n} \, \prod_{\ell=1}^n j_\ell (x_\ell; y_\ell) \, dx_\ell dy_\ell \end{equation}
where $j_k \in L^2 (\bR^3 \times \bR^3)$ for $k=1,\dots ,n$ and where $\sharp = (\sharp_1, \dots , \sharp_n), \flat = (\flat_0, \dots , \flat_{n-1}) \in \{ \cdot, * \}^n$. In other words, for every index $i \in \{ 1, \dots , n\}$, we have either $\sharp_i = \cdot$ (meaning that $a^{\sharp_i} = a$ or $b^{\sharp_i} = b$) or $\sharp_i = *$ (meaning that $a^{\sharp_i} = a^*$ or $b^{\sharp_i} = b^*$) and analogously for $\flat_i$, if $i \in \{0, \dots , n-1\}$. Furthermore, for $\ell = 1, \dots , n-1$, we impose the condition that either $\sharp_\ell = \cdot$ and $\flat_\ell = *$ or $\sharp_\ell=*$ and $\flat_\ell = \cdot$ (so that the product $a_{y_\ell}^{\sharp_\ell} a_{x_{\ell+1}}^{\flat_\ell}$ always preserves the number of particles). If $\flat_{i-1} = \cdot$ and $\sharp_i = *$ (i.e. if the product $a_{x_i}^{\flat_{i-1}} a_{y_i}^{\sharp_i}$ for $i=2,\dots , n$, or the product $b_{x_1}^{\flat_0} a_{y_1}^{\sharp_1}$ for $i=1$, is not normally ordered) we require additionally $x \to j_i (x;x)$ to be integrable. An operator of the form (\ref{eq:Pi2}), with all the properties listed above, will be called 
a $\Pi^{(2)}$-operator of order $n$.

Next, we define 
\begin{equation}\label{eq:Pi1} \Pi^{(1)}_{\sharp,\flat} (j_1, \dots ,j_n;f) = \int b^{\flat_0}_{x_1} a_{y_1}^{\sharp_1} a_{x_2}^{\flat_1} a_{y_2}^{\sharp_2} a_{x_3}^{\flat_2} \dots  a_{y_{n-1}}^{\sharp_{n-1}} a_{x_n}^{\flat_{n-1}} a^{\sharp_n}_{y_n} a^{\flat n} (f)\, \prod_{\ell=1}^n j_\ell (x_\ell; y_\ell) \, dx_\ell dy_\ell 
\end{equation}
where $f \in L^2 (\bR^3)$, $j_k \in L^2 (\bR^3 \times \bR^3)$ for all $k =1, \dots , n$, $\sharp = (\sharp_1, \dots , \sharp_n) \in \{ \cdot , * \}^{n}$, $\flat = (\flat_0, \dots , \flat_n)\in \{ \cdot , * \}^{n+1}$ with the condition that, for all $\ell =1, \dots , n$, we either have $\sharp_\ell = \cdot$ and $\flat_\ell = *$ or $\sharp_\ell = *$ and $\flat_\ell = \cdot$. Additionally, we assume that $x \to j_i  (x;x)$ is integrable, if $\flat_{i-1} = \cdot$ and $\sharp_i = *$ for an $i=1,\dots , n$. An operator of the form (\ref{eq:Pi1}) will be called a $\Pi^{(1)}$-operator of order $n$. Operators of the form $b(f)$, $b^* (f)$, for a $f \in L^2 (\bR^3)$, will be called $\Pi^{(1)}$-operators of order zero. It will also be useful to consider
\begin{equation}\label{eq:wtPi1} \wt{\Pi}^{(1)}_{\sharp,\flat} (j_1, \dots ,j_n;f) = \int a^{\flat_0} (f) a_{x_1}^{\sharp_0} a_{y_1}^{\flat_1} a_{x_2}^{\sharp_1} a_{y_2}^{\flat_2} a_{x_3}^{\sharp_2} \dots a_{y_{n-1}}^{\flat_{n-1}} a_{x_n}^{\sharp_{n-1}}  b_{y_n}^{\flat_n} \prod_{\ell=1}^n j_\ell (x_\ell ; y_\ell) dx_\ell dy_\ell 
\end{equation}
where $f \in L^2 (\bR^3)$, $j_k \in L^2 (\bR^3 \times \bR^3)$ for all $k =1, \dots , n$, $\sharp = (\sharp_0, \dots , \sharp_{n-1}) \in \{ \cdot , * \}^{n}$, $\flat = (\flat_0, \dots , \flat_n) \in \{ \cdot , * \}^{n+1}$ with the condition that, for every $\ell \in \{0,\dots , n-1\}$, either $\sharp_\ell = \cdot$ and $\flat_\ell = *$ or $\sharp_\ell = *$ and $\flat_\ell = \cdot$. As above, we also assume that $x \to j_i  (x;x)$ is integrable, if $\flat_{i-1} = \cdot$ and $\sharp_i = *$, for $i=1,\dots , n$. Observe that 
\[ \Pi^{(1)}_{\sharp, \flat} (j_1, \dots, j_n ; f)^* = \wt{\Pi}_{\sharp', \flat'}^{(1)} (j_n, \dots , j_1 ; f) \]
with $\flat' = (\bar{\flat}_n , \dots, \bar{\flat}_0)$, $\sharp' = (\bar{\sharp}_n, \dots , \bar{\sharp}_1)$, where $\bar{\flat} = \cdot$ if $\flat = *$ and $\bar{\flat} = *$ if $\flat = \cdot$ (and similarly for $\bar{\sharp}$).

Notice that $\Pi^{(2)}$-operators involve two $b$ operators and therefore may create or annihilate up to two excitations of the condensate (depending on the choice of $\flat_0$ and $\sharp_n$, they may also leave the number of excitations invariant). $\Pi^{(1)}$- and $\wt{\Pi}^{(1)}$-operators, on the other hand, create or annihilate exactly one excitation. The conditions on the number of creation and annihilation operators guarantee that $\Pi^{(2)}$-, $\Pi^{(1)}$- and $\wt{\Pi}^{(1)}$-operators always map $\cF^{\leq N}$ back into itself. In the next lemma we collect bounds that we are going to use to control these operators. 
\begin{lemma}\label{lm:Pi-bds}
Let $n \in \bN$, $f \in L^2 (\bR^3)$, $j_1,\dots , j_n \in L^2 (\bR^3 \times \bR^3)$. We assume the operators $\Pi^{(2)}_{\sharp,\flat} (j_1,\dots , j_n)$ and $\Pi^{(1)}_{\sharp,\flat} (j_1,\dots, j_n ; f)$ are defined as in (\ref{eq:Pi2}), (\ref{eq:Pi1}). Then we have the bounds
\begin{equation}\label{eq:Pi-bds} \begin{split} 
\left\| \Pi^{(2)}_{\sharp,\flat} (j_1,\dots ,j_n) \xi \right\| &\leq 6^n \prod_{\ell=1}^n K_\ell^{\flat_{\ell-1}, \sharp_\ell} \left\| (\cN+1)^n \left(1- \frac{\cN-2}{N} \right) \xi \right\| \\
 \left\| \Pi^{(1)}_{\sharp,\flat} (j_1,\dots , j_n;f) \xi \right\| 
 &\leq 6^n \| f \| \prod_{\ell=1}^n K_\ell^{\flat_{\ell-1}, \sharp_\ell} \left\| (\cN+1)^{n+1/2} \left(1- \frac{\cN-2}{N} \right)^{1/2} \xi \right\| 
 \end{split} \end{equation} 
where
\[ K_\ell^{\flat_{\ell-1}, \sharp_\ell} = \left\{ \begin{array}{ll} \|j_\ell \|_2 + \int |j_\ell (x;x)| \, dx \quad &\text{if } \flat_{\ell-1} = \cdot \text{ and } \sharp_\ell = * \\
\| j_\ell \|_2 \quad &\text{otherwise} \end{array} \right. \]
Since $\cN \leq N$ on $\cF^{\leq N}$, it follows that $\Pi^{(2)}_{\sharp,\flat} (j_1, \dots , j_n), \Pi^{(1)}_{\sharp,\flat} (j_1, \dots , j_n ;f)$ are bounded operators on $\cF^{\leq N}$, with 
\[ \begin{split} 
\left\|  \Pi^{(2)}_{\sharp,\flat} (j_1,\dots, j_n) \right\| &\leq (12 N)^n \prod_{\ell=1}^n K_\ell^{\flat_{\ell-1}, \sharp_\ell} \\ 
\left\|  \Pi^{(1)}_{\sharp,\flat} (j_1,\dots ,j_n;f) \right\| &\leq (12 N)^n \sqrt{N} \| f \|_2 \prod_{\ell=1}^n K_\ell^{\flat_{\ell-1}, \sharp_\ell} 
\end{split} \]
\end{lemma}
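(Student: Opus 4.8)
# Proof Proposal for Lemma \ref{lm:Pi-bds}

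The plan is to prove both bounds in (\ref{eq:Pi-bds}) simultaneously by induction on the order $n$, peeling off creation/annihilation operators one pair at a time from one end of the monomial and tracking how the number operator bound propagates. Before starting the induction, it is worth recording the key mechanism: each time we move an operator $a^\sharp_x$ or $a^\flat_x$ past the remaining product to act on the vector $\xi$, we lose a factor of $(\cN+1)^{1/2}$ (by (\ref{eq:abd})), and when the displaced operator is integrated against a kernel $j_\ell$ we gain a factor $\|j_\ell\|_2$ — unless the relevant pair $a_{x_i}^{\flat_{i-1}} a_{y_i}^{\sharp_i}$ fails to be normally ordered, in which case contracting the two operators produces the diagonal term $\int |j_i(x;x)|\,dx$, which is precisely why $K_\ell^{\flat_{\ell-1},\sharp_\ell}$ is defined with the extra diagonal contribution in that case. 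The constant $6^n$ is generous enough to absorb the bookkeeping: splitting $a^\sharp = a^*$ into a commutator plus a normally-ordered piece, and repeatedly using $[\cN,a_x^\sharp] = \pm a_x^\sharp$ together with $\cN \le N$ to commute number operators to the right, each introduces only $O(1)$ many terms per step.

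\emph{Base case.} For $\Pi^{(2)}$, order $n=1$ means $\Pi^{(2)}_{\sharp,\flat}(j_1) = \int j_1^{\bar\flat_0}(x;y)\, b_y^{\flat_0} b_x^{\sharp_1}\, dx\,dy = B_{\flat_0,\sharp_1}(j_1)$ (up to relabeling), which is exactly the content of Lemma \ref{lm:Bbds}; note $\|(\cN+1)(\tfrac{N-\cN+2}{N})\Psi\| \le \|(\cN+1)(1-\tfrac{\cN-2}{N})\Psi\|$ so the constant $\sqrt2 \le 6$ works. For $\Pi^{(1)}$ of order zero, the statement is just $\|b^\sharp(f)\xi\| \le \|f\|\,\|(\cN+1)^{1/2}\xi\|$ from Lemma \ref{lm:bbds}; for order $n=1$ one combines Lemma \ref{lm:Bbds} (or Lemma \ref{lm:Abds}) with the bound on $a^\flat(f)$ from (\ref{eq:abd}) after commuting $a^\flat(f)$ appropriately.

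\emph{Inductive step.} Given the bound for order $n-1$, write a $\Pi^{(2)}$-operator of order $n$ as $\Pi^{(2)}_{\sharp,\flat}(j_1,\dots,j_n) = \int dx_n\, dy_n\, j_n(x_n;y_n)\, \big[\Pi^{(1)}\text{-type tail}\big]\, a_{x_n}^{\flat_{n-1}} b_{y_n}^{\sharp_n}$, i.e. factor off the rightmost pair $a_{x_n}^{\flat_{n-1}} b_{y_n}^{\sharp_n}$. If $\sharp_n = \cdot$, then $b_{y_n}$ acting on $\xi$ gives a factor $\|(\tfrac{N-\cN+1}{N})^{1/2}\xi'\| \le \|\xi'\|$ and $(\cN+1)^{1/2}$; if $\sharp_n = *$ one first commutes $a_{x_n}^{\flat_{n-1}}$ and $b_{y_n}^*$ — using (\ref{eq:comm-b2}) and the fact that by hypothesis the pair $(\flat_{n-1},\sharp_n)$ with $\flat_{n-1}=\cdot,\sharp_n=*$ is accompanied by integrability of $x\mapsto j_n(x;x)$ — separating the normally-ordered term (handled by the induction hypothesis applied to the tail plus the creation operator, using the $\Pi^{(1)}$ bound) from the contracted term (which produces $\int|j_n(x;x)|dx$ and reduces the order by one). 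In each branch, after $a_{x_n}^{\flat_{n-1}}$ is moved onto $\xi$ we pick up another $(\cN+1)^{1/2}$, and commuting the resulting $\cN$-factors through the tail — which changes $\cN$ by at most a bounded amount, controlled since $\cN\le N$ — upgrades $(\cN+1)^{n-1}$ to $(\cN+1)^{n}$; the factor $(1-\tfrac{\cN-2}{N})$ is stable under these shifts up to the stated form. Assembling: $6^{n-1}\cdot 6 = 6^n$, $\prod_{\ell=1}^{n-1}K_\ell \cdot K_n = \prod_{\ell=1}^n K_\ell$. The $\Pi^{(1)}$ case is identical except that the innermost object is $a^{\flat_n}(f)$ rather than a $b$-field, contributing $\|f\|$ and a single $(\cN+1)^{1/2}$, which explains the exponent $n+1/2$ and the square root on $(1-\tfrac{\cN-2}{N})$.

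The main obstacle is the careful bookkeeping in the non-normally-ordered branches: when $\flat_{i-1}=\cdot$ and $\sharp_i=*$ the contraction $[a_{x_i},a_{y_i}^*]=\delta(x_i-y_i)$ collapses the pair against $j_i$, but one must check this happens cleanly at whichever end of the monomial the induction peels, that the remaining operator product is still a legitimate $\Pi^{(2)}$- or $\Pi^{(1)}$-operator of order $n-1$ satisfying the structural hypotheses (alternating $\sharp_\ell,\flat_\ell$ and integrability where required), and that the extra $(\cN+1)$ powers generated by commuting number operators past the $2(n-1)$ remaining operators stay within the claimed $(\cN+1)^n$ — each commutator $[\cN, a^\sharp_x]=\pm a^\sharp_x$ only shifts $\cN$ by $1$, and there are at most $O(n)$ of them, but all such shifts must be absorbed into the single power $(\cN+1)^n$ rather than accumulating, which works because we apply (\ref{eq:abd}) to move operators onto $\xi$ \emph{before} rather than after estimating, so the number operator only ever counts particles in the final vector. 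The operator-norm bounds in the last display then follow immediately by substituting $\cN\le N$, i.e. $(\cN+1)^n(1-\tfrac{\cN-2}{N}) \le (N+1)^n\cdot 2 \le (2N)^n\cdot 2 \le (12N)^n / 6^n \cdot 6^n$ — more simply, $6^n(N+1)^n \cdot 2 \le (12N)^n$ for $N\ge1$ — and similarly for $\Pi^{(1)}$ with the extra $\sqrt N$ coming from $(\cN+1)^{1/2}(1-\tfrac{\cN-2}{N})^{1/2}\le (2N)^{1/2}\le$ (absorbed).
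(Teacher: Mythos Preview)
Your inductive peel-off strategy takes a genuinely different and more laborious route than the paper. The paper's proof is essentially one line: write the outer $b$-fields as $a$-fields times functions of $\cN$ (e.g.\ $b_{x_1}=a_{x_1}\sqrt{(N-\cN+1)/N}$ when $\flat_0=\cdot$) and use the pull-through formula $g(\cN)a_x=a_x\,g(\cN-1)$ to commute both square-root factors all the way to the right. Because each intermediate pair $a_{y_\ell}^{\sharp_\ell}a_{x_{\ell+1}}^{\flat_\ell}$ commutes with $\cN$ (this is precisely the alternating condition in the definition of $\Pi^{(2)}$), the square roots pass through with only a bounded shift, and what remains is \emph{exactly} the ordered product $\prod_{\ell=1}^n A_{\flat_{\ell-1},\sharp_\ell}(j_\ell)$ of the quadratic operators from (\ref{eq:AAdef}), acting on a vector of the form $g(\cN)\xi$. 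The bound then follows by iterating Lemma~\ref{lm:Abds}---no induction on $n$, no case split on normal ordering (the non-normally-ordered case $\flat_{\ell-1}=\cdot,\sharp_\ell=*$ is already absorbed into the constant in Lemma~\ref{lm:Abds}), and no commutator terms to chase.

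Your approach, by contrast, has a structural gap that you partly flag yourself: after peeling off the rightmost pair $a_{x_n}^{\flat_{n-1}}b_{y_n}^{\sharp_n}$, the remaining string $b_{x_1}^{\flat_0}a_{y_1}^{\sharp_1}\cdots a_{y_{n-1}}^{\sharp_{n-1}}$ ends in a bare $a_{y_{n-1}}^{\sharp_{n-1}}$, not a $b$-field or an $a^{\flat}(f)$, so it is neither a $\Pi^{(2)}$- nor a $\Pi^{(1)}$-operator of order $n-1$ in the sense of (\ref{eq:Pi2})--(\ref{eq:Pi1}), and the induction hypothesis does not literally apply. To close this you would have to enlarge the family of operators you induct over and redo the base case. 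The paper's global factorization sidesteps all of this.
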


{\it Remark:} if $j_i \in (q_{\ph^{\bar{\flat}_{i-1}}} \otimes q_{\ph^{\bar{\sharp}_i}}) L^2 (\bR^3 \times \bR^3)$ for all $i =1, \dots , n$ and if $f \in L^2_\perp (\bR^3)$, then $\Pi^{(2)}_{\sharp,\flat} (j_1, \dots , j_n)$ and $\Pi^{(1)}_{\sharp,\flat} (j_1, \dots , j_n; f)$ map $\cF_{\perp \ph}^{\leq N}$ into itself. 

\begin{proof}
We consider operators of the form (\ref{eq:Pi2}). Let us assume, for example, that $\flat_0 = \cdot$ and $\sharp_n = \cdot$. Then we have, writing $b_{x_1} = a_{x_1} (1-\cN/N)^{1/2}$ and $b_{y_n} = a_{y_n} (1-\cN/N)^{1/2}$ and using the pull-through formula $g(\cN) a_x = a_x g(\cN - 1)$,  
\[ \begin{split} \Pi^{(2)}_{\sharp,\flat} &(j_1, \dots , j_n) \\ &= \int a_{x_1} \left(\frac{N-\cN}{N} \right)^{1/2}  a_{y_1}^{\sharp_1} \dots a_{y_{n-1}}^{\sharp_{n-1}} a_{x_n}^{\flat_{n-1}} a_{y_n} \left(\frac{N-\cN}{N} \right)^{1/2} \prod_{\ell=1}^n j_\ell (x_\ell;y_\ell) dx_\ell dy_\ell \\
&= \int a_{x_1} a_{y_1}^{\sharp_1} \dots a_{y_{n-1}}^{\sharp_{n-1}} a_{x_n}^{\flat_{n-1}} a_{y_n} \left(\frac{N-\cN+1}{N} \right)^{1/2} \left(\frac{N-\cN}{N} \right)^{1/2} \prod_{\ell=1}^n j_\ell (x_\ell;y_\ell) dx_\ell dy_\ell \\
&= \prod_{\ell=1}^n A^{\flat_{\ell-1},\sharp_\ell} (j_\ell)  \left(\frac{N-\cN+1}{N} \right)^{1/2} \left(\frac{N-\cN}{N} \right)^{1/2}
\end{split} \]
where we used the definition (\ref{eq:AAdef}). The first bound in (\ref{eq:Pi-bds}) follows therefore from Lemma \ref{lm:Abds}. The other estimates can be shown similarly.
\end{proof}

\section{Generalized Bogoliubov Transformations}
\label{sec:Bog}

For a kernel $\eta \in L^2 (\bR^3 \times \bR^3)$ with $\eta (x;y) = \eta (y;x)$, we define 
\begin{equation}\label{eq:defB} B(\eta) = \frac{1}{2} \int \left[ \eta (x;y) b_x^* b_y^* - \bar{\eta} (x;y) b_x b_y \right] dx dy \end{equation}
Observe that, with the notation introduced in (\ref{eq:nota}), \[ B (\eta) = \frac{1}{2} \left[ B_{*,*} (\eta) - B^*_{*,*} (\eta) \right] = - \frac{1}{2} \left[ B_{\cdot, \cdot} (\eta) - B^*_{\cdot, \cdot} (\eta) \right] \, . \] 
Generalized Bogoliubov transformations are unitary operators having the form 
\begin{equation}\label{eq:gen-Bog} e^{B(\eta)} = \exp \left[ \frac{1}{2} \int (\eta (x;y) b_x^* b_y^* - \bar{\eta} (x;y) b_x b_y ) \right] \end{equation}
It is clear that $B(\eta), e^{B(\eta)} : \cF^{\leq N} \to \cF^{\leq N}$. Furthermore, if $\eta \in (q_\ph \otimes q_{{\ph}}) L^2 (\bR^3 \times \bR^3)$ then we have $B(\eta), e^{B(\eta)} : \cF_{\perp \ph}^{\leq N} \to \cF_{\perp \ph}^{\leq N}$ for any  normalized $\ph \in L^2 (\bR^3)$ (as above, $q_\ph = 1- |\ph \rangle \langle \ph|$ is the projection into the orthogonal complement of $\ph$). It may be helpful to observe that, with the unitary operator $U(\ph)$ defined in (\ref{eq:Uph-def}), we 
can write, according to (\ref{eq:UaU}),  
\begin{equation}\label{eq:Beta-tran} B(\eta) = \frac{1}{2} U(\ph) \int dx dy \left[ \eta (x;y) a_x^* a_y^* \frac{a(\ph) a(\ph)}{N} - \bar{\eta} (x;y) \frac{a^* (\ph) a^* (\ph)}{N} a_x a_y \right] U^* (\ph) \end{equation}
On states exhibiting Bose-Einstein condensation in $\ph$ (so that $a(\ph), a^* (\ph) \simeq \sqrt{N}$), we can therefore expect the generalized Bogoliubov transformation (\ref{eq:gen-Bog}) to be close to the standard Bogoliubov transformation 
\begin{equation}\label{eq:bogo2}  e^{\wt{B} (\eta)} = \exp \left[ \frac{1}{2} \int (\eta (x;y) a_x^* a_y^* - \bar{\eta} (x;y) a_x a_y ) \right] \end{equation}
whose action on creation and annihilation operators is explicitly given by 
\begin{equation}\label{eq:act-Bog} e^{-\wt{B} (\eta)} a(f) e^{\wt{B} (\eta)} = a(\cosh_{\eta} (f)) + a^* (\sinh_{\eta} (\bar{f})) \end{equation}
with the operators $\cosh_\eta , \sinh_\eta$ defined as in (\ref{eq:chsh}). Standard Bogoliubov transformations of the form (\ref{eq:bogo2}) have been used in \cite{BDS} to model correlations in the Gross-Pitaevskii regime, for approximately coherent Fock space initial data. In the present paper, since (\ref{eq:bogo2}) does not map $\cF_{\perp \ph}^{\leq N}$ into itself (it does not respect the truncation $\cN \leq N$), we prefer to work with generalized Bogoliubov transformations of the form (\ref{eq:gen-Bog}). The price that we have to pay is the fact that, in contrast to (\ref{eq:act-Bog}), the action of $\exp (B(\eta))$ on creation and annihilation operators is not explicit. Let us remark here that generalized Bogoliubov transformations of the form $\exp (B(\eta))$ have already been used in \cite{Sei,GS} to study the excitation spectrum in the mean field regime. Here we will need more detailed information on the action of these operators; the rest of this section is therefore devoted to the study of the properties of generalized Bogoliubov transformations. 

First of all, we need the following generalization of Lemma 4.3 of \cite{BDS} (a similar result has also been proven in \cite{Sei}). 
\begin{lemma} \label{lm:Npow} Let $\eta \in L^2 (\bR^3 \times \bR^3)$. Let $B(\eta)$ be the antisymmetric operator defined in (\ref{eq:defB}). For every $n_1, n_2 \in \bZ$ there exists a constant $C = C ( n_1, n_2, \| \eta \|_2 )$ such that 
\[ e^{-B(\eta)} (\cN +1)^{n_1} \left(N + 1 - \cN\right)^{n_2} e^{B(\ph)} \leq C (\cN+1)^{n_1} (N+1 - \cN)^{n_2} \]
as operator inequality on $\cF^{\leq N}$.
\end{lemma}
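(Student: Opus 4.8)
The plan is to run a Gronwall argument along the one-parameter group $s \mapsto e^{sB(\eta)}$, generalizing the proof of Lemma~4.3 of \cite{BDS}. Fix $\xi \in \cF^{\leq N}$, set $A = (\cN+1)^{n_1}(N+1-\cN)^{n_2}$ and $\xi_s = e^{sB(\eta)}\xi$, and consider
\[ \varphi(s) = \langle \xi_s, A\, \xi_s \rangle . \]
On $\cF^{\leq N}$ one has $1 \leq \cN+1 \leq N+1$ and $1 \leq N+1-\cN \leq N+1$, so $A$ is bounded and boundedly invertible there for every $n_1, n_2 \in \bZ$; hence $\varphi$ is smooth and strictly positive (unless $\xi = 0$), and the whole content of the lemma is to bound its growth by a constant independent of $N$. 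Since $B(\eta)^* = -B(\eta)$, the desired operator inequality is precisely the statement $\varphi(1) \leq C\,\varphi(0)$ for all $\xi$.

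Differentiating and using antisymmetry of $B(\eta)$, $\varphi'(s) = \langle \xi_s, [A, B(\eta)]\xi_s\rangle$. Writing $B(\eta) = \tfrac12\big(B_{*,*}(\eta) - B_{*,*}(\eta)^*\big)$ and using $[A, B_{*,*}(\eta)^*] = -[A, B_{*,*}(\eta)]^*$ (valid since $A$ is self-adjoint), one finds $[A, B(\eta)] = \tfrac12\big([A, B_{*,*}(\eta)] + [A, B_{*,*}(\eta)]^*\big)$, so that
\[ \varphi'(s) = \re\,\big\langle \xi_s, [A, B_{*,*}(\eta)]\,\xi_s \big\rangle . \]
Now $b_x^* b_y^*$ raises $\cN$ by $2$ and, because of the weight $\sqrt{(N-\cN)(N-\cN-1)}/N$ implicit in $b_x^* b_y^*$, it annihilates the top two sectors of $\cF^{\leq N}$. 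Hence $[A, B_{*,*}(\eta)] = B_{*,*}(\eta)\,(\widetilde A - A)$, where $\widetilde A$ is the operator multiplying the $m$-particle sector by $A(m+2) = (m+3)^{n_1}(N-1-m)^{n_2}$ (its values on the top two sectors being irrelevant since they are killed by $B_{*,*}(\eta)$, and on the remaining sectors $N-1-m \geq 1$, so $\widetilde A$ is a genuine bounded operator). The key algebraic point is that the ``discrete derivative'' $\widetilde A - A$ is one order lower than $A$: on every sector not annihilated by $B_{*,*}(\eta)$ the ratios $A(m \pm 2)/A(m)$ are bounded by a constant depending only on $n_1, n_2$ (uniformly in $N$), and
\[ \big| \widetilde A - A \big| \;\leq\; C(n_1, n_2)\; A \cdot \Big( \tfrac{1}{\cN+1} + \tfrac{1}{N+1-\cN} \Big) . \]

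To close the estimate, write $\widetilde A - A = A^{1/2}\,R(\cN)\,A^{1/2}$ with $R$ a function of $\cN$ obeying $|R(\cN)| \leq C(n_1,n_2)(\tfrac{1}{\cN+1} + \tfrac{1}{N+1-\cN})$ on the relevant sectors. Commuting one factor $A^{1/2}$ to the left of $B_{*,*}(\eta)$ (picking up a harmless shift $\cN \to \cN - 2$, controlled by the ratio bounds above) and using Cauchy--Schwarz reduces the task to bounding $\big\| B_{*,*}(\eta)\, R(\cN)\, A(\cN)^{1/2}\xi_s \big\|$. Here I would invoke Lemma~\ref{lm:Bbds} \emph{in its sharp form}, retaining the factor $\tfrac{N-\cN+2}{N}$, i.e.\ $\| B_{*,*}(\eta) \Psi \| \leq \sqrt2\, \|\eta\|_2\, \| (\cN+1)\tfrac{N-\cN+2}{N}\, \Psi \|$. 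The crucial observation is that $(\cN+1)\tfrac{N-\cN+2}{N}\,R(\cN)$ is bounded by a constant \emph{uniformly in $N$}: the potentially large factor $\tfrac{\cN+1}{N+1-\cN}$ produced by $R$ is exactly neutralised by $\tfrac{N-\cN+2}{N}$. This gives $\big\| B_{*,*}(\eta)R(\cN)A(\cN)^{1/2}\xi_s \big\| \leq C(n_1,n_2)\|\eta\|_2\,\|A^{1/2}\xi_s\| = C(n_1,n_2)\|\eta\|_2\,\varphi(s)^{1/2}$, hence $|\varphi'(s)| \leq C(n_1,n_2)\|\eta\|_2\,\varphi(s)$, and Gronwall's lemma yields $\varphi(1) \leq \exp\!\big(C(n_1,n_2)\|\eta\|_2\big)\,\varphi(0)$, which is the claim.

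I expect the only real difficulty to be the book-keeping that makes all constants independent of $N$: one must verify that $A$ and all its shifts $A(\,\cdot\, \pm 2)$ remain finite and mutually comparable on \emph{every} sector of $\cF^{\leq N}$ on which $B_{*,*}(\eta)$ acts non-trivially --- which is why one has to keep track of the vanishing of $b_x^* b_y^*$ on the top sectors --- and that the bound of Lemma~\ref{lm:Bbds} is used with the factor $\tfrac{N-\cN+2}{N}$ rather than the cruder $\|(\cN+1)\Psi\|$, since only the former produces an $N$-uniform estimate when $n_1 < 0$ or $n_2 < 0$. Everything else (the pull-through identities, reassembling $B(\eta)$ from $B_{*,*}(\eta)$ and its adjoint, the elementary inequalities for the ratios) is routine.
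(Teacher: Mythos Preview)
Your proposal is correct and follows essentially the same approach as the paper: a Gronwall argument along $s\mapsto e^{sB(\eta)}$, computing the commutator $[A,B_{*,*}(\eta)]$ via the pull-through formula, controlling the resulting ``discrete derivative'' of $A$, and closing with Cauchy--Schwarz together with the sharp form of Lemma~\ref{lm:Bbds} (retaining the factor $(N-\cN+2)/N$). The only organizational difference is that the paper splits $A=(\cN+1)^{n_1}(N+1-\cN)^{n_2}$ via the Leibniz rule and applies the mean value theorem to each factor separately, whereas you treat $A$ as a single function of $\cN$ and bound $\widetilde A-A$ directly; your remark that the sharp form of Lemma~\ref{lm:Bbds} is needed is exactly the point, though note it is in fact required whenever $n_2\neq 0$, not only for negative exponents.
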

\begin{proof}
We use Gronwall's inequality. For a fixed $\xi \in \cF^{\leq N}$ and $s \in [0;1]$, let 
\[ f(s) = \left\langle \xi , e^{-s B(\eta)} (\cN+1)^{n_1} (N+1- \cN)^{n_2} e^{s B(\eta)} \xi \right\rangle \]
We compute
\begin{equation}\label{eq:derfs} \begin{split} 
f' (s) &= \langle \xi, e^{-sB(\eta)} \left[ (\cN+1)^{n_1} (N+1- \cN)^{n_2}, B(\eta) \right] e^{sB(\eta)} \xi \rangle \\ &= \left\langle e^{sB(\eta)} \xi , \left\{ (\cN+1)^{n_1} [ (N+1-\cN)^{n_2}, B(\eta)] \right.\right. \\ &\hspace{3cm} \left. \left. + [ (\cN+1)^{n_1}, B(\eta)] (N+1-\cN)^{n_2} \right\} e^{s B(\eta)} \xi \right\rangle \end{split} \end{equation}
{F}rom the pull-through formula $\cN b^* = b^* (\cN+1)$, we conclude that 
\[ \begin{split} [(N+1- \cN)^{n_2}, B(\eta)] &= \frac{1}{2} B_{*,*} (\eta) \, \left[ (N-1-\cN)^{n_2} - (N+1-\cN)^{n_2} \right] + \text{h.c.} \\ 
[(\cN+1)^{n_1}, B(\eta)] &= \frac{1}{2} B_{*, *} (\eta) \, \left[ (\cN+3)^{n_1} - (\cN+1)^{n_1} \right] + \text{h.c.} \end{split} \]
By the mean value theorem, we can find functions 
$\theta_1, \theta_2 : \bN \to (0;2)$ (depending also on $N, n_1, n_2$) such that 
\[ \begin{split} (N-j+1)^{n_2} - (N-j-1)^{n_2} &= 2n_2 (N+1-j-\theta_1 (j))^{n_2-1} \\ (j+3)^{n_1} - (j+1)^{n_1} &= 2n_1 (j+1+\theta_2 (j)) \end{split} \]
Hence, the first term on the r.h.s. of (\ref{eq:derfs}) can be written as 
\[\begin{split} &\langle e^{sB(\eta)} \xi , (\cN+1)^{n_1} [(N+1-\cN)^{n_2}, B(\eta)] e^{sB(\eta)} \xi \rangle \\ &\hspace{2cm} = \frac{1}{2} \langle (\cN+1)^{n_1} e^{sB(\eta)} \xi,\left( B_{*,*} (\eta) (N+1-\cN - \theta_1 (\cN))^{n_2-1}+h.c.\right) e^{sB(\eta)} \xi \rangle \\
&\hspace{2cm} = \frac{1}{2} \langle (\cN+1)^{n_1/2} (N+3-\cN-\theta_1 (\cN-2))^{n_2/2} e^{sB(\eta)} \xi, \\ &\hspace{4cm} B_{*,*} (\eta) (\cN+3)^{n_1/2} (N+1-\cN - \theta_1 (\cN))^{n_2/2-1} e^{sB(\eta)} \xi \rangle\\
&\hspace{2.5cm} + \frac{1}{2} \langle (\cN+1)^{n_1/2} (N+1-\cN - \theta_1 (\cN))^{n_2/2} e^{sB(\eta)} \xi, \\ &\hspace{4cm}   B_{\cdot,\cdot} (\eta) (\cN-1)^{n_1/2}(N+3-\cN-\theta_1 (\cN-2))^{n_2/2-1}e^{sB(\eta)} \xi \rangle \end{split} \]
The Cauchy-Schwarz inequality implies with Lemma \ref{lm:Bbds} 
\[ \begin{split} 
\Big| \langle e^{sB(\eta)} \xi , &(\cN+1)^{n_1} [(N+1-\cN)^{n_2}, B(\eta)] e^{sB(\eta)} \xi \rangle \Big| \\ &\leq C \left\| (\cN+1)^{n_1/2} (N+3 -\cN - \theta_1 (\cN-2))^{n_2/2} e^{sB(\eta)} \xi \right\| \\ &\hspace{3cm} \times \left\| (\cN+3)^{n_1/2 + 1} (N+1-\cN - \theta_1 (\cN))^{n_2} N^{-1}  e^{sB(\eta)} \xi \right\| \end{split}  \]
with a constant $C$ depending on $\| \eta \|_2$. Since on $\cF^{\leq N}$ we have $\cN\leq N$ and since $0 \leq \theta_1 (n) \leq 2$ for all $n \in \bN$, we conclude that
\[ 
\Big| \langle e^{sB(\eta)} \xi , (\cN+1)^{n_1} [(N+1-\cN)^{n_2}, B(\eta)] e^{sB(\eta)} \xi \rangle \Big| \leq C f(s) \]
for a constant $C$ depending on $\| \eta \|_2, n_1, n_2$. The second term on the r.h.s. of (\ref{eq:derfs}) can be bounded similarly. We infer that $f' (s) \leq C f(s)$. Gronwall's inequality implies that $f(s) \leq e^{Cs} f(0)$. Hence, taking $s=1$, and renaming the constant $C$, we obtain
\[ \left\langle \xi , e^{-B(\eta)} (\cN+1)^{n_1} (N+1- \cN)^{n_2} e^{B(\eta)} \xi \right\rangle \leq C \left\langle \xi, (\cN+1)^{n_1} (N+1-\cN)^{n_2} \xi \right\rangle \]
which concludes the proof of the lemma.
\end{proof}

We will need to express the action of the generalized Bogoliubov transformation $e^{B(\eta)}$ on the $b$-fields by means of a convergent series of nested commutators. To this end, we start by noticing that, for $f \in L^2 (\bR^3)$, 
\[\begin{split} e^{-B(\eta)} b (f) e^{B(\eta)} &= b(f) + \int_0^1 ds \, \frac{d}{ds}  e^{-sB(\eta)} b(f) e^{sB(\eta)} \\ &= b(f) - \int_0^1 ds \, e^{-sB(\eta)} [B(\eta), b(f)] e^{s B(\eta)} \\ &= b(f) - [B(\eta),b(f)] + \int_0^1 ds_1 \int_0^{s_1} ds_2 \, e^{-s_2 B(\eta)} [B(\eta), [B(\eta),b(f)] e^{s_2 B(\eta)} 
\end{split} \]
Iterating $m$ times, we obtain
\begin{equation}\label{eq:BCH} \begin{split} 
e^{-B(\eta)} b (f) e^{B(\eta)} = &\sum_{n=1}^{m-1} (-1)^n \frac{\text{ad}^{(n)}_{B(\eta)} (b(f))}{n!} \\ &+ \int_0^{1} ds_1 \int_0^{s_1} ds_2 \dots \int_0^{s_{m-1}} ds_m \, e^{-s_m B(\eta)} \text{ad}^{(m)}_{B(\eta)} (b(f)) e^{s_m B(\eta)} \end{split} \end{equation}
where we introduced the notation $\text{ad}_{B(\eta)}^{(n)} (A)$  defined recursively by
\[ \text{ad}_{B(\eta)}^{(0)} (A) = A \quad \text{and } \quad \text{ad}^{(n)}_{B(\eta)} (A) = [B(\eta), \text{ad}^{(n-1)}_{B(\eta)} (A) ]  \]
We will show later that, under suitable assumptions on $\eta$, the error term on the r.h.s. of (\ref{eq:BCH}) is negligible in the limit $m \to \infty$. This means that the action of the generalized Bogoliubov transformation $B(\eta)$ on $b(f)$ and similarly on $b^* (f)$ can be described in terms of the nested commutators $\text{ad}_{B(\eta)} (A)$, for $A = b(f)$ or $A = b^* (f)$. In the next lemma, we give a detailed analysis of these terms. 

For a kernel $\eta \in L^2 (\bR^3 \times \bR^3)$, we will use the notation
\begin{equation}\label{eq:etas} \eta^{(n)} = \left\{ \begin{array}{ll}  1, \quad &\text{for } n = 0 \\ (\eta \bar{\eta})^\ell, \quad &\text{if } n = 2\ell, \ell \in \bN \backslash \{ 0 \} \\
(\eta \bar{\eta})^\ell \eta \quad &\text{if } n = 2\ell+1, \ell \in \bN \end{array} \right. \end{equation}
Here we, identify $\eta \in L^2 (\bR^3 \times \bR^3)$ with the Hilbert-Schmidt operator acting on $L^2 (\bR^3)$, having integral kernel $\eta$. To avoid keeping track of complex conjugations of $\eta$-kernels, we also introduce the following notation. For $\natural \in \{\cdot, * \}$ we write $\eta_\natural = \eta$, if $\natural = \cdot$, and $\eta_\natural = \bar{\eta}$ if $\natural = *$. More generally, for $n \in \bN$, and $(\natural_1,\dots , \natural_n) \in \{ \cdot ,* \}^n$, we will use the notation $\eta_{\natural}^{(n)} = \eta_{\natural_1} \eta_{\natural_2} \dots \eta_{\natural_n}$, in the sense of products of operators. Also for a function $f \in L^2 (\bR^3)$, we use the notation $f_\natural = f$ if $\natural = \cdot$ and $f_{\natural} = \bar{f}$ if $\natural= *$. 
\begin{lemma}\label{lm:indu}
Let $\eta \in L^2 (\bR^3 \times \bR^3)$ be such that $\eta (x;y) = \eta (y;x)$ for all $x,y \in \bR^3$. Let $B(\eta)$ be defined as in (\ref{eq:defB}). Let $n\in \bN$ and $f \in L^2 (\bR^3)$. Then the nested commutators $\text{ad}^{(n)}_{B(\eta)} (b(f))$ can be written as the sum of exactly $2^n n!$ terms, with the following properties. 
\begin{itemize}
\item[i)] Possibly up to a sign, each term has the form
\begin{equation}\label{eq:Lambdas} \Lambda_1 \Lambda_2 \dots \Lambda_i \frac{1}{N^k} \Pi^{(1)}_{\sharp,\flat} (\eta_{\natural_1}^{(j_1)}, \dots , \eta_{\natural_k}^{(j_k)} ; \eta_{\natural}^{(s)} (f_{\lozenge})) 
\end{equation}
for some $i,k,s \in \bN$, $j_1, \dots ,j_k \in \bN \backslash \{ 0 \}$, $\lozenge \in \{ \cdot, * \}$, $\sharp \in \{ \cdot, * \}^k$, $ \flat \in \{ \cdot, * \}^{k+1}$, $\natural_v \in \{ \cdot, * \}^{j_v}$ for all $v=1,\dots , k$ and $\natural \in \{ \cdot, * \}^s$. In (\ref{eq:Lambdas}), each operator $\Lambda_w : \cF^{\leq N} \to \cF^{\leq N}$ is either a factor $(N-\cN)/N$, a factor $(N+1-\cN)/N$ or an operator of the form
\begin{equation}\label{eq:Pi2-ind} \frac{1}{N^p} \Pi^{(2)}_{\sharp,\flat} (
\eta^{(m_1)}_{\natural_1}, \eta^{(m_2)}_{\natural_2},\dots , \eta_{\natural_p}^{(m_p)}) \end{equation}
for some $p, m_1, \dots , m_p \in \bN \backslash \{ 0 \}$, $\sharp,\flat  \in \{ \cdot , *\}^p$, $\natural_v \in \{ \cdot , * \}^{m_v}$ for all $v =1, \dots , p$. 
\item[ii)] If a term of the form (\ref{eq:Lambdas}) contains $m \in \bN$ factors $(N-\cN)/N$ or $(N+1-\cN)/N$ and $j \in \bN$ factors of the form (\ref{eq:Pi2-ind}) with $\Pi^{(2)}$-operators of order $p_1, \dots , p_j \in \bN \backslash \{ 0 \}$, then 
we have
\begin{equation}\label{eq:totalb} m + (p_1 + 1)+ \dots + (p_j+1) + (k+1) = n+1 \end{equation}
\item[iii)] If a term of the form (\ref{eq:Lambdas}) contains (considering all $\Lambda$-operators and the $\Pi^{(1)}$-operator) the kernels $\eta_{\natural_1}^{(i_1)}, \dots , \eta_{\natural_m}^{(i_m)}$ and the wave function $\eta_{\natural}^{(s)} (f_{\lozenge})$ for some $m, s \in \bN$, $i_1, \dots , i_m \in \bN \backslash \{ 0 \}$, $\natural_r \in \{\cdot, * \}^{i_r}$ for all $r=1, \dots , m$, $\natural \in \{ \cdot ,* \}^s$ then \[ i_1 + \dots + i_m + s = n .\]
\item[iv)] There is exactly one term having the form  
\begin{equation}\label{eq:iv1} \left( \frac{N-\cN}{N} \right)^{n/2} \left(\frac{N+1-\cN}{N} \right)^{n/2} b(\eta^{(n)} (f)) \end{equation}
if $n$ is even, and 
\begin{equation}\label{eq:iv2} - \left(\frac{N-\cN}{N} \right)^{(n+1)/2} \left(\frac{N-\cN+1}{N} \right)^{(n-1)/2} b^* (\eta^{(n)} (\bar{f})) \end{equation}
if $n$ is odd.
\item[v)] If the $\Pi^{(1)}$-operator in  (\ref{eq:Lambdas}) is of order $k \in \bN \backslash \{ 0 \}$, it has either the form  
\[ \int  b^{\flat_0}_{x_1} \prod_{i=1}^{k-1} a^{\sharp_i}_{y_{i}} a^{\flat_i}_{x_{i+1}}  a^*_{y_k} a (\eta^{(2r)} (f)) \prod_{i=1}^k \eta_{\natural_i}^{(j_i)} (x_i ; y_i) dx_i dy_i  \]
or the form 
\[\int b^{\flat_0}_{x_1} \prod_{i=1}^{k-1} a^{\sharp_i}_{y_{i}} a^{\flat_i}_{x_{i+1}}  a_{y_k} a^* (\eta^{(2r+1)} (\bar{f})) \prod_{i=1}^k \eta_{\natural_i}^{(j_i)} (x_i ; y_i) dx_i dy_i \]
for some $r \in \bN$, $j_1, \dots , j_k \in \bN \backslash \{ 0 \}$. If it is of order $k=0$, then it is either given by $b (\eta^{(2r)}_\natural (f_\lozenge))$ or by $b^* (\eta^{(2r+1)}_\natural (f_\lozenge))$, for some $r \in \bN$. 
\item[vi)] For every non-normally ordered term of the form 
\[ \begin{split} &\int dx dy \, \eta^{(i)}_\natural (x;y) a_x a_y^* , \quad \int dx dy \, \eta_\natural^{(i)} (x;y) b_x a_y^* \\  &\int dx dy \, \eta_\natural^{(i)} (x;y) a_x b_y^*, \quad \text{or } \quad \int dx dy \, \eta_\natural^{(i)} (x;y) b_x b_y^*  \end{split} \]
appearing either in the $\Lambda$-operators or in the $\Pi^{(1)}$-operator in (\ref{eq:Lambdas}), we have $i \geq 2$.
\end{itemize}
\end{lemma}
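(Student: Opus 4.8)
The plan is to argue by induction on $n$. The base case $n=0$ is trivial: $\text{ad}^{(0)}_{B(\eta)}(b(f)) = b(f)$ is a single $\Pi^{(1)}$-operator of order zero, which has the form (\ref{eq:Lambdas}) with $i=k=s=0$, $\lozenge = \cdot$, and all of properties i)--vi) hold vacuously (note $2^0 \cdot 0! = 1$). For the inductive step, one assumes $\text{ad}^{(n)}_{B(\eta)}(b(f))$ is a sum of $2^n n!$ terms of the stated form and computes $\text{ad}^{(n+1)}_{B(\eta)}(b(f)) = [B(\eta), \text{ad}^{(n)}_{B(\eta)}(b(f))]$ by commuting $B(\eta)$ through each of these $2^n n!$ terms using the Leibniz rule. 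The key input is the set of commutation relations (\ref{eq:comm-b}), (\ref{eq:comm-b2}), which tell us that commuting one of the $b_x^* b_y^*$ or $b_x b_y$ pieces of $B(\eta)$ through a single creation/annihilation operator produces either a $b$-field contracted against $\eta$ (raising the $\eta$-degree by one) or a term picking up a factor $(1-\cN/N)\delta$ or $-N^{-1}a^*a$. Writing $B(\eta) = \tfrac12[B_{*,*}(\eta) - B_{\cdot,\cdot}(\eta)]$ and carefully using these relations, each commutator of $B(\eta)$ against a term of the form (\ref{eq:Lambdas}) splits into exactly $2(n+1)$ new terms — this is where the factor $2^{n+1}(n+1)!$ count comes from — and one checks term by term that each resulting expression can be rearranged (using the pull-through formula $g(\cN)a_x = a_x g(\cN-1)$ and normal ordering) back into the form (\ref{eq:Lambdas}).

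The bookkeeping then proceeds property by property. For property i), one verifies that commuting $B(\eta)$ into a $\Pi^{(1)}$- or $\Pi^{(2)}$-operator either lengthens it (by attaching a $b$-field and an $\eta$-kernel, with possibly a new contraction of indices that merges two $\eta$-kernels into $\eta^{(j+1)}$) or, via the $\delta$ or $-N^{-1}a^*a$ terms in (\ref{eq:comm-b}), produces a shorter operator times an extra $\Lambda$-factor $(N-\cN)/N$, $(N+1-\cN)/N$, or a $\Pi^{(2)}$-operator of the form (\ref{eq:Pi2-ind}) with a $1/N^p$ prefactor; the point is that the only kernels that can appear are the $\eta^{(m)}_\natural$ with $m \geq 1$, because each commutation step raises the $\eta$-degree by exactly one. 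Properties ii) and iii) are then conservation laws: ii) tracks the total number of "blocks" (the $\Lambda$-factors of $N^{-1}$-type, the $\Pi^{(2)}$-operators counted with a $+1$ each, and the $\Pi^{(1)}$-operator counted with a $+1$), which goes up by exactly one at each step because one commutation either inserts a new $\Lambda$-factor or a new $\Pi^{(2)}$-operator, or lengthens the $\Pi^{(1)}$-operator; and iii) tracks the total $\eta$-degree, which goes up by exactly one at each step (one factor of $\eta$ from $B(\eta)$). Property vi) is maintained because any non-normally-ordered pair $a_x a_y^*$, $b_x a_y^*$, etc.\ can only be created when a commutation step contracts a newly introduced $b$-field with a pre-existing creation operator through the $\eta$-kernel, and this contraction absorbs the degree-one $\eta$ from $B(\eta)$ together with whatever was already there, raising the kernel degree to at least $2$; one also uses the symmetry $\eta(x;y)=\eta(y;x)$ to avoid spurious conjugations.

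Properties iv) and v) require a bit more care. For iv) one identifies, among the $2^{n+1}(n+1)!$ terms, the unique descendant of the distinguished term (\ref{eq:iv1}) or (\ref{eq:iv2}): commuting $B(\eta)$ into $b(\eta^{(n)}(f))$ (or $b^*(\eta^{(n)}(\bar f))$) and keeping only the "diagonal" contraction — the one that turns the $b$-field into a $b^*$-field (or $b$-field) with $\eta$-degree raised by one and produces the matching $\Lambda$-prefactor — yields precisely (\ref{eq:iv2}) with $n \mapsto n+1$ (or (\ref{eq:iv1}) with $n \mapsto n+1$); here one uses the pull-through formula to collect the $(N-\cN)/N$ and $(N+1-\cN)/N$ factors into the stated powers, and checks that no other term among the $2^{n+1}(n+1)!$ has zero $\Lambda$-operators of $\Pi^{(2)}$-type and a $\Pi^{(1)}$-operator of order zero. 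Property v) is the statement that the $\Pi^{(1)}$-operator always has its "outer" creation/annihilation structure of a restricted type — an $a^*_{y_k}$ followed by $a(\eta^{(2r)}(f))$, or an $a_{y_k}$ followed by $a^*(\eta^{(2r+1)}(\bar f))$ — which one maintains by observing that each commutation step acts on the rightmost field of the $\Pi^{(1)}$-operator in a way that flips its type and increments the $\eta$-exponent by one, preserving the parity constraint linking $r$ to whether the outer field is a creation or annihilation operator.

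The main obstacle I expect is purely combinatorial: making the case analysis in the inductive step exhaustive and correct — in particular, keeping straight which of the (many) ways of applying the Leibniz rule and the commutation relations (\ref{eq:comm-b}), (\ref{eq:comm-b2}) produces which of the allowed output forms, verifying that no disallowed form (e.g.\ a bare $\eta^{(0)}=1$ kernel inside a $\Pi$-operator, or a non-normally-ordered pair with a degree-$1$ kernel) ever appears, and confirming that the count is exactly $2^{n+1}(n+1)!$ with no hidden cancellations or double-counting. The analytic content — convergence of series, boundedness of operators — is already packaged in Lemmas \ref{lm:Npow}, \ref{lm:Bbds}, \ref{lm:Pi-bds}; the work here is entirely in the careful symbolic tracking of the structure under iterated commutation.
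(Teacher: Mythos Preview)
Your proposal is correct and follows essentially the same approach as the paper's proof: induction on $n$, Leibniz rule applied to the term (\ref{eq:Lambdas}), and a case-by-case verification that each of the $2(n+1)$ resulting pieces retains the form (\ref{eq:Lambdas}) with the counters in (ii) and (iii) incremented by one. The paper organizes the inductive step slightly differently by first recording the four basic commutators $[B(\eta),b_z]$, $[B(\eta),b_z^*]$, $[B(\eta),a_z^*a_w]$, $[B(\eta),N-\cN]$ explicitly (its equation (\ref{2.2.Betacommutators})), which makes the subsequent case analysis (commutator hits a $(N-\cN)/N$-factor, a $\Pi^{(2)}$-factor, or the $\Pi^{(1)}$-operator, and within the latter, the leftmost $b$-field, an internal $a^\sharp a^\flat$ pair, or the rightmost $a^\sharp a^\flat(f)$ pair) very concrete; your sketch of (vi) in terms of ``the contraction absorbs the degree-one $\eta$ together with whatever was already there'' is exactly the mechanism the paper spells out in its analysis of the replacements (\ref{eq:first-L})--(\ref{eq:replaa2}).
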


{\it Remark:} Similarly, the nested commutator $\text{ad}^{(n)} (b^* (f))$ can be written as the sum of $2^n n!$ terms of the form 
\[ \frac{1}{N^k} \wt{\Pi}^{(1)}_{\sharp,\flat} (\eta_{\natural_1}^{(j_1)}, \dots, \eta^{(j_k)}_{\natural_{k}} ; \eta_{\natural_{k+1}}^{(\ell)} (f_\lozenge)) \Lambda_1 \Lambda_2 \dots \Lambda_i \]
satisfying properties analogous to those 
listed in i)-vi). 

\begin{proof} We prove the lemma by induction. For $n=0$ all claims are trivially satisfied. For the induction step from $n$ to $n+1$ we first compute, using (\ref{eq:comm-b}) and (\ref{eq:comm-b2}) the commutators
\begin{equation}\label{2.2.Betacommutators}
    \begin{split}
    [B(\eta), b_z ] &= - \frac{N-\cN}N b^*(\eta_z) +  \frac1N\int dxdy\,\eta(x;y)\bsx\asy \az\\
    &=-b^*(\eta_z)\frac{N+1-\cN}N  + \frac1N\int dxdy\,\eta(x;y)\az\asy\bsx ,\\
    [B(\eta), \bsz ] &= -b(\eta_z)\frac{N-\cN}N  +  \frac1N\int dxdy\,\bar{\eta} (x;y)\asz\ay\bx\\
    &=- \frac{N+1-\cN}N b(\eta_z)  +   \frac1N\int dxdy\,\bar{\eta} (x;y)\bx\ay \asz ,\\
    [B(\eta), a^*_z a_w] & = [B(\eta), a_w \asz]= - \bsz b^*(\eta_w) - b(\eta_z) b_w,\\
    [B(\eta), N-\cN]&=[B(\eta), N+1-\cN]= \int dx dy\, (\eta(x,y)\bsx\bsy + \bar{\eta} (x;y) \bx\by).
    \end{split}
    \end{equation}
{F}rom $ \operatorname {ad}_{B(\eta)}^{(n+1)}(b(f))=[B(\eta),\operatorname{ad}_{B(\eta)}^{(n)}(b(f))]$ and by linearity, it is enough to analyze
    \begin{equation} \label{eq:comm-step}
    \left[ B(\eta), \Lambda_1 \Lambda_2\dots \Lambda_i N^{-k} \Pi^{(1)}_{\sharp,\flat} (\eta_{\natural_1}^{(j_1)}, \dots , \eta_{\natural_k}^{(j_k)} ;  \eta_{\natural_{k+1}}^{(\ell)} (f_{\lozenge}) ) \right] 
\end{equation}
with the operator $ \Lambda_1 \Lambda_2\dots \Lambda_i N^{-k} \Pi_{\sharp,\flat}^{(1)} (\eta_{\natural_1}^{(j_1)}, \dots , \eta_{\natural_k}^{(j_k)} ;  \eta_{\natural}^{(s)} (f_{\lozenge}))$ satisfying properties (i) to (vi). Applying Leibniz rule $[A,BC] = [A,B] C + B [A,C]$, the commutator (\ref{eq:comm-step}) is given by a sum of terms, where $B(\eta)$ is either commuted with a $\Lambda$-operator, or with the $\Pi^{(1)}$-operator. 

Let's consider first the case that $B(\eta)$ is commuted with a $\Lambda$-operator, assuming further that $\Lambda$ is either the operator $(N-\cN)/N$ or the operator $(N+1-\cN)/N$. The last line in \eqref{2.2.Betacommutators} implies that such an operator $\Lambda$ is replaced, after commutation with $B(\eta)$, by the sum \begin{equation}\label{eq:Pi2-repl} N^{-1} \Pi^{(2)}_{*,*} (\eta) + N^{-1}\Pi^{(2)}_{\cdot, \cdot} (\bar{\eta}).
\end{equation}
With this replacement, we generate two terms contributing to $\text{ad}^{(n+1)}_{B(\eta)} (b(f))$. Let us check that these new terms satisfy the properties (i)-(vi) (of course, with $n$ replaced by $(n+1)$). (i) is obviously true. Also (ii) remains valid, because replacing a factor $(N-\cN)/N$ or $(N+1-\cN)/N$ by one of the two summands in (\ref{eq:Pi2-repl}), the index $m$ will decrease by one, but there will be an additional factor of $2$ because we added a $\Pi^{(2)}$-operator of the order one. Since exactly one additional kernel $\eta_\natural$ is inserted, also (iii) continues to hold true. The factor $\Pi^{(1)}$ is not affected by the replacement, hence the new terms will continue to satisfy (v). Furthermore, since both terms in (\ref{eq:Pi2-repl}) are normally ordered, also (vi) remains valid, by the induction assumption. We observe, finally, that the two terms we generated here do not have the form appearing in (iv).
    
Next, we consider the commutator of $B(\eta)$ with a $\Lambda$-operator of the form $\Lambda =N^{-p} \Pi^{(2)}_{\sharp,\flat} (\eta_{\natural_1}^{(m_1)},\dots,\eta_{\natural_p}^{(m_p)})$ for a $p\in \bN$ ($p\leq n$ by (ii)). By definition
    \begin{equation}\label{eq:L2}
    \Lambda =  N^{-p} \int \, b^{\flat_0}_{x_1} \prod_{i=1}^{p-1} a^{\sharp_i}_{y_i} a^{\flat_i}_{x_{i+1}}  b_{y_p}^{\sharp_p} \prod_{i=1}^{p}  \eta^{(m_i)}_{\natural_i} (x_i; y_i) dx_i dy_i 
    \end{equation}
If $ [B(\eta), \cdot]$ hits $ b^{\flat_0}_{x_1}$, the first two relations in \eqref{2.2.Betacommutators}  imply that $\Lambda$ is replaced by a sum of two operators, the first one being either 
\begin{equation}\label{eq:first-L} \begin{split} -\frac{N-\cN}{N} N^{-p} \Pi^{(2)}_{\sharp,\wt{\flat}} &(\eta_{\natural_1}^{(m_1+1)}, \eta_{\natural_2}^{(m_2)}, \dots , \eta_{\natural_p}^{(m_p)}) \quad \text{or } \\  &\quad -\frac{N+1-\cN}{N} N^{-p} \Pi^{(2)}_{\sharp,\wt{\flat}} (\eta_{\natural_1}^{(m_1+1)}, \eta_{\natural_2}^{(m_2)}, \dots , \eta_{\natural_p}^{(m_p)}) \end{split} \end{equation}
depending on whether $\flat_0 = \cdot$ or $\flat_0 = *$ (here $\wt{\flat} = (\bar{\flat}_0, \flat_1, \dots , \flat_{p-1})$ with $\bar{\flat}_0 = \cdot$ if $\flat_0 = *$ and $\bar{\flat}_0 = *$ if $\flat_0 = \cdot$). The second operator emerging when $[B(\eta), \cdot ]$ hits $b_{x_1}^{\flat_0}$ is a $\Pi^{(2)}$-operator of order $(p+1)$, given by
\begin{equation}\label{eq:second-L} N^{-(p+1)} \Pi^{(2)}_{\wt{\sharp},\wt{\flat}} (\eta_{\natural_0}, \eta^{(m_1)}_{\natural_1}, \dots , \eta^{(m_p)}_{\natural_p}) \end{equation}
where $\wt{\sharp} = (\bar{\flat}_0, \sharp_1, \dots , \sharp_p)$, $\wt{\flat} = (\bar{\flat}_0, \flat_0, \dots , \flat_{p-1})$ and $\natural_0 = \flat_0$. 

For both terms (\ref{eq:first-L}) and (\ref{eq:second-L}), (i) is clearly correct and also (ii) remains true (when we replace (\ref{eq:L2}) with (\ref{eq:first-L}), the number of $(N-\cN)/N$ or $(N-\cN+1)/N$-operators increases by one, while everything else remains unchanged; similarly, when we replace (\ref{eq:L2}) with (\ref{eq:second-L}), the order of the $\Pi^{(2)}$-operator increases by one, while the rest remains unchanged). (iii) remains true as well, since, in (\ref{eq:first-L}), the power $m_1 +1$ of the first $\eta$-kernel is increased by one unit and, in (\ref{eq:second-L}), there is one additional factor $\eta$, compared with (\ref{eq:L2}). (v) remains valid, since the $\Pi^{(1)}$-operator on the right is not affected by this commutator. (vi) remains true in (\ref{eq:first-L}), because $m_1+1 \geq 2$ . It remains true also in (\ref{eq:second-L}). In fact, according to (\ref{2.2.Betacommutators}), when switching from (\ref{eq:L2}) to (\ref{eq:second-L}), we are effectively replacing $b \to b^* a^* a$ or $b^* \to b a a^*$. Hence, the first pair of operators in (\ref{eq:second-L}) is always normally ordered. As for the second pair of creation and annihilation operators (the one associated with the kernel $\eta_{\natural_1}^{(m_1)}$ in (\ref{eq:second-L})), the first field is of the same type as the original $b$-field appearing in (\ref{eq:L2}); hence non-normally ordered pairs cannot be created. Finally, we remark that the terms we generated here are certainly not of the form in (iv) (because for terms as in (iv) all $\Lambda$-factors must be either $(N-\cN)/N$ or $(N+1-\cN)/N$, and this is not the case, for terms containing (\ref{eq:first-L}) or (\ref{eq:second-L})). 

The same arguments can be applied if $B(\eta)$ hits the factor $b^{\sharp_p}_{y_p}$ on the right of (\ref{eq:L2}) (in this case, we use the identities for the first two commutators in \eqref{2.2.Betacommutators} having the $b$-field to the left of the factors $(N+1-\cN)/N$ and $(N-\cN)/N$ and to the right of the $a_z a_y^*$ and $a_z^* a_y$ operators). 

If now $B(\eta)$ hits a term $a^*_{y_{r}}a_{x_{r+1}}$ or $a_{y_{r}} a^*_{x_{r+1}}$ in (\ref{eq:L2}), for an $r=1, \dots, p-1$, then \eqref{2.2.Betacommutators} implies that $\Lambda = {N^{-p}} \Pi^{(2)}_{\sharp,\flat} (\eta_{\natural_1}^{(m_1)}, \dots , \eta_{\natural_p}^{(m_p)})$ is replaced by the sum of 
the two terms, given by
\begin{equation}\label{eq:replaa} - \left[ N^{-r} \Pi^{(2)}_{\sharp', \flat'} (\eta^{(m_1)}_{\natural_1}, \dots, \eta_{\natural'_r}^{(m_r+1)}) \right] \left[ N^{-(p-r)} \Pi^{(2)}_{\sharp^{''},\flat^{''}} (\eta_{\natural_{r+1}}^{(m_{r+1})}, \dots, \eta_{\natural_p}^{(m_p)}) \right] \end{equation}
and by
\begin{equation}\label{eq:replaa2} - \left[ N^{-r} \Pi^{(2)}_{\sharp^{'''}, \flat^{'}} (\eta^{(m_1)}_{\natural_1}, \dots, \eta_{\natural'_r}^{(m_r)}) \right] \left[ N^{-(p-r)} \Pi^{(2)}_{\sharp^{''},\flat^{'''}} (\eta_{\natural^{'}_{r+1}}^{(m_{r+1}+1)}, \dots, \eta_{\natural_p}^{(m_p)}) \right] \end{equation}
with $\flat' = (\flat_0, \dots , \flat_{r-1})$, $\flat'' = (\flat_r, \dots , \flat_{p-1})$, $\flat^{'''} = (\bar{\flat}_r, \flat_{r+1}, \dots , \flat_{p-1})$ and with $\sharp' = (\sharp_1, \dots , \sharp_{r-1},\bar{\sharp}_r)$, $\sharp^{''} = (\sharp_{r+1}, \dots , \sharp_p)$, $\sharp^{'''} = (\sharp_1, \dots , \sharp_r)$ (here, we denote $\bar{\sharp}_r = *$ if $\sharp_r = \cdot$ and $\bar{\sharp}_r = \cdot$ if $\sharp_r = *$, and similarly for $\bar{\flat}_{r-1}$). 
The precise form of $\natural'_r$ and $\natural'_{r+1}$ does not play an important role (they are given by $\natural'_r = (\natural_r, \sharp_r)$ and $\natural'_{r+1} = (\natural_{r+1}, \flat_r)$). The new terms containing (\ref{eq:replaa}) and (\ref{eq:replaa2}) clearly satisfy (i). Furthermore, (ii) remains true because the contribution of the original $\Lambda$ to the sum in (\ref{eq:totalb}), which was given by $(p+1)$ is now replaced by $(r+1)+(p-r+1) = p+2$. Clearly, (iii) remains true as well, since, for both terms (\ref{eq:replaa}) and (\ref{eq:replaa2}), the total powers of the $\eta$-kernels is increased exactly by one. As before, the terms we generated do not have the form (iv). (v) continues to hold true, because the $\Pi^{(1)}$ term is unaffected. As for (vi), we observe that non-normally ordered pairs can only be created where $\sharp_r$ is changed to $\bar{\sharp}_{r}$ (in the term where $\sharp'$ appears) or where $\flat_r$ is changed to $\bar{\flat}_r$ (in the term where $\flat'''$ appears). In both cases, however, the change $\sharp_r \to \bar{\sharp}_r$ and $\flat_r \to \bar{\flat}_r$ comes together with an increase in the power of the corresponding $\eta$-kernel (i.e. $\eta^{(m_r)}_{\natural_r}$ is changed to $\eta_{\natural'_r}^{(m_r+1)}$ in the first case, while $\eta^{(m_{r+1})}_{\natural_{r+1}}$ is changed to $\eta^{(m_{r+1} + 1)}_{\natural'_{r+1}}$ in the second case). Since $m_r + 1, m_{r+1} + 1 \geq 2$, even if non-normally ordered terms are created, they still satisfy (vi).

Next, let us consider the terms arising from commuting $B(\eta)$ with the operator 
\begin{equation} \label{eq:Pi1-term}
\begin{split} 
N^{-k} \Pi^{(1)}_{\sharp,\flat} (\eta_{\natural_1}^{(j_1)}, &\dots , \eta_{\natural_k}^{(j_k)}; \eta^{(s)}_{\natural} (f_{\lozenge}))\\ &= N^{-k} \int b_{x_1}^{\flat_0} \prod_{i=1}^{k-1} a_{y_i}^{\sharp_i} a_{x_{i+1}}^{\flat_i} a_{y_k}^{\sharp_k} a^{\flat_k} (\eta^{(s)}_{\natural} (f_{\lozenge})) \prod_{i=1}^k \eta^{(j_i)}_{\natural_i} (x_i ; y_i) dx_i dy_i \end{split}\end{equation}
We argue similarly to the case in which $B(\eta)$ hits a $\Pi^{(2)}$-operator like (\ref{eq:L2}). In particular, if $B(\eta)$ hits the operator $b_{x_1}^{\flat_0}$, the operator (\ref{eq:Pi1-term}) is replaced by the sum of two terms, the first one being 
\[\begin{split}  - \frac{N-\cN}{N} N^{-p} \Pi^{(1)}_{\sharp,\wt{\flat}} (\eta_{\natural'_1}^{(m_1 + 1)}, &\eta_{\natural_2}^{(m_2)}, \dots, \eta_{\natural_k}^{(m_k)};\eta^{(s)}_{\natural} (f_\lozenge)) \qquad \text{or } \\ &- \frac{N+1-\cN}{N} N^{-p} \Pi^{(1)}_{\sharp,\wt{\flat}} (\eta_{\natural'_1}^{(m_1 + 1)}, \eta_{\natural_2}^{(m_2)}, \dots, \eta_{\natural_k}^{(m_k)} ;  \eta^{(s)}_{\natural} (f_\lozenge)) \end{split} \]
depending on whether $\flat_0 = \cdot$ or $\flat_0 = *$ (with $\wt{\flat} = (\bar{\flat}_0, \flat_1, \dots,  \flat_{k-1})$) and the second one being 
\[ N^{-(k+1)} \Pi^{(1)}_{\wt{\sharp}, \wt{\flat}} (\eta, \eta_{\natural_1}^{(m_1)}, \dots , \eta_{\natural_k}^{(m_k)}, \eta_{\natural}^{(s)} (f_{\lozenge})) \]
with $\wt{\sharp} = (\bar{\flat}_0, \sharp_1, \dots , \sharp_k)$ and $\wt{\flat} = (\bar{\flat}_0, \flat_1, \dots , \flat_{k})$. As we did in the analysis of (\ref{eq:first-L}) and (\ref{eq:second-L}), one can show that both these terms satisfy all properties (i), (ii), (iii), (v), (vi) (we will discuss the properties (iv) below).  

If instead $B(\eta)$ hits one of the factors $a_{y_r}^{\sharp_r} a_{x_{r+1}}^{\flat_r}$ for an $r=1, \dots , k-1$, the resulting two terms will have the form 
\begin{equation}\label{eq:replaa-Pi1} - \left[ N^{-r} \Pi^{(2)}_{\sharp', \flat'} (\eta^{(m_1)}_{\natural_1}, \dots, \eta_{\natural'_r}^{(m_r+1)}) \right] \left[ N^{-(k-r)} \Pi^{(1)}_{\sharp^{''},\flat^{''}} (\eta_{\natural_{r+1}}^{(m_{r+1})}, \dots, \eta_{\natural_k}^{(m_k)} ; \eta_{\natural}^{(s)} (f_{\lozenge})) \right] 
\end{equation}
and by
\begin{equation}\label{eq:replaa2-Pi1} - \left[ N^{-r} \Pi^{(2)}_{\sharp^{'''}, \flat^{'}} (\eta^{(m_1)}_{\natural_1}, \dots, \eta_{\natural'_r}^{(m_r)}) \right] \left[ N^{-(k-r)} \Pi^{(1)}_{\sharp^{''},\flat^{'''}} (\eta_{\natural^{'}_{r+1}}^{(m_{r+1}+1)}, \dots, \eta_{\natural_k}^{(m_k)} ;  
\eta_{\natural}^{(s)} (f_{\lozenge})) \right] \end{equation}
with $\sharp', \sharp^{''},\sharp^{'''}$ and $\flat',\flat^{''}, \flat^{'''}$ as defined after (\ref{eq:replaa2}). Proceeding similarly as we did in  (\ref{eq:replaa2}), we can show that these terms satisfy (i),(ii),(iii),(v),(vi).

Let us now consider the case that (\ref{eq:Pi1-term}) is commuted with the last pair of operators appearing in (\ref{eq:Pi1-term}). From the induction assumption, we know that this pair can only be $a_{y_k}^* a (\eta^{(2r)} (f))$ or $a_{y_k} a^* (\eta^{(2r+1)} (\bar{f}))$. In the first case, (\ref{eq:Pi1-term}) is replaced by
\begin{equation}\label{eq:restaa-Pi1l} - \Pi^{(2)}_{\sharp, \flat'} (\eta^{(j_1)}_{\natural_1}, \dots, \eta^{(j_k)}_{\natural_k}) \, b^* (\eta^{(2r+1)} (\bar{f})) - \Pi^{(2)}_{\sharp',\flat'} (\eta^{(j_1)}_{\natural_1} , \dots, \eta^{(j_{k-1})}_{\natural_{k-1}}, \eta^{(j_k+1)}_{\natural'_k}) \,  b (\eta^{(2r)} (f)) \end{equation}
In the second case, it is replaced by 
\begin{equation}\label{eq:restaa-Pi1ll} - \Pi^{(2)}_{\sharp', \flat'} (\eta^{(j_1)}_{\natural_1}, \dots, \eta^{(j_{k-1})}_{\natural_{k-1}}, \eta^{(j_k+1)}_{\natural'_k}) b^* (\eta^{(2r+1)} (\bar{f})) - \Pi^{(2)}_{\sharp',\flat'} (\eta^{(j_1)}_{\natural_1} , \dots, \eta^{(j_k)}_{\natural_k}) b (\eta^{(2r+2)} (f)) \end{equation}
In (\ref{eq:restaa-Pi1l}), (\ref{eq:restaa-Pi1ll}), we used the notation $\flat' = (\flat_0, \dots , \flat_{k-1})$, $\sharp' = (\sharp_1, \dots , \bar{\sharp}_k)$ (as usual, the precise form of $\natural'_k$ is not important). {F}rom the expression (\ref{eq:restaa-Pi1l}), (\ref{eq:restaa-Pi1ll}), we see that also in this case, (i), (ii), (iii), (v), (vi) are satisfied.

As for (iv), from the induction assumption we know that there is exactly one term, in the expansion for $\text{ad}_{B(\eta)}^{(n)} (b(f))$, given by (\ref{eq:iv1}) if $n$ is even and by (\ref{eq:iv2}) if $n$ is odd. Let us take, for example, (\ref{eq:iv1}). If we commute the zero-order $\Pi^{(1)}$-operator $b (\eta^{(n)} (f))$ in (\ref{eq:iv1}) with $B(\eta)$, we obtain exactly  the term in (\ref{eq:iv2}), with $n$ replaced by $(n+1)$ (together with a second term, containing a $\Pi^{(1)}$-operator of order one). Similarly, if we take (\ref{eq:iv2}) and we commute the $\Pi^{(1)}$-operator $b^* (\eta^{(n)} (\bar{f})$ with $B(\eta)$, we get (\ref{eq:iv1}), with $n$ replaced by $(n+1)$. Clearly, looking at the terms above, it is clear that there can be only one term with this form. This shows that also in the expansion for $\text{ad}_{B(\eta)}^{(n+1)} (b(f))$, there is exactly one term of the form given in (iv). 

Finally, let us count the number of terms in the expansion for $\text{ad}^{(n+1)}_{B(\eta)} (b(f))$. By the inductive assumption, the expansion for $\operatorname{ad}_{B(\eta)}^{(n)}(b(f))$ contains exactly $2^n n!$ terms. By $(ii)$, each of these terms is a product of exactly $(n+1)$ operators, each of them being either $(N-\cN)$, $(N+1-\cN)$, a field operator $b^{\sharp}_x$ or a quadratic factor $a_y^\sharp a_x^\flat$ commuting with the number of particles operator. By \eqref{2.2.Betacommutators}, the commutator of $B(\eta)$ with each such factor gives a sum of two terms. Therefore, by the product rule, $\operatorname{ad}_{B(\eta)}^{(n+1)}(b(f))$ contains $ 2^n(n!)\times 2(n+1) = 2^{(n+1)} (n+1)!$ summands. This concludes the proof of the lemma.   
\end{proof}

{F}rom Lemma \ref{lm:indu}, we immediately obtain a convergent series expansion for the conjugation of the fields $b(f)$ and $b^*(f)$ with the unitary operator $\exp (B(\eta))$.
\begin{lemma}\label{lm:conv-series}
Let $\eta \in L^2 (\bR^3 \times \bR^3)$ be symmetric, with $\| \eta \|_2$ sufficiently small. Then we have 
\begin{equation}\label{eq:conv-serie}
\begin{split} e^{-B(\eta)} b(f) e^{B (\eta)} &= \sum_{n=0}^\infty \frac{(-1)^n}{n!} \text{ad}_{B(\eta)}^{(n)} (b(f)) \\
e^{-B(\eta)} b^* (f) e^{B (\eta)} &= \sum_{n=0}^\infty \frac{(-1)^n}{n!} \text{ad}_{B(\eta)}^{(n)} (b^* (f)) \end{split} \end{equation}
where the series on the r.h.s. are absolutely convergent. 
\end{lemma}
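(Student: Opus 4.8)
The plan is to combine the Duhamel expansion (\ref{eq:BCH}) with the structural description of the nested commutators provided by Lemma \ref{lm:indu}, reducing everything to a single operator-norm estimate on $\cF^{\leq N}$. Since $\cN\le N$ on $\cF^{\leq N}$, Lemma \ref{lm:Bbds} shows that $B(\eta)$ is bounded there, so $s\mapsto e^{sB(\eta)}$ is a norm-continuous unitary group and identity (\ref{eq:BCH}) is legitimate for every $m$. The key quantitative input I would establish is
\[ \big\| \text{ad}_{B(\eta)}^{(n)} (b(f)) \big\|_{\cF^{\leq N}\to\cF^{\leq N}} \;\le\; n!\,(C\|\eta\|_2)^{n}\,\sqrt N\,\|f\|_2 \]
for an absolute constant $C$; the factorial on the right is cancelled against the $1/n!$ in the series, which is exactly why smallness of $\|\eta\|_2$ suffices for convergence.

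To prove this bound I would invoke Lemma \ref{lm:indu}: $\text{ad}_{B(\eta)}^{(n)}(b(f))$ is a sum of exactly $2^n n!$ terms of the form (\ref{eq:Lambdas}), a product of $\Lambda$-factors times one factor $N^{-k}\Pi^{(1)}_{\sharp,\flat}(\eta_{\natural_1}^{(j_1)},\dots,\eta_{\natural_k}^{(j_k)};\eta_\natural^{(s)}(f_\lozenge))$, where each $\Lambda$ is either $(N\mp\cN)/N$ (of norm $\le1$) or $N^{-p}\Pi^{(2)}_{\sharp,\flat}(\eta_{\natural_1}^{(m_1)},\dots,\eta_{\natural_p}^{(m_p)})$. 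Applying the operator-norm bounds of Lemma \ref{lm:Pi-bds}, the prefactors $N^{-p}$ and $N^{-k}$ cancel precisely the $(12N)^p$ and $(12N)^k$ growth, so $\| N^{-p}\Pi^{(2)}_{\sharp,\flat}(\dots)\|\le 12^p\prod_\ell K_\ell$ and $\| N^{-k}\Pi^{(1)}_{\sharp,\flat}(\dots)\|\le 12^k\sqrt N\,\|\eta_\natural^{(s)}(f_\lozenge)\|_2\prod_\ell K_\ell$, with each $K_\ell$ equal to $\|\eta^{(j)}\|_2$ or, in the non-normally-ordered case, $\|\eta^{(j)}\|_2+\int|\eta^{(j)}(x;x)|\,dx$. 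Now $\|\eta^{(j)}\|_2\le\|\eta\|_2^{\,j}$ (estimate $(\eta\bar\eta)^\ell$, resp.\ $(\eta\bar\eta)^\ell\eta$, in Hilbert--Schmidt norm using $\|\eta\|_{\mathrm{op}}\le\|\eta\|_2$), and, since property (vi) of Lemma \ref{lm:indu} forces $j\ge2$ whenever a diagonal integral occurs, also $\int|\eta^{(j)}(x;x)|\,dx\le\|\eta\|_2^{\,j}$ (write $\eta^{(j)}$ as a product of two Hilbert--Schmidt kernels and use Cauchy--Schwarz along the diagonal). Property (ii) gives $k+\sum_i p_i\le n$, and property (iii) that the total power of $\eta$ in the term (counting the kernels together with the wave function $\eta_\natural^{(s)}(f_\lozenge)$) equals $n$; since at most $n$ individual $\eta$-kernels can occur, each of the $2^n n!$ terms has operator norm $\le C^n\|\eta\|_2^{\,n}\sqrt N\,\|f\|_2$, and the displayed bound follows after renaming $C$.

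Granting the bound, the proof finishes as follows. For $\|\eta\|_2<1/C$, the series $\sum_{n\ge0}\tfrac{(-1)^n}{n!}\text{ad}_{B(\eta)}^{(n)}(b(f))$ converges absolutely in operator norm on $\cF^{\leq N}$, because $\tfrac1{n!}\|\text{ad}_{B(\eta)}^{(n)}(b(f))\|\le(C\|\eta\|_2)^n\sqrt N\,\|f\|_2$. For the remainder term in (\ref{eq:BCH}), the unitaries $e^{\pm s_m B(\eta)}$ do not change operator norms, and the iterated integral runs over a simplex of volume $1/m!$; hence the remainder has norm at most $(C\|\eta\|_2)^m\sqrt N\,\|f\|_2\to 0$ as $m\to\infty$. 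Letting $m\to\infty$ in (\ref{eq:BCH}) gives the first identity in (\ref{eq:conv-serie}), and the second follows in the same way from the Remark after Lemma \ref{lm:indu}, which supplies the analogous decomposition of $\text{ad}_{B(\eta)}^{(n)}(b^*(f))$ into $2^n n!$ terms built from $\wt{\Pi}^{(1)}$- and $\Pi^{(2)}$-operators.

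The step I expect to be the main obstacle is the bookkeeping in the middle paragraph: one must verify that the $N^{-p}$, $N^{-k}$ prefactors in (\ref{eq:Lambdas}) absorb \emph{all} the powers of $N$ generated by the $\Pi$-bounds, leaving only the harmless fixed $\sqrt N$, and that the diagonal integrals $\int|\eta^{(j)}(x;x)|\,dx$ stay controlled — which is precisely what property (vi) of Lemma \ref{lm:indu} guarantees by excluding the borderline exponent $j=1$. If later one needs $(\cN+1)$-weighted estimates on $\text{ad}_{B(\eta)}^{(n)}(b(f))\xi$ instead of the crude operator norm, one replaces Lemma \ref{lm:Pi-bds}'s norm bounds by its vector bounds and uses Lemma \ref{lm:Npow} to push the resulting powers of $(\cN+1)$ through $e^{s_m B(\eta)}$; that refinement is not needed for the statement at hand.
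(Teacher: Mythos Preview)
Your proposal is correct and follows essentially the same route as the paper's own proof: both invoke the Duhamel expansion (\ref{eq:BCH}), decompose $\text{ad}_{B(\eta)}^{(n)}(b(f))$ via Lemma \ref{lm:indu} into $2^n n!$ terms of the form (\ref{eq:Lambdas}), bound each term in operator norm on $\cF^{\leq N}$ using Lemma \ref{lm:Pi-bds} together with property (vi) to control the diagonal integrals $\int|\eta^{(j)}(x;x)|\,dx\le\|\eta\|_2^{\,j}$ for $j\ge2$, and thereby obtain $\|\text{ad}_{B(\eta)}^{(n)}(b(f))\|\le n!\,(C\|\eta\|_2)^n\sqrt N\,\|f\|_2$, from which both absolute convergence of the series and vanishing of the remainder follow for $\|\eta\|_2$ small. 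The only differences are cosmetic: you make the boundedness of $B(\eta)$ on $\cF^{\leq N}$ explicit via Lemma \ref{lm:Bbds} and spell out the simplex volume $1/m!$, while the paper leaves these implicit.
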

\begin{proof}
{F}rom (\ref{eq:BCH}) we have \begin{equation}\label{eq:conv+err} \begin{split}
e^{-B(\eta)} b (f) e^{B(\eta)} = &\sum_{n=1}^{m-1} (-1)^n \frac{\text{ad}^{(n)}_{B(\eta)} (b(f))}{n!} \\ &+ \int_0^{1} ds_1 \int_0^{s_1} ds_2 \dots \int_0^{s_{m-1}} ds_m \, e^{-s_m B(\eta)} \text{ad}^{(m)}_{B(\eta)} (b(f)) e^{s_m B(\eta)} \end{split} \end{equation}
To prove (\ref{eq:conv-serie}), we show that the norm of the error term converges to zero, as $m \to \infty$. 
By Lemma (\ref{lm:indu}), $\text{ad}^{(n)}_{B(\eta)} (b(f)$ is given by a sum of $2^n n!$ terms of the form 
	\begin{equation} \label{eq:Lmbds}
	\Lambda_1 \dots \Lambda_i \, \frac{1}{N^{k}} \Pi^{(1)}_{\sharp,\flat} (\eta^{(j_1)}_{\natural_1},\dots, \eta^{(j_k)}_{\natural_k} ; \eta^{(\ell)}(f_{\lozenge}))  
	\end{equation}
with $ i, k, \ell \in\NN$, $j_1,\dots,j_k\in\NN \backslash \{ 0 \}$ and where each $\Lambda_r$ is either $(N-\cN)/N$, $(N+1-\cN)/N$ or an operator of the form
\[ \frac{1}{N^p} \Pi^{(2)}_{\sharp,\flat} (\eta_{\natural_1}^{(m_1)}, \dots , \eta_{\natural_p}^{(m_p)} ) \]
On $\cF^{\leq N}$, we have the bounds $\| (N-\cN)/N \| \leq 1$ and $\| (N+1-\cN)/N \| \leq 2$. Lemma \ref{lm:Pi-bds} implies that
\[ N^{-p} \left\| \Pi^{(2)}_{\sharp,\flat} (\eta_{\natural_1}^{(m_1)}, \dots , \eta_{\natural_p}^{(m_p)} ) \right\| \leq (12)^p (2\| \eta \|_2)^{m_1 + \dots + m_p} \]
and that 
\[ N^{-k} \left\| \Pi^{(1)}_{\sharp,\flat} (\eta^{(j_1)}_{\natural_1},\dots, \eta^{(j_k)}_{\natural_k} ; \eta^{(\ell)}(f_{\lozenge}))   \right\| \leq (12)^k \sqrt{N} \| f \|_2  (2\| \eta \|_2)^{\ell+j_1 + \dots + j_k} \]
Here we used the fact that, if a kernel $\eta^{(j)}$ is associated with a normally ordered pairs of creation and annihilation operators, then $\| \eta^{(j)} \|_\text{HS} \leq \| \eta \|_\text{HS}^j$. If instead $\eta^{(j)}$ is associated with a non-normally ordered pair, then point (vi) in Lemma \ref{lm:indu} implies that $j \geq 2$. Hence,  
\[ \begin{split} \int \left| \eta^{(j)} (x;x) \right| dx & = \int \left| \int \eta (x;y) \eta^{(j-1)} (y;x) dy \right| dx \\ &\leq \left( \int |\eta (x;y)|^2 dxdy\right)^{1/2} \left(\int |\eta^{(j-1)} (x;y)|^2 dx dy\right)^{1/2} 
\\ &\leq \| \eta \|_\text{2} \| \eta^{(j-1)} \|_\text{2} \leq \| \eta \|_\text{2}^j  \end{split} \]
Therefore, if the term (\ref{eq:Lmbds}) contains $\Pi^{(2)}$-operators of order $p_1, \dots, p_j \in \bN \backslash \{ 0 \}$, we can bound
\[ \begin{split} \Big\| \Lambda_1 \dots \dots \Lambda_i \, & \frac{1}{N^{k}} \Pi^{(1)}_{\sharp,\flat} (\eta^{(j_1)}_{\natural_1},\dots, \eta^{(j_k)}_{\natural_k} ; \eta^{(\ell)}(f_{\lozenge}))  \Big\| \\ &\leq 12^{p_1 + \dots + p_j + k} \sqrt{N} (2\| \eta \|_2)^{m} \leq \sqrt{N} \| f \|_2 C^m \| \eta\|^m  \end{split} \]
and therefore, since $\text{ad}^{(m)}_{B(\eta)} (b(f))$ is the sum of $2^m m!$ terms,
\begin{equation}\label{eq:smalleta} \| \text{ad}^{(m)}_{B(\eta)} (b(f)) \| \leq \sqrt{N} \| f \|_2 (2C \| \eta \|_2)^m m!\end{equation}
This proves, first of all, that the series on the r.h.s. of (\ref{eq:conv-serie}) converges absolutely, if $\| \eta \|_2 \leq (4C)^{-1}$. Under this condition, (\ref{eq:smalleta}) also implies that the error term on the r.h.s. of (\ref{eq:conv+err}) converges to zero, as $m \to \infty$, since
\[ \begin{split} &\left\| \int_0^1 ds_1 \dots \int_0^{s_{m-1}} ds_m e^{-s_m B(\eta)} \text{ad}_{B(\eta)} (b(f))  e^{s_m B(\eta)} \right\|  \leq \sqrt{N} \| f \|_2 (2C\| \eta \|)^m \end{split} \]
\end{proof}

\section{Fluctuation Dynamics}
\label{sec:fluc}

In this section, we are going to define the fluctuation dynamics describing the evolution of orthogonal excitations of the Bose-Einstein condensate.

Instead of comparing the solution of the many-body Schr\"odinger equation (\ref{eq:schr0}) directly with the solution of the Gross-Pitaevskii equation (\ref{eq:GPtd}), it is convenient to introduce a modified, $N$-dependent, Gross-Pitaevskii equation. To this end, we fix $\ell > 0$ and we consider the ground state $f_{\ell}$ of the Neumann problem 
\begin{equation}\label{eq:scatl} \left(-\Delta + \frac{1}{2} V \right) f_{\ell} = \lambda_{\ell} f_\ell \end{equation}
on the ball $|x| \leq N\ell$ (we omit the $N$-dependence in the notation for $f_\ell$ and for $\lambda_\ell$; notice that $\lambda_\ell$ scales as $N^{-3}$), with the normalization $f_\ell (x) = 1$ if $|x| = N \ell$.  We extend $f_\ell$ to $\bR^3$ by setting $f_\ell (x) = 1$ for all $|x| > N\ell$. We also define $w_\ell = 1-f_\ell$ (so that $w_\ell (x) = 0$ if $|x| > N \ell$). By scaling, we observe that $f_\ell (N.)$ satisfies the equation 
\begin{equation}\label{eq:scatlN} \left( -\Delta + \frac{N^2}{2} V (N.) \right) f_\ell (N.) = N^2 \lambda_\ell f_\ell (N.) \end{equation}
on the ball $|x| \leq \ell$ ($\ell > 0$ will be kept fixed, independent of $N$). With this choice, we expect that $f_\ell$ will be close, in the limit of large $N$, to the solution of the zero-energy scattering equation (\ref{eq:0en}). This is confirmed by the next lemma, where we collect some important properties of $f_\ell$. Most of the these results are taken from Lemma~A.1 of \cite{ESY0}.

\begin{lemma} \label{3.0.sceqlemma}
Let $V \in L^3 (\bR^3)$ be a non-negative, spherically symmetric potential with $V(x) = 0$ for all $|x| > R$. 
Fix $\ell > 0$ and let $f_\ell$ denote the solution of \eqref{eq:scatl}. 
\begin{enumerate}
\item [i)] We have 
\[ \lambda_\ell = \frac{3a_0}{N^3 \ell^3} \left(1 + \mathcal{O} (a_0 / N\ell) \right) \]
\item[ii)] We have $0\leq f_\ell, w_\ell \leq 1$ and \begin{equation}\label{eq:Vfa0} \int dx \, V(x) f_\ell (x) = 8\pi a_0 + \mathcal{O}(N^{-1}).
\end{equation}    
\item[iii)] There exists a constant $C>0 $, depending on the potential $V$, such that 
	\begin{equation}\label{3.0.scbounds1} 
	w_\ell(x)\leq \frac{C}{|x|+1} \quad\text{ and }\quad |\nabla w_\ell(x)|\leq \frac{C}{|x|^2+1}. 
	\end{equation}
for all $|x| \leq N \ell$.   
\end{enumerate}        
\end{lemma}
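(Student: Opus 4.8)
The plan is to reduce everything to the radial ODE satisfied by $f_\ell$ and then exploit a Green's-function representation of $w_\ell$, comparing throughout with the genuine zero-energy scattering solution $f$ of (\ref{eq:0en}); as noted, most of the statement is already contained in Lemma~A.1 of \cite{ESY0}. By spherical symmetry $f_\ell$ is radial, and since elliptic regularity gives $f_\ell\in C^1$ the substitution $u(r)=r f_\ell(r)$ turns (\ref{eq:scatl}) into $-u''+\tfrac12 V u=\lambda_\ell u$ on $[0,N\ell]$ with $u(0)=0$ and, at $r=N\ell$, the two conditions $u(N\ell)=N\ell$, $u'(N\ell)=1$ encoding $f_\ell(N\ell)=1$ and $\partial_r f_\ell(N\ell)=0$. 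Outside the support of $V$ the equation is constant-coefficient, so $g(r):=r w_\ell(r)=r-u(r)$ solves $g''+\lambda_\ell g=\lambda_\ell r$, i.e.
\[ g(r)=r+\alpha\cos(\sqrt{\lambda_\ell}\,r)+\beta\sin(\sqrt{\lambda_\ell}\,r),\qquad R\le r\le N\ell; \]
the constants $\alpha,\beta$ are fixed by $g(N\ell)=g'(N\ell)=0$, and $\lambda_\ell$ itself is pinned down by matching $g,g'$ at $r=R$ to the interior solution on $[0,R]$, which for small $\lambda_\ell$ is a perturbation of $f$. Since $\sqrt{\lambda_\ell}\,N\ell=\mathcal{O}(N^{-1/2})$, Taylor-expanding the trigonometric functions and carrying out the matching to leading order already yields (i); I would, however, derive (i) more cleanly at the end from (ii)--(iii).

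For (ii), positivity $f_\ell>0$ is the ground-state property of $-\Delta+\tfrac12 V-\lambda_\ell$ on the ball with Neumann conditions, while monotonicity of $r\mapsto f_\ell(r)$ — which together with $f_\ell(N\ell)=1$ forces $f_\ell\le 1$, hence $0\le w_\ell\le 1$ — follows from a Wronskian/comparison argument for $u$. For the identity $\int V f_\ell=8\pi a_0+\mathcal{O}(N^{-1})$ one compares $f_\ell$ with $f$: their difference solves, on $B_{N\ell}$, an equation with right-hand side $\lambda_\ell f_\ell=\mathcal{O}(N^{-3})$ in $L^\infty_{\rm loc}$ together with a boundary mismatch of size $\mathcal{O}(1/N\ell)$ at $|x|=N\ell$; estimating this difference on the support of $V$ and invoking $8\pi a_0=\int V f$ from (\ref{eq:scat-len}) gives (\ref{eq:Vfa0}). (Equivalently, $\int V f_\ell$ can be read off directly from the matching data for $u$ at $r=R$.)

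For (iii), note that $f_\ell\equiv 1$ and $\partial_r f_\ell\equiv 0$ on $\partial B_{N\ell}$, so $w_\ell$ solves the Dirichlet problem $-\Delta w_\ell=\tfrac12 V f_\ell-\lambda_\ell f_\ell$ on $B_{N\ell}$ with $w_\ell|_{\partial B_{N\ell}}=0$, whence
\[ w_\ell(x)=\int_{B_{N\ell}} G(x,y)\Big[\tfrac12 V(y)f_\ell(y)-\lambda_\ell f_\ell(y)\Big]\,dy, \]
with $0\le G(x,y)\le(4\pi|x-y|)^{-1}$ the Dirichlet Green's function of the ball. Dropping the non-positive $-\lambda_\ell f_\ell$ term and using $0\le f_\ell\le 1$ gives $0\le w_\ell(x)\le\tfrac1{8\pi}\int V(y)|x-y|^{-1}\,dy\le C/(|x|+1)$, where the last bound uses that $V\ge 0$ is compactly supported and in $L^3$ (Hölder for $|x-y|\le 1$, the crude bound $\int V$ otherwise, and $|x-y|\ge|x|/2$ far from the support). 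Differentiating the representation and using $|\nabla_x G(x,y)|\le C|x-y|^{-2}$ with the same splitting — now pairing $V\in L^3$ with $|x-y|^{-2}\in L^q_{\rm loc}$ for every $q<3/2$ — yields $|\nabla w_\ell(x)|\le C/(|x|^2+1)$, the $\lambda_\ell$-term contributing only $\mathcal{O}(N^{-2})$, which is absorbed in the constant. This is (\ref{3.0.scbounds1}).

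Finally, (i): integrating (\ref{eq:scatl}) over $B_{N\ell}$, the Neumann condition annihilates $\int_{B_{N\ell}}\Delta f_\ell$, and since the support of $V$ lies in $B_{N\ell}$ we get $\tfrac12\int V f_\ell=\lambda_\ell\int_{B_{N\ell}}f_\ell$. By (ii) the left side is $4\pi a_0+\mathcal{O}(N^{-1})$; by (iii), $\int_{B_{N\ell}}f_\ell=\tfrac{4\pi}3(N\ell)^3-\int_{B_{N\ell}}w_\ell=\tfrac{4\pi}3(N\ell)^3\big(1+\mathcal{O}(a_0/N\ell)\big)$, with leading correction $\int_{B_{N\ell}}w_\ell\simeq\int a_0/|x|\,dx=2\pi a_0(N\ell)^2$; dividing gives $\lambda_\ell=3a_0(N\ell)^{-3}\big(1+\mathcal{O}(a_0/N\ell)\big)$. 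The main obstacle is exactly the comparison of $f_\ell$ with the infinite-volume solution $f$ needed in (\ref{eq:Vfa0}): one must show that neither the eigenvalue $\lambda_\ell\sim N^{-3}$ nor the truncation to the finite ball perturbs $\int V f$ by more than $\mathcal{O}(N^{-1})$, the delicate point being the choice of norm in which to control $f-f_\ell$ (it suffices to do so on the support of $V$); this step is arguably cleaner via the explicit ODE matching above than via abstract elliptic estimates.
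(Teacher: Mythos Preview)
Your approach differs from the paper's. The paper defers (i), (iii), and the inequalities $0\le f_\ell,w_\ell\le1$ entirely to Lemma~A.1 of \cite{ESY0}, and proves only (\ref{eq:Vfa0}). For that, it does precisely what you mention parenthetically as an alternative: it writes down the explicit solution $m(r)=\lambda_\ell^{-1/2}\sin(\sqrt{\lambda_\ell}\,(r-N\ell))+N\ell\cos(\sqrt{\lambda_\ell}\,(r-N\ell))$ on $(R,N\ell]$, Taylor-expands (using (i), already in hand from \cite{ESY0}) to read off $m(R)=R-a_0+\mathcal{O}(N^{-1})$ and $m'(R)=1+\mathcal{O}(N^{-1})$, and then computes $\int V f_\ell = 4\pi\int_0^R rVm\,dr = 8\pi\int_0^R r m''\,dr+\mathcal{O}(N^{-3}) = 8\pi(Rm'(R)-m(R))+\mathcal{O}(N^{-1})=8\pi a_0+\mathcal{O}(N^{-1})$ via integration by parts in the ODE. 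Your preferred comparison of $f_\ell$ with the infinite-volume $f$ also works (maximum principle for the $\mathcal{O}((N\ell)^{-1})$ boundary mismatch, Newton-potential bound on the $\lambda_\ell f_\ell$ source), but is less direct. Your derivation of (i) by integrating the equation over $B_{N\ell}$ is clean; just note that the logic must be preceded by the crude bound $\lambda_\ell=\mathcal{O}(N^{-3})$ --- obtained e.g.\ by testing the Rayleigh quotient with the constant function --- since your arguments for both (\ref{eq:Vfa0}) and the $\lambda_\ell$-contribution in (iii) already invoke it.

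There is one genuine gap. For the gradient bound in (iii) near $\operatorname{supp}V$, you propose to pair $V\in L^3$ with $|x-\cdot|^{-2}\in L^q_{\mathrm{loc}}$, $q<3/2$; but H\"older requires the endpoint $q=3/2$, which fails, and the Riesz potential $V\mapsto V*|\cdot|^{-2}$ maps $L^3$ only into $\mathrm{BMO}$, not $L^\infty$. The Green's-function route therefore does not yield a pointwise bound on $\nabla w_\ell$ on $\operatorname{supp}V$ from $V\in L^3$ alone. The fix is to use the radial structure directly: from $u=rf_\ell$ one finds
\[
f_\ell'(r)=r^{-2}\int_0^r s^2\Big(\tfrac12 V(s)-\lambda_\ell\Big)f_\ell(s)\,ds,
\]
and H\"older in the radial measure $s^2\,ds$ gives $r^{-2}\int_0^r s^2 V(s)\,ds\le C\|V\|_{L^3(B_r)}$, which is bounded uniformly in $r$. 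Your Green's-function argument for $|x|\gtrsim R$ is correct as written.
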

\begin{proof}
Statement (i), the fact that $0 \leq f_\ell, w_\ell \leq 1$, and statement (iii) follow from Lemma A.1 in \cite{ESY0}. We have to show (\ref{eq:Vfa0}). To this end, we adapt the proof of Lemma 5.1 (iv) of \cite{ESY2}. With $ r=|x|$, we may write $ m(r) = r f_\ell(r) $. We find that, for all $ r\in (R,N\ell]$,
        \begin{equation}
        m(r) = \lambda_\ell^{-\frac12}\sin(\lambda_\ell^{\frac12}(r-N\ell)) + N\ell\cos (\lambda_\ell^{\frac12}(r-N\ell)).
        \end{equation}
   By expanding up to the order $ \mathcal O (\lambda_\ell^2)$ we obtain
        \begin{equation}
        m(r) = r-a_0+\mathcal{O}(N^{-1}), \quad m'(r) = 1 + \mathcal{O}(N^{-1}).
        \end{equation}
Hence
        \begin{equation}
        \begin{split}
        \int dx\, V(x) f_\ell(x) &= 4\pi \int_0^R dr\, r V(r) m(r) \\ &=8\pi \int_0^R dr\, (rm''(r)+\lambda_\ell r^2 f_\ell(r))\\
        &= 8\pi \int_0^R dr\, rm''(r) + O(N^{-3}) 
        \\
        & = 8\pi (Rm'(R)-m(R)) + \mathcal{O}(N^{-1}) = 8\pi a_0 + \mathcal{O}(N^{-1}).
        \end{split}
        \end{equation}
   \end{proof}
   
Next, we introduce next the modified Gross-Pitaevskii equation 
\begin{equation}\label{eq:GPmod}
i\partial_t \wtph_t = -\Delta \wtph_t + \left( N^3 V(N.) f_\ell (N.) * |\wtph_t|^2 \right) \wtph_t 
\end{equation}
with initial data $\wtph_{t=0} = \ph$ describing the Bose-Einstein condensate at time $t=0$. While in Theorem \ref{thm:main2} the notation $\ph$ is already used to indicate the initial condensate wave function, in the proof of Theorem \ref{thm:main} we will choose $\ph = \phi_\text{GP}$ to be the minimizer of the Gross-Pitaevskii functional (\ref{eq:GPen-functr}). In both cases, we assume that $\ph \in H^4 (\bR^3)$. 

Notice that, in contrast with the initial data $\ph$, the solution $\wtph_t$ depends on $N$. With (\ref{eq:Vfa0}), one can show that $\wtph_t$ converges towards the solution of the original Gross-Pitaevskii equation (\ref{eq:GPtd}), as $N \to \infty$. This fact and some other important properties of the solutions of (\ref{eq:GPtd}) and (\ref{eq:GPmod}) are listed in the next proposition, whose proof can be found in Theorem 3.1 of \cite{BDS}, with the only difference that, in \cite{BDS}, the modified Gross-Pitaevskii equation was defined through the solution $f$ of the zero energy scattering equation, while here we work with the Neumann ground state $f_\ell$. The only relevant consequence is the fact that, here, the integral of $f_\ell$ against $V$ is not exactly equal to $8\pi a_0$; the error, however, is of order $N^{-1}$ by (\ref{eq:Vfa0}). 
\begin{prop}\label{prop:phph}
Let  $V \in L^3 (\bR^3)$ be a non-negative, spherically symmetric, compactly supported potential. 
Let $\varphi\in H^1(\bR)$ with $ \lt{\varphi}=1$. 
	\begin{enumerate}
	\item[i)] Well-Posedness. For any $\ph \in H^1 (\bR^3)$, with $\| \ph \|_2 = 1$, there exist unique global solutions $t \to \ph_t$ and $t \to \wtph_t$ in $C(\bR, H^1(\bR^3))$ of the Gross-Pitaevskii equation (\ref{eq:GPtd}) and, respectively, of the modified Gross-Pitaevskii equation (\ref{eq:GPmod}) with initial datum $ \varphi$. We have $\| \ph_t \|_2 = \| \wtph_t \|_2 = 1$ for all $t \in \bR$. Furthermore, there exists a constant $C>0$ such that
\[ \| \ph_t \|_{H^1} , \| \wtph_t \|_{H^1} \leq C \]
\item[ii)] Propagation of higher regularity. If $ \varphi\in H^m(\bR)$ for some $m \geq 2$, then $\ph_t,\wtph_t \in H^m(\bR)$ for every $t\in\bR$. Moreover, there exist constants $C>0$, depending on $m$ and on $\|\varphi\|_{H^m}$, and $c>0$, depending on $m$ and on $\| \varphi\|_{H^1}$, such that, for all $t\in \bR$,
	    \begin{equation} 
	    \hnn{\pt}{m}, \hnn{\wtph}{m}\leq Ce^{c|t|}. 
	    \end{equation}
	\item[iii)] Regularity of time derivatives. Suppose $\varphi\in H^4(\bR)$. Then there exist $C>0$, depending on $\| \varphi\|_{H^4}$, and $c>0$, depending on $ \| \varphi\|_{H^1}$, such that, for all $t \in \bR$, 
\[	\| \dot{\wtph}\|_{H^2}, \| \ddot{\wtph} \|_{H^2} \leq Ce^{c|t|} . \]
	\item[iv)] Comparison of Dynamics. Suppose $\varphi\in H^2(\bR)$. Then there exists a constant $c>0$, depending on $\| \varphi\|_{H^2}$, such that for all $t\in \bR$,
	    \begin{equation} 
	    \| \ph_t-\wtph_t \|_2 \leq C N^{-1} \exp(c \exp(c |t|). 
	    \end{equation}
	\end{enumerate}
    \end{prop}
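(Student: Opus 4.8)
\emph{Proof strategy.} The plan is to follow the scheme of Theorem~3.1 of \cite{BDS}; the only genuinely new ingredient is that the convolution kernel in (\ref{eq:GPmod}) is now $W_\ell := N^3 V(N\,\cdot\,) f_\ell(N\,\cdot\,)$, built from the Neumann ground state $f_\ell$ rather than from the zero-energy scattering solution. The relevant properties of $W_\ell$ are supplied by Lemma \ref{3.0.sceqlemma}: $W_\ell \geq 0$, $\operatorname{supp} W_\ell \subseteq \{ |y| \leq R/N \}$, and $\| W_\ell \|_{L^1} = \int V f_\ell\,dx = 8\pi a_0 + \mathcal{O}(N^{-1})$, so in particular $\| W_\ell \|_{L^1}$ is bounded uniformly in $N$. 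Since $W_\ell$ enters only through the convolution $W_\ell * |u|^2$, Young's inequality gives $\| W_\ell * g \|_{H^s} \leq \| W_\ell \|_{L^1}\| g \|_{H^s}$ for every $s \geq 0$; hence the Hartree nonlinearity $(W_\ell * |u|^2)u$ behaves, for all estimates below, like a defocusing cubic term with coupling constant bounded uniformly in $N$, and the whole proposition reduces to standard NLS bookkeeping together with this observation.

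For (i), local well-posedness of both (\ref{eq:GPtd}) and (\ref{eq:GPmod}) in $H^1(\bR^3)$ is the usual fixed-point argument for energy-subcritical cubic-type NLS in three dimensions (Strichartz estimates for $e^{it\Delta}$, or a direct energy iteration). Conservation of the $L^2$ norm gives $\| \ph_t \|_2 = \| \wtph_t \|_2 = 1$; conservation of the energies $\int (|\nabla \ph|^2 + 4\pi a_0 |\ph|^4)\,dx$ and $\int |\nabla \wtph|^2\,dx + \tfrac12 \int (W_\ell * |\wtph|^2)|\wtph|^2\,dx$ — each of which dominates $\| \nabla\,\cdot\, \|_2^2$ because the interaction terms are nonnegative — upgrades the local solutions to global ones and yields the uniform $H^1$ bound, the uniformity in $N$ following from $\| W_\ell \|_{L^1} \leq C$ and $H^1 \hookrightarrow L^4$ applied to the fixed datum $\varphi$. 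For (ii) I would commute $\partial^\alpha$, $|\alpha| \leq m$, through the equations and run an energy estimate: using that $H^m(\bR^3)$ is an algebra for $m \geq 2$ and $\| W_\ell * g \|_{H^m} \leq C \| g \|_{H^m}$, one obtains $\tfrac{d}{dt}\| \wtph_t \|_{H^m}^2 \leq C \| \wtph_t \|_{H^1}^2 \| \wtph_t \|_{H^m}^2$, and Gronwall together with the uniform $H^1$ bound from (i) gives $\| \wtph_t \|_{H^m}, \| \ph_t \|_{H^m} \leq C e^{c|t|}$. Part (iii) then follows by differentiating the equations in time, using the equation itself to express $\dot{\wtph}_t$ and $\ddot{\wtph}_t$ through spatial derivatives of $\wtph_t$ and of $(W_\ell * |\wtph_t|^2)\wtph_t$, and estimating the resulting expressions by (ii) and the algebra/convolution bounds.

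For (iv), set $d_t = \ph_t - \wtph_t$, so $d_0 = 0$, and decompose the difference of the nonlinearities as
\[
i\partial_t d_t = -\Delta d_t + 8\pi a_0\big( |\ph_t|^2 \ph_t - |\wtph_t|^2 \wtph_t \big) + \big( 8\pi a_0 - \| W_\ell \|_{L^1} \big) |\wtph_t|^2 \wtph_t + \big( \| W_\ell \|_{L^1} |\wtph_t|^2 - W_\ell * |\wtph_t|^2 \big) \wtph_t .
\]
Pairing with $d_t$ in $L^2$, the Laplacian drops out of $\tfrac{d}{dt}\| d_t \|_2^2$; the first nonlinear term is bounded by $C(\| \ph_t \|_{H^2}^2 + \| \wtph_t \|_{H^2}^2)\| d_t \|_2$; the second is $\mathcal{O}(N^{-1})$ by Lemma \ref{3.0.sceqlemma}\,(ii); and for the third one uses $\operatorname{supp} W_\ell \subseteq \{ |y| \leq R/N \}$ to write $\| W_\ell \|_{L^1}|\wtph_t|^2 - W_\ell * |\wtph_t|^2 = -\int W_\ell(y)\big( |\wtph_t(\cdot - y)|^2 - |\wtph_t|^2 \big)\,dy$, whose $L^2$ norm is at most $C N^{-1}\| W_\ell \|_{L^1}\| \nabla |\wtph_t|^2 \|_2 \leq C N^{-1}\| \wtph_t \|_{H^2}^2$. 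Invoking (ii) to replace $\| \ph_t \|_{H^2}$ and $\| \wtph_t \|_{H^2}$ by $C e^{c|t|}$ yields $\tfrac{d}{dt}\| d_t \|_2 \leq C e^{c|t|}\| d_t \|_2 + C N^{-1} e^{c|t|}$, and Gronwall (integrating the exponential factor once more) gives $\| d_t \|_2 \leq C N^{-1}\exp(c\exp(c|t|))$.

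The only place where the passage from the zero-energy scattering solution to the Neumann ground state $f_\ell$ has any effect is the comparison estimate (iv), and there the work is carried entirely by the two inputs from Lemma \ref{3.0.sceqlemma} — the identity $\int V f_\ell\,dx = 8\pi a_0 + \mathcal{O}(N^{-1})$ and the localization $\operatorname{supp} W_\ell \subseteq \{ |y| \leq R/N \}$ — feeding into the Sobolev control of $\wtph_t$ from (ii); so I expect this decomposition and its bookkeeping to be the main, though still routine, point. The global well-posedness and propagation-of-regularity statements are standard cubic-NLS theory and go through as in \cite{BDS}.
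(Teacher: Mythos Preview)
Your proposal is correct and follows exactly the approach the paper indicates: the paper does not give a detailed proof but simply refers to Theorem~3.1 of \cite{BDS}, noting that the only change when passing from the zero-energy scattering solution to the Neumann ground state $f_\ell$ is that $\int V f_\ell = 8\pi a_0 + \mathcal{O}(N^{-1})$ rather than exactly $8\pi a_0$, which is precisely the input from Lemma~\ref{3.0.sceqlemma}\,(ii) you use in part~(iv). Your sketch in fact supplies more detail than the paper does.
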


To compare the many-body evolution $\psi_{N,t}$ with products of the solution $\wtph_t$ of the modified Gross-Pitaevskii equation (\ref{eq:GPtd}), we are going to define a unitary map (already discussed in Section \ref{sec:intro}, after (\ref{eq:UNdef})) that was first introduced in \cite{LNSS,LNS} in the mean-field setting. To this end, we remark that every $\psi_N \in L^2_s (\bR^{3N})$ has a unique representation of the form
\begin{equation}\label{eq:psi-repr} \psi_N = \sum_{n=0}^N \psi_N^{(n)} \otimes_s \wtph_t^{\otimes (N-n)} 
\end{equation}
where $\psi_N^{(n)} \in L^2_{\perp \wtph_t} (\bR^{3})^{\otimes_s n}$ is symmetric with respect to permutations and orthogonal to $\wtph_t$, in each of its coordinate, and where, for $\psi_N^{(n)} \in L^2_\perp (\bR^3)^{\otimes_s n}$ and $\psi_N^{(k)} \in L^2_\perp (\bR^3)^{\otimes_s k}$, $\psi_N^{(n)} \otimes_s \psi_N^{(k)}$ denotes the symmetrized product defined by 
  \begin{equation}\label{symprod}
    \begin{split}
    \psi_N^{(k)} \otimes_s &\psi_N^{(n)} (x_1, \dots, x_{k+n}) 
    \\ &=\frac{1}{\sqrt{ k! n! (k+n)! }}\sum_{\sigma\in S_{k+n}}  \psi_N^{(k)} (x_{\sigma(1)},  \dots, x_{\sigma_{(k)}})\psi_N^{(n)} (x_{\sigma(k+1)}, \dots, x_{\sigma_{(k+n)}}). 
    \end{split}
    \end{equation}
Using the representation (\ref{eq:psi-repr}), we define $U_{N,t} : L^2_s (\bR^{3N}) \to \cF^{\leq N}_{\perp \wtph_t}$ by setting  
\begin{equation}\label{eq:UNtdef} 
U_{N,t} \psi_N = \{ \psi_N^{(0)}, \psi_N^{(1)}, \dots , \psi_N^{(N)} \}.
\end{equation} 
In terms of creation and annihilation operators, the map $U_{N,t}$ is given by 
\[ U_{N,t} \psi_N = \bigoplus_{n=0}^N (1-|\wtph_t \rangle \langle \wtph_t|)^{\otimes n} \frac{a (\wtph_t)^{N-n}}{\sqrt{(N-n)!}} \psi_N \, .  \]
Here, and frequently in the sequel, we identify $\psi_N \in L^2_s (\bR^{3N})$ with the Fock space vector $\{ 0, \dots , 0, \psi_N, 0 , \dots \} \in \cF$. {F}rom (\ref{eq:psi-repr}) and by the requirement of orthogonality, it is easy to check that $\| \psi_N \|^2  = \sum_{n=0}^N \| \psi^{(n)}_N \|^2$. Hence, $U_{N,t} : L^2_s (\bR^{3N}) \to \cF^{\leq N}_{\perp \wtph_t}$ is a unitary map, with inverse   
\[ U_{N,t}^* \{ \psi_N^{(0)}, \psi_N^{(1)}, \dots, \psi^{(N)}_N \} = \sum_{n=0}^N \frac{a^* (\wtph_t)^{N-n}}{\sqrt{(N-n)!}} \psi_N^{(n)} \]
The action of $U_{N,t}$ on creation and annihilation operators is determined by the following rules (see \cite{LNSS,LNS}): 
 \begin{equation}\label{2.1.UNconjugation}
    \begin{split}
    U_{N,t} a^* (\wtph_t) a (\wtph_t) U_{N,t}^*  & = N-\cN \\
    U_{N,t} a^*(f)a (\wtph_t) U_{N,t}^* &= a^*(f) \sqrt{N-\cN} = \sqrt{N} \, b^* (f) \\
    U_{N,t} a^* (\wtph_t) a(g) U_{N,t}^* &= \sqrt{N-\cN} a(g) = \sqrt{N} \, b (g) \\
    U_{N,t} a^* (f) a(g)U_{N,t}^* &= a^*(f)a(g)
    \end{split}
    \end{equation}
for all $ f,g\in L^2_{\perp \wtph_t} (\bR^3)$. Here we used modified creation and annihilation operators, as defined in (\ref{eq:bb-fie}).

With $U_{N,t}$ we factor out the condensate and we focus on its orthogonal excitations. Observe, however, that $U_{N,t}$ does not remove correlations, which are known to play a crucial role in the Gross-Pitaevskii regime (see, for example, \cite{ESY2,EMS,CH2}). To remove correlations from the excitation vectors, we are going to use a generalized Bogoliubov transformation, as introduced in Section \ref{sec:Bog}. We define  
\begin{equation}\label{eq:ktdef} k_t (x;y) = - N w_\ell (N (x-y)) \wtph_t (x) \wtph_t (y) \end{equation}
{F}rom Lemma \ref{3.0.sceqlemma}, it follows that $k_t \in L^2 (\bR^3 \times \bR^3)$, with $L^2$-norm bounded uniformly in $N$. Hence, $k_t$ is the integral kernel of a Hilbert-Schmidt operator on $L^2 (\bR^3)$, which we denote again with $k_t$. We define a new Hilbert-Schmidt operator setting 
\begin{equation}\label{eq:etat} \eta_t = (1-|\wtph_t \rangle \langle \wtph_t|) \, k_t \, (1-|\bar{\wtph}_t \rangle \langle \bar{\wtph}_t|) 
\end{equation}
Also in this case, we will denote by $\eta_t$ both the Hilbert-Schmidt operator defined in (\ref{eq:etat}) and its integral kernel. Note that $\eta_t \in (q_{\wtph_t} \otimes q_{\wtph_t}) L^2 (\bR^3 \times \bR^3)$, where $q_{\wtph_t} = 1- |\wtph_t \rangle \langle \wtph_t|$. Let us write $\eta_t = k_t + \mu_t$, with the Hilbert-Schmidt operator
\begin{equation}\label{eq:mut} \mu_t = |\wtph_t \rangle \langle \wtph_t| \, k_t \, |\bar{\wtph}_t \rangle \langle \bar{\wtph}_t| - |\wtph_t \rangle \langle \wtph_t| \, k_t - k_t | \bar{\wtph}_t \rangle \langle \bar{\wtph}_t| 
\end{equation}
In the next lemma we collect some important properties of the operators $\eta_t$, $k_t$, $\mu_t$. The proof is a simple generalization of the proof of Lemma 3.3 and Lemma 3.4 in \cite{BDS}; we omit the details.
 \begin{lemma}\label{lm:eta}
    Let $\wtph_t$ be the solution of \eqref{eq:GPmod} with initial datum $ \varphi\in H^4 (\bR)$. Let $w_\ell =1-f_\ell$ with $f_\ell$ the ground state solution of the Neumann problem \eqref{eq:scatl}. Let $k_t, \eta_t,\mu_t$ be defined as in (\ref{eq:ktdef}), (\ref{eq:etat}), (\ref{eq:mut}). Then there exist constants $C,c>0$ depending only on $\| \varphi \|_{H^4}$ (in many cases, these constants actually 
   depend only on lower Sobolev norms of $\ph$) and on $V$ such that the following bounds hold true, for all $t \in \bR$. 
\begin{enumerate}
\item[i)] We have 
\begin{equation}\label{eq:etato0}
\lt{\et} \leq C, \quad  \| \eta_t^{(n)} \|_2 \leq \| \eta_t \|^n_2 \leq C^n \quad \text{ and } \quad  \lim_{\ell \to 0} \, \sup_{t \in \bR} \lt{\et} = 0
\end{equation}	    
and also
\begin{equation*}
\| \nabla_j \eta_t \|_2 \leq C \sqrt{N}, \; \| \nabla_j \mu_t \|_2 \leq C, \; \| \nabla_j \eta_t^{(n)} \|_2 \leq C \| \eta_t \|_2^{n-2}, \; \| \Delta_j \eta_t^{(n)} \|_2 \leq C \| \eta_t \|^{n-2}_2  \end{equation*}
for $j=1,2$ and for all $n \geq 2$. Here $\nabla_1 \eta_t$ and $\nabla_2 \eta_t$ denote the kernels $\nabla_x \eta_t (x;y)$ and $\nabla_y \eta_t (x;y)$ ($\Delta_1 \eta_t$ and $\Delta_2 \eta_t$ are defined similarly). 
Decomposing $\cosh_{\eta_t} = 1 + p_{\eta_t}$ and $\sinh_{\eta_t} = \eta_t + r_{\eta_t}$, we obtain
\begin{equation} \| \text{sinh}_{\eta_t} \|_2 , \| p_{\eta_t} \|_2 , \| r_{\eta_t} \|_2, \| \nabla_j p_{\eta_t} \|_2 , \| \nabla_j r_{\eta_t} \|_2 \leq C \end{equation}
\item[ii)] For a.e. $x,y \in \bR^3$ and $n \in \bN$, $n \geq 2$, we have the pointwise bounds 
\begin{equation}\label{eq:point-eta} \begin{split} 
|\eta_t (x;y)| &\leq \frac{C}{|x-y|+N^{-1}} |\wtph_t (x)| |\wtph_t (y)| \\ |\eta_t^{(n)} (x;y)| &\leq C \| \eta_t \|_2^{n-2} |\wtph_t (x)| |\wtph_t (y)| \\
|\mu_t (x;y)|, |p_{\eta_t} (x;y)| , |r_{\eta_t} (x;y)| &\leq C |\wtph_t (x)| |\wtph_t (y)| \end{split} 
\end{equation}
\item[iii)] We have 
\[ 
\sup_x \int |\eta_t (x;y)|^2 dy , \; \sup_x \int |k_t (x;y)|^2 dy ,\;  \sup_x \int |\mu_t (x;y)|^2 dy \leq C \| \wtph_t \|_{H^2} \leq C e^{c |t|}\]
and 
\[  
\sup_x \int |\eta_t^{(n)} (x;y)|^2 dy
 \leq C \| \eta_t \|_2^{n-2} \| \wtph_t \|_{H^2} \leq C \| \eta_t \|_2^{n-2} e^{c|t|} \]
 for all $n \geq 2$. Therefore
 \[  
\sup_x \int  |p_{\eta_t} (x;y)|^2 dy , \sup_x \int |r_{\eta_t} (x;y)|^2 dy , \sup_x \int |\text{sinh}_{\eta_t} (x;y)|^2 dy  \leq C e^{c|t|} \]
\item[iv)] For $j=1,2$ and $n \geq 2$, we have
\[ \| \partial_t \eta_t \|_2, \| \partial_t^2 \eta_t \|_2 \leq C e^{c|t|} , \quad \| \partial_t \eta_t^{(n)} \|_2 \leq C n e^{c|t|} \| \eta_t \|^{n-1}_2 \] and also \[ \| \partial_t \nabla_j \eta_t \|_2 \leq C \sqrt{N} e^{c|t|}, \quad \| \partial_t \nabla_j \mu_t \|_2 \leq C e^{c|t|}, \quad \| \partial_t \nabla_j \eta^{(n)}_t \|_2 \leq C n \| \eta_t \|^{n-2} e^{c|t|}  \] Therefore
\[ \begin{split} &\| \partial_t p_{\eta_t} \|_2, \| \partial_t r_{\eta_t} \|_2, \| \partial_t \text{sinh}_{\eta_t} \|_2, \| \nabla_j \partial_t p_{\eta_t} \|_2 , \| \nabla_j \partial_t r_{\eta_t} \|_2  \leq C e^{c|t|}  \end{split} \]
\item[v)] For a.e. $x,y \in \bR^3$, we have the pointwise bounds
\[ \begin{split} |\partial_t \eta_t (x;y)| \leq \; &C \left[ 1 + \frac{1}{|x-y|+N^{-1}} \right]  \\ &\hspace{2cm} \times  \left[ |\dot{\wtph}_t (x)|| \wtph_t (y)| + |\wtph_t (x)|| \dot{\wtph}_t (y)| + |\wtph_t (x)| |\wtph_t (y)|  \right]  \end{split} \]
Moreover, for $n \geq 2$, we have
\[ \begin{split} |\partial_t \eta_t^{(n)} (x;y)| \leq \; &C n e^{c|t|} \| \eta_t \|_2^{n-2} \left[ |\dot{\wtph}_t (x)|| \wtph_t (y)| + |\wtph_t (x)|| \dot{\wtph}_t (y)| + |\wtph_t (x)| |\wtph_t (y)| \right]  \end{split} \] 
Therefore 
\[ \begin{split} |\partial_t \mu_t (x;y)|, &|\partial_t r_{\eta_t} (x;y)|, |\partial_t p_{\eta_t} (x;y)|\\ & \leq C e^{c|t|} \left[ |\dot{\wtph}_t (x)| |\wtph_t (y)|  + |\wtph_t (x)| |\dot{\wtph}_t (y)| + |\wtph_t (x)| |\wtph_t (y)| \right] \end{split} \]
\item[vi)] Finally, we find
\[ \sup_x \int |\partial_t \eta_t (x;y)|^2 dy , \; \sup_x \int |\partial_t k_t (x;y)|^2 dy , \; \sup_x \int |\partial \mu_t (x;y)|^2 dy \leq C e^{c|t|} \]
Furthermore, for all $n \geq 2$, 
\[ \sup_x \int |\partial_t \eta^{(n)}_t (x;y)| dy \leq C n e^{c|t|} \| \eta_t \|_2^{n-2} \]
and therefore
\[ \sup_x \int |\partial_t p_{\eta_t} (x;y)|^2 dy , \; \sup_x \int |\partial_t r_{\eta_t} (x;y) |^2 dy , \; \sup_x \int |\partial_t \sinh_{\eta_t} (x;y)|^2 dy \leq C e^{c|t|} \]
\end{enumerate}
\end{lemma}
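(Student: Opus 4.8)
The plan is to reduce everything to the explicit formula (\ref{eq:ktdef}) for $k_t$, the decomposition $\eta_t = k_t + \mu_t$ with $\mu_t$ as in (\ref{eq:mut}), the properties of the Neumann ground state $f_\ell$ collected in Lemma \ref{3.0.sceqlemma}, and the regularity of $\wtph_t$ from Proposition \ref{prop:phph}; all computations are then a routine adaptation of the proofs of Lemma~3.3 and Lemma~3.4 in \cite{BDS}, the only change being that $f_\ell$ replaces the zero-energy scattering solution $f$, which by Lemma \ref{3.0.sceqlemma} alters constants only at order $N^{-1}$ and plays no role in the bounds below. First I would prove all the claims for $k_t$, then for the ``smooth'' correction $\mu_t$, and finally obtain the statements for $\eta_t^{(n)}$, $p_{\eta_t}$, $r_{\eta_t}$ and $\sinh_{\eta_t}$ from the series (\ref{eq:chsh}) together with submultiplicativity of Hilbert--Schmidt norms.

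For $k_t$ the pointwise bounds in (ii), (v) are immediate from Lemma \ref{3.0.sceqlemma}~iii), which gives $N w_\ell(N(x-y)) \leq C(|x-y|+N^{-1})^{-1}$, so that $|k_t(x;y)| \leq C(|x-y|+N^{-1})^{-1}|\wtph_t(x)||\wtph_t(y)|$ and, since $\partial_t k_t(x;y) = - N w_\ell(N(x-y))(\dot{\wtph}_t(x)\wtph_t(y) + \wtph_t(x)\dot{\wtph}_t(y))$, the corresponding bound for $\partial_t k_t$. For the $L^2$- and $\sup$-type estimates I would change variables $u = N(x-y)$, writing for instance $\lt{k_t}^2 = N^{-1}\int w_\ell(u)^2 \big(\int |\wtph_t(x)|^2|\wtph_t(x-u/N)|^2\,dx\big)\,du$, and use $\int w_\ell^2 \leq CN\ell$ (from $w_\ell \leq C(|x|+1)^{-1}$ on $|x|\leq N\ell$ and $w_\ell = 0$ otherwise) together with $\|\wtph_t\|_{L^4}^4 \leq C\|\wtph_t\|_{H^1}^4 \leq C$ (Sobolev and Proposition \ref{prop:phph}~i)) to conclude $\lt{k_t}^2 \leq C\ell$; carrying out the convolution in one variable only yields the bounds for $\sup_x \int |k_t(x;y)|^2\,dy$ in terms of $\|\wtph_t\|_{H^2}$, and replacing one factor $\wtph_t$ by $\dot{\wtph}_t$, resp. $\ddot{\wtph}_t$, and invoking Proposition \ref{prop:phph}~iii) produces the time-derivative bounds. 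For the gradient estimates one differentiates $k_t$: a derivative on $\wtph_t$ costs nothing, whereas a derivative on $w_\ell(N\cdot)$ produces $N^2\nabla w_\ell(N(x-y))$, and Lemma \ref{3.0.sceqlemma}~iii) gives $|N^2\nabla w_\ell(Nu)| \leq CN^2(N^2|u|^2+1)^{-1}$, a function whose square has $L^1(\bR^3)$-norm $\int CN^4(N^2|u|^2+1)^{-2}\,du = CN$; this is precisely the origin of the factor $\sqrt N$ in $\|\nabla_j\eta_t\|_2 \leq C\sqrt N$.

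For $\mu_t$ the key point is that each of its three summands contains the contraction $\langle \wtph_t, k_t(\cdot\,;y)\rangle = -\wtph_t(y)\,(Nw_\ell(N\cdot)*|\wtph_t|^2)(y)$ (or its transpose), and by Young's inequality $\|Nw_\ell(N\cdot)*|\wtph_t|^2\|_\infty \leq \|Nw_\ell(N\cdot)\|_{3/2}\|\wtph_t\|_6^2 \leq C\ell$, using $\int w_\ell^{3/2} \leq C(N\ell)^{3/2}$ and $\|\wtph_t\|_6 \leq C\|\wtph_t\|_{H^1}$. This yields $|\mu_t(x;y)| \leq C|\wtph_t(x)||\wtph_t(y)|$ with $C$ depending only on $\ell$ and $\|\wtph_t\|_{H^1}$, and likewise the pointwise bounds for $p_{\eta_t}, r_{\eta_t}$. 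When estimating $\nabla_j\mu_t$ the derivative either falls on a factor $\wtph_t$ outside the convolution or is absorbed as $Nw_\ell(N\cdot)*\nabla|\wtph_t|^2$ with $\nabla|\wtph_t|^2 = 2\Re(\bar{\wtph}_t\nabla\wtph_t)$ again controlled by lower Sobolev norms of $\wtph_t$; crucially no new power of $N$ appears, so $\|\nabla_j\mu_t\|_2 \leq C$. The time derivatives of $\mu_t$ are handled identically, with $\partial_t$ acting in addition on the projections $|\wtph_t\rangle\langle\wtph_t|$, which is what produces the extra term proportional to $|\wtph_t(x)||\wtph_t(y)|$ in (v), and using $\|\dot{\wtph}_t\|_{H^2}, \|\ddot{\wtph}_t\|_{H^2} \leq Ce^{c|t|}$ for (iv), (vi).

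Since $\eta_t = k_t + \mu_t$, all of the above transfers directly to $\eta_t$. The Hilbert--Schmidt bounds $\|\eta_t^{(n)}\|_2 \leq \|\eta_t\|_2^n \leq C^n$ follow from $\|AB\|_{\rm HS} \leq \|A\|_{\rm op}\|B\|_{\rm HS}$ and $\|\eta_t\|_{\rm op} \leq \lt{\eta_t}$; the pointwise bounds on $\eta_t^{(n)}$ for $n$ even are obtained by writing $\eta_t^{(n)} = \eta_t(\bar{\eta}_t\eta_t)^{n/2-1}\bar{\eta}_t$, pulling the factors $|\wtph_t(x)|, |\wtph_t(y)|$ out of the two outermost kernels via the pointwise bound on $\eta_t$, and estimating the middle operator by Cauchy--Schwarz through its Hilbert--Schmidt norm $\leq \|\eta_t\|_2^{n-2}$ (the integrals $\int|\wtph_t(z)|^2(\cdots)\,dz$ that remain being bounded using $\|\wtph_t\|_{H^2}$, whence the $e^{c|t|}$ in (iii), (vi)); the odd case and the time-derivative versions follow analogously, by the product rule for the latter. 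Finally $p_{\eta_t} = \sum_{n\geq 1}(\eta_t\bar{\eta}_t)^n/(2n)!$ and $r_{\eta_t} = \sum_{n\geq 1}(\eta_t\bar{\eta}_t)^n\eta_t/(2n+1)!$ are absolutely convergent since $\lt{\eta_t} \leq C$, and inherit all the stated bounds by differentiating the series term by term; the smallness $\lim_{\ell\to 0}\sup_t\lt{\eta_t} = 0$ holds because every $L^2$ bound on $k_t$ and $\mu_t$ is $\mathcal O(\ell)$ and, depending only on $\|\wtph_t\|_{H^1} \leq C$, is uniform in $t$. The main obstacle in a complete write-up is the bookkeeping of the powers of $N$: one has to check in each derivative estimate that factors of $N$ produced by differentiating $w_\ell(N\cdot)$ — the only source of $N$-growth — occur exactly in the combinations the claimed powers ($\sqrt N$ for $\nabla_j\eta_t$, $O(1)$ for $\nabla_j\mu_t$ and $\nabla_j\eta_t^{(n)}$ with $n\geq 2$) can absorb, which is where the cancellation built into the splitting $\eta_t = k_t + \mu_t$ and the smoothing effect of the convolution with $|\wtph_t|^2$ must be exploited.
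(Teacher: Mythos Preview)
Your proposal is correct and follows exactly the approach the paper itself takes: the paper does not give a proof at all but simply states ``The proof is a simple generalization of the proof of Lemma 3.3 and Lemma 3.4 in \cite{BDS}; we omit the details.'' You have in fact gone further than the paper by sketching the key estimates (the change of variables $u=N(x-y)$, the Young-inequality control of the convolution defining $\mu_t$, the origin of the $\sqrt{N}$ from $\|N^2\nabla w_\ell(N\cdot)\|_2^2 \simeq N$, and the reduction of $\eta_t^{(n)}$, $p_{\eta_t}$, $r_{\eta_t}$ to $\eta_t$ via Hilbert--Schmidt submultiplicativity). One small slip: you write at the end that ``every $L^2$ bound on $k_t$ and $\mu_t$ is $\mathcal O(\ell)$'', but your own computation gives $\|k_t\|_2^2 \leq C\ell$, i.e.\ $\|k_t\|_2 = \mathcal O(\sqrt{\ell})$; this does not affect the conclusion $\lim_{\ell\to 0}\sup_t\|\eta_t\|_2 = 0$.
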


We model correlations in the solution $\psi_{N,t}$ of the many-body Schr\"odinger equation (\ref{eq:schr0}) by means of the generalized Bogoliubov transformation $\exp (B(\eta_t)) : \cF_{\perp \wtph_t}^{\leq N} \to \cF_{\perp \wtph_t}^{\leq N}$ with the integral kernel $\eta_t \in (q_{\wtph_t} \otimes q_{{\wtph}_t}) L^2 (\bR^3 \times \bR^3)$ defined in (\ref{eq:etat}). We define therefore the fluctuation dynamics 
\begin{equation}\label{eq:WNt} \cW_{N,t} = e^{-B(\eta_t)} \, U_{N,t} \, e^{-iH_N t} \, U^*_{N,0} \, e^{B (\eta_0)} 
\end{equation} 
Then $\cW_{N,t} : \cF_{\perp \ph}^{\leq N} \to \cF_{\perp \wtph_t}^{\leq N}$ is a unitary operator. Clearly, $\cW_{N,t}$ depends on the length parameter $\ell$ (the radius of the ball in (\ref{eq:scatl})), through the modified Gross-Pitaevskii equation (\ref{eq:GPmod}) and also through the kernel $\eta_t$ defined in (\ref{eq:ktdef}), (\ref{eq:etat}). While $\cW_{N,t}$ is well-defined for any value of $\ell > 0$, we will have to choose $\ell > 0$ small, to make sure that $\| \eta_t \|_2$ is sufficiently small; this will allow us to expand the action of the generalized Bogoliubov transformation $\exp (B(\eta_t))$ appearing in (\ref{eq:WNt}) using the series expansion (\ref{eq:conv-serie}) (because, by (\ref{eq:etato0}), smallness of $\ell$ implies that $\| \eta_t \|_2$ is small, uniformly in $t$). 

For $\xi \in \cF_{\perp \ph}^{\leq N}$, the operator $\cW_{N,t}$ is defined so that
\[ e^{-iH_N t} \, U_{N,0}^* \, e^{B(\eta_0)} \xi = U_{N,t}^* \, e^{B(\eta_t)} \, \cW_{N,t} \xi \,. \] 
It allows us to describe the many-body evolution of initial data of the form \begin{equation}\label{eq:psiN0} \psi_N = U_{N,0}^* e^{B(\eta_0)} \xi,\end{equation} and to express the evolved state again in the form \begin{equation}\label{eq:psiNt0} \psi_{N,t} = e^{-iH_N t} \psi_N = U_{N,t}^* e^{B(\eta_t)} \xi_t,\end{equation} where $\xi_t = \cW_{N,t} \, \xi$. As we will see below, a vector of the form (\ref{eq:psiN0}) exhibits Bose-Einstein condensation in the one-particle state $\ph$ if and only if the expectation of the number of particles operator $\langle \xi, \cN \xi \rangle$ is small, compared with the total number of particles $N$. Hence, to prove Theorems \ref{thm:main} and \ref{thm:main2}, we will have to show first that every initial $\psi_N \in L^2_s (\bR^{3N})$ satisfying (\ref{eq:assN}) can be written in the form (\ref{eq:psiN0}) for a $\xi \in \cF_{\perp \ph}^{\leq N}$ with $\langle \xi, \cN \xi \rangle \ll N$ and then that the bound on the expectation of the number of particles is approximately preserved by $\cW_{N,t}$. In fact, it turns out that to control the growth of the expectation of $\cN$ along the fluctuation dynamics, it is not enough to have a bound on $\langle \xi, \cN \xi\rangle$; instead, we will also need a bound on the energy of $\xi$ (this is why we need to assume $b_N \to 0$, in (\ref{eq:assN})). 

To control the growth of the number of particles with respect to the fluctuation dynamics it is important to compute the generator of $\cW_{N,t}$. A simple computation shows that 
\[ i\partial_t \cW_{N,t} = \cG_{N,t} \cW_{N,t} \]
with the time-dependent generator
\begin{equation}\label{eq:GNt} \cG_{N,t} = (i\partial_t e^{-B(\eta_t)}) e^{B(\eta_t)} + e^{-B(\eta_t)} \left[ (i\partial_t U_{N,t}) U^*_{N,t} + U_{N,t} H_N U_{N,t}^* \right] e^{B(\eta_t)} \end{equation}
Notice, that $\cG_{N,t}$ maps $\cF_{\perp \wtph_t}^{\leq N}$ into $\cF^{\leq N}$, but not into $\cF^{\leq N}_{\perp \wtph_t}$. This is due to the fact that the space $\cF_{\perp \wtph_t}^{\leq N}$ depends on time (and thus $\cG_{N,t}$ must have a component which allows $\cW_{N,t}$ to move to different spaces). We will mostly be interested in the expectation of $\cG_{N,t}$ for states in $\cF_{\perp \wtph_t}^{\leq N}$, but at some point (when we will consider the variation of the expectation of $\cG_{N,t}$) it will be important to remember the component of $\cG_{N,t}$ mapping out of $\cF_{\perp \wtph_t}^{\leq N}$.

In the next proposition, we collect important properties of the generator $\cG_{N,t}$.
\begin{theorem}\label{thm:gene}
Let $V \in L^3 (\bR^3)$ be non-negative, spherically symmetric and compactly supported. Let $\cW_{N,t}$ be defined as in (\ref{eq:WNt}) with the length parameter $\ell > 0$ sufficiently small and using the solution of the modified Gross-Pitaevskii equation (\ref{eq:GPmod}), with an initial data $\ph \in H^4 (\bR^3)$. Let
	\begin{equation}\label{eq:CNt}
	\begin{split}
	C_{N,t} =& \; \frac{1}{2} \left\langle \wtph_t, \left( [N^3V(N.) (N-1-2N f_{\ell} (N.))] * |\wtph_t|^2 \right)\wtph_t \right\rangle \\
	& + \int dxdy\, \left|\nabla_x k_t (x;y) \right|^2+ \frac{1}{2}\int dx dy\, N^2 V(N(x-y)) |k_t (x;y)|^2 \\
	&+ \operatorname{Re} \int dxdy\,  N^3V(N(x-y))\bar{\wtph_t} (x)\bar{\wtph_t} (y)k_t (x;y).
	\end{split}
	\end{equation}
Then there exist constants $C, c >0$ such that, in the 
sense of quadratic forms on $\cF_{\perp \wtph_t}^{\leq N}$,	  
\begin{equation}\label{eq:gene-bds}
\begin{split}
\frac{1}{2} \cH_N - C e^{c|t|} (\cN+1) \leq (\Gnt - C_{N,t}) &\leq 2\cH_N + C e^{c|t|} (\cN + 1) \\
\pm i\left[\np,\Gnt\right] &\leq \HN + Ce^{c|t|}(\np+1),\\	     \pm \delt (\Gnt-C_{N,t}) &\leq \HN + Ce^{c|t|}(\np+1),\\	      
\pm \re[ a^* (\delt \wtph_t) a(\wtph_t),\Gnt] &\leq \HN + Ce^{c|t|}(\np+1). 
\end{split}
\end{equation}
where $\cH_N$ is the Fock space Hamiltonian 
\begin{equation}\label{eq:HN-thm} \cH_N = \int dx \, \nabla_x a_x^* \nabla_x a_x + \frac{1}{2} \int dx dy \, N^2 V(N (x-y)) a_x^* a_y^* a_y a_x \end{equation}
Note that, on $\cF_{\perp \wtph_t}^{\leq N}$, we have $[a^* (\delt\wtph_t)a(\wtph_t),\Gnt ] = a^* (\delt\wtph_t)a(\wtph_t)\Gnt$.
\end{theorem}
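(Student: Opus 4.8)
The plan is to compute the generator explicitly, conjugate it with the generalized Bogoliubov transformation, and track the cancellations forced by the Neumann scattering equation \eqref{eq:scatlN}. Write $H_N = \cK + \cV_N$, where $\cK = \int \nabla_x a_x^* \nabla_x a_x\,dx$ and $\cV_N = \frac12\int N^2 V(N(x-y)) a_x^* a_y^* a_y a_x\, dxdy$. As a first step I compute $\cL_{N,t} := (i\partial_t U_{N,t}) U^*_{N,t} + U_{N,t} H_N U_{N,t}^*$ on $\cF_{\perp\wtph_t}^{\leq N}$: inserting $1 = |\wtph_t\rangle\langle\wtph_t| + q_{\wtph_t}$ in each field and applying the conjugation rules \eqref{2.1.UNconjugation}, one decomposes $\cL_{N,t} = \cL_{N,t}^{(0)} + \cL_{N,t}^{(1)} + \cL_{N,t}^{(2)} + \cL_{N,t}^{(3)} + \cL_{N,t}^{(4)}$ according to the number of factors orthogonal to $\wtph_t$. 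Here $\cL_{N,t}^{(4)} = \cV_N$, the term $\cL_{N,t}^{(0)}$ is a scalar, and $\cL_{N,t}^{(2)}$ contains the kinetic energy $\cK$, a term $d\Gamma$ of the one-body operator with kernel $N^3 V(N\cdot) f_\ell(N\cdot) * |\wtph_t|^2$, and — crucially — an off-diagonal piece of the schematic form $\frac12\int N^2 V(N(x-y))\wtph_t(x)\wtph_t(y)\, b_x^* b_y^*\,dxdy + \mathrm{h.c.}$, which is of order $N$ and must be renormalized. The derivative $(i\partial_t U_{N,t})U^*_{N,t}$ is computed using the modified equation \eqref{eq:GPmod}, i.e. $i\partial_t\wtph_t = -\Delta\wtph_t + (N^3 V(N\cdot)f_\ell(N\cdot)*|\wtph_t|^2)\wtph_t$, and contributes a scalar plus terms linear and quadratic in the $b$-fields built from $a^*(\dot{\wtph}_t)a(\wtph_t)$-type expressions.

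Next I conjugate each $\cL_{N,t}^{(j)}$ with $e^{B(\eta_t)}$, using the absolutely convergent series of Lemma~\ref{lm:conv-series} (legitimate because $\ell$ is small, so $\|\eta_t\|_2$ is small uniformly in $t$ by \eqref{eq:etato0}), and add the term $(i\partial_t e^{-B(\eta_t)})e^{B(\eta_t)} = -i\int_0^1 e^{-sB(\eta_t)}\dot B(\eta_t) e^{sB(\eta_t)}\,ds$, which is controlled through Lemma~\ref{lm:eta}(iv)--(vi). The key input is the Neumann equation \eqref{eq:scatlN}, which says that, on $|x-y| \leq \ell$, $\big(-\Delta_x + \tfrac12 N^2 V(N(x-y))\big)\eta_t(x;y)$ equals $-\tfrac12 N^2 V(N(x-y))\wtph_t(x)\wtph_t(y)$ up to an error of order $N^{-1}$ (together with the terms coming from $\mu_t$ and from derivatives hitting $\wtph_t$, all controlled by Lemma~\ref{lm:eta}). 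This produces three cancellations: the order-$N$ off-diagonal quadratic term of $\cL_{N,t}^{(2)}$ is cancelled by the leading contribution of $[\cK,B(\eta_t)]$ and $[d\Gamma(\cdot),B(\eta_t)]$; the cubic term $\cL_{N,t}^{(3)}$ is cancelled by the cubic contribution of $[\cV_N,B(\eta_t)]$; and a collection of scalar terms assembles precisely into $C_{N,t}$ as in \eqref{eq:CNt}. Every remaining operator is a $\Pi^{(1)}$- or $\Pi^{(2)}$-operator times bounded $\Lambda$-factors; I estimate these by Lemma~\ref{lm:Pi-bds}, Lemma~\ref{lm:Bbds} and the Sobolev and pointwise bounds of Lemma~\ref{lm:eta}, absorbing the cubic error terms into $\cH_N$ via the positivity of $\cV_N$ (a Cauchy--Schwarz split borrowing a fraction of $\cV_N$ after a change of variables). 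This gives $\tfrac12\cH_N - C e^{c|t|}(\cN+1) \leq \cG_{N,t} - C_{N,t} \leq 2\cH_N + C e^{c|t|}(\cN+1)$.

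For the three remaining bounds I re-use the explicit form of $\cG_{N,t}$ from the previous step. For $\pm i[\cN,\cG_{N,t}]$, the operator $\cN$ commutes with all number-conserving pieces, so only the off-diagonal (pair-creating/annihilating) terms survive; after the renormalization these are sub-leading and bounded by $\cH_N + C e^{c|t|}(\cN+1)$ by the same estimates. For $\pm\partial_t(\cG_{N,t} - C_{N,t})$, I differentiate term by term: each $t$-derivative either hits a factor $\wtph_t$, producing $\dot{\wtph}_t$ (controlled via Proposition~\ref{prop:phph}(iii)), or an $\eta_t$-kernel (controlled via Lemma~\ref{lm:eta}(iv)--(vi)), and the resulting operators are of the same type and magnitude as before. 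For $\pm\re[a^*(\partial_t\wtph_t)a(\wtph_t),\cG_{N,t}]$, one notes that on $\cF_{\perp\wtph_t}^{\leq N}$ this commutator reduces to $a^*(\partial_t\wtph_t)a(\wtph_t)\,\cG_{N,t}$, which after conjugation with $U_{N,t}$ becomes (a multiple of) $\sqrt{N}\,b^*(q_{\wtph_t}\partial_t\wtph_t)$ plus number-conserving terms acting on $\cG_{N,t}$; its contribution is again a sum of $\Pi$-operators bounded by $\cH_N + C e^{c|t|}(\cN+1)$ using $\|\partial_t\wtph_t\|_{H^2}\leq C e^{c|t|}$.

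\textbf{Main obstacle.} The delicate part is the second step: organizing the conjugation by $e^{B(\eta_t)}$ so that the order-$N$ quadratic off-diagonal term and the cubic term cancel exactly against the commutators of $\cK$ and $\cV_N$ with $B(\eta_t)$, and then verifying that every one of the remaining terms produced by the series of Lemma~\ref{lm:indu} — though infinitely many, they decay exponentially in the order — is bounded by $\cH_N + C e^{c|t|}(\cN+1)$. This is exactly where properties (i)--(vi) of Lemma~\ref{lm:indu} are needed, in particular the fact that non-normally-ordered kernels carry index $\geq 2$ (so that diagonal integrals $\int|\eta^{(j)}(x;x)|\,dx$ are finite and small) and the power-counting identity \eqref{eq:totalb}; and where the full strength of Lemma~\ref{lm:eta} (the pointwise bounds and the $\sup_x$-integrals) is required to handle the cubic error after borrowing a fraction of $\cV_N$.
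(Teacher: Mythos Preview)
Your overall architecture is right and matches the paper: decompose $\cL_{N,t}$ into $\cL_{N,t}^{(0)},\dots,\cL_{N,t}^{(4)}$, conjugate each piece with $e^{B(\eta_t)}$ via the series of Lemma~\ref{lm:conv-series}/\ref{lm:indu}, add the $(i\partial_t e^{-B(\eta_t)})e^{B(\eta_t)}$ contribution, and use the Neumann equation for the crucial cancellation. However, you have misidentified \emph{which} terms cancel against which, and as written your scheme would not close.

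First, there is no cubic contribution from $[\cV_N,B(\eta_t)]$: since $\cV_N$ is quartic and $B(\eta_t)$ changes particle number by $\pm 2$, the commutator produces only quadratic off-diagonal pieces (the $\frac12\int N^2 V(N(x-y))\,k_t(x;y)\,b_x^*b_y^*$ term) and quartic pieces. What actually happens with $\cL_{N,t}^{(3)}$ is different. The bare cubic operator $\int N^{5/2}V(N(x-y))\wtph_t(y)\,b_x^*a_y^*a_x$ needs no cancellation at all; it is bounded directly by Cauchy--Schwarz against $\cV_N^{1/2}\cdot\cN^{1/2}$. What \emph{does} need cancelling is the order-$\sqrt N$ \emph{linear} piece that the first commutator $[B(\eta_t),\cL_{N,t}^{(3)}]$ produces, namely $-\sqrt N\,[\,b(\cosh_{\eta_t}h_{N,t})+b^*(\sinh_{\eta_t}\bar h_{N,t})\,]+\text{h.c.}$ with $h_{N,t}=(N^3V(N\cdot)w_\ell(N\cdot)*|\wtph_t|^2)\wtph_t$. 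This cancels against the leading part of $e^{-B(\eta_t)}\cL_{N,t}^{(1)}e^{B(\eta_t)}$, which you do not mention at all; see \eqref{eq:eL1e} and \eqref{eq:eL3e-in}.

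Second, the quadratic off-diagonal cancellation requires \emph{three} ingredients, not two. The contributions $N^3(\Delta w_\ell)(N(x-y))\wtph_t\wtph_t\,b^*b^*$ from $[\cK,B(\eta_t)]$, the original $\tfrac12\int N^3 V(N(x-y))\wtph_t\wtph_t\,b^*b^*$ from $\cL_{N,t}^{(2)}$, \emph{and} the piece $\tfrac12\int N^2 V(N(x-y))\,k_t\,b^*b^*=-\tfrac12\int N^3 V(N(x-y))w_\ell(N(x-y))\wtph_t\wtph_t\,b^*b^*$ from $[\cV_N,B(\eta_t)]$ combine to $N\!\int[-\Delta+\tfrac12 N^2V(N\cdot)]f_\ell(N\cdot)\,\wtph_t\wtph_t\,b^*b^*$, which is $O(1)$ by \eqref{eq:scatlN}. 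If you omit the $\cV_N$ piece (as you do, having spent it on the nonexistent cubic cancellation), the quadratic term does not close.

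Two smaller points: $\cL_{N,t}^{(0)}$ is not a scalar --- it carries $\cN$- and $\cN^2/N$-dependent terms (see \eqref{eq:L0-4}) that must be estimated separately. And for the last bound in \eqref{eq:gene-bds}, the paper does not ``conjugate back with $U_{N,t}$''; it simply proves, alongside each error estimate $|\langle\xi,\cE^{(j)}_{N,t}\xi\rangle|$, the corresponding bound for $|\langle\xi,[a^*(g_1)a(g_2),\cE^{(j)}_{N,t}]\xi\rangle|$ with $g_1,g_2\in H^2$, and then handles $[\cH_N,a^*(g_1)a(g_2)]$ directly.
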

The proof of Theorem \ref{thm:gene} is given in the next section. From the technical point of view, it represents the main part of our paper. In Section \ref{sec:main}, we show then how to use the properties of $\cG_{N,t}$ established in Theorem \ref{thm:gene} to complete the proof of Theorems \ref{thm:main} and \ref{thm:main2}.

\section{Analysis of the Generator of Fluctuation Dynamics}
\label{sec:gene}

In this section we study the properties of the generator 
\begin{equation}\label{eq:GNt2} \cG_{N,t} = (i\partial_t e^{-B(\eta_t)}) e^{B(\eta_t)} + e^{-B(\eta_t)} \left[ (i\partial_t U_{N,t}) U^*_{N,t} + U_{N,t} H_N U_{N,t}^* \right] e^{B(\eta_t)} \end{equation}   of the fluctuation dynamics (\ref{eq:WNt}); the goal is to prove Theorem \ref{thm:gene}.

As forms on $\cF_{\perp \wtph_t}^{\leq N} \times \cF_{\perp \wtph_t}^{\leq N}$, we find (see Lemma 6 in \cite{LNS}) 
\begin{equation}\label{eq:delUU} \begin{split} (i\partial_t U_{N,t}) U^*_{N,t} = \; &- \langle i\partial_t \wtph_t, \wtph_t \rangle (N-\cN) - \sqrt{N} \left[ b(i\partial_t \wtph_t) + b^* (i\partial_t \wtph_t) \right]  \end{split} \end{equation}
Using (\ref{2.1.UNconjugation}) to compute 
$U_{N,t} H_N U_{N,t}^*$ a lengthy but straightforward computation (see Appendix B of \cite{LNS}) shows that  
\[ (i\partial_t U_{N,t} ) U^*_{N,t} + U_{N,t} H_N U_{N,t}^* = \sum_{j=0}^4 \cL_{N,t}^{(j)} \]
where 
\begin{equation}\label{eq:L0-4} 
\begin{split} \cL^{(0)}_{N,t} = & \frac 12 \product{\wtph_t}{ [N^3V(N.) (1-2 f_\ell(N.)) \ast|\wtph_t|^2 ]\wtph_t} (N-\np)\\
    &-\frac12\product{\wtph_t}{ [N^3V(N.)\ast|\wtph_t|^2]\wtph_t } (\np+1) \frac{(N-\np)}{N}  \\
    \cL^{(1)}_{N,t} = &\sqrt{N} \,b ([N^3V(N.)w_\ell(N.)\ast|\wtph_t|^2] \wtph_t) - \frac{\cN+1}{\sqrt{N}} b([N^3V(N.)\ast|\wtph_t|^2] \wtph_t) +h.c. \\
    \cL^{(2)}_{N,t} = &\int dx\, \nabla_x a_x^* \nabla_x a_x \\
&+\int dx dy  \, N^3 V(N(x-y)) |\wtph_t (y)|^2 \left( b_x^* b_x - \frac{1}{N} a_x^* a_x \right)  \\
&+\int dx dy  \, N^3 V(N(x-y)) \wtph_t (x) \bar{\wtph}_t (y) \left( b_x^* b_y - \frac{1}{N} a_x^* a_y \right)  \\
    & +   \frac12 \left[\int dx dy\, N^3V(N(x-y)) \wtph_t (x)\wtph_t(y)b_x^*b_y^*+h.c. \right]   \\
    \cL^{(3)}_{N,t}  = &\int dx dy \, N^{5/2} V(N(x-y)) \wtph_t (y) b_x^* a_y^* a_x + \text{h.c.} \\
    \cL^{(4)}_{N,t} = &\frac1{2}\int dx dy\, N^2V(N(x-y)) a_x^* a_y^* a_y a_x   
\end{split} 
\end{equation}
The generator (\ref{eq:GNt2}) of the fluctuation dynamics is therefore given by
\[ \cG_{N,t}  = (i\partial_t e^{-B(\eta_t)}) e^{B(\eta_t)} + \sum_{j=0}^4 e^{-B(\eta_t)} \cL_{N,t}^{(j)} e^{B(\eta_t)} \]
In the next subsections, we will study separately the six terms contributing to $\cG_{N,t}$. Before doing so, however, we collect some preliminary results, which will be useful for our analysis. 

\medskip

\emph{Notation and Conventions.} For the rest of this section we employ the short-hand notation $\eta_x$ $k_x$, $\mu_x$ for the wave functions $\eta_x (y) = \eta_t (x;y)$, $k_x (y) = k_t (x;y)$ and $\mu_x (y) = \mu_t (x;y)$. We will always assume that $\sup_{t \in \bR} \| \eta_t \|_2$ is sufficiently small, so that we can use the expansions obtained in Lemma \ref{lm:conv-series}. Finally, by $C$ and $c$ we denote generic constants which only depend on fixed parameters, but not on $N$ or $t$, and which may vary from one line to the next. 

\subsection{Preliminary results}

In this subsection we show some simple but important auxiliary results 
which will be used throughout the rest of Section \ref{sec:gene}. Recall the operators 
\[ \begin{split} \Pi^{(2)}_{\sharp,\flat} (j_1, \dots , j_n) &= \int b_{x_1}^{\flat_0} \prod_{i=1}^{n-1} a^{\sharp_i}_{y_i} a^{\flat_i}_{x_{i+1}} b_{y_n}^{\sharp_n} \prod_{i=1}^n j_i (x_i ; y_i) dx_i dy_i \\
\Pi^{(1)}_{\sharp,\flat} (j_1, \dots , j_n; f) &= \int b_{x_1}^{\flat_0} \prod_{i=1}^{n-1} a^{\sharp_i}_{y_i} a^{\flat_i}_{x_{i+1}} 
a_{y_n}^{\sharp_n} a^{\flat_n} (f) \prod_{i=1}^n j_i (x_i ;y_i) dx_i dy_i 
\end{split} \]
introduced in Section \ref{sec:fock}. For each $i \in \{ 1, \dots , n\}$, we recall in particular the condition that either $\sharp_i = *$ and $\flat_i = \cdot$ or $\sharp_i = \cdot$ and $\flat_i = *$. 

In the next lemma, we consider commutators of these operators with the number of particles operator $\cN$ and with operators of the form $a^* (g_1) a (g_2)$. 

\begin{lemma}\label{lm:prel1}
Let $n\in\bN$, $f,g_1,g_2 \in L^2 (\bR^3)$, $j_1,\dots, j_{n}\in L^2 (\bR^3 \times \bR^3)$.
\begin{itemize}
\item[i)] We have 
\[ \begin{split} 
\left[ \cN, \Pi^{(2)}_{\sharp,\flat} (j_1, \dots, j_n) \right] &= \kappa_{\flat_0, \sharp_n} \Pi^{(2)}_{\sharp, \flat} (j_1, \dots, j_n) 
 \qquad \text{for all } \quad \sharp,\flat \in \{ \cdot , * \}^n \\
\left[ \cN, \Pi^{(1)}_{\sharp,\flat} (j_1, \dots, j_n ; f) \right] &= \nu_{\flat_0} \Pi^{(1)}_{\sharp, \flat} (j_1, \dots, j_n ; f) \quad \text{for all} \quad \sharp \in \{ \cdot , * \}^n, \flat \in \{ \cdot, * \}^{n+1} \, . \end{split} \]
Here $\kappa_{\flat_0, \sharp_n} = 2$, if $\flat_0 = \sharp_n = *$, $\kappa_{\flat_0, \sharp_n} = -2$ if $\flat_0 = \sharp_n = \cdot$, and $\kappa_{\flat_0, \sharp_n} = 0$ otherwise, while $\nu_{\flat_0} = 1$ if $\flat_0 = *$ and $\nu_{\flat_0} = -1$ if $\flat_0 = \cdot$. 
\item[ii)] The commutator 
\[ \left[ a^* (g_1) a(g_2), \Pi^{(2)}_{\sharp,\flat} (j_1, \dots , j_n) \right] \] 
can be written as the sum of $2n$ terms, all having the form
\[ \Pi^{(2)}_{\sharp,\flat} (j_1, \dots , j_{i-1}, h_i , j_{i+1},\dots , j_n) \]
for some $i \in \{1,\dots , n\}$. Here $h_i \in L^2 (\bR^3 \times \bR^3)$ has (up to a possible sign) one of the following forms:
\begin{equation}\label{eq:hell-poss}  h_i (x;y) = g_1 (x)  j_i (\bar{g}_2) (y), \qquad h_i (x;y) = g_1 (y) j_i (\bar{g}_2) (y) \end{equation}
or the same, but with $g_1$ and $\bar{g}_2$ exchanged. Here $j_i (g) (x) = \int j_i (x;z) g(z) dz$. Notice that 
\begin{equation}\label{eq:hkbds-l2}  \| h_i \|_2 \leq \| g_1 \|_2 \| g_2 \|_2 \| j_i \|_2 
\end{equation}
and
\begin{equation}\label{eq:hkbds-pw} 
\begin{split} |h_i (x;y)| &\leq \max \Big\{ |g_1 (x)| \| j_i (.;y) \|_2 \| g_2 \|_2 , |g_1 (y)| \| j_i (x;.) \|_2 \| g_2 \|_2 ,  \\ &\hspace{3cm}  |g_2 (x)| \| j_i (.;y) \|_2 \| g_1 \|_2 , |g_2 (y)| \| j_i (x;.) \|_2 \| g_1 \|_2 \Big\} \end{split} \end{equation}
\item[iii)] The commutator 
\begin{equation}\label{eq:commPi1} \left[ a^* (g_1) a(g_2), \Pi^{(1)}_{\sharp,\flat} (j_1, \dots , j_n;f) \right] 
\end{equation}
can be written as the sum of $2n+1$ terms. $2n$ of them have the form
\[ \Pi^{(1)}_{\sharp,\flat} (j_1, \dots , j_{i-1} , h_i, j_{i+1} , \dots , j_n ; f) \]
where $h_i$ is (up to a possible sign) one of the kernels appearing in (\ref{eq:hell-poss}) (or the same with $g_1$ and $\bar{g}_2$ exchanged), and satisfying the bounds in (\ref{eq:hkbds-l2}), (\ref{eq:hkbds-pw}). The remaining term in the expansion for (\ref{eq:commPi1}) has the form
\begin{equation}\label{eq:Pi1-repl} \Pi^{(1)}_{\sharp,\flat} (j_1, \dots , j_n ; k) \end{equation}
where $k \in L^2 (\bR^3)$ is (up to a possible sign) one of the functions
\begin{equation}\label{eq:hellk-poss} k(x) = \langle g_1, f \rangle \, g_2 (x) , \quad k(x) = \langle g_2, f \rangle \, g_1 (x) 
\end{equation} or one of their complex conjugated functions. In any event, we have 
\[ \| k \|_2 \leq \| g_1 \|_2 \| g_2 \|_2 \| f \|_2 \]
and 
\[ |k(x)| \leq \| f \|_2 \max \{ \| g_1 \|_2 |g_2 (x)|, \| g_2 \|_2 |g_1 (x)| \} \]
\item[iv)] If $f \in L^2 (\bR^3)$ and/or $j_1, \dots, j_n \in L^2 (\bR^3 \times \bR^3)$ depend on time $t \in \bR$, we have
\[\begin{split} \partial_t \Pi^{(2)}_{\sharp,\flat} (j_1, \dots , j_n) = \; &\sum_{i=1}^n \Pi^{(2)}_{\sharp,\flat} (j_1, \dots , j_{i-1}, \partial_t j_i , j_{i+1}, \dots , j_n) \\ 
\partial_t \Pi^{(1)}_{\sharp,\flat} (j_1, \dots , j_n;f) = \; &\Pi^{(1)}_{\sharp,\flat} (j_1, \dots , j_n ; \partial_t f)  \\ &+ \sum_{i=1}^n \Pi^{(1)}_{\sharp,\flat} (j_1, \dots , j_{i-1}, \partial_t j_i , j_{i+1}, \dots , j_n ; f) . \end{split} \]
\end{itemize}
\end{lemma}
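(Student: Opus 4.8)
The plan is to treat both $\cN$ and $a^*(g_1)a(g_2)$ as derivations and to compute their commutators with the elementary operator-valued distributions $a_x^\sharp$, $b_x^\sharp$ out of which the monomials $\Pi^{(2)}_{\sharp,\flat}$ and $\Pi^{(1)}_{\sharp,\flat}$ are built. Throughout I would work on the truncated space $\cF^{\leq N}$, where by Lemma \ref{lm:Pi-bds} every operator in sight is bounded; the integrals defining $\Pi^{(2)}$, $\Pi^{(1)}$ and their commutators then converge as Bochner integrals of bounded operators, the Leibniz rule applies term by term, and one may differentiate under the integral sign by dominated convergence. The only structural inputs are the commutation relations (\ref{eq:comm-b}), (\ref{eq:comm-b2}) and the fact that $a^*(g_1)a(g_2) = d\Gamma(|g_1\rangle\langle g_2|)$ commutes with $\cN$, hence with the factors $\sqrt{(N-\cN)/N}$ hidden inside the $b$-fields.

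For (i), one has $[\cN, a_z^*] = a_z^*$, $[\cN, a_z] = -a_z$, and by (\ref{eq:comm-b2}) also $[\cN, b_z^*] = b_z^*$, $[\cN, b_z] = -b_z$, so $[\cN,\cdot]$ multiplies any monomial by the number of creation operators minus the number of annihilation operators. In $\Pi^{(2)}_{\sharp,\flat}(j_1,\dots,j_n)$ each intermediate pair $a^{\sharp_i}_{y_i} a^{\flat_i}_{x_{i+1}}$ is particle-number neutral because the definition forces $\sharp_i$ and $\flat_i$ to be opposite; only the outer fields $b^{\flat_0}_{x_1}$ and $b^{\sharp_n}_{y_n}$ contribute, and the sum of their $\pm 1$'s is exactly $\kappa_{\flat_0,\sharp_n}$. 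In $\Pi^{(1)}_{\sharp,\flat}(j_1,\dots,j_n;f)$ the final pair $a^{\sharp_n}_{y_n}a^{\flat_n}(f)$ is neutral as well, so only $b^{\flat_0}_{x_1}$ survives and the eigenvalue is $\nu_{\flat_0}$.

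For (ii) and (iii), applying the Leibniz rule, $[a^*(g_1)a(g_2),\cdot]$ on $\Pi^{(2)}$ (resp.\ $\Pi^{(1)}$) becomes a sum over the $2n$ distributions in the product (resp.\ the $2n$ distributions and the one distinguished smeared factor $a^{\flat_n}(f)$, giving $2n$ resp.\ $2n+1$ terms). The basic contractions are $[a^*(g_1)a(g_2), a_z^*] = \bar g_2(z)\,a^*(g_1)$, $[a^*(g_1)a(g_2), a_z] = -g_1(z)\,a(g_2)$ and, since $a^*(g_1)a(g_2)$ commutes with $(N-\cN)/N$, likewise $[a^*(g_1)a(g_2), b_z^*] = \bar g_2(z)\,b^*(g_1)$ and $[a^*(g_1)a(g_2), b_z] = -g_1(z)\,b(g_2)$. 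Each variable of a kernel $j_i$ is attached to exactly one distribution; when that distribution is hit it is replaced by a smeared field $a^\sharp(g_1)$ or $a^\sharp(g_2)$ (by $b^\sharp(g_1)$ or $b^\sharp(g_2)$ at the two ends), carrying a scalar factor $g_1$ or $\bar g_2$ at that variable, whose own integration variable is then contracted against $j_i$; re-reading the resulting integral one obtains a $\Pi^{(2)}$- (resp.\ $\Pi^{(1)}$-) operator with the \emph{same} labels $\sharp,\flat$ and with $j_i$ replaced by a kernel $h_i$ of one of the forms in (\ref{eq:hell-poss}) (or with $g_1$, $\bar g_2$ interchanged). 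Hitting instead $a^{\flat_n}(f)$ in $\Pi^{(1)}$ uses $[a^*(g_1)a(g_2), a^*(f)] = \langle g_2,f\rangle\,a^*(g_1)$ or $[a^*(g_1)a(g_2), a(f)] = -\langle f,g_1\rangle\,a(g_2)$ and produces the single extra summand (\ref{eq:Pi1-repl}) with $f$ replaced by a function $k$ as in (\ref{eq:hellk-poss}). The $L^2$- and pointwise estimates on $h_i$ and on $k$ are just Cauchy--Schwarz applied to the one contraction; in particular they show that $x\mapsto h_i(x;x)$ is integrable whenever the label pattern at position $i$ requires it, so each summand is a well-defined member of the same family.

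Part (iv) is then immediate: $\Pi^{(2)}_{\sharp,\flat}$ is multilinear in $(j_1,\dots,j_n)$ and $\Pi^{(1)}_{\sharp,\flat}$ is multilinear in $(j_1,\dots,j_n)$ and linear in $f$, whereas the fields $a_x^\sharp$, $b_x^\sharp$ are $t$-independent; differentiating under the integral (justified on $\cF^{\leq N}$ by the bounds of Lemma \ref{lm:Pi-bds}) and using the product rule gives the stated identities. The one genuinely laborious point is the bookkeeping in (ii)--(iii): verifying that exactly $2n$, respectively $2n+1$, terms are generated, that each has precisely the asserted shape, and that the ordering/integrability conditions built into the definitions (\ref{eq:Pi2}), (\ref{eq:Pi1}) are preserved under $j_i\mapsto h_i$ and $f\mapsto k$. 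This is the step I expect to be the most error-prone, though it is conceptually routine.
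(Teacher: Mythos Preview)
Your proposal is correct and follows essentially the same route as the paper: part (i) via the pull-through relations $\cN b_x^\sharp = b_x^\sharp(\cN\pm 1)$, parts (ii)--(iii) via the Leibniz rule together with the elementary commutators $[a^*(g_1)a(g_2), b_x^\sharp]$ and $[a^*(g_1)a(g_2), a_y^{\sharp} a_x^{\flat}]$, and part (iv) via multilinearity and the product rule. The only cosmetic difference is that the paper groups the Leibniz expansion by the two endpoint $b$-fields (one term each) and the $n-1$ intermediate pairs $a^{\sharp_m}_{y_m}a^{\flat_m}_{x_{m+1}}$ (two terms each), whereas you count the $2n$ individual field operators directly; the bookkeeping and outcome are identical.
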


\begin{proof}
Part (i) follows from $(\cN+1) b_x = b_x \cN$ and $\cN b_x^* = b_x^* (\cN+1)$. Part (iv) follows easily from the Leibniz rule. To prove part (ii), we apply Leibniz rule:
\begin{equation}\label{eq:commaaPi2} \begin{split}
\Big[ a^* (g_1) a(g_2) , &\Pi^{(2)}_{\sharp,\flat} (j_1, \dots , j_n) \Big] \\ = \; &\int [a^* (g_1) a(g_2) , b^{\flat_0}_{x_1}] \prod_{i=1}^n a_{y_i}^{\sharp_i} a_{x_{i+1}}^{\flat_i} b_{y_n}^{\sharp_n} \prod_{i=1}^n j_i (x_i; y_i) dx_i dy_i \\ &+ \sum_{m=1}^{n-1} \int b^{\flat_0}_{x_1} \prod_{i=1}^{m-1} a_{y_i}^{\sharp_i} a_{x_{i+1}}^{\flat_i} \left[ a^* (g_1) a(g_2), a_{y_m}^{\sharp_m} a_{x_{m+1}}^{\flat_m} \right] \\ &\hspace{3cm} \times  \prod_{i=m+1}^{n-1} a_{y_i}^{\sharp_i} a_{x_{i+1}}^{\flat_i} b_{y_n}^{\sharp_n} \prod_{i=1}^n j_i (x_i ; y_i) dx_i dy_i \\ &+  \int  b^{\flat_0}_{x_1} \prod_{i=1}^n a_{y_i}^{\sharp_i} a_{x_{i+1}}^{\flat_i} [a^* (g_1) a(g_2) , b_{y_n}^{\sharp_n} ] \, \prod_{i=1}^n j_i (x_i; y_i) dx_i dy_i \end{split} \end{equation}
Using the commutation relations 
 \begin{equation}\label{5.1.eq2.43}
    \begin{split}
    [a^*(g_1)a(g_2), b_x ] &= -g_1(x) b(g_2) ,\\ [a^*(g_1)a(g_2), b^*_x ] &=\bar  g_2(x)b^*(g_1) \\
    [a^*(g_1)a(g_2), a^*_x a_y ] &=[a^*(g_1)a(g_2), a_y a^*_x] = \bar g_2(x) a^*(g_1)a_y - g_1(y) a^*_xa(g_2) 
    \end{split}
    \end{equation}
we conclude that on the r.h.s. of (\ref{eq:commaaPi2}) we have $2n$ terms, each of them being a $\Pi^{(2)}$-operator (with the same indices $\sharp, \flat$ as the $\Pi^{(2)}$ operator on the l.h.s. of (\ref{eq:commaaPi2})). Furthermore, from (\ref{5.1.eq2.43}) it is clear that for each $\Pi^{(2)}$ operator on the r.h.s. of (\ref{eq:commaaPi2}), only one $j$-kernel will differ from the $j$-kernels of the $\Pi^{(2)}$ operator on the l.h.s. of (\ref{eq:commaaPi2}). In the first term on the r.h.s. of (\ref{eq:commaaPi2}), we only have to replace the $j_1$ kernel (either with  $g_1 (x_1) j_1 (\bar{g}_2) (y_1)$ or with $\bar{g}_2 (x_1) j_1 (g_1) (y_1)$, depending on $\flat_0 \in \{ \cdot , * \}$). Similarly, in the last term on the r.h.s. of (\ref{eq:commaaPi2}), only the $j_n$ kernel has to be changed. In the $m$-th term in the sum, on the other hand, the commutator leads to the sum of two $\Pi^{(2)}$-operators, one where the kernel $j_m$ is changed and one where the kernel $j_{m+1}$ is replaced. {F}rom (\ref{5.1.eq2.43}), it is easy to check that the new kernel can only have one of the forms listed in (\ref{eq:hell-poss}). The bounds (\ref{eq:hkbds-l2}), (\ref{eq:hkbds-pw}) follow easily from the explicit formula in (\ref{eq:hell-poss}). Part (iii) can be shown similarly; the only difference is that, in this case, the commutator can hit the last pair $a_{y_n}^{\sharp_n} a^{\flat_n} (f)$ instead of the $b^{\sharp_n}_{y_n}$ appearing in the $\Pi^{(2)}$-operator. 
\end{proof}

It follows from Lemma \ref{lm:prel1} that 
\begin{equation}\label{eq:cv-ser}
\begin{split} 
 [\cN , e^{-B(\eta)} b(f) e^{B(\eta)}] &= \sum_{n=0}^\infty \frac{(-1)^n}{n!} \, [ \cN , \text{ad}_{B(\eta)}^{(n)} (b(f)) ] \\
[a^* (g_1) a(g_2), e^{-B(\eta)} b(f) e^{B(\eta)}] &= \sum_{n=0}^\infty \frac{(-1)^n}{n!} \, [ a^* (g_1) a(g_2) , \text{ad}_{B(\eta)}^{(n)} (b(f)) ]  \\
\partial_t (e^{-B(\eta)} b(f) e^{B(\eta)}) &= \sum_{n=0}^\infty \frac{(-1)^n}{n!} \partial_t \text{ad}_{B(\eta)}^{(n)} (b(f))
\end{split} 
\end{equation}
where the series on the r.h.s. are absolutely convergent. 

In the next subsections we are going to study what happens to the operators $\cL^{(j)}_{N,t}$ defined in (\ref{eq:L0-4}), when they are conjugated with the generalized Bogoliubov transformation $e^{B(\eta_t)}$. The general strategy is to expand $e^{-B(\eta_t)} \cL_{N,t}^{(j)} e^{B(\eta_t)}$ using Lemma \ref{eq:conv-serie}, and then use Lemma \ref{lm:indu} to express every nested commutator. Therefore, we will have to bound expectations of operators of the form 
\[ \Lambda_1 \dots \Lambda_i N^{-k} \Pi^{(1)}_{\sharp, \flat} (\eta_{t,\natural_1}^{(j_1)}, \dots , \eta_{t,\natural_k}^{(j_k)} ; \eta^{(s)} (g)) \]
or of products of such operators. To this end, the next lemma will be frequently used.
\begin{lemma}\label{lm:prelim}
Let $g \in L^2 (\bR^3)$, $i_1,i_2,k_1,k_2,\ell_1,\ell_2 \in \bN$ and $j_1, \dots , j_{k_1}, m_1, \dots , m_{k_2} \in \bN \backslash \{ 0 \}$. Suppose that, for $s=1, \dots , i_1$, $s' = 1, \dots , i_2$, $\Lambda_s$, $\Lambda'_{s'}$ is either a factor $(N-\cN)/N$, a factor $(N-\cN+1)/N$ or an operator of the form 
\begin{equation}\label{eq:Pi2-prel} N^{-p} \, \Pi^{(2)}_{\sharp, \flat} (\eta^{(q_1)}_{t,\natural_1} , \dots , \eta_{t, \natural_p}^{(q_p)} ) \end{equation}
\begin{itemize}
\item[i)] Assume that the operator \[ \Lambda_1 \dots \Lambda_{i_1} N^{-k_1} \Pi^{(1)}_{\sharp,\flat} (\eta_{t,\natural_1}^{(j_1)}, \dots , \eta_{t,\natural_k}^{(j_k)} ; \eta^{(\ell_1)}_{t,\lozenge} (g))\]
appears in the expansion of $\text{ad}^{(n)}_{B(\eta_t)} (b(g))$ for some $n \in \bN$ (as discussed in Lemma \ref{lm:indu}). Then
\[ \left\| (\cN+1)^{-1/2} \Lambda_1 \dots  \Lambda_i N^{-k} \Pi^{(1)}_{\sharp,\flat} (\eta_{t,\natural_1}^{(j_1)}, \dots , \eta_{t,\natural_k}^{(j_k)} ; \eta^{(\ell_1)}_{t,\lozenge} (g)) \xi \right\| \leq C^n \| \eta_t \|^n \| g \| \| \xi \| \]
If moreover, at least one of the $\Lambda_s$ operators has the form (\ref{eq:Pi2-prel}) or if $k \geq 1$, we also have
\begin{equation}\label{eq:prel-i2} \begin{split} &\left\| (\cN+1)^{-1/2} \Lambda_1 \dots  \Lambda_i N^{-k} \Pi^{(1)}_{\sharp,\flat} (\eta_{t,\natural_1}^{(j_1)}, \dots , \eta_{t,\natural_k}^{(j_k)} ; \eta^{(\ell_1)}_{t,\lozenge} (g)) \xi \right\| \\ &\hspace{6cm} \leq C^n N^{-1/2} \| \eta_t \|^n \| g \| \| (\cN+1)^{1/2} \xi \| \end{split} \end{equation} 
\item[ii)] Let $r: L^2 (\bR^3) \to L^2 (\bR^3)$ be a bounded linear operator. We use the notation $(\eta^{(s)} r)_x (y) := (\eta^{(s)} r) (x;y)$ (if $s=0$, $(\eta^{(s)} r)_x (y) = r_x (y) = r(x;y)$, as a distribution). Assume that the operator 
\[ \Lambda_1 \dots \Lambda_{i_1} N^{-k_1} \Pi^{(1)}_{\sharp,\flat} (\eta_{t,\natural_1}^{(j_1)}, \dots , \eta_{t,\natural_{k_1}}^{(j_{k_1})} ; (\eta^{(\ell_1)}_{t,\lozenge} r)_x) \]
appears in the expansion of $\text{ad}^{(n)}_{B(\eta_t)} (b (r_x))$ for some $n \in \bN$. Then
\begin{equation}\label{eq:prelim-ii} \begin{split}  &\left\| \Lambda_1 \dots  \Lambda_i N^{-k} \Pi^{(1)}_{\sharp,\flat} (\eta_{t,\natural_1}^{(j_1)}, \dots , \eta_{t,\natural_k}^{(j_k)} ; (\eta^{(\ell_1)}_{t,\lozenge} r)_x ) \xi \right\| \\ &\hspace{3cm} \leq \left\{ \begin{array}{ll} C^n \| \eta_t \|^{n-1} \| (\eta_t r)_x \| \| (\cN+1)^{1/2} \xi \| \qquad &\text{if } \ell_1 \geq 1 \\ C^n \| \eta_t \|^n \| a (r_x) \xi \| \qquad &\text{if } \ell_1 = 0 \end{array} \right. \end{split} \end{equation}
\item[iii)] Suppose that the operators 
\[ \begin{split} &\Lambda_1 \dots \Lambda_{i_1} N^{-k_1} \Pi^{(1)}_{\sharp,\flat} (\eta_{t,\natural_1}^{(j_1)}, \dots , \eta_{t,\natural_{k_1}}^{(j_{k_1})} ; (\eta^{(\ell_1+1)}_{t,\lozenge} r)_x), \\  &\Lambda'_1 \dots \Lambda'_{i_2} N^{-k_2} \Pi^{(1)}_{\sharp,\flat} (\eta_{t,\natural'_1}^{(m_1)}, \dots , \eta_{t,\natural'_{k_2}}^{(m_{k_2})} ; \eta^{(\ell_2)}_{x,\lozenge'})
\end{split} \]
appear in the expansion of $\text{ad}^{(n)}_{B(\eta_t)} (b ((\eta_t r)_x))$ and respectively of $\text{ad}^{(k)}_{B(\eta_t)} (b_x)$ for some $n,k \in \bN$. Then 
\begin{equation}\label{eq:prelim-iii} \begin{split} 
&\Big\| (\cN+1)^{-1/2} \Lambda_1 \dots  \Lambda_{i_1} N^{-{k_1}} \Pi^{(1)}_{\sharp, \flat} (\eta^{(j_1)}_{t,\natural_1}, \dots , \eta^{(j_{k_1})}_{t,\natural_{k_1}} ; (\eta^{(\ell_1+1)}_{t,\lozenge} r)_x ) \\ &\hspace{1cm} \times \Lambda'_1 \dots \Lambda'_{i_2} N^{-k_2} \Pi^{(1)}_{\sharp', \flat'} (\eta^{(m_1)}_{t,\natural'_1} , \dots \eta^{(m_{k_2})}_{t,\natural'_{k_2}} ; \eta^{(\ell_2)}_{x,\lozenge'} ) \xi \Big\| \\ &\hspace{1.5cm} \leq \left\{ \begin{array}{ll} C^{n+k} \| \eta_t \|^{n+k-1}  \| (\eta_t r)_x \| \| \eta_x \| \| (\cN+1)^{1/2} \xi \| \quad &\text{if } \ell_2 > 0 
\\ C^{n+k} \| \eta_t \|^{n+k} \| (\eta_t r)_x \| \| a_x \xi \| \quad &\text{if } \ell_2 = 0 \end{array} \right. 
\end{split} \end{equation}
Similarly, if the operators 
\[ \begin{split} &\Lambda_1 \dots \Lambda_{i_1} N^{-k_1} \Pi^{(1)}_{\sharp,\flat} (\eta_{t,\natural_1}^{(j_1)}, \dots , \eta_{t,\natural_{k_1}}^{(j_{k_1})} ;  (\eta^{(\ell_1)}_{t,\lozenge} \partial_t \eta_{t,\wt{\lozenge}})_x ), \\  &\Lambda'_1 \dots \Lambda'_{i_2} N^{-k_2} \Pi^{(1)}_{\sharp,\flat} (\eta_{t,\natural'_1}^{(m_1)}, \dots , \eta_{t,\natural'_{k_2}}^{(m_{k_2})} ; \eta^{(\ell_2)}_{x,\lozenge'})
\end{split} \]
appear in the expansion of $\text{ad}^{(n)}_{B(\eta_t)} (b ( \partial_t \eta_t))$ and respectively of $\text{ad}^{(k)}_{B(\eta_t)} (b_x)$ for some $n,k \in \bN$, we have
\begin{equation}\label{eq:prelim-iiib} \begin{split} 
&\Big\| (\cN+1)^{-1/2} \Lambda_1 \dots  \Lambda_{i_1} N^{-{k_1}} \Pi^{(1)}_{\sharp, \flat} (\eta^{(j_1)}_{t,\natural_1}, \dots , \eta^{(j_{k_1})}_{t,\natural_{k_1}} ; (\eta^{(\ell_1)}_{t,\lozenge} \partial_t \eta_{t,\wt{\lozenge}})_x) \\ &\hspace{1cm} \times \Lambda'_1 \dots \Lambda'_{i_2} N^{-k_2} \Pi^{(1)}_{\sharp', \flat'} (\eta^{(m_1)}_{t,\natural'_1} , \dots \eta^{(m_{k_2})}_{t,\natural'_{k_2}} ; \eta^{(\ell_2)}_{x,\lozenge'} ) \xi \Big\| \\ &\hspace{1.5cm} \leq \left\{ \begin{array}{ll} C^{n+k} \| \eta_t \|^{n+k-1}  \| (\partial_t \eta_t)_x \| \| \eta_x \| \| (\cN+1)^{1/2} \xi \| \quad &\text{if } \ell_2 > 0 
\\ C^{n+k} \| \eta_t \|^{n+k} \| (\partial_t \eta_t)_x \| \| a_x \xi \| \quad &\text{if } \ell_2 = 0 \end{array} \right. 
\end{split} \end{equation}
\item[iv)] Suppose that the operators 
\[ \begin{split} &\Lambda_1 \dots \Lambda_{i_1} N^{-k_1} \Pi^{(1)}_{\sharp,\flat} (\eta_{t,\natural_1}^{(j_1)}, \dots , \eta_{t,\natural_{k_1}}^{(j_{k_1})} ; \eta^{(\ell_1)}_{y,\lozenge}), \\  &\Lambda'_1 \dots \Lambda'_{i_2} N^{-k_2} \Pi^{(1)}_{\sharp',\flat'} (\eta_{t,\natural'_1}^{(m_1)}, \dots , \eta_{t,\natural'_{k_2}}^{(m_{k_2})} ; \eta^{(\ell_2)}_{x,\lozenge'})
\end{split} \]
appear in the expansion of $\text{ad}^{(k)}_{B(\eta_t)} (b_y)$ and respectively of $\text{ad}^{(n)}_{B(\eta_t)} (b_x)$ for some $n,k \in \bN$. For $\alpha \in \bN$, let 
\[ \begin{split} \text{D} = \; &\Big\| (\cN+1)^{(\alpha-1)/2} \Lambda_1 \dots \Lambda_{i_1} N^{-k_1} \Pi^{(1)}_{\sharp, \flat} (\eta_{t,\natural_1}^{(j_1)}, \dots , \eta_{t,\natural_{k_1}}^{(j_{k_1})} ; \eta_{y,\lozenge}^{(\ell_1)}) \\ &\hspace{2.5cm} \times \Lambda'_1 \dots \Lambda'_{i_2} N^{-k_2} \Pi^{(1)}_{\sharp, \flat} (\eta_{t,\natural'_1}^{(m_1)}, \dots , \eta_{t,\natural'_{k_1}}^{(m_{k_2})} ; \eta_{x,\lozenge'}^{(\ell_2)}) \xi \Big\|  \end{split} \]
Then, if $\ell_1 > 0$, we have, for every $\alpha \in \bN$,  
\begin{equation}\label{eq:prelim-iv1}  
\text{D}  \leq \left\{ \begin{array}{ll} C^{n+k} \| \eta \|^{n+k-2} \| \eta_x \| \| \eta_y \| \| (\cN+1)^{(\alpha+1)/2} \xi \| &\quad \text{if } \ell_2 \geq 1 \\  C^{n+k} \| \eta \|^{n+k-1} \| \eta_y \| \| a_x (\cN+1)^{\alpha/2} \xi \| &\quad \text{if } \ell_2 = 0 
\end{array} \right. \end{equation}
If instead $\ell_1 = 0$, we distinguish three cases. For $\ell_2 > 1$, we obtain 
\begin{equation}\label{eq:prelim-iv2}
\begin{split}
\text{D} \leq \; &C^{n+k} \| \eta_t \|^{n+k-2} \Big\{ \| \eta_y \| \| \eta_{x} \|(\| (\cN+1)^{(\alpha-1)/2}  \xi \|+n/N   \| (\cN+1)^{(\alpha+1)/2}  \xi \|) \\ &\hspace{3cm} + \| \eta_t \| \| \eta_x \| \| a_y (\cN+1)^{\alpha/2}  \xi \| \Big\} \end{split}\end{equation} 
If $\ell_1 = 0$ and $\ell_2 = 1$, we find 
\begin{equation}\label{eq:prelim-iv3a}
\begin{split} 
\text{D} \leq &\; C^{n+k} \| \eta_t \|^{n+k-2} \Big\{ \left[ n \| \eta_x \| \| \eta_y \| + \| \eta_t \| |\eta_t (x;y)| \right]  \| (\cN+1)^{(\alpha-1)/2} \xi \| \\ &\hspace{3cm} + \| \eta_t \| \| \eta_x \| \| a_y (\cN+1)^{\alpha/2} \xi \| \Big\} \end{split} 
\end{equation} 
If $\ell_1 = 0$ and $\ell_2 = 1$ and we additionally assume that that $k+n \geq 2$ (since $\ell_1 \leq k$, $\ell_2 \leq n$ from Lemma \ref{lm:indu}, this assumption only excludes the case $k=\ell_1=0$, $n=\ell_2 =1$), we find the improved estimate 
\begin{equation}\label{eq:prelim-iv3}
\begin{split} 
\text{D} &  \leq C^{n+k} \| \eta_t \|^{n+k-2} \Big\{ N^{-1} \left[ n \| \eta_x \| \| \eta_y \| + \| \eta_t \| |\eta_t (x;y)| \right] \|  (\cN+1)^{(\alpha+1)/2} \xi \| \\ &\hspace{3.5cm} 
+ \| \eta_t \| \| \eta_x \| \| a_y (\cN+1)^{\alpha/2} \xi \| \Big\}
\end{split} \end{equation}
Finally, let $\ell_1 = \ell_2 = 0$. Then
\begin{equation}\label{eq:prelim-iv4a} \begin{split} 
\text{D} &\leq C^{n+k} \| \eta_t \|^{n+k-1} \Big\{
n N^{-1} \| \eta_{y} \| \| a_x (\cN+1)^{\alpha/2} \xi \| \\ &\hspace{4cm} + \| \eta_t \| \| a_x a_y (\cN+1)^{(\alpha-1)/2} \xi \| \Big\} \end{split} \end{equation}
If, however, $\ell_1 = \ell_2 = 0$ and, additionally, $k+n \geq 1$ (excluding the case $n=\ell_1 = k = \ell_2 = 0$), we find the improved bound  
\begin{equation}\label{eq:prelim-iv4} \begin{split} 
\text{D} &\leq C^{n+k} \| \eta_t \|^{n+k-1} \left\{
n N^{-1} \| \eta_{y} \| \| a_x \xi \| + N^{-1/2} \| \eta_t \| \| a_x a_y (\cN+1)^{\alpha/2} \xi \| \right\} \end{split} \end{equation}
\end{itemize}
\end{lemma}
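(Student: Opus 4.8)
All the bounds will be proved by reducing each term to the elementary estimates of Lemma~\ref{lm:Abds} and Lemma~\ref{lm:Pi-bds} and then balancing the resulting powers of $N$, of $\|\eta_t\|_2$ and of $(\cN+1)$ with the help of the structural information about $\text{ad}^{(n)}_{B(\eta_t)}(b(\cdot))$ collected in Lemma~\ref{lm:indu} (we use in particular its properties (ii), (iii), (v), (vi)), the kernel estimates of Lemma~\ref{lm:eta}, and the operator inequality $\cN\le N$ valid on $\cF^{\le N}$. Part (i) is the prototype; parts (ii)--(iv) differ only in the argument of the last creation/annihilation operator of the $\Pi^{(1)}$-monomial and in the fact that two such monomials get multiplied.

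For part (i) the plan is: (a) estimate each $\Lambda_s$ of the form \eqref{eq:Pi2-prel} in operator norm by $C^{p_s}\|\eta_t\|_2^{q_1+\dots+q_{p_s}}$, using the operator-norm bound of Lemma~\ref{lm:Pi-bds} together with $\|\eta^{(q)}_t\|_2\le\|\eta_t\|_2^q$ and, for the diagonal terms coming from non-normally-ordered pairs, $\int|\eta^{(q)}_t(x;x)|\,dx\le\|\eta_t\|_2^q$ with $q\ge2$ (property (vi) of Lemma~\ref{lm:indu}, exactly as in the proof of Lemma~\ref{lm:conv-series}); (b) commute the left factor $(\cN+1)^{-1/2}$ and the bounded factors $(N-\cN)/N$, $(N+1-\cN)/N$ past the $\Lambda_s$, which only shifts $\cN$ by a bounded amount (Lemma~\ref{lm:prel1}(i) and the pull-through formula); (c) apply the first display of Lemma~\ref{lm:Pi-bds} to the remaining $N^{-k}\Pi^{(1)}_{\sharp,\flat}(\dots;\eta^{(\ell_1)}_{t,\lozenge}(g))$, which gives a factor $C^k\|\eta_t\|_2^{\ell_1}\|g\|_2$ and a right-hand norm $N^{-k}\|(\cN+1)^{k}\xi\|\le 2^k\|\xi\|$ because $\cN\le N$. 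Since by property (iii) the $\eta_t$-powers add up to $n$ and $k\le n$, this yields the first bound. For the improved bound \eqref{eq:prel-i2} one retains a surplus half-power $(\cN+1)^{1/2}$ on the right: if $k\ge1$ one uses $N^{-k}\|(\cN+1)^{k-1/2}\xi\|\le N^{-1/2}\|(\cN+1)^{1/2}\xi\|$, and if instead one of the $\Lambda_s$ is of the form \eqref{eq:Pi2-prel} the extra $N^{-1/2}$ is extracted from that factor in exactly the same manner; these are precisely the two cases in which the hypothesis of the second part of (i) holds.

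Parts (ii)--(iv) run along the same lines. In (ii) the last operator is $a\big((\eta^{(\ell_1)}_t r)_x\big)$; for $\ell_1\ge1$ one peels off $\ell_1-1$ factors $\eta_t$, estimating them in operator norm by $\|\eta_t\|_2^{\ell_1-1}$ and keeping $\|(\eta_t r)_x\|_2$ (invoking the pointwise bounds of Lemma~\ref{lm:eta} where convenient), whereas for $\ell_1=0$ one keeps $\|a(r_x)\xi\|$ honest, since downstream this quantity is integrated in $x$ against a weight such as $N^3V(N(x-y))$ and must not be replaced by $\|r_x\|_2\|(\cN+1)^{1/2}\xi\|$. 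For (iii) and (iv) one multiplies the $\Pi^{(1)}$-monomial coming from $\text{ad}^{(n)}_{B(\eta_t)}(b(\cdot))$ — which ends in $a^{\flat_k}(\cdot)$ — by the one coming from $\text{ad}^{(k)}_{B(\eta_t)}(b_x)$ or $\text{ad}^{(k)}_{B(\eta_t)}(b_y)$ — which begins with $b^{\flat_0}_{x_1}$ — estimating the right-hand monomial first and then the left one; the casework on $\ell_1,\ell_2\in\{0,1,\ge2\}$ and on whether $k+n$ exceeds the stated threshold simply records whether there is an unused factor $N^{-1}$ available (coming from an extra $\Pi^{(2)}$-block of order $\ge1$ or from a prefactor $N^{-k}$ with $k\ge1$) that can be traded for $N^{-1/2}$ or $N^{-1}$, and whether one must keep an honest $a_x$, $a_y$ or $a_x a_y$ acting on $\xi$.

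The main difficulty is not in any single estimate but in the uniform bookkeeping: over all the $2^n n!$ monomials described by Lemma~\ref{lm:indu} one has to check that the powers of $N$, of $\|\eta_t\|_2$ and of $(\cN+1)$ always match the claims — in particular that in the improved bounds \eqref{eq:prel-i2}, \eqref{eq:prelim-iii}, \eqref{eq:prelim-iv3}, \eqref{eq:prelim-iv4} there genuinely is one surplus factor $N^{-1}$ to spend (this is where properties (ii), (iii) of Lemma~\ref{lm:indu} and the hypotheses $k+n\ge1$, respectively $k+n\ge2$, are used) and that every non-normally-ordered contraction is harmless by property (vi). Once this accounting is organized, the proof is a (long) enumeration of cases, each resolved by the two lemmas above.
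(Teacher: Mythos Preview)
Your outline is accurate for parts (i)--(iii) and matches the paper's approach: peel off the $\Lambda$-factors using Lemma~\ref{lm:Pi-bds} (with property (vi) of Lemma~\ref{lm:indu} to handle the diagonal traces), then apply Lemma~\ref{lm:Pi-bds} to the remaining $\Pi^{(1)}$-block, keeping track of the $(\cN+1)$-powers versus the explicit $N^{-1}$ prefactors to extract the improved $N^{-1/2}$ in \eqref{eq:prel-i2}. For part (iii) the straightforward factorization you describe works because the argument of the left $\Pi^{(1)}$ is $(\eta^{(\ell_1+1)}r)_x$ with $\ell_1+1\ge 1$, so it always carries at least one $\eta_t$-kernel and can be bounded by $\|(\eta_t r)_x\|$.

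The gap is in part (iv) when $\ell_1=0$. In that case, by Lemma~\ref{lm:indu}(v), the rightmost field of the \emph{left} $\Pi^{(1)}$-monomial is the bare annihilation operator $a_y$ (or $b_y$), not $a(\eta^{(\ell_1)}_y)$. Your recipe ``estimate the right-hand monomial first and then the left one'' breaks down here: bounding $\|a_y\Psi\|$ with $\Psi=[\text{right monomial}]\xi$ by $\|(\cN+1)^{1/2}\Psi\|$ loses all $y$-dependence and cannot produce the terms $\|\eta_y\|$, $\|a_y(\cN+1)^{\alpha/2}\xi\|$, $\|a_xa_y\xi\|$, or the pointwise factor $|\eta_t(x;y)|$ that appear in \eqref{eq:prelim-iv2}--\eqref{eq:prelim-iv4}. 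The paper instead \emph{commutes $a_y$ to the right} through all the $\Lambda'$-factors and the second $\Pi^{(1)}$-monomial. Each commutator with a pair of fields carrying a kernel $\eta^{(j)}$ produces an operator with argument $\eta^{(j)}_y$ (bounded in $L^2$) and saves a factor $N^{-1}$; since the right monomial comes from $\text{ad}^{(n)}$, there are at most $n$ such commutators, which is the origin of the factor $n$ in \eqref{eq:prelim-iv2}--\eqref{eq:prelim-iv4a}. The final commutator with the last field $a^*(\eta^{(\ell_2)}_x)$ (present only when $\ell_2$ is odd) equals $\eta^{(\ell_2)}_t(x;y)$; for $\ell_2\ge 2$ this is controlled by $\|\eta\|^{\ell_2-2}\|\eta_x\|\|\eta_y\|$, but for $\ell_2=1$ it is the singular quantity $\eta_t(x;y)$ itself, which is exactly the source of the $|\eta_t(x;y)|$-terms in \eqref{eq:prelim-iv3a} and \eqref{eq:prelim-iv3}. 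Only after $a_y$ has been moved all the way through does it act on $\xi$, yielding the honest $a_y\xi$ and $a_xa_y\xi$ contributions. The hypotheses $k+n\ge2$ (resp.\ $k+n\ge1$) then guarantee, via Lemma~\ref{lm:indu}(ii)--(iii), that there is an additional $\Pi^{(2)}$-block or a prefactor $N^{-k_i}$ with $k_i\ge1$ somewhere, so that the dangerous contraction term picks up an extra $(\cN+1)/N$. Your last paragraph alludes to this $N^{-1}$-bookkeeping, but the commutation of $a_y$ through the right block is the actual mechanism that generates all of the terms in \eqref{eq:prelim-iv2}--\eqref{eq:prelim-iv4}, and it is absent from your strategy.
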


\begin{proof}
Let us start with part i). If $\Lambda_1$ is either the operator $(N-\cN)/N$ or $(N-\cN+1)/N$, then, on $\cF^{\leq N}$, 
\begin{equation}\label{eq:Lambd1} 
\begin{split} 
&\left\| (\cN+1)^{-1/2} \Lambda_1 \dots  \Lambda_i N^{-k} \Pi^{(1)}_{\sharp,\flat} (\eta_{t,\natural_1}^{(j_1)}, \dots , \eta_{t,\natural_k}^{(j_k)} ; \eta^{(\ell_1)}_{t,\lozenge} (g)) \xi \right\| \\ &\hspace{2cm} \leq 2 \left\| (\cN+1)^{-1/2} \Lambda_2 \dots  \Lambda_i N^{-k} \Pi^{(1)}_{\sharp,\flat} (\eta_{t,\natural_1}^{(j_1)}, \dots , \eta_{t,\natural_k}^{(j_k)} ; \eta^{(\ell_1)}_{t,\lozenge} (g)) \xi \right\| \end{split} 
\end{equation}
If instead $\Lambda_1$ has the form (\ref{eq:Pi2-bg}) for a $p \geq 1$, we apply Lemma \ref{lm:Pi-bds} and we find (using part vi) in Lemma \ref{lm:indu})
\begin{equation}\label{eq:Lambd2} \begin{split} &\left\| (\cN+1)^{-1/2}  \Lambda_1 \dots \Lambda_{i_1} N^{-k} \Pi^{(1)}_{\sharp,\flat} (\eta_{t,\natural_1}^{(j_1)}, \dots , \eta_{t,\natural_k}^{(j_k)} ; \eta^{(\ell_1)}_{t,\lozenge} (g)) \xi \right\|
\\ &\hspace{2cm} \leq  C^p  \| \eta_t \|^{\bar{p}} \| (\cN+1)^{-1/2}  \Lambda_2 \dots  \Lambda_i N^{-k} \Pi^{(1)}_{\sharp,\flat} (\eta_{t,\natural_1}^{(j_1)}, \dots , \eta_{t,\natural_k}^{(j_k)} ; \eta^{(\ell_1)}_{t,\lozenge} (g)) \xi \| 
\end{split} \end{equation}
where we used the notation $\bar{p} = q_1 + \dots + q_p$ for the total number of $\eta_t$-kernels appearing in (\ref{eq:Pi2-prel}). Iterating the bounds (\ref{eq:Lambd1}) and (\ref{eq:Lambd2}), we conclude that 
\begin{equation}\label{eq:bd1-Lam} 
\begin{split} &\| (\cN+1)^{-1/2}  \Lambda_1
 \dots \Lambda_{i_1} N^{-k} \Pi^{(1)}_{\sharp,\flat} (\eta_{t,\natural_1}^{(j_1)}, \dots , \eta_{t,\natural_k}^{(j_k)} ; \eta^{(\ell_1)}_{t,\lozenge} (g)) \xi \| \\ &\hspace{.2cm} \leq C^{r+p_1 + \dots + p_s} \| \eta_t \|^{\bar{p}_1 + \dots + \bar{p}_s} \| (\cN+1)^{1/2} N^{-k} \Pi^{(1)}_{\sharp,\flat} (\eta_{t,\natural_1}^{(j_1)}, \dots , \eta_{t,\natural_k}^{(j_k)} ; \eta^{(\ell_1)}_{t,\lozenge} (g))
 \xi \| \end{split}  \end{equation}
if $r$ of the operators $\Lambda_1, \dots , \Lambda_{i_1}$ have either the form $(N-\cN)/N$ or the form $(N-\cN+1)/N$, and the other $s=i_1-r$ are $\Pi^{(2)}$-operators of the form (\ref{eq:NPi2}) of order $p_1, \dots , p_s$ , containing $\bar{p}_1, \dots \bar{p}_s$ $\eta_t$-kernels. Again with Lemma \ref{lm:Pi-bds}, we obtain 
\begin{equation}\label{eq:prel-i} \begin{split}  &\| (\cN+1)^{-1/2}  \Lambda_1
 \dots \Lambda_{i_1} N^{-k_1} \Pi^{(1)}_{\sharp,\flat} (\eta_{t,\natural_1}^{(j_1)}, \dots , \eta_{t,\natural_{k_1}}^{(j_{k_1})} ; \eta^{(\ell_1)}_{t,\lozenge} (g)) \xi \| \\ &\hspace{2cm} \leq C^{r+p_1 +\dots +p_s+j_1 + \dots + j_{k_1}+l_1} \| \eta_t \|^{\bar{p}_1 + \dots + \bar{p}_s+j_1 +\dots +j_{k_1}+l_1} \| g \| \| \xi \| \\ &\hspace{2cm} \leq C^{n} \| \eta_t \|^n \| g \| \| \xi \| \, . \end{split} 
\end{equation}
This shows the first bound in part i). Now, assume that at least one of the $\Lambda_m$ operators, for $m \in \{ 1, \dots , i_1 \}$, has the form (\ref{eq:Pi2-prel}). Since, for $\Psi \in \cF^{\leq N}$, 
\[ \begin{split} &\| (\cN+1)^{-1/2} N^{-p} \Pi^{(2)}_{\sharp,\flat} (\eta_{t,\natural_1}^{(q_1)}, \dots , \eta_{t,\natural_p}^{(q_p)}) \Psi \| \\ &\hspace{2cm} \leq C^p \| \eta_t \|^{q_1 + \dots + q_p} N^{-p} \| (\cN+1)^{p-1/2} \Psi \| \\ &\hspace{2cm} \leq C^p \| \eta_t \|^{q_1 + \dots + q_p} N^{-1/2} \| \Psi \| \end{split} \]
for any $p \geq 1$, in this case we can improve (\ref{eq:prel-i}) to
\[ \begin{split}  &\| (\cN+1)^{-1/2}  \Lambda_1
 \dots \Lambda_{i_1} N^{-k_1} \Pi^{(1)}_{\sharp,\flat} (\eta_{t,\natural_1}^{(j_1)}, \dots , \eta_{t,\natural_{k_1}}^{(j_{k_1})} ; \eta^{(\ell_1)}_{t,\lozenge} (g)) \xi \| \\ &\hspace{6cm} \leq C^{n} N^{-1/2} \| \eta_t \|^n \| g \| \| (\cN+1)^{1/2} \xi \| \, . \end{split} 
\]
Similarly, if $k_1 \geq 1$, we have by Lemma \ref{lm:Pi-bds},   
\[ \begin{split} 
N^{-k_1} &\left\|  (\cN+1)^{-1/2}  \Pi^{(1)}_{\sharp,\flat} (\eta_{t,\natural_1}^{(j_1)}, \dots , \eta_{t,\natural_k}^{(j_{k_1})} ; \eta_{t,\natural_{k+1}}^{(\ell_1)} (g)) \xi \right\|  \\ &\hspace{2cm}\leq  N^{-k_1} C^{k_1} \| \eta_t \|^{j_1 + \dots + j_{k_1}+\ell_1} \|g \|  \|(N+1)^{k_1-1/2} \xi \| \\ &\hspace{2cm} \leq  C^k N^{-1/2} \| \eta_t \|^{j_1 + \dots + j_{k_1} +\ell_1} \| g \|  \|(N+1)^{1/2} \xi \| 
\end{split} \]
Hence, also in this case, the bound (\ref{eq:prel-i2}) holds true. 

If $\ell_1 \geq 1$, part ii) can be proven similarly to part i), noticing that 
\[ \| (\eta^{(\ell_1)}_{t,\lozenge} r)_x \| \leq \| \eta_t \|^{\ell_1 -1} \| (\eta_{t} r)_x \| \, . \]
If instead $\ell_1 = 0$, it follows from Lemma \ref{lm:indu}, part v), that the field operator associated with $(\eta^{(\ell_1)}_{t,\lozenge} r)_x = r_x$ (the one appearing on the right of $\Pi^{(1)}$) is an annihilation operator (acting directly on $\xi$). Hence, (\ref{eq:prelim-ii}) holds true also in this case.   

Let us now consider part iii). We can bound, first of all 
\[ \left\| (\cN+1)^{-1/2} \Lambda_1 \dots \Lambda_{i_1} N^{-k_1} \Pi^{(1)}_{\sharp,\flat} (\eta_{t,\natural_1}^{(j_1)} , \dots , \eta_{t,\natural_{k_1}}^{(j_{k_1})} ; (\eta_t^{(\ell_1 + 1)} r)_x) \Psi \right\| \leq C^n \| \eta_t \|^n \| (\eta_t r)_x \| \| \Psi \| \]
and 
\[ \begin{split} &\left\|  (\cN+1)^{-1/2} \Lambda_1 \dots \Lambda_{i_1} N^{-k_1} \Pi^{(1)}_{\sharp,\flat} (\eta_{t,\natural_1}^{(j_1)} , \dots , \eta_{t,\natural_{k_1}}^{(j_{k_1})} ; (\eta_{t,\lozenge}^{(\ell_1)} \partial_t \eta_{t,\wt{\lozenge}})_x) \Psi \right\| \\ &\hspace{8cm} \leq C^n \| \eta_t \|^n \| (\partial_t \eta_t)_x \| \| \Psi \| \end{split} \]
Choosing now \[ \Psi = \Lambda'_1 \dots \Lambda'_{i_2} N^{-k_2} \Pi^{(1)}_{\sharp', \flat'} (\eta_{t,\natural'_1}^{(m_1)} , \dots , \eta_{t,\natural'_{k_2}}^{(m_{k_2})} ; \eta_{x,\lozenge'}^{(\ell_2)})\xi,\] and proceeding as in part ii), distinguishing the cases $\ell_2 \geq 1$ and $\ell_2 = 0$, we obtain (\ref{eq:prelim-iii}) and (\ref{eq:prelim-iiib}).

Finally, we consider part iv). If $\ell_1 > 0$, we can proceed as in part iii) to show (\ref{eq:prelim-iv1}). So, let us focus on the case $\ell_1 = 0$. In this case, the field operator on the right of the first $\Pi^{(1)}$-operator (the one on the left) is an annihilation operator, $a_y$. To estimate $\text{D}$, we need to commute $a_y$ to the right, until it hits $\xi$. To commute $a_y$ through factors of $\cN$, we just use the pull-through formula $a_y \cN = (\cN+1) a_y$. When we commute $a_y$ through a pair of creation and/or annihilation operators associated with a kernel $\eta^{(j)}_t$ for a $j \geq 1$ (as the ones appearing in the $\Pi^{(2)}$-operators of the form (\ref{eq:Pi2-prel}) or in the operator $\Pi^{(1)}$-operator), we generate a creation or an annihilation operator with argument $\eta_{y}^{(j)}$, whose $L^2$-norm is uniformly bounded. At the same time, we spare a factor $N^{-1}$. For example, we have
\[ \left[ a_y , \int a^*_{x_i} a_{y_i} \eta^{(j)} (x_i ; y_i) dx_i dy_i \right] = a (\bar{\eta}^{(j)}_y) \]
At the end, we have to commute $a_y$ through the field operator with argument $\eta_{x,\lozenge'}^{(\ell_2)}$. The commutator is trivial if $\ell_2$ is even (because then the corresponding field operator is an annihilation operator; see Lemma \ref{lm:indu}, part v)). It is given by 
\begin{equation}\label{eq:comm-ell2} [ a_y, a^* (\eta_{x,\lozenge'}^{(\ell_2)}) ] = \eta_{t,\lozenge'}^{(\ell_2)} (x;y) \end{equation} if $\ell_2$ is odd. If $\ell_2 \geq 2$, we can bound $|\eta_{t,\lozenge'}^{(\ell_2)} (x;y)| \leq \| \eta_t \|^{\ell_2 - 2} \| \eta_{x} \| \| \eta_{y} \|$ and we obtain (taking into account the fact that there are at most $n$ pairs of fields with which $a_y$ has to be commuted)
\[ \begin{split} 
\text{D} \leq &C^{k+n}  \| \eta_t \|^{k+n-2} \Big\{ 
n N^{-1} \| \eta_y \| \| \eta_x \| \| (\cN+1)^{(\alpha+1)/2} \xi \| \\ &\hspace{3cm} + \| \eta_x \| \| \eta_y \| \| (\cN+1)^{(\alpha-1)/2} \xi \| + \| \eta_t \| \| \eta_x \| \| a_y (\cN+1)^{\alpha/2}\xi \| \Big\} \, .\end{split}  \]
If instead $\ell_2 = 1$, the r.h.s. of (\ref{eq:comm-ell2}) blows up, as $N \to \infty$. To make up for this singularity, we use the additional assumption $k +n \geq 2$. Combining this information with $\ell_1 = 0$, $\ell_2 = 1$, we conclude that either $k_1 > 0$ or $k_2 >0$ or there exists $i \in \bN$ such that either $\Lambda_i$ or $\Lambda'_i$ is a $\Pi^{(2)}$-operator of the form (\ref{eq:Pi2-prel}) with $p \geq 1$. This factor allows us to gain a factor $(\cN+1)/N$ in the estimate for the term arising from the commutator (\ref{eq:comm-ell2}). We conclude that, in this case, 
\[ \begin{split} 
\text{D} \leq \; &C^{k+n}  \| \eta_t \|^{k+n-2} \Big\{ 
n N^{-1} \| \eta_y \| \| \eta_x \| \| (\cN+1)^{(\alpha+1)/2} \xi \| \\ & \hspace{3cm} + N^{-1} |\eta_t (x;y)|  \| (\cN+1)^{(\alpha+1)/2} \xi \| + \| \eta_t \| \| \eta_x \| \| a_y (\cN+1)^{\alpha/2} \xi \| \Big\} \, . \end{split}  \]
Finally, let us consider the case $\ell_2 = 0$. Here we proceed as before, commuting $a_y$ to the right. The commutator produces at most $n$ factors, whose norm can be bounded similarly as before. We easily conclude that
\[ \text{D} \leq C^{k+n} \| \eta_t \|^{k+n-1} \left\{ nN^{-1} \| \eta_x \| \| a_y (\cN+1)^{\alpha/2} \xi \| + \| \eta_t \| \| a_x a_y (\cN+1)^{(\alpha-1)/2} \xi \| \right\} \]
If we impose the additional condition $k + n \geq 1$, we deduce that either $k_1 > 0$ or $k_2 >0$ or there exists $i \in \bN$ such that either $\Lambda_i$ or $\Lambda'_i$ is a $\Pi^{(2)}$-operator of the form (\ref{eq:Pi2-prel}) with $p \geq 1$. Similarly as we argued in the case $\ell_2 = 1$, when estimating the contribution with the two annihilation operators $a_x, a_y$ acting on $\xi$, we can therefore extract an additional factor $(\cN+1)/N$. Under this additional condition, we obtain  
\[\text{D} \leq C^{k+n} \| \eta_t \|^{k+n-1} \left\{ nN^{-1} \| \eta_x \| \| a_y \xi \| + N^{-1/2} \| \eta_t \| \| a_x a_y (\cN+1)^{(\alpha-1)/2} \xi \| \right\} \]
which proves (\ref{eq:prelim-iv4}). 
\end{proof}

\subsection{Analysis of $e^{-B(\eta_t)} \cL_{N,t}^{(0)} e^{B (\eta_t)}$}
\label{subsec:L0}

{F}rom the definition (\ref{eq:L0-4}), we can write
\[ \begin{split} \cL^{(0)}_{N,t} = \; &C_{N,t} - \langle \wtph_t , \left[ N^3 V(N.) w_\ell (N.) * |\wtph_t|^2 \right] \wtph_t \rangle \cN
\\ &+\frac{1}{2N} \langle \wtph_t , \left[ N^3 V(N.) * |\wtph_t|^2 \right] \wtph_t \rangle \cN + \frac{1}{2N} \langle \wtph_t , \left[ N^3 V(N.) * |\wtph_t|^2 \right] \wtph_t \rangle \cN^2 \end{split} \]
with   
\[ C_{N,t} = \frac{N}{2} \langle \wtph_t , \left[ N^3 V(N.) w_\ell (N.) * |\wtph_t|^2 \right] \wtph_t \rangle - \frac{1}{2}  \langle \wtph_t , \left[ N^3 V(N.) * |\wtph_t|^2 \right] \wtph_t \rangle \]
The properties of the other terms are described in the next proposition. 
\begin{prop}\label{prop:L0}
Under the same assumptions as in Theorem \ref{thm:gene}, there exist constants $C, c >0$ such that
\begin{equation}\label{eq:bds-L0} 
\begin{split}
\left| \left\langle \xi, e^{-B(\eta_t)}\left(\cL^{(0)}_{N,t} - C_{N,t} \right) e^{B(\eta_t)} \xi \right\rangle \right|&\leq C \langle \xi, (\cN+1) \xi \rangle \\ \left| \left\langle \xi, \left[ \cN, e^{-B(\eta_t)} \left( \cL^{(0)}_{N,t} - C_{N,t} \right) e^{B(\eta_t)} \right] \xi \right\rangle \right| &\leq C \langle \xi, (\cN+1) \xi \rangle \\
\left|  \left\langle \xi, \left[ a^* (g_1) a(g_2), e^{-B(\eta_t)} \left( \cL^{(0)}_{N,t} - C_{N,t} \right) e^{B(\eta_t)} \right] \xi \right\rangle \right| &\leq C \| g_1 \| \| g_2 \| \langle \xi, (\cN+1) \xi \rangle  \\
\left|  \partial_t \left\langle \xi, e^{-B(\eta_t)} \left( \cL^{(0)}_{N,t} - C_{N,t} \right) e^{B(\eta_t)} \xi \right\rangle \right|  &\leq C e^{c|t|} \langle \xi, (\cN+1) \xi \rangle 
\end{split} 
\end{equation}
for all $t \in \bR$, $g_1, g_2 \in L^2 (\bR^3)$, $\xi \in \cF^{\leq N}$.
\end{prop}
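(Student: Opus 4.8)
The plan is to reduce the whole proposition to estimates on the simple operator $D_t:=\cL^{(0)}_{N,t}-C_{N,t}$, which by the displayed formula preceding the statement equals $a_t\,\cN+(c_t/N)\,\cN^2$ with the scalar coefficients $a_t=-\langle\wtph_t,[N^3V(N.)w_\ell(N.)*|\wtph_t|^2]\wtph_t\rangle+(2N)^{-1}\langle\wtph_t,[N^3V(N.)*|\wtph_t|^2]\wtph_t\rangle$ and $c_t=\tfrac12\langle\wtph_t,[N^3V(N.)*|\wtph_t|^2]\wtph_t\rangle$. First I would check, using Young's inequality (note $V\in L^1$, since $V\in L^3$ is compactly supported), Hölder, the Sobolev embedding $H^1(\bR^3)\hookrightarrow L^6(\bR^3)$, $0\le f_\ell,w_\ell\le 1$, together with $\|\wtph_t\|_{H^1}\le C$ and $\|\dot{\wtph}_t\|_{H^2}\le Ce^{c|t|}$ from Proposition \ref{prop:phph}, that $|a_t|,|c_t|\le C$ and $|\partial_t a_t|,|\partial_t c_t|\le Ce^{c|t|}$ uniformly in $N$ and $t$. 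Since $\cN\le N$ on $\cF^{\leq N}$, this gives the operator inequalities $\pm D_t\le C(\cN+1)$ and $\pm\partial_t D_t\le Ce^{c|t|}(\cN+1)$ there. The first bound of the proposition is then immediate: conjugating $\pm D_t\le C(\cN+1)$ by the unitary $e^{B(\eta_t)}$ and invoking Lemma \ref{lm:Npow} (with $n_1=1$, $n_2=0$) yields $\pm e^{-B(\eta_t)}D_te^{B(\eta_t)}\le C\,e^{-B(\eta_t)}(\cN+1)e^{B(\eta_t)}\le C(\cN+1)$.

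The remaining three bounds all exploit that $D_t$, being a polynomial in $\cN$, commutes with $\cN$ and with every number-conserving operator $a^*(g_1)a(g_2)$. For the commutator with $\cN$ I would write $[\cN,e^{-B(\eta_t)}D_te^{B(\eta_t)}]=e^{-B(\eta_t)}[R_t,D_t]e^{B(\eta_t)}$, where $R_t:=e^{B(\eta_t)}\cN e^{-B(\eta_t)}-\cN=\int_0^1 e^{sB(\eta_t)}[B(\eta_t),\cN]e^{-sB(\eta_t)}\,ds$ (differentiate $s\mapsto e^{sB(\eta_t)}\cN e^{-sB(\eta_t)}$ and use the commutators in \eqref{2.2.Betacommutators}, in particular that $[B(\eta_t),\cN]$ is quadratic in the modified fields $b,b^*$). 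Since $\|\eta_t\|_2$ is small, Lemma \ref{lm:conv-series}, applied to each $b$-operator in $[B(\eta_t),\cN]$, expands $R_t$ as an absolutely convergent series of operators $\Lambda_1\cdots\Lambda_i N^{-p}\Pi^{(2)}_{\sharp,\flat}(\eta_{t,\natural_1}^{(m_1)},\dots,\eta_{t,\natural_p}^{(m_p)})$; by Lemma \ref{lm:Pi-bds} and $\cN\le N$, each such term is bounded on $\cF^{\leq N}$ with norm $\le C^n\|\eta_t\|_2^n$ and shifts the number of particles by at most two. The decisive point, which prevents the naive estimate from losing a power of $\cN$, is that if an operator $T$ shifts $\cN$ by $\kappa\in\{-2,0,2\}$, then $[T,D_t]=T\,(d_t(\cN)-d_t(\cN+\kappa))$ with $d_t(n)=a_t n+c_t n^2/N$ and $|d_t(n)-d_t(n+\kappa)|\le C$ on $\cF^{\leq N}$; hence $[R_t,D_t]$ is again such a series, with each term form-bounded by $C^n\|\eta_t\|_2^n(\cN+1)$. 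Conjugating back by $e^{\pm B(\eta_t)}$ and applying Lemma \ref{lm:Npow} gives $|\langle\xi,[\cN,e^{-B(\eta_t)}D_te^{B(\eta_t)}]\xi\rangle|\le C\langle\xi,(\cN+1)\xi\rangle$.

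The commutator with $a^*(g_1)a(g_2)$ is entirely analogous: since $[a^*(g_1)a(g_2),D_t]=0$, the left side equals $e^{-B(\eta_t)}[S_t,D_t]e^{B(\eta_t)}$ with $S_t:=e^{B(\eta_t)}a^*(g_1)a(g_2)e^{-B(\eta_t)}-a^*(g_1)a(g_2)=\int_0^1 e^{sB(\eta_t)}[B(\eta_t),a^*(g_1)a(g_2)]e^{-sB(\eta_t)}\,ds$, and by \eqref{2.2.Betacommutators} the inner commutator $[B(\eta_t),a^*(g_1)a(g_2)]$ is a quadratic $b$-operator with kernel of $L^2$-norm $\lesssim\|g_1\|_2\|g_2\|_2\|\eta_t\|_2$; expanding the conjugation via Lemma \ref{lm:conv-series} and bounding the resulting $\Pi^{(2)}$-type terms via Lemma \ref{lm:prel1} and Lemma \ref{lm:Pi-bds}, and again using the commutator-with-a-function-of-$\cN$ identity, yields $\le C\|g_1\|_2\|g_2\|_2\langle\xi,(\cN+1)\xi\rangle$. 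Finally, for the time derivative I would use $\partial_t\big(e^{-B(\eta_t)}D_te^{B(\eta_t)}\big)=e^{-B(\eta_t)}\big(\partial_t D_t+[D_t,K_t]\big)e^{B(\eta_t)}$ with $K_t:=(\partial_t e^{B(\eta_t)})e^{-B(\eta_t)}=\int_0^1 e^{sB(\eta_t)}B(\partial_t\eta_t)e^{-sB(\eta_t)}\,ds$ (Duhamel, using $\partial_t B(\eta_t)=B(\partial_t\eta_t)$); the $\partial_t D_t$ contribution is controlled exactly as in the first bound, using $\pm\partial_t D_t\le Ce^{c|t|}(\cN+1)$ and Lemma \ref{lm:Npow}, while $[D_t,K_t]$ is treated exactly as $[R_t,D_t]$ above, now with $\|\partial_t\eta_t\|_2\le Ce^{c|t|}$ from Lemma \ref{lm:eta}.

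The one genuinely technical ingredient is the expansion of $e^{-sB(\eta_t)}(\text{quadratic in }b)e^{sB(\eta_t)}$ into absolutely convergent sums of $\Lambda$-dressed $\Pi^{(2)}$-operators; this is proved exactly like Lemmas \ref{lm:indu} and \ref{lm:conv-series}, and is where the smallness of $\ell$ — hence of $\sup_t\|\eta_t\|_2$ — is used. Everything else is bookkeeping, together with the elementary but essential observation that commuting an operator of bounded $\cN$-shift against a polynomial in $\cN$ reproduces that operator dressed by a bounded diagonal factor; getting this cancellation right (rather than crudely bounding $\|D_t e^{B(\eta_t)}\xi\|$, which would cost a power of $\cN$) is the main point.
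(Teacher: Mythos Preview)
Your strategy is sound and differs from the paper's in instructive ways. For the first bound you go directly through Lemma~\ref{lm:Npow}, which is cleaner than what the paper does: the paper first proves a separate Lemma~\ref{lm:R} controlling $e^{-B(\eta_t)}R\,e^{B(\eta_t)}$ for general quadratic $R=\int r(y;x)\,b_x^*b_y$ or $\int r(y;x)\,a_x^*a_y$, then specializes to $r=1$. For the commutator bounds you exploit $[\cN,D_t]=[a^*(g_1)a(g_2),D_t]=0$ to rewrite $[\cN,e^{-B}D_te^B]=e^{-B}[R_t,D_t]e^B$ and then use the definite-shift identity $[T,d_t(\cN)]=T\,(d_t(\cN)-d_t(\cN+\kappa))$; this shortcut the paper does not take. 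The paper instead expands $e^{-B}\cN e^B$ term by term and commutes each summand with $\cN$ (or with $a^*(g_1)a(g_2)$) via Lemma~\ref{lm:prel1}; for the $\cN^2/N$ piece it iterates by setting $\xi_2=e^{-B}(\cN/N)e^B\xi$ and applying the $\cN$-bound to the pair $(\xi,\xi_2)$, controlling $\|(\cN+1)^{1/2}\xi_2\|$ by Lemma~\ref{lm:Npow}. Packaging things into Lemma~\ref{lm:R} pays off later (it is reused verbatim for the two potential terms in $\cL^{(2)}_{N,t}-\cK$), whereas your argument is specific to functions of $\cN$.

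Two details in your sketch are inaccurate and need correction before the argument is complete. First, the expansion of $R_t$ does not consist of terms $\Lambda_1\cdots\Lambda_i\,N^{-p}\Pi^{(2)}(\ldots)$: applying Lemma~\ref{lm:conv-series} separately to each $b$-factor of $[B(\eta_t),\cN]$ yields, after Lemma~\ref{lm:indu}, products of two $\Pi^{(1)}$-structures tied together by an $x$-integration, exactly as in (\ref{eq:typadad}). Second, such a term at level $(n,k)$ does \emph{not} shift $\cN$ by at most two; each $\Pi^{(2)}$-type $\Lambda$-factor may contribute $\pm2$, and by Lemma~\ref{lm:indu}(ii) there can be $O(n+k)$ of them, so the total shift $\kappa$ is only $O(n+k)$. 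Consequently $|d_t(m)-d_t(m+\kappa)|\le C|\kappa|+C\kappa^2/N\le C(n+k)^2$ on $\cF^{\le N}$, which is merely polynomial and hence harmless against the geometric $C^{n+k}\|\eta_t\|^{n+k}$ coming from Lemma~\ref{lm:prelim}; your conclusion therefore survives, but the stated reason does not. Relatedly, the claim that each term is ``bounded with norm $\le C^n\|\eta_t\|^n$'' should be the form bound $|\langle\xi_1,T\xi_2\rangle|\le C^{n+k}\|\eta_t\|^{n+k}\|(\cN+1)^{1/2}\xi_1\|\,\|(\cN+1)^{1/2}\xi_2\|$ (this is what Lemma~\ref{lm:prelim} delivers); a genuine operator-norm bound would cost a factor of $N$.
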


In order to show Proposition \ref{prop:L0}, we need to conjugate the number of particles operator $\cN$ with the generalized Bogoliubov transformation $e^{-B(\eta_t)}$. To this end, we make use of the following lemma, where, for later convenience, we consider conjugation of more general quadratic operators.
\begin{lemma}\label{lm:R}
Let $r: L^2 (\bR^3) \to L^2 (\bR^3)$ be a bounded linear operator. Consider the Fock-space operators 
\[ \begin{split} R_1 = \int dx dy \, r(y;x) \, b_x^* b_y \qquad \text{and} \qquad R_2 = \int dx dy \, r(y;x) \, a_x^* a_y \end{split} \]
mapping $\cF^{\leq N}$ in itself. Then we have the bounds
\begin{equation}\label{eq:R1R2} \begin{split} 
\left| \left\langle \xi_1 , e^{-B(\eta_t)} R_i e^{B(\eta_t)} \xi_2 \right\rangle \right|  &\leq C \| r \|_\text{op} \| (\cN+1)^{1/2} \xi_1 \| \|(\cN+1)^{1/2} \xi_2 \|  \\
\left| \left\langle \xi_1 , \left[ \cN, e^{-B(\eta_t)} R_i e^{B(\eta_t)} \right] \xi_2 \right\rangle \right|  &\leq C  \| r \|_\text{op} \, \| (\cN+1)^{1/2} \xi_1 \| \|(\cN+1)^{1/2} \xi_2 \|  \\
\left| \left\langle \xi_1 , \left[ a^* (g_1) a(g_2) , e^{-B(\eta_t)} R_i e^{B(\eta_t)} \right] \xi_2 \right\rangle \right|  &\leq C \| r \|_\text{op} \| g_1 \| \| g_2 \| \\ &\hspace{1cm} \times \| (\cN+1)^{1/2} \xi_1 \| \|(\cN+1)^{1/2} \xi_2 \|  
\end{split} \end{equation}
for $i=1,2$ and all $\xi_1, \xi_2 \in \cF^{\leq N}$. Furthermore, if $r = r_t$ is differentiable in $t$, we find 
\begin{equation}\label{eq:parR1R2} \left| \partial_t \left\langle \xi_1 , e^{-B(\eta_t)} R_i e^{B(\eta_t)} \xi_2 \right\rangle \right|  \leq C e^{c|t|} (\| r \|_\text{op} + \| \dot{r} \|_\text{op}) \, \| (\cN+1)^{1/2} \xi_1 \| \|(\cN+1)^{1/2} \xi_2 \| 
\end{equation}
for $i=1,2$ and all $\xi_1, \xi_2 \in \cF^{\leq N}$.
\end{lemma}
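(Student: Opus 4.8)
The plan is to establish all four bounds for $R_1$ first and then reduce $R_2$ to this case. Using the definition \eqref{eq:bb-fie} of the modified fields one has $b_x^* b_y = a_x^*\tfrac{N-\cN}{N}a_y$, hence $R_2 = R_1 + N^{-1}\int r(y;x)\,a_x^*\cN a_y\,dxdy$; on $\cF^{\leq N}$ the correction term is a quadratic operator form-bounded by $C\|r\|_{\mathrm{op}}(\cN+1)$, and after conjugation with $e^{B(\eta_t)}$ all of its contributions to \eqref{eq:R1R2}--\eqref{eq:parR1R2} follow from Lemma~\ref{lm:Npow} (to move the $\cN$-factors through the Bogoliubov transformation) together with the bounds for $R_1$. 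So it suffices to control $e^{-B(\eta_t)}R_1 e^{B(\eta_t)}$, which I would write as $R_1 = \int dx\, b_x^*\, b\big((r^*)_x\big)$ with $(r^*)_x(y) = \overline{r(y;x)}$.

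First I would expand $e^{-B(\eta_t)}R_1 e^{B(\eta_t)}$ into an absolutely convergent double series: insert the series of Lemma~\ref{lm:conv-series} for $e^{-B(\eta_t)}b_x^* e^{B(\eta_t)}$ and for $e^{-B(\eta_t)}b\big((r^*)_x\big)e^{B(\eta_t)}$, and then apply Lemma~\ref{lm:indu} (and its $b^*$-analogue, cf.\ the Remark after it) to every nested commutator. After performing the $x$-integration, each summand becomes a finite sum of operators of the schematic form $\big(\widetilde\Pi^{(1)}\text{-factor dressed by }\Lambda\text{'s}\big)\cdot(\Lambda\text{'s})\cdot\Pi^{(1)}\big(\dots;(\eta^{(\ell)}_{t,\lozenge}r)_x\big)$, all kernels being products of $\eta_t$-kernels and the operator $r$ sitting composed between the $\eta_t$-powers produced on the two sides. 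Absolute convergence, uniformly in $N$ and $t$, follows exactly as in the proof of Lemma~\ref{lm:conv-series}, using that $\sup_t\|\eta_t\|_2$ is small and that $\mathrm{ad}^{(k)}_{B(\eta_t)},\mathrm{ad}^{(n)}_{B(\eta_t)}$ are sums of $2^k k!$ and $2^n n!$ terms respectively.

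Then I would estimate each term separately. The key device is to realize the $x,y$-integration as a pairing in $L^2(\bR^3;\cF)$: for instance the leading $k=n=0$ contribution equals $\int r(y;x)\langle b_x\xi_1, b_y\xi_2\rangle\,dxdy = \langle b_\cdot\xi_1, (r^*\otimes 1)\,b_\cdot\xi_2\rangle_{L^2(\bR^3;\cF)}$, which is $\leq \|r\|_{\mathrm{op}}\|\cN^{1/2}\xi_1\|\,\|\cN^{1/2}\xi_2\|$ because $x\mapsto b_x\xi_i$ lies in $L^2(\bR^3;\cF)$ with norm $\leq\|\cN^{1/2}\xi_i\|$. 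For a general term I would apply Lemma~\ref{lm:prelim}(ii) to the two $\Lambda$-dressed factors (viewing, for fixed $x$, the rightmost wave function as $(\eta^{(\ell)}_{t,\lozenge}r)_x$ with $r$ a bounded operator) together with Lemma~\ref{lm:Npow} and Lemma~\ref{lm:Pi-bds} for the $\Lambda$-factors; by Lemma~\ref{lm:indu} each such term carries a factor $\|\eta_t\|_2^m$ with $m$ growing with $k+n$, so after dividing by $k!\,n!$ one is left with a convergent geometric series in $\|\eta_t\|_2$, proportional to $\|r\|_{\mathrm{op}}$. Non-normally-ordered or diagonal contributions $\int|j(x;x)|dx$ are controlled using point (vi) of Lemma~\ref{lm:indu}, which forces the relevant $\eta_t$-power to be $\geq 2$, together with the pointwise bounds of Lemma~\ref{lm:eta}(ii). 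This gives the first line of \eqref{eq:R1R2}.

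The remaining three estimates come from the same expansion. For $[\cN,\cdot]$: by Lemma~\ref{lm:prel1}(i), $\cN$ commutes with the factors $(N-\cN)/N$ and $(N+1-\cN)/N$ and multiplies each $\Pi^{(2)}$-, $\Pi^{(1)}$-, $\widetilde\Pi^{(1)}$-factor by a bounded integer, so the commutator of each term is a bounded (at most linear in $k+n$) multiple of a term already estimated, and the same geometric series converges. For $[a^*(g_1)a(g_2),\cdot]$: Lemma~\ref{lm:prel1}(ii)--(iii) rewrites the commutator of each term as a sum of $O(k+n)$ terms in which one $\eta_t$-kernel (or the final wave function) is replaced by a kernel $h_i$ (resp.\ a function $k$) obeying $\|h_i\|_2\leq\|g_1\|\|g_2\|\|j_i\|_2$ and the pointwise bounds \eqref{eq:hkbds-pw}; re-running the argument of the previous paragraph (with \eqref{eq:hkbds-pw} and Lemma~\ref{lm:eta}(ii) controlling the diagonal contributions of $h_i$) yields the bound proportional to $\|g_1\|\|g_2\|$. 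For $\partial_t$: differentiate the series term by term (justified by the absolutely convergent $\partial_t$-series, cf.\ \eqref{eq:cv-ser}); $\partial_t$ either hits $r=r_t$, producing $\|\dot r\|_{\mathrm{op}}$, or one of the $\eta_t$-kernels, where Lemma~\ref{lm:prel1}(iv) and the bounds $\|\partial_t\eta_t^{(n)}\|_2\leq Cn e^{c|t|}\|\eta_t\|_2^{n-1}$ from Lemma~\ref{lm:eta}(iv)--(vi) supply a factor $Ce^{c|t|}$ and at most a polynomial in $k+n$, again absorbed by the geometric series. The main obstacle I anticipate is the bookkeeping needed to ensure that only $\|r\|_{\mathrm{op}}$ — and never $\|r\|_{\mathrm{HS}}$ — is ever used: one must keep the two $b$-fields carrying the $r$-kernel paired together and realize their integration as the $L^2(\bR^3;\cF)$-pairing above, letting $r$ be composed with the adjacent $\eta_t$-powers, rather than separating the two fields.
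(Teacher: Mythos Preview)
Your treatment of $R_1$ is essentially the paper's: expand $e^{-B(\eta_t)}R_1e^{B(\eta_t)}$ as a double series in nested commutators via Lemma~\ref{lm:conv-series}, decompose each $\mathrm{ad}^{(n)}(b^*(r_x))\,\mathrm{ad}^{(k)}(b_x)$ using Lemma~\ref{lm:indu}, and estimate every resulting term with Lemma~\ref{lm:prelim}(ii). The commutator and time-derivative bounds then follow by applying Lemma~\ref{lm:prel1} term by term, exactly as you outline. No objection here.

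The reduction of $R_2$ to $R_1$, however, has a gap. Your identity $R_2=R_1+N^{-1}\!\int r(y;x)\,a_x^*\cN a_y$ is correct, and the correction equals $N^{-1}(\cN-1)R_2=\frac{\cN-1}{N-\cN+1}\,R_1$. For the \emph{first} line of \eqref{eq:R1R2} this is harmless: one estimates $\langle e^{B}\xi_1,(\cN-1)R_2\,e^{B}\xi_2\rangle$ directly by $d\Gamma$-bounds and Lemma~\ref{lm:Npow}, then trades one factor of $\cN$ for $N$ on $\cF^{\leq N}$. But for $[\cN,\cdot]$, $[a^*(g_1)a(g_2),\cdot]$ and $\partial_t$ the argument breaks down: these operations do not commute with conjugation by $e^{B(\eta_t)}$, so you are forced to control objects like $e^{-B}\frac{\cN-1}{N-\cN+1}e^{B}$ in the $(\cN+1)^{1/2}$-norm, and $(N-\cN+1)^{-1}$ is as large as $1$ (not $N^{-1}$) on sectors with $\cN\approx N$. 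Lemma~\ref{lm:Npow} moves powers of $(\cN+1)$ and $(N+1-\cN)$ through $e^{B}$ but does not produce the extra $N^{-1}$ you need; the claim that the correction's contributions ``follow from Lemma~\ref{lm:Npow} together with the bounds for $R_1$'' is therefore not justified.

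The paper avoids this by treating $R_2$ through the Duhamel formula rather than by reduction: one writes
\[
e^{-B(\eta_t)}R_2\,e^{B(\eta_t)}\;=\;R_2+\int_0^1\!ds\int dx\,e^{-sB(\eta_t)}\big[b((\eta_t r)_x)\,b_x+\text{h.c.}\big]e^{sB(\eta_t)},
\]
using $[a_x^*a_y,B(\eta_t)]=b_x^*b^*(\eta_y)+b(\bar\eta_x)b_y$. The point is that the integrand is again a $b$-quadratic expression with kernel $\eta_t r$, so the very same series expansion and the estimates of Lemma~\ref{lm:prelim}(iii) apply directly; the commutator and time-derivative bounds then follow term by term just as for $R_1$. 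This sidesteps any factor $(N-\cN+1)^{-1}$ and keeps only $\|r\|_{\mathrm{op}}$ in the bound. If you want to keep your reduction, you would have to supply an independent argument for $[\cN,\cdot]$, $[a^*(g_1)a(g_2),\cdot]$ and $\partial_t$ on the correction; the Duhamel route is both shorter and more robust.
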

\begin{proof}
We consider first the operator $R_1$. By Lemma \ref{lm:conv-series}, we expand
\begin{equation}\label{eq:eR1e} \begin{split}
e^{-B(\eta_t)} R_1 e^{B(\eta_t)} &= \int dx  \, e^{-B(\eta_t)} b^* (r_x) b_x e^{B(\eta_t)} \\ &= \sum_{k,n \geq 0} \frac{(-1)^{k+n}}{k! n!} \int dx \, \text{ad}^{(n)}_{B(\eta_t)} (b^* (r_x)) \text{ad}^{(k)}_{B(\eta_t)} (b_x) \end{split} \end{equation}
with the notation $r_x (y) = r(x;y)$. According to Lemma \ref{lm:indu} the operator 
\[  \int dx \, \text{ad}^{(n)}_{B(\eta_t)} (b^* (r_x))
\, \text{ad}^{(k)}_{B(\eta_t)} (b_x) \]
is given by the sum of $2^{n+k} n!k!$ terms having the form
\begin{equation}\label{eq:typR}
\begin{split}  \text{E} := \; &\int dx  \, 
N^{-k_1} \Pi^{(1)}_{\sharp,\flat} (\eta_{t,\natural_1}^{(j_1)}, \dots , \eta_{t,\natural_{k_1}}^{(j_{k_1})} ; (\eta_{t,\lozenge}^{(\ell_1)} r)_x)^* \Lambda_{i_1}^* \dots \Lambda_1^*  \\ &\hspace{3cm} \Lambda'_1 \dots \Lambda'_{i_2} N^{-k_2} \Pi^{(1)}_{\sharp',\flat'}
(\eta_{t,\natural'_1}^{(m_1)}, \dots , \eta_{t,\natural'_{k_2}}^{(m_{k_2})} ; \eta_{x,\lozenge'}^{(\ell_2)} ) \end{split} \end{equation}
where $i_1, i_2, k_1, k_2, \ell_1 , \ell_2 \geq 0$, $j_1,\dots , j_{k_1}, m_1, \dots , m_{k_2} \geq 1$, and where each operator $\Lambda_i$ and $\Lambda'_i$ is either a factor $(N-\cN)/N$, a factor $(N+1-\cN)/N$ or a $\Pi^{(2)}$-operator of the form
\begin{equation}\label{eq:Pi2R} N^{-p} \Pi^{(2)}_{\underline{\sharp}, \underline{\flat}} (\eta_{t,\underline{\natural}_1}^{(q_1)}, \dots , \eta_{t,\underline{\natural}_p}^{(q_p)}) \end{equation}
for a $p \geq 1$ and powers $q_1,\dots , q_p \geq 1$. 
With Cauchy-Schwarz we find 
\begin{equation}\label{eq:bd-R} 
\begin{split}
| \langle \xi_1, \text{E} \xi_2 \rangle | \leq \; & \int dx \,\left\| \Lambda_1 \dots \Lambda_{i_1} 
N^{-k_1} \Pi^{(1)}_{\sharp, \flat} (\eta_{t,\natural_1}^{(j_1)}, \dots , \eta_{t,\natural_{k_1}}^{(j_{k_1})} ; (\eta_{t,\lozenge}^{(\ell_1)}r)_x ) \xi_1 \right\| \\ &\hspace{2cm} \times  \left\| \Lambda'_1 \dots \Lambda'_{i_2} N^{-k_2} \Pi^{(1)}_{\sharp',\flat'}
(\eta_{t,\natural'_1}^{(m_1)}, \dots , \eta_{t,\natural'_{k_2}}^{(m_{k_2})} ; \eta_{x,\lozenge'}^{(\ell_2)} ) \xi_2 \right\|\end{split} \end{equation}
for every $\xi_1,\xi_2 \in \cF^{\leq N}$. With Lemma \ref{lm:prelim}, part ii), we find that 
\begin{equation}\label{eq:fin-R}
\begin{split} 
\left| \langle \xi_1, \text{E} \, \xi_2 \rangle \right| &\leq C^{k+n} \| r \|_\text{op} \| \eta_t \|^{n+k} \| (\cN+1)^{1/2} \xi_1 \| \| (\cN+1)^{1/2} \xi_2 \| 
\end{split} \end{equation}
where we used the fact that
\[ \int dx \, \| a(r_x) \xi_1 \|^2 = \langle \xi_1, d\Gamma (r^2) \xi_1 \rangle  \leq \| r^2 \|_\text{op} \| \cN^{1/2} \xi_1 \|^2 \leq \| r \|_\text{op}^2 \| \cN^{1/2} \xi_1 \|^2 \]

From (\ref{eq:eR1e}), we conclude that, if $\sup_t \| \eta_t \|$ is small enough, 
\begin{equation}\label{eq:eR1e-bd} \left| \langle \xi_1 , e^{-B(\eta_t)} R_1 e^{B(\eta_t)} \xi_2 \rangle \right| \leq C \| r \|_\text{op} \| (\cN+1)^{1/2} \xi_1 \| \| (\cN+1)^{1/2} \xi_2 \|  \end{equation}
This proves the first bound in (\ref{eq:R1R2}), if $i=1$. The other two bounds in (\ref{eq:R1R2}) and the bound in (\ref{eq:parR1R2}) for $i=1$ can be proven similarly. To be more precise, we first expand the operator $e^{-B(\eta_t)} R_1
e^{B(\eta_t)}$ as in (\ref{eq:eR1e}), where the $(n,k)$-th term can be written as the sum of $2^{n+k} k!n!$ terms of the form (\ref{eq:typR}). Then we use Lemma \ref{lm:prel1} to express the commutator of (\ref{eq:typR}) with $\cN$ or with $a^* (g_1) a(g_2)$ or its time-derivative as a sum of at most $2(k+n+1)$ terms having again the form (\ref{eq:typR}), with just one of the $\eta_t$-kernels appropriately replaced. Finally, we proceed as above to show that the matrix elements of such a term can be bounded as in (\ref{eq:fin-R}). We omit further details.

Let us now consider the operator $R_2$. We start by writing
\[ \begin{split} e^{-B(\eta_t)} R_2 e^{B(\eta_t)} &= R_2 + \int_0^1 ds \,e^{-s B(\eta_t)} [R_2 , B(\eta_t) ] e^{s B(\eta_t)} \\ &= R_2 + \int_0^1 ds \int dx dy \, r(y;x) e^{-s B(\eta_t)} \left[ a_x^* a_y , B(\eta_t) \right] e^{s B(\eta_t)} \\ &= R_2 + \int_0^1 ds\int dx e^{-sB(\eta_t)} \left[ b ((\eta_t r)_x) b_x + \text{h.c.} \right] e^{sB(\eta_t)}  \end{split} \]
Expanding as in Lemma \ref{lm:conv-series} and then integrating over $s$, we find
\begin{equation}\label{eq:N-exp} \begin{split} 
e^{-B(\eta_t)} &R_2 e^{B(\eta_t)} \\ = &\; R_2 + \sum_{k,n \geq 0} \frac{(-1)^{k+n}}{k! n! (k+n+1)} \int dx \left[ \text{ad}_{B (\eta_t)}^{(n)} (b ((\eta_t r)_x)) \text{ad}^{(k)}_{B(\eta_t)} (b_x) + \text{h.c.} \right] \end{split} \end{equation}
With Lemma \ref{lm:indu}, we can write the operator
\begin{equation}\label{eq:exp0} \int dx \,\text{ad}_{B (\eta_t)}^{(n)} (b ((\eta_t r)_x)) \text{ad}^{(k)}_{B(\eta_t)} (b_x)  \end{equation}
as a sum of $2^{n+k} k!n!$ contributions of the form
\begin{equation}\label{eq:typadad} \begin{split} 
\text{E} = \int dx \, \Lambda_1 \dots \Lambda_{i_1} N^{-{k_1}} &\Pi^{(1)}_{\sharp,\flat} (\eta_{t,\natural_1}^{(j_1)}, \dots , \eta_{t,\natural_{k_1}}^{(j_{k_1})} ; (\eta_{t,\lozenge}^{(\ell_1+1)} r)_x) \\ &\times \Lambda'_1 \dots \Lambda'_{i_2} N^{-{k_2}} \Pi^{(1)}_{\sharp',\flat'} (\eta_{t,\natural'_1}^{(m_1)} , \dots , \eta_{t,\natural'_{k_2}}^{(m_{k_2})} ; \eta_{x,\lozenge'}^{(\ell_2)})\end{split} \end{equation}
where each $\Lambda_i$ and $\Lambda'_i$ is either $(N-\cN)/N$, $(N+1-\cN)/N$ or an operator of the form
\begin{equation}\label{eq:NPi2} N^{-p} \, \Pi^{(2)}_{\underline{\sharp},\underline{\flat}} (\eta_{t,\underline{\natural}_1}^{(q_1)}, \dots \eta_{t,\underline{\natural}_p}^{(q_p)} ) 
\end{equation}

{F}rom Lemma \ref{lm:prelim}, part iii), we obtain that 
\[ \begin{split} |\langle \xi_1 , \text{E} \xi_2 \rangle | \leq \; & \, \| (\cN+1)^{1/2} \xi_1 \| \\ & \times \int dx \, \Big\| (\cN+1)^{-1/2} 
\Lambda_1 \dots \Lambda_{i_1} N^{-{k_1}} \Pi^{(1)}_{\sharp,\flat} (\eta_{t,\natural_1}^{(j_1)}, \dots , \eta_{t,\natural_{k_1}}^{(j_{k_1})} ; (\eta_{t,\lozenge}^{(\ell_1+1)} r)_x) \\ &\hspace{3cm} \times \Lambda'_1 \dots \Lambda'_{i_2} N^{-{k_2}} \Pi^{(1)}_{\sharp',\flat'} (\eta_{t,\natural'_1}^{(m_1)} , \dots , \eta_{t,\natural'_{k_2}}^{(m_{k_2})} ; \eta_{x,\lozenge'}^{(\ell_2)}) \xi_2 \Big\|  \\ \leq \; &C^{n+k} \| r \|_\text{op} \, \| \eta_t \|^{k+n+1}  \| (\cN+1)^{1/2} \xi_1 \| \| (\cN+1)^{1/2} \xi_2 \| \end{split} \]
This implies that, if $\sup_t \| \eta_t \|$ is small enough, 
\[ \Big|\langle \xi_1, e^{-B(\eta_t)} R_2 e^{B(\eta_t)} \xi_2 \rangle \Big| \leq C \| r \|_\text{op}  \| (\cN+1)^{1/2} \xi_1 \| \| (\cN+1)^{1/2} \xi_2 \| \]
As in the analysis of $R_1$ above, also here one can show the other bounds in (\ref{eq:R1R2}) for the commutators of $e^{-B(\eta_t)} R_1 e^{B(\eta_t)}$ with $\cN$ and with $a^* (g_1) a(g_2)$ and for its time-derivative. 
\end{proof}

Next, we use Lemma \ref{lm:R} to show Prop. \ref{prop:L0}. 
\begin{proof}[Proof of Prop. \ref{prop:L0}]
To control $\cL_{N,t}^{(0)}$ we start by noticing that, with Young's inequality,
\begin{equation}\label{eq:you1} \begin{split} \left| \langle \wtph_t, \left[ N^3 V(N.) *|\wtph_t|^2 \right] \wtph_t \rangle \right| &\leq \int N^3 V(N (x-y)) |\wtph_t (x)|^2 |\wtph_t (y)|^2 dx dy \\ &\leq C \| \wtph_t \|_4^4 \leq C \| \wtph_t \|_{H^1}^4 \leq C \end{split}
\end{equation}
and
\begin{equation}\label{eq:youdot} \left| \partial_t  \langle \wtph_t, \left[ N^3 V(N.) *|\wtph_t|^2 \right] \wtph_t \rangle \right| \leq C \| \wtph_t \|^3_4 \| \dot\wtph_t \|_4 \leq C \| \wtph_t \|_{H^1}^3 \| \wtph_t \|_{H^3} \leq C e^{c|t|} \end{equation}
for constants $C,c > 0$. Similarly, we also have
\begin{equation}\label{eq:you2} \begin{split} \left| \langle \wtph_t, \left[ N^3 V(N.) w_\ell (N.) *|\wtph_t|^2 \right] \wtph_t \rangle \right|  &\leq C \\
\left|\partial_t  \langle \wtph_t, \left[ N^3 V(N.) w_\ell (N.) *|\wtph_t|^2 \right] \wtph_t \rangle \right|  &\leq C e^{c|t|} \,.
\end{split} \end{equation}

By (\ref{eq:you1}), (\ref{eq:youdot}), (\ref{eq:you2}), it is enough to show the four bound in (\ref{eq:bds-L0}) with $\cL_{N,t}^{(0)} - C_{N,t}$ replaced by $\cN$ and by $\cN^2 / N$. If we replace $\cL_{N,t}^{(0)} - C_{N,t}$ with $\cN$, the bounds in (\ref{eq:bds-L0}) follow from Lemma \ref{lm:R}. To prove that these bounds also hold for $\cN^2/N$, we use again Lemma \ref{lm:R}. Setting $\xi_2 =  e^{-B(\eta_t)} (\cN/N) e^{B(\eta_t)} \xi$, we have 
\[ \left| \langle \xi , e^{-B(\eta_t)} (\cN^2/N) e^{B(\eta_t)} \xi \rangle \right| = \left| \langle \xi , e^{-B(\eta_t)} \cN e^{B(\eta_t)} \xi_2 \rangle \right| \leq C \| (\cN+1)^{1/2} \xi \| \| (\cN+1)^{1/2} \xi_2 \|\]
Since, by Lemma \ref{lm:Npow}, 
\[\begin{split} \| (\cN+1)^{1/2} \xi_2 \|^2 &= N^{-2} \langle \xi, e^{-B(\eta_t)} \cN e^{B(\eta_t)} (\cN+1) e^{-B(\eta_t)} \cN e^{B(\eta_t)} \xi \rangle \\ &\leq 
N^{-2} \langle \xi, (\cN+1)^3 \xi \rangle \leq C \langle \xi, (\cN+1) \xi \rangle \end{split} \]
for all $\xi \in \cF^{\leq N}$, we have  
\[ \left| \langle \xi , e^{-B(\eta_t)} (\cN^2/N) e^{B(\eta_t)} \xi \rangle \right| \leq C \| (\cN+1)^{1/2} \xi \|^2 \]
Using Lemma \ref{lm:R} and Leibniz rule, we also find
\[ \begin{split} 
\left| \langle \xi , \big[ \cN, e^{-B(\eta_t)} (\cN^2/N) e^{B(\eta_t)} \big] \xi \rangle \right| &\leq C \| (\cN+1)^{1/2} \xi \|^2 \\
\left| \langle \xi , \big[ a^* (g_1) a(g_2) , e^{-B(\eta_t)} (\cN^2/N) e^{B(\eta_t)} \big] \xi \rangle \right| &\leq C \| g_1 \| \| g_2 \| \| (\cN+1)^{1/2} \xi \|^2 \\
\left| \langle \xi ,   \partial_t \big( e^{-B(\eta_t)} (\cN^2/N) e^{B(\eta_t)}\big) \xi \rangle \right| &\leq C e^{c|t|} \| (\cN+1)^{1/2} \xi \|^2  
\end{split} \]
This concludes the proof of the proposition. 
\end{proof}

\subsection{Analysis of $e^{-B(\eta_t)} \cL^{(1)}_{N,t} e^{B(\eta_t)}$}
\label{sec:L1}

We recall that 
\[ \cL^{(1)}_{N,t} = \sqrt{N} b(h_{N,t}) - \frac{\cN+1}{\sqrt{N}} b (\wt{h}_{N,t})  + \text{h.c.} \]
where we used the notation $h_{N,t} = (N^3 V(N.)w_\ell (N.) * 
|\wtph_t|^2)\wtph_t$ and $\wt{h}_{N,t} = (N^3 V(N.) *|\wtph_t|^2) \wtph_t$. 
We write 
\begin{equation}\label{eq:eL1e} e^{-B(\eta_t)} \cL^{(1)}_{N,t} e^{B(\eta_t)} = \sqrt{N} \left[ b (\cosh_{\eta_t} (h_{N,t})) + b^* (\sinh_{\eta_t} (\bar{h}_{N,t})) + \text{h.c.} \right] + \cE^{(1)}_{N,t} 
\end{equation}
In the next proposition we show that the operator $\cE^{(1)}_{N,t}$, defined in (\ref{eq:eL1e}), its commutator with $\cN$ and its time-derivative can all be controlled by the number of particles operator $\cN$ (while the first term on the r.h.s. of (\ref{eq:eL1e}) will cancel with contributions arising from conjugation of $\cL^{(3)}_{N,t}$).
\begin{prop}\label{prop:L1}
Under the same assumptions as in Theorem \ref{thm:gene}, there exist constants $C, c >0$ such that
\begin{equation}\label{eq:bds-L1} 
\begin{split} 
\left| \langle \xi, \cE^{(1)}_{N,t} \xi \rangle \right| &\leq C  \langle \xi, (\cN+1) \xi \rangle \\
\left| \langle \xi, \left[ \cN, \cE^{(1)}_{N,t} \right] \xi \rangle \right| &\leq C  \langle \xi, (\cN+1) \xi \rangle \\
\left| \langle \xi, \left[ a^* (g_1) a(g_2),  \cE^{(1)}_{N,t} \right] \xi \rangle \right| &\leq C  \| g_1 \| \| g_2 \| \langle \xi, (\cN+1) \xi \rangle \\
\left| \partial_t \langle \xi, \cE^{(1)}_{N,t} \xi \rangle \right| &\leq C e^{c|t|} \langle \xi, (\cN+1) \xi \rangle
\end{split} \end{equation}
for all $\xi \in \cF^{\leq N}$. 
\end{prop}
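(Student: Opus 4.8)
The plan is to conjugate separately each of the four summands making up $\cL^{(1)}_{N,t} = \sqrt{N}\,b(h_{N,t}) - \frac{\cN+1}{\sqrt N}\,b(\wt{h}_{N,t}) + \text{h.c.}$, using the absolutely convergent series of Lemma~\ref{lm:conv-series} together with the structural description of the nested commutators $\text{ad}^{(n)}_{B(\eta_t)}(b(f))$ and $\text{ad}^{(n)}_{B(\eta_t)}(b^*(f))$ given by Lemma~\ref{lm:indu} and its Remark (here $\|\eta_t\|_2$ is small by the choice of $\ell$). Two facts will be used repeatedly. First, $\| h_{N,t}\|_2, \| \wt h_{N,t}\|_2 \leq C$ and $\| \partial_t h_{N,t}\|_2, \| \partial_t\wt h_{N,t}\|_2 \leq C e^{c|t|}$, uniformly in $N$: by Young's and Hölder's inequalities $\| h_{N,t}\|_2 \leq \| N^3 V(N.) w_\ell(N.)\|_1\,\| \wtph_t\|_6^{3} \leq C$ (using $w_\ell \leq 1$ and Proposition~\ref{prop:phph}~(i)), and the time-derivative bounds follow likewise from Proposition~\ref{prop:phph}~(ii)--(iii). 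Second, the $\frac{\cN+1}{\sqrt N}b(\wt h_{N,t})$ term is harmless: writing $X = e^{-B(\eta_t)}b(\wt h_{N,t})e^{B(\eta_t)}$, $Y = e^{-B(\eta_t)}\frac{\cN}{\sqrt N}e^{B(\eta_t)}$ and using $(\cN+1)b(\wt h_{N,t}) = b(\wt h_{N,t})\cN$, one has $e^{-B(\eta_t)}\frac{\cN+1}{\sqrt N}b(\wt h_{N,t})e^{B(\eta_t)} = XY$; then $\| Y\xi\| = N^{-1/2}\| \cN e^{B(\eta_t)}\xi\| \leq \| \cN^{1/2}e^{B(\eta_t)}\xi\| \leq C\| \cN^{1/2}\xi\|$ on $\cF^{\leq N}$ (using $\| \cN\psi\| \leq \sqrt N\| \cN^{1/2}\psi\|$ there, unitarity of $e^{B(\eta_t)}$, and Lemma~\ref{lm:Npow}), while $\| X^*\xi\| = \| b^*(\wt h_{N,t})e^{B(\eta_t)}\xi\| \leq \| \wt h_{N,t}\|_2\| (\cN+1)^{1/2}e^{B(\eta_t)}\xi\| \leq C\| \wt h_{N,t}\|_2\| (\cN+1)^{1/2}\xi\|$ again by Lemma~\ref{lm:Npow}; Cauchy--Schwarz gives the required bound, and the adjoint term is treated identically.

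The heart of the matter is the conjugation of $\sqrt N\,b(h_{N,t})$. Expanding $\sqrt N\,e^{-B(\eta_t)}b(h_{N,t})e^{B(\eta_t)} = \sqrt N\sum_{n\geq0}\frac{(-1)^n}{n!}\text{ad}^{(n)}_{B(\eta_t)}(b(h_{N,t}))$, I would peel off from each $\text{ad}^{(n)}$ the single distinguished summand of Lemma~\ref{lm:indu}~(iv): it equals $\big(\frac{N-\cN}{N}\big)^{n/2}\big(\frac{N+1-\cN}{N}\big)^{n/2}b(\eta_t^{(n)}(h_{N,t}))$ for $n$ even and $-\big(\frac{N-\cN}{N}\big)^{(n+1)/2}\big(\frac{N+1-\cN}{N}\big)^{(n-1)/2}b^*(\eta_t^{(n)}(\bar h_{N,t}))$ for $n$ odd. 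Summing over $n$ and recalling $\cosh_{\eta_t} = \sum_\ell \frac{(\eta_t\bar\eta_t)^\ell}{(2\ell)!}$, $\sinh_{\eta_t} = \sum_\ell \frac{(\eta_t\bar\eta_t)^\ell\eta_t}{(2\ell+1)!}$ together with $\eta_t^{(2\ell)} = (\eta_t\bar\eta_t)^\ell$, $\eta_t^{(2\ell+1)} = (\eta_t\bar\eta_t)^\ell\eta_t$, one checks that these terms reassemble exactly $\sqrt N\big[b(\cosh_{\eta_t}(h_{N,t})) + b^*(\sinh_{\eta_t}(\bar h_{N,t}))\big]$ — the main term on the r.h.s.\ of (\ref{eq:eL1e}) — but with every operator factor $\big(\frac{N-\cN}{N}\big)^a\big(\frac{N+1-\cN}{N}\big)^b$ replaced by $1$.

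It then remains to see that everything left over is controlled by $\cN$. The contribution of the distinguished terms to $\cE^{(1)}_{N,t}$ is $\sqrt N$ times a sum of terms $\big[1-\big(\frac{N-\cN}{N}\big)^a\big(\frac{N+1-\cN}{N}\big)^b\big]b^\sharp(\eta_t^{(n)}(\cdot))$ with $a+b=n$; on $\cF^{\leq N}$ the bracket is a bounded function of $\cN$ with $\big|1-\big(\frac{N-\cN}{N}\big)^a\big(\frac{N+1-\cN}{N}\big)^b\big| \leq C n(\cN+1)/N$ for $n\leq N$ (for $n>N$ a crude bound suffices thanks to the factor $1/n!$), so, since this function commutes with $\cN$ and $\| b^\sharp(\eta_t^{(n)}(g))\xi\| \leq \| \eta_t\|_2^{\,n}\| g\|_2\| (\cN+1)^{1/2}\xi\|$ (Lemma~\ref{lm:bbds}), while $(\cN+1)^2 \leq (N+1)(\cN+1)$ on $\cF^{\leq N}$, each such term is bounded in the quadratic-form sense by $C n\| \eta_t\|_2^{\,n}\| h_{N,t}\|_2\,(\cN+1)$, and $\sum_n \frac{n}{n!}\| \eta_t\|_2^{\,n} < \infty$. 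Every one of the remaining $2^n n!-1$ summands of $\text{ad}^{(n)}_{B(\eta_t)}(b(h_{N,t}))$ has the form (\ref{eq:Lambdas}) with either a $\Pi^{(2)}$-factor among the $\Lambda$'s or a power $N^{-k}$, $k\geq1$, so Lemma~\ref{lm:prelim}~(i), estimate (\ref{eq:prel-i2}), supplies the extra $N^{-1/2}$ that cancels the prefactor $\sqrt N$; its form-contribution is $\leq C^n\| \eta_t\|_2^{\,n}\| h_{N,t}\|_2\,(\cN+1)$, summable for $\| \eta_t\|_2$ small. The $b^*(h_{N,t})$ (i.e.\ $\text{h.c.}$) part is identical by the Remark after Lemma~\ref{lm:indu}.

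For the second and third bounds in (\ref{eq:bds-L1}) I would commute $\cN$, resp.\ $a^*(g_1)a(g_2)$, into all the monomials above using Lemma~\ref{lm:prel1}: each commutator becomes a sum of $O(n)$ terms of exactly the same shape with a single kernel replaced by one obeying (\ref{eq:hkbds-l2})--(\ref{eq:hkbds-pw}) (producing the factor $\| g_1\|_2\| g_2\|_2$), so the estimates above carry over verbatim, while the distinguished terms are handled using that $b^\sharp(\eta_t^{(n)}(\cdot))$ has a definite $\cN$-degree. For the fourth bound I would differentiate in $t$ via the term-by-term $\partial_t$-expansion (\ref{eq:cv-ser}) and Lemma~\ref{lm:prel1}~(iv); $\partial_t$ lands on $\eta_t$, on $h_{N,t}$/$\wt h_{N,t}$, or on $\wtph_t$ inside $\cosh_{\eta_t}/\sinh_{\eta_t}$, and the resulting kernels are controlled by Lemma~\ref{lm:eta}~(iv)--(vi) and the preliminary bounds on $\partial_t h_{N,t},\partial_t\wt h_{N,t}$, at the price of a factor $e^{c|t|}$. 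The main obstacle is the bookkeeping in the core step: correctly isolating the Lemma~\ref{lm:indu}~(iv) summand that rebuilds the $\cosh_{\eta_t}/\sinh_{\eta_t}$ main term, and proving that $1-\big(\frac{N-\cN}{N}\big)^a\big(\frac{N+1-\cN}{N}\big)^b$ really is of size $n(\cN+1)/N$ on $\cF^{\leq N}$, so that it and Lemma~\ref{lm:prelim}'s $N^{-1/2}$ absorb the dangerous factor $\sqrt N$ uniformly in $n$ — this is precisely the point where the truncated/generalized setting differs from the exact Bogoliubov identity (\ref{eq:act-Bog}).
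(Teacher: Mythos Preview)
Your proposal is correct and follows essentially the same approach as the paper: expand $e^{-B(\eta_t)}b(h_{N,t})e^{B(\eta_t)}$ via Lemma~\ref{lm:conv-series}, isolate the distinguished summand of Lemma~\ref{lm:indu}(iv) to rebuild $b(\cosh_{\eta_t}(h_{N,t}))+b^*(\sinh_{\eta_t}(\bar h_{N,t}))$, control the correction $[1-(\tfrac{N-\cN}{N})^a(\tfrac{N+1-\cN}{N})^b]$ by $Cn(\cN+1)/N$ (the paper writes this via $1-(1-x)^r\le rx$), and invoke (\ref{eq:prel-i2}) for the remaining $2^nn!-1$ monomials to gain the $N^{-1/2}$ that cancels $\sqrt N$.

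The one organisational difference is your treatment of the $\frac{\cN+1}{\sqrt N}b(\wt h_{N,t})$ term. The paper packages everything into $D(g)=e^{-B(\eta_t)}b(g)e^{B(\eta_t)}-b(\cosh_{\eta_t}(g))-b^*(\sinh_{\eta_t}(\bar g))$, proves the four bounds (\ref{eq:bg}) for $D(g)$ once, and then handles the second piece by applying (\ref{eq:bg}) with $g=\wt h_{N,t}$ and the shifted vector $\xi_2=e^{-B(\eta_t)}(\cN+1)e^{B(\eta_t)}\xi$, controlling $\|\xi_2\|$ and $\|(\cN+1)^{1/2}\xi_2\|$ via Lemma~\ref{lm:Npow}. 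Your $XY$ factorisation is a quicker route for the expectation bound, but note that you do not explain how it extends to $[\cN,\,\cdot\,]$, $[a^*(g_1)a(g_2),\,\cdot\,]$ and $\partial_t$: since $e^{\pm B(\eta_t)}$ does not commute with $\cN$, Leibniz on $XY$ is not immediate, and one is effectively pushed back to the monomial expansion (i.e.\ to the paper's $D(g)$ argument) to treat those three bounds uniformly. This is a small gap, easily closed, but worth being explicit about.
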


\begin{proof}
We start with the observation that
\begin{equation}\label{eq:bd-hNt} \begin{split} \| h_{N,t} \|, \| \wt{h}_{N,t} \| &\leq 
C \| \wtph_t \|_{H^1}^3 \leq C \\  
\| \partial_t h_{N,t} \|, \| \partial_t \wt{h}_{N,t} \| &\leq \| \wtph_t \|_{H^1}^2 \| \wtph_t \|_{H^3} \leq C e^{c|t|} \end{split} \end{equation}
uniformly in $N$ and for all $t \in \bR$. Recall that, by (\ref{eq:eL1e}), 
\begin{equation}\label{eq:cE1-bd} \begin{split}  \cE^{(1)}_{N,t}  = \; &\left[ e^{-B(\eta_t)} \cL^{(1)}_{N,t} e^{B(\eta_t)} - \sqrt{N} \left( b(\cosh_{\eta_t} (h_{N,t}) + b^* (\sinh_{\eta_t} (h_{N,t}) + \text{h.c.} \right) \right] \\ = \; &\sqrt{N} \left[ e^{-B(\eta_t)} b(h_{N,t}) e^{B(\eta_t)} -  \left( b(\cosh_{\eta_t} (h_{N,t}) + b^* (\sinh_{\eta_t} (h_{N,t}) \right) \right] +\text{h.c.}  \\ &+  N^{-1/2} e^{-B(\eta_t)} (\cN+1) b (\wt{h}_{N,t}) e^{B(\eta_t)} \end{split} \end{equation}
Set \[ D(g) =  e^{-B(\eta_t)} b(g) e^{B(\eta_t)} - b(\cosh_{\eta_t} (g)) - b^* (\sinh_{\eta_t} (g)) \]
We observe that Proposition \ref{prop:L1} follows if we prove that 
\begin{equation} \label{eq:bg} 
\begin{split} 
| \langle \xi_1, D(g) \xi_2 \rangle | &\leq C N^{-1/2} \| g \| \| (\cN+1)^{1/2} \xi_1 \| \| (\cN+1)^{1/2} \xi_2 \| \\
| \langle \xi_1, [\cN, D(g)] \xi_2 \rangle | &\leq C N^{-1/2} \| g \| \| (\cN+1)^{1/2} \xi_1 \| \| (\cN+1)^{1/2} \xi_2 \| \\
| \langle \xi_1, [a^* (g_1) a (g_2) , D(g)] \xi_2 \rangle | &\leq C N^{-1/2} \| g \| \| g_1 \| \| g_2 \| \| (\cN+1)^{1/2} \xi_1 \| \| (\cN+1)^{1/2} \xi_2 \| \\
| \langle \xi_1, \partial_t D(g) \xi_2 \rangle | &\leq C N^{-1/2} (\| g \| + \| \dot{g} \|) \| (\cN+1)^{1/2} \xi_1 \| \| (\cN+1)^{1/2} \xi_2 \|
\end{split} \end{equation}
for every, possibly time-dependent, $g \in L^2 (\bR^3)$. In fact, applying (\ref{eq:bg}) with $g= h_{N,t}$, we obtain the desired bounds for the first line on the r.h.s. of (\ref{eq:cE1-bd}). To bound the expectation of the operator on the second line on the r.h.s. of (\ref{eq:cE1-bd}), on the other hand, we apply (\ref{eq:bg}) with $g= \wt{h}_{N,t}$, $\xi_1 = \xi$ and $\xi_2 = e^{-B(\eta_t)} (\cN+1) e^{B(\eta_t)} \xi$. We find 
\begin{equation}\label{eq:cE1-fin} 
\begin{split}
&N^{-1/2} \left| \langle \xi , e^{-B(\eta_t)} (\cN+1) b (\wt{h}_{N,t}) e^{B(\eta_t)} \xi \rangle \right| \\ &\hspace{1cm} = N^{-1/2} \left| \langle \xi_2, e^{-B(\eta_t)} b (\wt{h}_{N,t}) e^{B(\eta_t)} \xi \rangle \right| \\ &\hspace{1cm} \leq N^{-1/2} \left| \langle \xi_2, \left[ b(\cosh_{\eta_t} (\wt{h}_{N,t})) + b^* (\sinh_{\eta_t} (\wt{h}_{N,t})) \right] \xi \rangle \right| \\ &\hspace{1.3cm} + C N^{-1} \| \wt{h}_{N,t} \| \| (\cN+1)^{1/2} \xi \| \| (\cN+1)^{1/2} \xi_2 \| \\
&\hspace{1cm}  \leq C N^{-1/2} \|  (\cN+1)^{1/2} \xi \| \| \xi_2 \| + C N^{-1}\| (\cN+1)^{1/2} \xi \| \| (\cN+1)^{1/2} \xi_2 \| \end{split}\end{equation}
where we used Lemma \ref{lm:bbds}, the fact that $\cosh_{\eta_t}, \sinh_{\eta_t}$ are bounded operators (uniformly in $t$ and $N$), and (\ref{eq:bd-hNt}). {F}rom Lemma \ref{lm:Npow}, we obtain
\[ \| \xi_2 \|^2 = \langle \xi, e^{-B(\eta_t)} (\cN+1)^2 e^{B(\eta_t)} \xi \rangle \leq C \langle \xi , (\cN+1)^2 \xi \rangle = C \| (\cN+1) \xi \|^2 \]
and, similarly,
\[ \begin{split} \| (\cN+1)^{1/2} \xi_2 \|^2 &= \langle \xi, e^{-B(\eta_t)} (\cN+1) e^{B(\eta_t)} (\cN+1) e^{-B(\eta_t)} (\cN+1) e^{B(\eta_t)} \xi \rangle \\ &\leq C \langle \xi , e^{-B(\eta_t)} (\cN+1)^3 e^{B(\eta_t)}  \xi \rangle \\ &\leq C \langle \xi,  (\cN+1)^3 \xi \rangle = C \| (\cN+1)^{3/2} \xi \|^2 \end{split} \]
Inserting the last two bounds in the r.h.s. of (\ref{eq:cE1-fin}), we conclude that 
\[  N^{-1/2} \left| \langle \xi , e^{-B(\eta_t)} (\cN+1) b (\wt{h}_{N,t}) e^{B(\eta_t)} \xi \rangle \right| \leq 
C \|  (\cN+1)^{1/2} \xi \|^2   \]
for all $\xi \in \cF^{\leq N}$. Similarly, we can control the commutator of the second line on the r.h.s. of (\ref{eq:cE1-bd}) with $\cN$ and with $a^* (g_1) a(g_2)$ and its time-derivative. 

We still have to show (\ref{eq:bg}). To this end, we use Lemma \ref{lm:conv-series} to expand 
\begin{equation}\label{eq:ser-bg} e^{-B(\eta_t)} b(g) e^{B(\eta_t)}  = \sum_{n \geq 0} \frac{(-1)^n}{n!} \text{ad}_{B(\eta_t)}^{(n)} (b(g)) \end{equation}
According to Lemma \ref{lm:indu}, the nested commutator $\text{ad}_{B(\eta_t)}^{(n)} (b(g))$ can be written as a sum of $2^n n!$ terms, having the form
\begin{equation}\label{eq:typadn} \Lambda_1 \dots \Lambda_i N^{-k} \Pi^{(1)}_{\sharp,\flat} (\eta^{(j_1)}_{t,\natural_1}, \dots , \eta^{(j_k)}_{t,\natural_k}  ; \eta^{(s)}_{t,\natural_{k+1}} (g_\Delta)) \end{equation}
where each $\Lambda_m$ is either $(N-\cN)/N$, $(N-\cN+1)/N$ or a $\Pi^{(2)}$-operator of the form
\begin{equation}\label{eq:Pi2-bg} N^{-p} \Pi^{(2)}_{\sharp',\flat'} (\eta^{(m_1)}_{t,\natural'_1}, \dots , \eta^{(m_p)}_{t,\natural'_p}) \end{equation}
Exactly one of these $2^n n!$ terms has the form
\begin{equation}\label{eq:chsh-term} \left\{ \begin{array}{ll} \frac{(N - \cN)^r}{N^r} \frac{(N+1-\cN)^r}{N^r} b (\eta_t^{(2r)} (g)) &\quad \text{if } n = 2r \text{ is even} \\
-\frac{(N - \cN)^{r+1}}{N^{r+1}} \frac{(N+1-\cN)^r}{N^r} b^* (\eta_t^{(2r+1)} (\bar{g})) &\quad \text{if } n = 2r +1 \text{ is odd} \end{array} \right. \end{equation}
All other terms are of the form (\ref{eq:typadn}), with either $k > 0$ or with at least one factor $\Lambda_i$ being of the form (\ref{eq:Pi2-bg}). Let us suppose that $n=2r$ is even. Then we write (\ref{eq:chsh-term}) as
\begin{equation}\label{eq:ch-term}
\begin{split}  \frac{(N - \cN)^r}{N^r} &\frac{(N+1-\cN)^r}{N^r} b (\eta_{t}^{(2r)} (g))  \\ &= b(\eta_{t}^{(2r)} (g)) + \left[ \frac{(N - \cN)^r}{N^r} \frac{(N+1-\cN)^r}{N^r} - 1 \right] b (\eta_{t}^{(2r)} (g)) \end{split} \end{equation}
Inserting the term $b (\eta^{(2r)}_t (g))$ on the r.h.s. of (\ref{eq:ser-bg}) and summing over all $r \in \bN$, we reconstruct 
\[ \sum_{r \geq 0} \frac{1}{(2r)!} b (\eta_t^{(2r)} (g)) = b (\cosh_{\eta_t} (g)) \]
On the other hand, the contribution of the second term on the r.h.s. of (\ref{eq:ch-term}) has matrix elements bounded by 
\begin{equation}\label{eq:ch-term2}
\begin{split} &\left| \langle \xi_1 , \left[ \frac{(N - \cN)^r}{N^r} \frac{(N+1-\cN)^r}{N^r} - 1 \right] b (\eta_{t}^{(2r)} (g)) \xi_2 \rangle \right| \\ &\hspace{2cm} \leq  \left\| \left[ \frac{(N - \cN)^r}{N^r} \frac{(N+1-\cN)^r}{N^r} - 1 \right] \xi_1 \right\| \| b (\eta^{(2r)}_t (g)) \xi_2 \| \\ &\hspace{2cm} \leq 2r N^{-1/2}  \| \eta_t \|^{2r} \| g \|  \| (\cN+1)^{1/2} \xi_1 \| \| (\cN+1)^{1/2} \xi_2 \| \end{split} \end{equation}
since $1-(1-x)^r \leq r x$ for all $0 \leq x \leq 1$. Similarly, the contribution (\ref{eq:chsh-term}) with $n=2r+1$ odd can be shown to reconstruct the operator $b^* (\sinh_{\eta_t} (\bar{g}))$, up to an error that can be estimated as in (\ref{eq:ch-term2}).

As for the other terms of the form (\ref{eq:typadn}), excluding (\ref{eq:chsh-term}), we can bound their matrix elements using part i) of Lemma \ref{lm:prelim}. We obtain
\begin{equation}\label{eq:case-Pi2} \begin{split} & \left| \langle \xi_1, \Lambda_1 \dots \Lambda_i N^{-k} \Pi^{(1)}_{\sharp,\flat} (\eta_{t,\natural_1}^{(j_1)}, \dots , \eta_{t,\natural_k}^{(j_k)} ; \eta_{t,\natural_{k+1}}^{(s)}) \xi_2 \rangle \right| \\ & \hspace{2cm} \leq \| (\cN+1)^{1/2} \xi_1 \| \left\| (\cN+1)^{-1/2} \, N^{-k} \Pi^{(1)}_{\sharp,\flat} (\eta_{t,\natural_1}^{(j_1)}, \dots , \eta_{t,\natural_k}^{(j_k)} ; \eta_{t,\natural_{k+1}}^{(s)} (g_\Delta)) \xi_2 \right\| \\ & \hspace{2cm} \leq C^n \| \eta_t \|^n N^{-1/2} 
 \| g \| \| (\cN+1)^{1/2} \xi_1 \| \| (\cN+1)^{1/2} \xi_2 \|  
\end{split} \end{equation}
We conclude that  
\begin{equation}\label{eq:ebe-} \begin{split} &\left| \langle \xi_1 , \left\{ e^{-B(\eta_t)} b(g) e^{B(\eta_t)} - b(\cosh_{\eta_t} (g)) - b^* (\sinh_{\eta_t}) (\bar{g})) \right\} \xi_2 \rangle  \right| \\ &\hspace{3cm} \leq N^{-1/2} \| g \| \| (\cN+1)^{1/2} \xi_1 \| \| (\cN+1)^{1/2}\xi_2 \|  \sum_{n \geq 2} n C^n \| \eta_t \|^n  \\ &\hspace{3cm} \leq C N^{-1/2} \| g \| 
\| (\cN+1)^{1/2} \xi_1 \| \| (\cN+1)^{1/2} \xi_2 \|  \end{split} \end{equation}
if the parameter $\ell > 0$ in the definition (\ref{eq:etat}) of the kernel $\eta_t$ is small enough. 

Since, by Lemma \ref{lm:prel1}, part i), the commutator of every term of the form (\ref{eq:typadn}) with $\cN$ is again a term of the same form, just multiplied with a constant $\kappa \in \{0, \pm 1 , \pm 2\}$, we conclude that
\begin{equation}
\label{eq:nebe-} \begin{split} &\left| \langle \xi_1 , \left[ \cN ,  \left\{ e^{-B(\eta_t)} b(g) e^{B(\eta_t)} - b(\cosh_{\eta_t} (g)) - b^* (\sinh_{\eta_t}) (\bar{g})) \right\} \right] \xi_2 \rangle  \right| \\ &\hspace{3cm} \leq C N^{-1/2} \| g \| \| (\cN+1)^{1/2} \xi_1 \| \| (\cN+1)^{1/2} \xi_2 \|  \end{split} \end{equation}
Since, again by Lemma \ref{lm:prel1}, part ii) and iii), the commutator of every term of the form (\ref{eq:typadn}) with $a^* (g_1) a(g_2)$ can be written as a sum of at most $2n$ terms having again the form (\ref{eq:typadn}), just with one of the $\eta_t$-kernels or with the function $\eta_{t,\natural_{k+1}}^{(s)} (g_\Delta)$ appearing in the $\Pi^{(1)}$-operator replaced according to (\ref{eq:hell-poss}) and (\ref{eq:hellk-poss}), we also find that
\begin{equation}\label{eq:aaebe-}  \begin{split} &\left| \langle \xi_1 , \left[ a^* (g_1) a(g_2) ,  \left\{ e^{-B(\eta_t)} b(g) e^{B(\eta_t)} - b(\cosh_{\eta_t} (g)) - b^* (\sinh_{\eta_t}) (\bar{g})) \right\} \right] \xi_2 \rangle  \right| \\ &\hspace{3cm} \leq C N^{-1/2} \| g \| \| g_1 \| \| g_2 \| \| (\cN+1)^{1/2} \xi_1 \| \| (\cN+1)^{1/2} \xi_2 \|  \end{split} \end{equation}
Finally, since by Lemma \ref{lm:prel1}, part iv), the time-derivative of each term of the form (\ref{eq:typadn}) can be written as a sum of at most $(n+1)$ terms having again the form (\ref{eq:typadn}), but with one of the $\eta_t$-kernels or the function $\eta_{t,\natural_{k+1}}^{(s)} (g_\Delta)$ appearing in the $\Pi^{(1)}$-operator replaced by their time-derivative, we get (since $\| \dot{\eta}_t \| \leq C e^{c|t|}$) 
\begin{equation}\label{eq:parebe-}  \begin{split} &\left|\partial_t \langle \xi_1 , \left[ e^{-B(\eta_t)} b(g) e^{B(\eta_t)} - b(\cosh_{\eta_t} (g)) - b^* (\sinh_{\eta_t}) (\bar{g})) \right] \xi_2 \rangle  \right| \\ &\hspace{3cm} \leq C N^{-1/2} e^{c|t|}  (\| g \| + \| \dot{g} \|) \| (\cN+1)^{1/2} \xi_1 \| \| (\cN+1)^{1/2} \xi_2 \|  \end{split} \end{equation}

\end{proof}

\subsection{Analysis of $e^{-B(\eta_t)} \cL_{N,t}^{(2)} e^{B(\eta_t)}$}
\label{sec:L2}

Recall that 
\begin{equation}\label{eq:L2-b} \begin{split} \cL_{N,t}^{(2)} = \; &\cK + \int dx dy \, N^3 V(N(x-y))|\wtph_t (y)|^2 \left[ b_x^* b_x - \frac{1}{N} a_x^* a_x \right]  \\ &
+ \int dx dy \, N^3 V(N (x-y)) \wtph_t (x) \bar{\wtph}_t (y) \left[ b_x^* b_y - \frac{1}{N} a_x^* a_y \right] \\ &+ \frac{1}{2} \int dx dy N^3 V(N(x-y)) \left[ \wtph_t (x) \wtph_t (y) b_x^* b_y^* + \text{h.c.} \right] \end{split} \end{equation}
with the notation 
\[ \cK = \int dx \, \nabla_x a_x^* \nabla_x a_x \]
for the kinetic energy operator. 

In the next two subsections we consider first the conjugation of the kinetic energy operator and then of the rest of $\cL_{N,t}^{(2)}$ with $e^{B(\eta_t)}$.

\subsubsection{Analysis of $e^{-B(\eta_t)} \cK e^{B(\eta_t)}$}

We write 
\begin{equation}\label{eq:conj-K} \begin{split} 
e^{-B(\eta_t)} \cK e^{B(\eta_t)} = &\; \cK + \int |\nabla_x k_t (x;y)|^2 dx dy \\ &+ 
\int dx dy \, (\Delta w_\ell) (N(x-y)) \, \left[ \wtph_t (x) \wtph_t (y) b_x^* b_y^* + \text{h.c.} \right]  \\ &+ \cE_{N,t}^{(K)} \end{split} 
\end{equation}
In the next proposition, we collect important properties of the error term $\cE_{N,t}^{(K)}$ defined in (\ref{eq:conj-K}). 
\begin{prop}\label{prop:cEK}
Under the same assumptions as in Theorem \ref{thm:gene}, there exist constants $C, c >0$ such that
\begin{equation}\label{eq:cEK} \begin{split} 
\left| \langle \xi, \cE^{(K)}_{N,t} \xi \rangle \right| &\leq C e^{c|t|} \| (\cH_N+\cN+1)^{1/2} \xi \| \| (\cN+1)^{1/2} \xi \| \\
\left| \langle \xi, \left[ \cN , \cE^{(K)}_{N,t} \right] \xi \rangle \right| &\leq C e^{c|t|} \| (\cH_N+\cN+1)^{1/2} \xi \| \| (\cN+1)^{1/2} \xi \| \\
\left| \langle \xi, \left[ a^* (g_1) a(g_2) , \cE^{(K)}_{N,t} \right]  \xi \rangle \right| &\leq C e^{c|t|} \| g_1 \|_{H^1} \| g_2 \|_{H^1} \| (\cH_N+\cN+1)^{1/2} \xi \| \| (\cN+1)^{1/2} \xi \| \\
\left| \partial_t \langle \xi, \cE^{(K)}_{N,t} \xi \rangle \right| &\leq C e^{c|t|} \| (\cH_N+\cN+1)^{1/2} \xi \| \| (\cN+1)^{1/2} \xi \| \end{split} \end{equation}
where we used the notation $\cH_N = \cK + \cV_N$, with 
\begin{equation}\label{eq:cVN} \cV_N = \frac{1}{2} \int dx dy N^2 V(N(x-y)) a_x^* a_y^* a_y a_x \end{equation}
\end{prop}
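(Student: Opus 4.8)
The plan is to conjugate the kinetic energy through Duhamel's formula, to read off the two explicit terms on the right-hand side of (\ref{eq:conj-K}) with the help of the Neumann scattering equation (\ref{eq:scatl}), and to check that what remains is an error of the size stated in (\ref{eq:cEK}). I would begin with
\[ e^{-B(\eta_t)} \cK e^{B(\eta_t)} = \cK + \int_0^1 ds \, e^{-sB(\eta_t)} [\cK, B(\eta_t)] e^{sB(\eta_t)}. \]
Since $\cK$ commutes with $\cN$ and $[\cK, b_x^*] = (-\Delta_x) b_x^*$ as a distribution, an integration by parts gives $[\cK, B(\eta_t)] = \int dx dy \, (-\Delta_x \eta_t)(x;y) \, b_x^* b_y^* - \text{h.c.}$, a quadratic expression in the $b$-fields with kernel $-\Delta_1 \eta_t$. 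I would then expand $e^{-sB(\eta_t)} b_x^{*} e^{sB(\eta_t)}$, $e^{-sB(\eta_t)} b_y^{*} e^{sB(\eta_t)}$ and their conjugates using Lemma \ref{lm:conv-series} and the structural description of nested commutators in Lemma \ref{lm:indu}, turning $e^{-sB(\eta_t)}[\cK, B(\eta_t)] e^{sB(\eta_t)}$ into an absolutely convergent sum ordered by the number of conjugations.

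\textbf{Extracting the two explicit terms.} Writing $\eta_t = k_t + \mu_t$ and using (\ref{eq:scatl}) in the form $-\Delta w_\ell = -\tfrac12 V f_\ell + \lambda_\ell f_\ell$ on the ball, I would rewrite $-\Delta_1 k_t(x;y)$ as the sum of its leading singular kernel, equal after symmetrization to $N^3 (\Delta w_\ell)(N(x-y)) \wtph_t (x) \wtph_t (y)$ --- precisely the quadratic term displayed in (\ref{eq:conj-K}) --- plus ``cross'' kernels in which a gradient or a Laplacian falls on a factor $\wtph_t$; the contribution of $\mu_t$ is harmless since $\Delta_j \mu_t$ has bounded $L^2$-norm by Lemma \ref{lm:eta}. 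The zeroth-order term of the $b$-field expansion reconstructs $[\cK, B(\eta_t)]$ itself, hence the explicit quadratic term of (\ref{eq:conj-K}) after subtraction. The $c$-number $\int |\nabla_x k_t (x;y)|^2 dx dy$ arises instead from the first-order term: a single contraction $[b_z, b_y^*] \simeq \delta(z-y)$ produces, after one more integration by parts, $\int (-\Delta_1 \eta_t)(x;y)\overline{\eta_t}(x;y)\, dx dy \,(1-\cN/N) = \int|\nabla_1 \eta_t|^2 (1-\cN/N)$, and replacing $\eta_t$ by $k_t$ here costs only bounded (and $\cN$-bounded) errors by Lemma \ref{lm:eta}. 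Everything else --- the $\mu_t$-part and the cross kernels of $[\cK, B(\eta_t)]$, the remaining first-order operator terms, all higher-order terms, and the $s$-integral of their conjugates --- is collected into $\cE^{(K)}_{N,t}$.

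\textbf{The bounds, and why $\cH_N$ appears.} The delicate ingredients are the kernels $N^2 (\nabla w_\ell)(N(x-y))$ and $N^3 (V f_\ell)(N(x-y))$, whose $L^2$-norms grow like $\sqrt N$ and $N^{3/2}$ and which therefore cannot be absorbed by $\cN$. For the first I would use $N^2(\nabla w_\ell)(N(x-y)) = N\nabla_x [w_\ell(N(x-y))]$, integrate by parts in $x$, and absorb the resulting factor $\nabla_x b_x$ into $\cK^{1/2}$, keeping all operators normally ordered so that creation operators act on the left vector as annihilation operators and no divergent self-contraction appears; analogously, $N^{3/2} V(N\cdot)^{1/2}$-type half-kernels are absorbed into $\cV_N^{1/2}$. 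This is exactly why the full Hamiltonian $\cH_N = \cK + \cV_N$, and not $\cN$ alone, appears on the right of (\ref{eq:cEK}). The $\eta_t$-, $k_t$- and $\mu_t$-kernels that are left over are then estimated via their norms from Lemma \ref{lm:eta} --- in particular the smallness of $\sup_t \|\eta_t\|_2$ for $\ell$ small, which makes the series converge and supplies the geometric decay --- using Lemma \ref{lm:Bbds}, Lemma \ref{lm:Pi-bds} and Lemma \ref{lm:prelim} for the $\Pi$-operators, and Lemma \ref{lm:Npow} to commute $e^{\pm sB(\eta_t)}$ through powers of $\cN$.

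\textbf{The remaining estimates and the main obstacle.} The bounds for $[\cN, \cE^{(K)}_{N,t}]$, for $[a^*(g_1) a(g_2), \cE^{(K)}_{N,t}]$ and for $\partial_t \cE^{(K)}_{N,t}$ follow along the same lines: by Lemma \ref{lm:prel1}, commuting a $\Pi$-operator with $\cN$ or with $a^*(g_1) a(g_2)$, as well as differentiating it in $t$, again produces a finite sum of $\Pi$-operators with a single kernel (or the trailing function) replaced --- the replacement involves $g_1, g_2$ for the $a^* a$-commutator, which explains why $\|g_i\|_{H^1}$ rather than $\|g_i\|_2$ enters (a derivative may land on $g_i$), and the $t$-derivatives of $\eta_t, k_t, \mu_t$ and $\wtph_t$ for $\partial_t$, which by Lemma \ref{lm:eta}(iv)--(vi) and Proposition \ref{prop:phph} satisfy the same estimates up to a factor $e^{c|t|}$. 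The main obstacle I expect is the short-distance analysis of $\nabla k_t$: distributing ``half a derivative'' onto $\cK^{1/2}$ by integration by parts while respecting normal ordering, and at the same time extracting the contraction that rebuilds $\int|\nabla k_t|^2$ cleanly enough that the leftover is genuinely of lower order than $\cH_N + \cN$ --- the bookkeeping of $N$-powers in the interplay between the cross terms, the first-order correction, and the subtracted $c$-number is where the real work lies.
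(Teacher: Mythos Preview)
Your approach is correct and coincides with the paper's: Duhamel for $\cK$, series expansion of the conjugated commutator via Lemmas~\ref{lm:conv-series} and~\ref{lm:indu}, extraction of the two lowest-order contributions that produce the explicit terms in (\ref{eq:conj-K}), and term-by-term bounds on the remainder using Lemmas~\ref{lm:Pi-bds}, \ref{lm:prelim} and~\ref{lm:eta} together with the scattering equation. The one organizational difference is that the paper writes $[\cK,B(\eta_t)]=\int dx\,b(\nabla_x\eta_x)\,\nabla_x b_x+\text{h.c.}$, keeping one derivative on the field, and then expands $\text{ad}^{(n)}_{B(\eta_t)}(b(\nabla_x\eta_x))\,\text{ad}^{(k)}_{B(\eta_t)}(\nabla_x b_x)$; in the resulting case analysis the instances with $\ell_2=0$ place $\nabla_x a_x$ directly on $\xi$ and are absorbed into $\cK^{1/2}$ without any further integration by parts (their Cases~3 and~6). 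You instead load both derivatives onto the kernel as $-\Delta_x\eta_t$ and must integrate one of them back onto the field at that stage---exactly the step you flag as the main obstacle---while in return the $N^3Vf_\ell$ part of the kernel is exposed from the outset and its contribution reduces to the analysis already carried out for $\cL^{(2)}_{N,t}$ in Proposition~\ref{prop:E2N}. Your reference to Lemma~\ref{lm:Npow} is unnecessary once the full series (\ref{eq:conv-serie}) is used, since no factor $e^{\pm sB(\eta_t)}$ survives the expansion.
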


\begin{proof}
We write
\[ \begin{split} e^{-B(\eta_t)} \cK e^{B(\eta_t)} - \cK &= \int_0^1 e^{-sB(\eta_t)} \left[\cK, B (\eta_t) \right] e^{s B (\eta_t)} \\ &= \int_0^1 ds \int dx \, e^{-sB(\eta_t)} \left[ \nabla_x a^*_x \nabla_x a_x , B(\eta_t) \right] e^{sB(\eta_t)} \end{split} \]
{F}rom (\ref{2.2.Betacommutators}), we find 
\[ \begin{split} e^{-B(\eta_t)} & \cK e^{B(\eta_t)} - \cK \\ &= \int_0^1 ds \int dx \, \left[ e^{-sB(\eta_t)} b(\nabla_x \eta_{x}) \nabla_x b_x e^{sB(\eta_t)} + \text{h.c.} \right] \\ &= \sum_{k,n \geq 0} \frac{(-1)^{k+n}}{k! n! (k+n+1)} \int dx \, \left[ \text{ad}^{(n)}_{B (\eta_t)} (b(\nabla_x \eta_{x})) \text{ad}_{B(\eta_t)}^{(k)} (\nabla_x b_x) + \text{h.c.} \right] \end{split} \]
{F}rom the sum on the r.h.s. we extract the term with $k=n=0$ and also the term with $n=0,k=1$. We obtain
\begin{equation}\label{eq:sum'} \begin{split} e^{-B(\eta_t)} &\cK e^{B(\eta_t)} - \cK \\ =\; & \int dx \, \left[ b(\nabla_x \eta_{x}) \nabla_x b_x + \text{h.c.} \right] \\ &+  \int dx \, b(\nabla_x \eta_{x}) b^* (\nabla_x \eta_{x}) -\frac{1}{N} \int dx \, b(\nabla_x \eta_{x}) \cN   b^* ( \nabla_x \eta_{x}) \\
&-\frac{1}{2N} \int dx dz dy \left[ \eta_t (z,y) b(\nabla_x \eta_x) b_y^* a_z^* \nabla_x a_x + \text{h.c.} \right] \\ &+ \sum_{k,n}^* \frac{(-1)^{k+n}}{k! n! (k+n+1)} \int dx \left[ \text{ad}^{(n)}_{B (\eta_t)} (b(\nabla_x \eta_{x}) ) \text{ad}^{(k)}_{B(\eta_t)} (\nabla_x b_x) + \text{h.c.} \right] \end{split} \end{equation}
where $\sum^*$ denotes the sum over all indices $k,n \geq 0$, excluding the two pairs $(k,n) = (0,0)$ and $(k,n) = (1,0)$. We discuss now the terms on the r.h.s. of (\ref{eq:sum'}) separately. 

The first term on the r.h.s. of (\ref{eq:sum'}) can be decomposed as in (\ref{eq:mut}), giving  
\begin{equation}\label{eq:deco-mu} \int dx \, b(\nabla_x \eta_{x}) \nabla_x b_x = \int dx \, b(\nabla_x k_{x}) \nabla_x b_x + \int dx \, b(\nabla_x \mu_{x}) \nabla_x b_x \end{equation}
The second term on the r.h.s. of (\ref{eq:deco-mu})  contributes to the error $\cE^{(K)}_{N,t}$. Its expectation is bounded by
\[ \begin{split} \left| \int dx \langle \xi, b(\nabla_x \mu_{x}) \nabla_x b_x \xi \rangle \right| &\leq \| (\cN+1)^{1/2} \xi \| \int dx \, \| \nabla_x \mu_x \| \| \nabla_x b_x  \xi \|\\ &\leq \| \nabla_x \mu \| \| (\cN+1)^{1/2} \xi \| \| \cK^{1/2} \xi \| \leq C \| (\cN+1)^{1/2} \xi \| \| \cK^{1/2} \xi \| 
\end{split} \]
The expectation of the commutator of this term with $\cN$ and with $a^* (g_1) a(g_2)$ and also its time-derivative can be bounded similarly, using the formula  
\[ [a^* (g_1) a(g_2), b(\nabla_x \mu_{x}) \nabla_x b_x ] = \langle g_1, \nabla_x \mu_{x} \rangle b(g_2) \nabla_x b_x + b(\nabla_x \mu_{x}) \nabla g_1 (x) b(g_2) \]
and the fact that $\| \partial_t \nabla_x \mu_{t} \| < Ce^{c|t|}$, uniformly in $N$. 

As for the first term on the r.h.s. of (\ref{eq:deco-mu}), we integrate by parts and we use the definition (\ref{eq:ktdef}), to write 
\begin{equation}\label{eq:deco-k} \begin{split} \int dx \, b(\nabla_x k_{x}) \nabla_x b_x = \; &\int dx dy \, N^3 (\Delta w_\ell) (N(x-y)) \bar{\wtph}_t (x) \bar{\wtph}_t (y) \, b_x b_y \\ &+ 2 \int dx dy \, N^2 (\nabla w_\ell) (N(x-y)) (\nabla\wtph_t) (x) \wtph_t (y) \, b_x b_y \\ &+ \int dx dy \, N w_\ell (N(x-y)) (\Delta \wtph_t) (x) \wtph_t (y) b_x b_y \end{split} \end{equation}
The first term on the r.h.s. of (\ref{eq:deco-k}) is exactly the (hermitian conjugate of the) contribution that we isolated on the second line of (\ref{eq:conj-K}); it does not enter the error term $\cE^{(K)}_{N,t}$. The second and third terms on the r.h.s. of (\ref{eq:deco-k}), on the other hand, are included in $\cE^{(K)}_{N,t}$. The expectation of the third term is bounded by 
\begin{equation}\label{eq:third-Kbb} \begin{split} &\left| \int dx dy \, N w_\ell (N(x-y)) (\Delta\wtph_t) (x) \wtph_t (y) \, \langle \xi, b_x b_y \xi \rangle \right| \\ &\hspace{1cm} \leq    \int dx \, |\Delta\wtph_t (x)|  \| b^* (N w_\ell (N(x-.)) \wtph_t) \xi \| \, \| b_x \xi \| \\ &\hspace{1cm} \leq \sup_x \| N w_\ell (N(x-.)) \wtph_t \| \| \Delta \wtph_t \| \| (\cN+1)^{1/2} \xi \|^2 \leq C e^{c|t|} \| (\cN+1)^{1/2} \xi \|^2 \end{split} \end{equation}
To bound the expectation of the second term on the r.h.s. of (\ref{eq:deco-k}), we integrate by parts. We find
\[ \begin{split} \int dx dy \, N^2 (\nabla w_\ell) (N(x-y)) &(\nabla\wtph_t) (x) \wtph_t (y) \, \langle \xi, b_x b_y \xi \rangle \\ = \; &- \int dx dy N w_\ell (N(x-y)) (\Delta\wtph_t) (x) \wtph_t (y) \langle \xi, b_x b_y \xi \rangle  \\ &- \int dx dy N w_\ell (N(x-y)) (\nabla \wtph_t) (x) \wtph_t (y) \langle \xi, b_y \nabla_x b_x \xi \rangle \end{split} \]
Proceeding as in (\ref{eq:third-Kbb}), we conclude that 
\[ \begin{split} &\left| \int dx dy \, N^2 (\nabla w_\ell) (N(x-y)) (\nabla\wtph_t) (x) \wtph_t (y) \, \langle \xi, b_x b_y \xi \rangle \right| \\ &\hspace{.5cm} \leq  \sup_x \| N w_\ell (N(x-.)) \wtph_t \| \, \left[ \| \Delta \wtph_t \| \| (\cN+1)^{1/2} \xi \|^2 + \| \nabla \wtph_t \| \| (\cN+1)^{1/2} \xi \|  \| \cK^{1/2} \xi \|\right] \\ &\hspace{.5cm} \leq C e^{c|t|} \left[ \| (\cN+1)^{1/2} \xi \|^2 + \|  (\cN+1)^{1/2} \xi \| \| \cK^{1/2} \xi \| \right] \end{split} \]
Notice that the last estimate and the estimate (\ref{eq:third-Kbb}) for the third term on the r.h.s. of (\ref{eq:deco-k}) continue to hold, if we replace the operator whose expectation we are bounding, with its commutator with $\cN$ or with $a^* (g_1) a(g_2)$ or with its time-derivative. 

Now, let us consider the second term on the r.h.s. of (\ref{eq:sum'}). We observe that
\begin{equation}\label{eq:K-line2}
\begin{split}  \int dx \, b(\nabla_x \eta_{x}) b^* (\nabla_x \eta_{x}) = & \| \nabla_x \eta_{x} \|^2-\frac{\cN}N \| \nabla_x \eta_{x} \|^2\\&+ \int dx dy dz \nabla_x \eta_t (x;z) \nabla_x \bar{\eta}_t (y;x)\left( b^*_z b_y -\frac1N a^*_z a_y\right)\end{split} 
\end{equation}
Denoting by $D$ the operator with the integral kernel 
\begin{equation}\label{eq:Dde} D(z;y) = \int dx \, \nabla_x \eta_t (z;x) \nabla_x\bar{\eta}_t (x;y) \end{equation}
we have
\begin{equation}\label{eq:sec-nab} \left| \int dx dy dz \nabla_x \eta_t (x;z) \nabla_x  \bar{\eta}_t (y;x) \langle \xi, b^*_z b_y \xi \rangle \right| \leq \left| \langle \xi, d\Gamma (D) \xi \rangle \right| \leq \| D \|_2\| \cN^{1/2} \xi \|^2 
\end{equation}
Since, by Lemma \ref{lm:eta}, $\| D \|_2 \leq  C$, we obtain  
\[ \left| \int dx dy dz \, \nabla_x \eta_t (x;z) \nabla_x \bar{\eta}_t (y;x) \langle \xi, b^*_z b_y \xi \rangle \right|  \leq C \| \cN^{1/2} \xi \|^2 \]
and similarly for the $ a^*_za_y$ term. As for the first term on the r.h.s. of (\ref{eq:K-line2}), we decompose $\eta_t = k_t + \mu_t$. Since $\| \nabla_x \mu_t \|$ is finite, uniformly in $N$ and in $t$, we find
\[ \left| \int dx \| \nabla_x \eta_{x} \|^2 - \int dx dy \, |\nabla_x k_t (x;y)|^2 \right| \leq C  \]
The second term on the r.h.s. of (\ref{eq:K-line2}) can be controlled using $ N^{-1}\| \nabla_x \eta_{x} \|^2\leq C$. Furthermore, one can show that 
\[ \begin{split} 
\int dx \,  \langle \xi, [ \cN, b (\nabla_x \eta_{x})  b^* (\nabla_x \eta_{x})]  \xi \rangle  &= 0 \\ 
\left| \int dx \,  \langle \xi, [ a^* (g_1) a(g_2), b (\nabla_x \eta_{x})  b^* (\nabla_x \eta_{x})]  \xi \rangle \right| &\leq C \| g_1 \| \| g_2 \| \| (\cN+1)^{1/2} \xi \|^2 \end{split} \]
and
\[ \begin{split} 
&\Big| \partial_t \, \Big[ \int dx \,  \langle \xi, \partial_t [b (\nabla_x \eta_{x})  b^* (\nabla_x \eta_{x})]  \xi \rangle -   \int dx dy |\nabla_x k_t (x;y)|^2 \Big] \Big|  \leq C e^{K|t|} \| (\cN+1)^{1/2} \xi \|^2 
\end{split} \]
Here we used the formula
\[ \begin{split} & \left[ a^* (g_1) a(g_2) , \int dx \,  b (\nabla_x \eta_{x})  b^* (\nabla_x \eta_{x}) \right] \\ &\hspace{2cm} = \int dx \, \langle \nabla_x \eta_{x} , g_1 \rangle b(g_2) b^* (\nabla_x \eta_{x}) + \int dx , \langle g_2 , \nabla_x \eta_{x} \rangle  b (\nabla_x \eta_{x}) b^* (g_1) \end{split} \]
for the commutator with $a^* (g_1) a (g_2)$ and the bounds in Proposition \ref{prop:phph} for $\partial_t \wtph_t$. 

The third term on the r.h.s. of (\ref{eq:sum'}) can be controlled similarly. 


To control the fourth term on the r.h.s. of (\ref{eq:sum'}) we proceed as follows. First of all, we commute the annihilation operator $b(\nabla_x \eta_{x})$ to the right of the two creation operators $b_y^* a_z^*$. Using (\ref{eq:comm-b}), we find 
\begin{equation}\label{eq:comm01} \begin{split} \frac{1}{2N} \int dx dy dz &\, \eta_t (z;y) b(\nabla_x \eta_{x}) b_y^* a_z^* \nabla_x a_x \\
= \; &\frac{1}{2N} \int dx dy dz \, \eta_t (z;y) b_y^* a_z^* a(\nabla_x \eta_{x}) \nabla_x b_x \\
&+\frac{1}{N} \int dx dy dz \, \eta_t (z;y) \nabla_x \eta_t (x;y) \left(1 - \frac{\cN}{N} - \frac{1}{2N} \right) a_z^* \nabla_x a_x \\
&-\frac{1}{2N^2} \int dx dydz \, \eta_t (z;y) a_y^* a (\nabla_x \eta_{x}) a_z^* \nabla_x a_x \end{split} \end{equation}
To bound the expectation of the last term, we use the additional $N^{-1}$ factor to compensate for $\| \nabla_x \eta_t \| \simeq N^{1/2}$. We find
\[ \begin{split} &\left| \frac{1}{2N^2} \int dx dydz \, \eta_t (z;y) \langle \xi , a_y^* a (\nabla_x \eta_{x}) a_z^* \nabla_x a_x \xi \rangle \right| \\ &\hspace{1cm} \leq \frac{1}{2N^2} \left[ \int dx dy dz |\eta_t (y;z)|^2  \| \nabla_x a_x \xi \|^2 \right]^{1/2} \left[ \int dx dy dz \| a_z a^* (\nabla_x \eta_{x}) a_y \xi \|^2 \right]^{1/2} \\  
&\hspace{1cm} \leq \frac{\| \eta_t \| \| \nabla_x \eta_t \|}{2N^2}  \| \cK^{1/2} \xi \| \|(\cN+1)^{3/2} \xi \|\\ &\hspace{1cm} \leq C N^{-1/2} \|  \cK^{1/2} \xi \| \|(\cN+1)^{1/2} \xi \|
\end{split} \]
Similarly, the expectation of the second term on the r.h.s. of (\ref{eq:comm01}) is bounded by
\[ \begin{split} &\left| \frac{1}{N} \int dx dy dz \, \eta_t (z;y) \nabla_x \eta_t (x;y) \left\langle \xi, \left(1 - \frac{\cN}{N} - \frac{1}{2N} \right) a_z^* \nabla_x a_x \xi\right\rangle \right| 
 \\ &\hspace{1cm} \leq \frac{1}{N} \left[ \int dx dy dz |\eta_t (z;y)|^2 \| \nabla_x a_x \xi \|^2 \right]^{1/2}
 \left[ \int dx dy dz |\nabla_x \eta_t (x;y)||^2 \| a_z \xi \|^2 \right]^{1/2}  \\ &\hspace{1cm} \leq \frac{\| \eta_t \| \| \nabla_x \eta_t \|}{N} \| (\cN+1)^{1/2} \xi \| \| \cK^{1/2} \xi \| \\ &\hspace{1cm} \leq C N^{-1/2} \| (\cN+1)^{1/2} \xi \| \| \cK^{1/2} \xi \|
 \end{split} \]
We are left with the first term on the r.h.s. of (\ref{eq:comm01}). Here, we decompose 
\begin{equation}\label{eq:comm01-2} \begin{split} 
\frac{1}{2N} \int dx dy dz \, &\eta_t (z;y) b_y^* a_z^* a(\nabla_x \eta_{x}) \nabla_x b_x \\ = \; &  
\frac{1}{2N} \int dx dy dz \, \eta_t (z;y) b_y^* a_z^* a(\nabla_x k_{x}) \nabla_x b_x \\ &+ \frac{1}{2N} \int dx dy dz \, \eta_t (z;y) b_y^* a_z^* a(\nabla_x \mu_{x}) \nabla_x b_x =: \text{M}_1 + \text{M}_2 
\end{split} \end{equation}
Since $\nabla_x \mu_t \in L^2 (\bR^3 \times \bR^3)$, with norm bounded uniformly in $N$ and $t$, we easily find
\[ |\langle \xi, M_2 \xi \rangle| \leq C N^{-1/2} \| (\cN+1)^{1/2} \xi \| \| \cK^{1/2} \xi \| \]
To control the term $\text{M}_1$, on the other hand, we integrate by parts. We obtain
\begin{equation}\label{eq:comm01-3} \begin{split} 
\text{M}_1 = \; & \frac{1}{2N} \int dx dy dz dw \, \eta_t (z;y) (-\Delta_x k_{t}) (x;w) b_y^* a_z^* a_w b_x \\ = \; &\frac{N^2}{2} \int dx dy dz dw \, \eta_t (z;y) (\Delta w_\ell) (N(x-w)) \wtph_t (x) \wtph_t (w) b_y^* a_z^* a_w b_x \\ &+ \frac{N}{2} \int dx dy dz dw \eta_t (z;y) (\nabla w_{\ell}) (N(x-w)) \nabla \wtph_t (x) \wtph_t (w) b_y^* a_z^* a_w b_x \\ &+ \frac{1}{2} \int dx dy dz dw \, \eta_t (z;y) w_\ell (N(x-w)) \Delta \wtph_t (x) \wtph_t (w) b_y^* a_z^* a_w b_x \\
= \;& \text{M}_{11} + \text{M}_{12} + \text{M}_{13} 
\end{split} \end{equation}
Since $|(\nabla w_\ell) (Nx)| \leq C / (N^2 |x|^2)$, we have
\[ \begin{split} \left| \langle \xi, \text{M}_{12} \xi \rangle \right| &\leq C N^{-1} \int dx dy dz dw \, |\eta_t (z;y)| \frac{|\nabla \wtph_t (x)||\wtph_t (w)|}{|x-w|^2} \| a_z b_y \xi \| \| a_w b_x \xi \| \\ 
& \leq CN^{-1} \left[ \int dx dy dz dw \, \frac{|\nabla \wtph_t (x)|^2|\wtph_t (w)|^2}{|x-w|^2} \| a_z b_y \xi \|^2   \right]^{1/2}  \\ &\hspace{5cm} \times \left[ \int dx dy dz dw 
\, \frac{|\eta_t (y;z)|^2}{|x-w|^2}  \| a_w b_x \xi \|^2 \right]^{1/2} \\
&\leq C N^{-1} \| \eta_t \| \| (\cN+1) \xi \| \| (\cN+1)^{1/2} (\cK + \cN)^{1/2} \xi \| \\ &\leq C \| (\cN+1)^{1/2}\xi \| \| (\cK+\cN)^{1/2} \xi \| \end{split} \]
where we used Hardy's inequality $|x|^{-2} \leq C (1-\Delta)$. The expectation of $\text{M}_{13}$ can be bounded analogously. Let us focus now on the term $\text{M}_{11}$. Here we use the fact that $f_\ell = 1-w_\ell$ solves the Neumann problem (\ref{eq:scatl}) to write
\begin{equation}\label{eq:comm01-4} \begin{split} \text{M}_{11} = \; &-\frac{N^2}{2} \int dx dy dz dw \, \eta_t (z;y) V(N(x-w)) f_\ell (N(x-w)) \wtph_t (x) \wtph_t (w) b_y^* a_z^* a_w b_x \\ &+ N^2 \lambda_\ell \int dx dy dz dw \, \eta_t (z;y) f_\ell (N(x-w)) \chi (|x-w| \leq \ell)  \wtph_t (x) \wtph_t (w) b_y^* a_z^* a_w b_x \\ =: \; &\text{M}_{111} + \text{M}_{112} \end{split} \end{equation}
Since, by Lemma \ref{3.0.sceqlemma}, $\lambda_\ell \leq CN^{-3}$ and $0 \leq f_\ell \leq 1$, it is easy to check that 
\[ |\langle \xi , \text{M}_{112} \xi \rangle| \leq C \| (\cN+1)^{1/2} \xi \|^2 \]
As for the first term on the r.h.s. of (\ref{eq:comm01-4}), it can be estimated by 
\[ \begin{split} 
\left|  \langle \xi, \text{M}_{111} \xi \rangle \right|  \leq \; & \int dx dy dz dw |\eta_t (z;y)| N^2 V(N (x-w)) |\wtph_t (w)| |\wtph_t (x)| \| a_z b_y \xi \| \| a_w b_x \xi \|
\\  \leq \; & \left[ \int dx dy dz dw |\eta_t (z;y)|^2 N^2 V(N (x-w))  \| a_w b_x \xi \|^2 \right]^{1/2}\\ &\hspace{1.5cm} \times  \left[ \int dx dy dz dw \, N^2 V(N (x-w)) |\wtph_t (w)|^2 |\wtph_t (x)|^2 \| a_z b_y \xi \|^2 \right]^{1/2} \\  \leq \; & C N^{-1/2} \| \eta_t \| \| \cV_N^{1/2}  \xi \| \| (\cN+1) \xi \| \leq C \| \cV_N^{1/2} \xi \| \| (\cN+1)^{1/2} \xi \| \end{split} \]  
where we used the fact that $0\leq f_\ell \leq 1$ and the notation (\ref{eq:cVN}). 

Summarizing, we have shown that the expectation of the fourth term on the r.h.s. of (\ref{eq:sum'}) can be bounded by
\begin{equation}\label{eq:case6-pro} \left| \frac{1}{2N} \int dx dy dz \, \eta_t (y;z) \langle \xi, b(\nabla_x \eta_{x}) b_y^* a_z^* \nabla_x a_x \xi \rangle  \right| \leq C \| (\cN+1)^{1/2} \xi \| \| (\cK+\cN+\cV_N+1)^{1/2} \xi \| \end{equation}
Also in this case, it is also easy to check that the same estimate holds true for the expectation of the commutator of the fourth term on the r.h.s. of (\ref{eq:sum'}) with $\cN$ and with $a^* (g_1) a(g_2)$ and for the expectation of its time-derivative. 

Finally, we have to deal with the last term on the r.h.s. of (\ref{eq:sum'}). According to Lemma \ref{lm:indu}, the operator
\[ \int dx \, \text{ad}^{(n)}_{B(\eta_t)} (b(\nabla_x \eta_{x})) \text{ad}^{(k)}_{B(\eta_t)} (\nabla_x b_x) \]
is given by the sum of $2^{n+k} n! k!$ terms, all having the form
\begin{equation}\label{eq:typbnab}
\begin{split} 
\text{E} := \int dx \, \Lambda_1 \dots \Lambda_{i_1} &N^{-k_1} \Pi^{(1)}_{\sharp,\flat} (\eta_{t,\natural_1}^{(j_1)}, \dots , \eta_{t,\natural_{k_1}}^{(j_{k_1})} ; \nabla_x \eta^{(\ell_1 + 1)}_{x,\lozenge}) \\ &\times \Lambda'_1\dots \Lambda'_{i_2} N^{-k_2} \Pi^{(1)}_{\sharp',\flat'} (\eta^{(m_1)}_{t,\natural'_1} , \dots , \eta^{(m_{k_2})}_{t,\natural'_{k_2}} ; \nabla_x \eta_{x,\lozenge'}^{(\ell_2)} ) \end{split} 
\end{equation}
with $k_1, k_2, \ell_1, \ell_2 \geq 0$, $j_1, \dots , j_{k_1}, m_1, \dots , m_{k_2} \geq 1$, and where each operator $\Lambda_i$ or $\Lambda'_i$ is either a factor $(N-\cN)/N$, $(N+1-\cN)/N$ or a $\Pi^{(2)}$-operator of the form
\begin{equation}\label{eq:Pi2-nab} N^{-p} \Pi^{(2)}_{\underline{\sharp}, \underline{\flat}} (\eta_{t,\underline{\natural}_1}^{(q_1)}, \dots , \eta_{t,\underline{\natural}_p}^{(q_p)}) \end{equation}
with $p, q_1, \dots , q_p \geq 1$. Here we used the fact that $\eta^{(\ell_1)}_{\natural} (\nabla_x \eta_{x,\lozenge}) = \nabla_x \eta^{(\ell_1+1)}_{x,\lozenge'}$ for an appropriate choice of $\lozenge' \in \{ \cdot, * \}^{\ell_1 + 1}$. 

We study the expectation of a term of the form  (\ref{eq:typbnab}), distinguishing several cases, depending on the values of $\ell_1, \ell_2 \in \bN$.

{\it Case 1}: $\ell_1 \geq 1$, $\ell_2 \geq 2$. In this case, $\nabla_x \eta^{(\ell_1 + 1)}_{t,\lozenge}, \nabla_x \eta^{(\ell_2)}_{t,\lozenge} \in L^2 (\bR^3 \times \bR^3)$, with norm bounded uniformly in $N$ and $t$. Hence, with Lemma \ref{lm:Pi-bds}, we can bound 
\[ |\langle\xi, \text{E} \xi \rangle| \leq C^{k+n} \| \eta_t \|^{k+n-\ell_1 - \ell_2} \| \nabla_x \eta_t^{(\ell_1+1)} \| \| \nabla_x \eta_t^{(\ell_2)} \| \| (\cN+1)^{1/2} \xi \|^2 \]
Now we observe that, for example, 
\[ \| \nabla_x \eta_t^{(\ell_2)} \| \leq \| \nabla_x \eta_t^{(2)} \| \| \eta_t^{(\ell_2-2)} \| \leq \| \nabla_x \eta_t^{(2)} \| \| \eta_t \|^{\ell_2-2} \leq C \| \eta_t \|^{\ell_2-2} \]
Similarly, $\| \nabla_x \eta_t^{(\ell_1+1)} \| \leq C \| \eta_t \|^{\ell_1 - 1}$. Hence, in this case,
\[ |\langle \xi, \text{E} \xi \rangle| \leq C^{k+n} \| \eta_t \|^{k+n-3} \| (\cN+1)^{1/2} \xi \|^2 \, . \]

{\it Case 2}: $\ell_1 \geq 1$, $\ell_2 = 1$. In this case we integrate by parts, writing
\[ \begin{split} \langle \xi ,\text{E} \xi \rangle = \; &\int dx \, \langle \xi, \Lambda_1 \dots \Lambda_{i_1} N^{-k} \Pi^{(1)}_{\sharp,\flat} (\eta_{t,\natural_1}^{(j_1)}, \dots , \eta_{t,\natural_{k_1}}^{(j_{k_1})} ; -\Delta_x \eta_{x,\lozenge}^{(\ell_1+1)}) \\
&\hspace{3cm} \times \Lambda'_1 \dots \Lambda'_{i_2} N^{-k_2} \Pi^{(1)}_{\sharp',\flat'} (\eta^{(m_1)}_{t,\natural'_1}, \dots , \eta^{(m_{k_2})}_{t,\natural'_{k_2}} ; \eta_{x,\lozenge'}) \xi \rangle \end{split}  \]
Since, by Lemma \ref{lm:eta}, $\| \Delta_x \eta^{(2)}_t \| \leq C e^{c|t|}$, we conclude by Lemma \ref{lm:Pi-bds} that, in this case, 
\[ |\langle \xi, \text{E} \xi \rangle | \leq C^{k+n} \| \eta_t \|^{k+n-1} \| \Delta_x \eta^{(2)}_t \| \| (\cN+1)^{1/2} \xi \|^2 \leq C^{k+n} e^{c|t|} \| \eta_t \|^{k+n-1} \| (\cN+1)^{1/2} \xi \|^2 \, . \]
 
{\it Case 3:} $\ell_1 \geq 1$, $\ell_2 = 0$. In this case, the second $\Pi^{(1)}$-operator in (\ref{eq:typbnab}) has the form
\[ \begin{split}  N^{-k_2} \Pi^{(1)}_{\sharp',\flat'} (\eta^{(m_1)}_{t,\natural'_1}, \dots , &\eta^{(m_{k_2})}_{t,\natural'_{k_2}} ; \nabla_x \delta_x) \\ &= N^{-k_2} \int b^{\flat_0}_{x_1} \prod_{j=1}^{k_2-1} a^{\sharp_j}_{y_j} a^{\flat_j}_{x_{j+1}} a_{y_{k_2}}^{\sharp_{k_2}} \nabla_x a_x \, \prod_{j=1}^{k_2} \eta^{(m_j)}_{t,\natural'_j} (x_j ; y_j) dx_j dy_j \end{split} \]
Here we used part v) of Lemma \ref{lm:indu} to conclude that the last field on the right, the one carrying the derivative, must be an annihilation operator (or possibly a $b$-operator). Repeatedly applying Lemma~\ref{lm:Abds} on pairs of creation and annihilation operators, but leaving the last annihilation operator $\nabla_x a_x$ untouched, we find 
\[ \begin{split} |\langle \xi, E \xi \rangle | &\leq C^{k+n} \| \eta_t \|^{k+n-\ell_1} \| (\cN+1)^{1/2} \xi \| \int dx \, \| \nabla_x \eta_{x}^{(\ell_1+1)} \|  \| \nabla_x a_x \xi \| \\
&\leq C^{k+n} \| \eta_t \|^{k+n-\ell_1} \| \nabla_x \eta_t^{(\ell_1+1)} \| \| (\cN+1)^{1/2} \xi \| \| \cK^{1/2} \xi \| \\ 
&\leq C^{k+n} \| \eta_t \|^{k+n-1} \| (\cN+1)^{1/2} \xi \| \| \cK^{1/2} \xi \| \end{split} \]

{\it Case 4:} $\ell_1 = 0$, $\ell_2 \geq 2$. Here we proceed as in Case 2, integrating by parts and moving the derivative over $x$ from $\nabla_x \eta_{x,\lozenge}$ (whose $L^2$ norm blows up) to $\nabla_x \eta^{(\ell_2)}_{x,\lozenge'}$ (using the fact that $\| \Delta_x \eta_t^{(2)} \| < \infty$). 

{\it Case 5:} $\ell_1 = 0$, $\ell_2 = 1$. In this case, by part v) of Lemma \ref{lm:indu}, the two $\Pi^{(1)}$-operators in (\ref{eq:typbnab}) have the form
\begin{equation}\label{eq:Pi1-1} \Pi^{(1)}_{\sharp,\flat} (\eta_{t,\natural_1}^{(j_1)}, \dots , \eta_{t,\natural_{k_1}}^{(j_{k_1})} ; \nabla_x \eta^{(\ell_1 + 1)}_{x,\lozenge})  = \int b_{x_1}^{\flat_0} \prod_{i=1}^{k_1} a_{y_j}^{\sharp_j} a_{x_{j+1}}^{\flat_j} a_{y_n}^{\sharp_n} a (\nabla_x \eta_x) \prod_{i=1}^{k_1} \eta^{(j_i)}_{t,\natural_i} (x_i ; y_i) dx_i dy_i  \end{equation}
and 
\begin{equation}\label{eq:Pi1-2} \Pi^{(1)}_{\sharp',\flat'} (\eta^{(m_1)}_{t,\natural'_1} , \dots , \eta^{(m_{k_2})}_{t,\natural'_{k_2}} ; \nabla_x \eta_{x,\lozenge'}^{(\ell_2)} ) = \int b_{x_1}^{\flat'_0} \prod_{j=1}^{k_2} a_{y_j}^{\sharp'_j} a_{x_{j+1}}^{\flat'_j} a_{y_n}^{\sharp'_n} a^* (\nabla_x \eta_x) \prod_{i=1}^{k_2} \eta^{(m_i)}_{t,\natural_i} (x_i ; y_i) dx_i dy_i  \end{equation}
Since $\| \nabla_x \eta_t \| \simeq N^{1/2}$ blows up as $N \to\infty$, to estimate (\ref{eq:typbnab}) in this case we first have to commute the annihilation operator $a(\nabla_x \eta_{x,\lozenge})$ in (\ref{eq:Pi1-1}) with the creation operator $a^* (\nabla_x \eta_{x,\lozenge'})$ in (\ref{eq:Pi1-2}). We proceed similarly as we did to bound the second term on the r.h.s. of (\ref{eq:sum'}) in the case $n=0$, $k=1$, starting in (\ref{eq:K-line2}). Here, however, we first have to commute the annihilation operator $a(\nabla_x \eta_{x,\lozenge})$ through the $\Lambda'_i$ operators and through the creation operators in (\ref{eq:Pi1-2}). 

If $\Lambda'_i = (N-\cN)/N$ or $\lambda'_i=(N+1-\cN)/N$, we just pull the annihilation operator $a(\nabla_x \eta_{x,\lozenge})$ through, using the fact that $a(\nabla_x \eta_{x,\lozenge}) \cN = (\cN+1) a(\nabla_x \eta_{x,\lozenge})$. On the other hand, to commute $a(\nabla_x \eta_{x,\lozenge})$ through the $\Lambda'_i$ operators having the form (\ref{eq:Pi2-nab}) and through the creation operators in (\ref{eq:Pi1-2}) (excluding the very last one on the right), we use the canonical commutation relations (\ref{eq:ccr}). The important observation here is the fact that every creation operator appearing in (\ref{eq:Pi2-nab}) and in (\ref{eq:Pi1-2}) is associated with an $\eta_t$-kernel; the commutator produces a new creation or annihilation operator, this time with a wave function whose $L^2$-norm remains bounded, uniformly in $N$. For example, we have  
\begin{equation}\label{eq:comm-aPi}
\begin{split}  \left[ a(\nabla_x \eta_x) , \int a^*_{x_i} a_{y_i} \eta^{(m_i)} (x_i ; y_i) dx_i dy_i \right] &= a (\nabla_x \eta^{(m_i+1)}_x) \end{split} \end{equation}
Since $m_i + 1 \geq 2$, $\| \nabla_x \eta^{(m_i+1)} \| \leq C$, uniformly in $N$. Similar formulas hold for commutators of $a(\nabla_x \eta_x)$ with a pair of not normally ordered creation and annihilation operators or with the product of two creation operators. 
In fact, not only the $L^2$-norm but even the $H^1$-norm of the wave function of the annihilation operator on the r.h.s. of (\ref{eq:comm-aPi}) is bounded, uniformly in 
$N$. This means that terms resulting from commutators like (\ref{eq:comm-aPi}) can be bounded integrating by parts and moving the derivative in (\ref{eq:Pi1-2}) to the argument of the annihilation operator in (\ref{eq:comm-aPi}). We conclude that 
$E = \text{F}_1 + \text{F}_2$, where
\[ \begin{split} \text{F}_1 = \; &\int dx \,  \Lambda_1 \dots \Lambda_{i_1} N^{-k_1} \int b^{\flat_0}_{x_1} \prod_{i=1}^{k_1} a_{y_j}^{\sharp_j} a_{x_{j+1}}^{\flat_j} a_{y_n}^{\sharp_n} \prod_{i=1}^{k_1} \eta_{t,\natural_i}^{(j_i)} (x_i ; y_i) dx_i dy_i \\ & \hspace{1cm} \times \Lambda'_1 \dots \Lambda'_{i_2} N^{-k_2} \int b^{\flat'_0}_{x'_1} \prod_{i=1}^{k_1} a_{y'_j}^{\sharp'_j} a_{x'_{j+1}}^{\flat'_j} a_{y'_n}^{\sharp'_n} \prod_{i=1}^{k_1} \eta_{t,\natural_i}^{(j_i)} (x'_i ; y'_i) dx'_i dy'_i   
\\ & \hspace{7cm} \times a(\nabla_x \eta_{x,\lozenge}) a^* (\nabla_x \eta_{x,\lozenge'})  \end{split} \]
while $\text{F}_2$, which contains the contribution of all commutators, is bounded by  
\[ |\langle \xi , \text{F}_2 \xi \rangle | \leq n C^{k+n} \| \eta_t \|^{k+n-1} \| (\cN+1)^{1/2} \xi \|^2 \]
To estimate $\text{F}_1$, we write it as $\text{F}_1 = \text{F}_{11} + \text{F}_{12}$, with
\begin{equation}\label{eq:f11} \begin{split} \text{F}_{11} = \; &\| \nabla_x \eta_t \|^2 \,  \Lambda_1 \dots \Lambda_{i_1} N^{-k_1} \int b^{\flat_0}_{x_1} \prod_{i=1}^{k_1} a_{y_j}^{\sharp_j} a_{x_{j+1}}^{\flat_j} a_{y_n}^{\sharp_n} \prod_{i=1}^{k_1} \eta_{t,\natural_i}^{(j_i)} (x_i ; y_i) dx_i dy_i \\ & \hspace{1cm} \times \Lambda'_1 \dots \Lambda'_{i_2} N^{-k_2} \int b^{\flat'_0}_{x'_1} \prod_{i=1}^{k_1} a_{y'_j}^{\sharp'_j} a_{x'_{j+1}}^{\flat'_j} a_{y'_n}^{\sharp'_n} \prod_{i=1}^{k_1} \eta_{t,\natural_i}^{(j_i)} (x'_i ; y'_i) dx'_i dy'_i  \end{split} \end{equation}
and 
\begin{equation}\label{eq:f12} \begin{split} \text{F}_{12} = \; &\int dx \, \Lambda_1 \dots \Lambda_{i_1} N^{-k_1} \int b^{\flat_0}_{x_1} \prod_{i=1}^{k_1} a_{y_j}^{\sharp_j} a_{x_{j+1}}^{\flat_j} a_{y_n}^{\sharp_n} \prod_{i=1}^{k_1} \eta_{t,\natural_i}^{(j_i)} (x_i ; y_i) dx_i dy_i \\ &\hspace{1cm} \times \Lambda'_1 \dots \Lambda'_{i_2} N^{-k_2} \int b^{\flat'_0}_{x'_1} \prod_{i=1}^{k_1} a_{y'_j}^{\sharp'_j} a_{x'_{j+1}}^{\flat'_j} a_{y'_n}^{\sharp'_n} \prod_{i=1}^{k_1} \eta_{t,\natural_i}^{(j_i)} (x'_i ; y'_i) dx'_i dy'_i \\ &\hspace{7cm} \times a^* (\nabla_x \eta_{x,\lozenge'}) a (\nabla_x \eta_{x,\lozenge})  \end{split} 
\end{equation}
The contribution $\text{F}_{11}$ can be estimated by
\begin{equation}\label{eq:f11-alpha} |F_{11}| \leq C^{k+n} \| \eta_t \|^{k+n-1} \| \nabla_x \eta_t \|^2 N^{-\alpha} \| (\cN+1)^{\alpha/2} \xi \|^2 \end{equation}
where $\alpha = k_1 + p_1 + \dots + p_r + k_2 + p'_1 + \dots + p'_{r'}$, if $r$ of the operators $\Lambda_1,\dots , \Lambda_{i_1}$ and $r'$ of the operators $\Lambda'_1, \dots, \Lambda'_{i_2}$ are $\Pi^{(2)}$-operators of the form (\ref{eq:Pi2-nab}), with orders $p_1, \dots , p_r > 0$ and, respectively, $p'_1,\dots ,p'_{r'} > 0$. Now observe that, since $\ell_2=1$, we must have $k \geq 1$. Since we are excluding here the case $n=0, k=1$, we must either have $n \geq 1$ and $k=1$, or $k \geq 2$. In both cases $k+n \geq 2$. According to Lemma \ref{lm:indu}, the total number of $\eta_t$-kernels in every term of the form (\ref{eq:typbnab}) is equal to $k+n+1 \geq 3$. This implies that there is at least one $\eta_t$-kernel, additional to the two $\eta_t$-kernels which produced the commutator $\| \nabla_x \eta_t \|^2$ in (\ref{eq:f11}). We conclude that, in (\ref{eq:f11-alpha}), we have $\alpha \geq 1$, and therefore, on $\cF^{\leq N}$,
\[ |\text{F}_{11}| \leq C^{k+n} \| \eta_t \|^{k+n-1} \| \nabla_x \eta_t \|^2 N^{-1} \| (\cN+1)^{1/2} \xi \|^2 \leq C^{k+n} \| \eta_t \|^{k+n-1} \|  (\cN+1)^{1/2} \xi \|^2 \]
since $\| \nabla_x \eta_t \|^2 \leq C N$ by Lemma \ref{lm:eta}. To control $\text{F}_{12}$ we notice that, with the operator $D$ defined in (\ref{eq:Dde}),  
\[ 0 \leq \int dx\, a^*(\nabla_x \eta_{x,\lozenge'}) a(\nabla_x \eta_{x,\lozenge}) = d\Gamma (D) \leq \| D \|_2 \cN \leq C \cN \]
This easily implies that 
\[ |\langle \xi, \text{F}_{12} \xi \rangle| \leq C^{k+n} \| \eta_t \|^{k+n-1} \| (\cN+1)^{1/2} \xi \|^2 \]
We conclude that, in this case,
\[ |\langle \xi , \text{E} \xi \rangle | \leq n C^{k+n} \| \eta_t \|^{k+n-1} \| (\cN+1)^{1/2} \xi \|^2 \]

{\it Case 6:} $\ell_1 = 0$, $\ell_2 = 0$. In this case, the term (\ref{eq:typbnab}) has the form
\begin{equation}\label{eq:case6} \begin{split} \text{E} = \; &\int dx \, \Lambda_1 \dots \Lambda_{i_1} N^{-k_1} \int b^{\flat_0}_{x_1} \prod_{i=1}^{k_1} a_{y_i}^{\sharp_i} a_{x_{i+1}}^{\flat_i} a_{y_n}^{\sharp_n} a (\nabla_x \eta_{x,\lozenge}) \prod_{i=1}^{k_1} \eta^{(j_i)} (x_i ; y_i) dx_i dy_i  \\ & \hspace{1cm} \times \Lambda'_1 \dots \Lambda'_{i_2} N^{-k_2} \int b^{\flat'_0}_{x'_1} \prod_{i=1}^{k_1} a_{y'_i}^{\sharp'_i} a_{x'_{i+1}}^{\flat'_i} a_{y'_n}^{\sharp'_n} \nabla_x a_x 
\prod_{i=1}^{k_2} \eta^{(m_i)} (x'_i ; y'_i) dx'_i dy'_i 
\end{split} \end{equation}
Notice that a term of this form (with $n=0$ and $k=1$) already appears in the fourth line of (\ref{eq:sum'})
and was studied starting in (\ref{eq:comm01}) (to be more precise, in this case the first $\Pi^{(1)}$-operator in (\ref{eq:typbnab}) is of order zero (for $n=0$, there is no other choice), and therefore the operator $a(\nabla_x \eta_{x,\lozenge})$ appearing in (\ref{eq:case6}) is replaced by $b(\nabla_x \eta_{x,\lozenge})$). We will bound (\ref{eq:case6}) following the same strategy used in (\ref{eq:comm01}). First we have to commute the operator $a(\nabla_x \eta_{x,\lozenge})$ in (\ref{eq:case6}) to the right, close to the 
$\nabla_x a_x$ operator. As already explained in Case 5, the annihilation and creation operators produced while commuting $a(\nabla_x \eta_{x,\lozenge})$ to the right will have wave function with $H^1$-norm bounded, uniformly in $N$. Integrating by parts over $x$, we obtain $\text{E} = \text{G}_1 + \text{G}_2$, with 
\[ \begin{split} \text{G}_1 = \; &\int dx \, \Lambda_1 \dots \Lambda_{i_1} N^{-k_1} \int b^{\flat_0}_{x_1} \prod_{i=1}^{k_1} a_{y_j}^{\sharp_j} a_{x_{j+1}}^{\flat_j} a_{y_n}^{\sharp_n} \prod_{i=1}^{k_1} \eta_{t,\natural_i}^{(j_i)} (x_i ; y_i) dx_i dy_i \\ & \hspace{.5cm} \times \Lambda'_1 \dots \Lambda'_{i_2} N^{-k_2} \int b^{\flat'_0}_{x'_1} \prod_{i=1}^{k_1} a_{y'_j}^{\sharp'_j} a_{x'_{j+1}}^{\flat'_j} a_{y'_n}^{\sharp'_n} \prod_{i=1}^{k_1} \eta_{t,\natural_i}^{(j_i)} (x'_i ; y'_i) dx'_i dy'_i \, a(\nabla_x \eta_{x,\lozenge}) \nabla_x a_x \end{split} \]
and 
\[ |\langle \xi , \text{G}_2 \xi \rangle | \leq n C^{k+n} \| \eta_t \|^{k+n-1} \| (\cN+1)^{1/2} \xi \|^2 \]
To bound $G_1$, we proceed exactly as we did starting in (\ref{eq:comm01-2}). Decomposing $\eta_t = \mu_t + k_t$, and using the fact that $\nabla_x \mu_t$ has bounded $L^2$-norm, uniformly in $N$, we conclude that $\text{G}_1 = \text{G}_{11} + \text{G}_{12}$, with
\[  \begin{split} \text{G}_{11} = \; & \Lambda_1 \dots \Lambda_{i_1} N^{-k_1} \int b^{\flat_0}_{x_1} \prod_{i=1}^{k_1} a_{y_j}^{\sharp_j} a_{x_{j+1}}^{\flat_j} a_{y_n}^{\sharp_n} \prod_{i=1}^{k_1} \eta_{t,\natural_i}^{(j_i)} (x_i ; y_i) dx_i dy_i \\ & \hspace{1cm} \times \Lambda'_1 \dots \Lambda'_{i_2} N^{-k_2} \int b^{\flat'_0}_{x'_1} \prod_{i=1}^{k_1} a_{y'_j}^{\sharp'_j} a_{x'_{j+1}}^{\flat'_j} a_{y'_n}^{\sharp'_n} \prod_{i=1}^{k_1} \eta_{t,\natural_i}^{(j_i)} (x'_i ; y'_i) dx'_i dy'_i  \\ &\hspace{1cm} \times  \int dx \,  (-\Delta_x k_t) (x;y)  \, a_x a_y  \end{split} \]
and 
\[ |\langle \xi , \text{G}_{12} \xi \rangle | \leq C^{k+n} \| \eta_t \|^{k+n-1} \|(\cN+1)^{1/2} \xi \| \| \cK^{1/2} \xi \| \]
By Cauchy-Schwarz, the term $G_{11}$ is bounded by
\begin{equation}\label{eq:G11} |\langle \xi , \text{G}_{11} \xi \rangle | \leq C^{k+n} \| \eta_t \|^{k+n-1} N^{-\alpha}\| (\cN+1)^{\alpha} \xi \|  \int dx dy |\Delta_x k_t (x;y)| \| a_x a_y \xi \| \end{equation}
where $\alpha = k_1 + p_1 +\dots +p_r +k_2 +p'_1+ \dots + p'_{r'}$, if $r$ of the operators $\Lambda_1,\dots , \Lambda_{i_1}$ and $r'$ of the operators $\Lambda'_1, \dots , \Lambda'_{i_2}$ are $\Pi^{(2)}$-operators of the form (\ref{eq:Pi2-nab}), with orders $p_1, \dots, p_r > 0$ and, respectively, $p'_1, \dots ,p'_r > 0$. The important observation now is that, since we excluded the case $k=n=0$, we have $k+n \geq 1$, and therefore every term of the form (\ref{eq:typbnab}) must have at least two $\eta_t$-kernels in it. This implies that, in (\ref{eq:G11}), $\alpha \geq 1$, and therefore that
\[ |\text{G}_{11}| \leq C^{k+n} \| \eta_t \|^{k+n-1} N^{-1/2} \| (\cN+1)^{1/2} \xi \| \int dx dy |\Delta_x k_t (x;y)| \| a_x a_y \xi \| \]
Proceeding as we did from (\ref{eq:comm01-3}) to (\ref{eq:case6-pro}), we conclude that 
\[ |\text{G}_{11}| \leq C^{k+n} \| \eta_t \|^{k+n-1} \| (\cN+1)^{1/2} \xi \| \| (\cH_N+ \cN + 1)^{1/2} \xi \| \]

Summarizing, we proved that the last term on the r.h.s. of (\ref{eq:sum'}) is a sum over all $(k,n) \not = (0,0), (1,0)$ of $2^{n+k} n!k!$ terms of the form (\ref{eq:typbnab}), each of them having expectation bounded by
\[ |\langle \xi, \text{E} \xi \rangle | \leq C^{k+n} e^{c|t|} \| \eta_t \|^{\max (0, k+n-3)} \| (\cN+1)^{1/2} \xi \| \| (\cH_N + \cN+1)^{1/2} \xi \| \]
Similarly, one can show that 
\[ \begin{split}
|\langle \xi, [\cN, \text{E}] \xi \rangle | &\leq C^{k+n} e^{c|t|}\| \eta_t \|^{\max (0, k+n-3)} \| (\cN+1)^{1/2} \xi \| \| (\cH_N + \cN+1)^{1/2} \xi \| \\
|\langle \xi, [a^* (g_1) a(g_2) , \text{E} ] \xi \rangle | &\leq C^{k+n} e^{c|t|}\| \eta_t \|^{\max (0, k+n-3)} \| g_1 \|_{H^1} \| g_2 \|_{H^1} \\ &\hspace{3cm} \times  \| (\cN+1)^{1/2} \xi \| \| (\cH_N + \cN+1)^{1/2} \xi \| \\
|\langle \xi, \partial_t [\text{E}] \xi \rangle | &\leq C^{k+n}e^{c|t|} \| \eta_t \|^{\max (0, k+n-3)} \| (\cN+1)^{1/2} \xi \| \| (\cH_N + \cN+1)^{1/2} \xi \| \end{split} \]

Inserting in (\ref{eq:sum'}) we conclude that, if $\sup_{t \in \bR} \| \eta_t \|$ is small enough, the operator $\cE^{(K)}_{N,t}$ defined in (\ref{eq:conj-K})
satisfies the bounds in (\ref{eq:cEK}).
\end{proof}

\subsubsection{Analysis of $e^{-B(\eta_t)} (\cL^{(2)}_{N,t} - \cK) e^{B(\eta_t)}$}

Recall that 
\begin{equation}\label{eq:L2-K} \begin{split} \cL^{(2)}_{N,t} - \cK = \; &\int dx (N^3 V(N.) *|\wtph_t|^2)(x) \left[ b_x^* b_x - N^{-1} a_x^* a_x \right] \\ &+ \int dx dy N^3 V(N (x-y)) \wtph_t (x) \bar{\wtph}_t (y) \left[ b_x^* b_y - N^{-1} a_x^* a_y \right] \\
&+ \frac{1}{2} \int dx dy \, N^3 V(N(x-y)) \left[ \wtph_t (x) \wtph_t (y) b_x^* b_y^* + \text{h.c.} \right] \end{split} 	\end{equation}
We define the error term $\cE^{(2)}_{N,t}$ through the equation 
\begin{equation}\label{eq:def-E2} \begin{split} 
e^{-B(\eta_t)} (\cL^{(2)}_{N,t} - \cK) e^{B(\eta_t)} = \; &\text{Re } \int dx dy \, N^3 V(N(x-y)) \bar{\wtph_t} (x) \bar{\wtph_t} (y) k_t (y;x) \\ &+ \frac{1}{2} \int dx dy \, N^3 V(N(x-y)) \left[ \wtph_t (x) \wtph_t (y) b_x^* b_y^* + \text{h.c.} \right] \\ &+\cE^{(2)}_{N,t}  \end{split} 
\end{equation}  
The properties of the error term $\cE^{(2)}_{N,t}$ are described in the next proposition.
\begin{prop}\label{prop:E2N}
Under the same assumptions as in Theorem \ref{thm:gene}, there exist constants $C, c >0$ such that
\begin{equation}\label{eq:bds-E2} \begin{split} 
\left| \left\langle \xi, \cE^{(2)}_{N,t} \xi \right\rangle \right| &\leq 
Ce^{c|t|}  \| (\cN+1)^{1/2} \xi \| \| (\cV_N + \cN+1)^{1/2} \xi \| \\ 
\left| \left\langle \xi, \left[ \cN ,  \cE^{(2)}_{N,t} \right]  \xi  \right\rangle \right| &\leq 
Ce^{c|t|} \| (\cN+1)^{1/2} \xi \| \| (\cV_N + \cN + 1)^{1/2} \xi \|   \\ 
\left| \left\langle \xi, \left[ a^* (g_1) a(g_2) , \cE^{(2)}_{N,t} \right]  \xi \ \right\rangle \right| &\leq C e^{c|t|} \| g_1 \|_{H^2} \| g_2 \|_{H^2} \| (\cN+1)^{1/2} \xi \| \| (\cV_N^{1/2} + \cN+ 1)^{1/2} \xi \| \\ 
\left| \partial_t \left\langle \xi,  \cE^{(2)}_{N,t} \xi \right\rangle \right| &\leq C e^{c|t|} \| (\cN+1)^{1/2} \xi \| \| (\cV_N^{1/2} + \cN+ 1)^{1/2} \xi \| 
\end{split} \end{equation}
for all $\xi \in \cF^{\leq N}$. 
\end{prop}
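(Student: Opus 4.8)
The strategy is the one already used for Propositions \ref{prop:L0}, \ref{prop:L1} and \ref{prop:cEK}: split $\cL^{(2)}_{N,t}-\cK$ into the three summands displayed in (\ref{eq:L2-K}), conjugate each one with $e^{B(\eta_t)}$ using the series expansion of Lemma \ref{lm:conv-series}, rewrite every nested commutator by means of Lemma \ref{lm:indu}, and then estimate term by term with Lemmas \ref{lm:Pi-bds}, \ref{lm:prel1} and \ref{lm:prelim}. The first two summands,
\[ V_1 = \int dx\,\big(N^3V(N.)*|\wtph_t|^2\big)(x)\big[b_x^*b_x - N^{-1}a_x^*a_x\big], \qquad V_2 = \int dxdy\, N^3V(N(x-y))\wtph_t(x)\bar{\wtph_t}(y)\big[b_x^*b_y - N^{-1}a_x^*a_y\big], \]
have exactly the form $R_1-N^{-1}R_2$ treated in Lemma \ref{lm:R}, with one-particle operators of kernels $r_1(y;x)=(N^3V(N.)*|\wtph_t|^2)(x)\delta(x-y)$ and $r_2(y;x)=N^3V(N(x-y))\wtph_t(x)\bar{\wtph_t}(y)$. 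By Young's inequality and the compact support of $V$ one checks $\|r_1\|_\text{op},\|r_2\|_\text{op}\leq C\|\wtph_t\|_\infty^2\leq Ce^{c|t|}$ and, using $\|\wtph_t\|_{H^2},\|\dot{\wtph}_t\|_{H^2}\leq Ce^{c|t|}$ from Proposition \ref{prop:phph}(ii)--(iii) together with $H^2(\bR^3)\hookrightarrow L^\infty(\bR^3)$, also $\|\dot r_1\|_\text{op},\|\dot r_2\|_\text{op}\leq Ce^{c|t|}$. Lemma \ref{lm:R} then bounds the expectations of $e^{-B(\eta_t)}V_ie^{B(\eta_t)}$, of their commutators with $\cN$ and with $a^*(g_1)a(g_2)$, and of their time-derivatives, by $Ce^{c|t|}\|(\cN+1)^{1/2}\xi\|^2$ (times $\|g_1\|_2\|g_2\|_2\leq\|g_1\|_{H^2}\|g_2\|_{H^2}$ in the third case), all of the form required in (\ref{eq:bds-E2}); hence $V_1$ and $V_2$ contribute entirely to $\cE^{(2)}_{N,t}$.

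The real work is the third summand $V_3=\tfrac12\int dxdy\,N^3V(N(x-y))[\wtph_t(x)\wtph_t(y)b_x^*b_y^* + \text{h.c.}]$. I would write $e^{-B(\eta_t)}V_3e^{B(\eta_t)}=\tfrac12\int N^3V(N(x-y))\wtph_t(x)\wtph_t(y)\,\big(e^{-B(\eta_t)}b_x^*e^{B(\eta_t)}\big)\big(e^{-B(\eta_t)}b_y^*e^{B(\eta_t)}\big)+\text{h.c.}$, expand each conjugated field by Lemma \ref{lm:conv-series} and the Remark following Lemma \ref{lm:indu}, and organize the resulting double series by the pair of orders. The $(0,0)$ term is exactly $\tfrac12\int N^3V(N(x-y))\wtph_t(x)\wtph_t(y)b_x^*b_y^*$, which with its h.c. is precisely the operator subtracted on the second line of (\ref{eq:def-E2}). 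From the $(1,0)$ and $(0,1)$ terms (and their h.c.), using (\ref{2.2.Betacommutators}) and normal-ordering the single $b$-field against the adjacent $b^*$ through (\ref{eq:comm-b}), one extracts the scalar contraction; retaining its $\cN$-independent part, replacing $\eta_t$ by $k_t$ at the cost of the bounded correction $\mu_t=\eta_t-k_t$, and symmetrizing, this reconstructs $\operatorname{Re}\int N^3V(N(x-y))\bar{\wtph_t}(x)\bar{\wtph_t}(y)k_t(y;x)$, the first line of (\ref{eq:def-E2}), up to a remainder controlled by $Ce^{c|t|}(\cN+1)$ — here one must track the $\cN/N$-corrections in (\ref{eq:comm-b}), and to treat the $k_t$-integral one uses that $f_\ell=1-w_\ell$ solves the Neumann problem (\ref{eq:scatl}) together with $\lambda_\ell\leq CN^{-3}$ from Lemma \ref{3.0.sceqlemma}, exactly as in the proof of Proposition \ref{prop:cEK}.

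All remaining contributions — the non-contracted parts of the $(1,0),(0,1)$ terms, everything with $n+m\geq2$, and the pieces in which $\operatorname{ad}^{(n)}_{B(\eta_t)}(b_x^*)$ generates a $\Pi^{(1)}$- or $\wt\Pi^{(1)}$-operator (in particular the subleading commutator in (\ref{2.2.Betacommutators}), which exposes a bare $a_x^*$) — are estimated by Cauchy--Schwarz, pairing the singular factor $N^3V(N(x-y))\wtph_t(x)\wtph_t(y)$ with what is left. When a bare pair of fields at $x$ and $y$ is exposed one writes $\int N^3V(N(x-y))|\wtph_t(x)||\wtph_t(y)|\,\|\Psi_{xy}\| \leq \big(\int N^2V(N(x-y))\|\Psi_{xy}\|^2\big)^{1/2}\big(\int N^4V(N(x-y))|\wtph_t(x)|^2|\wtph_t(y)|^2\big)^{1/2}$, recognizing the potential energy $\cV_N$ (cf. (\ref{eq:cVN})) in the first factor and $O(N)$ in the second, which together with the $N^{-1}$ carried by these terms yields a bound by $\|\cV_N^{1/2}\xi\|\,\|(\cN+1)^{1/2}\xi\|$; otherwise one bounds directly against $\cN$, using $\int N^3V(N(x-y))|\wtph_t(x)||\wtph_t(y)|\,dxdy\leq\|V\|_1$, $\sup_x\|\eta_t(x;\cdot)\|\leq Ce^{c|t|}$, and the estimates of Lemma \ref{lm:eta} for $\eta_t^{(n)}$ and $\nabla_j\eta_t^{(n)}$. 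The $\Lambda$-bookkeeping and the powers $\|\eta_t\|^{n+m}$ are supplied by Lemma \ref{lm:prelim}, so that the double series converges once $\ell$ (hence $\sup_t\|\eta_t\|$) is small. The commutator with $\cN$ follows from Lemma \ref{lm:prel1}(i) (each $\Pi$-operator only acquires a bounded scalar); the commutator with $a^*(g_1)a(g_2)$ follows from Lemma \ref{lm:prel1}(ii)--(iii), which replace one $\eta_t$-kernel or the argument function by a kernel involving $g_1,g_2$ evaluated pointwise — this is where $H^2(\bR^3)\hookrightarrow L^\infty(\bR^3)$ enters and produces the $\|g_i\|_{H^2}$ dependence; and the time-derivative bound follows from Lemma \ref{lm:prel1}(iv) together with the $\partial_t$-estimates of Lemma \ref{lm:eta}(iv)--(vi) and of Proposition \ref{prop:phph}, which supply the extra $e^{c|t|}$.

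The main obstacle is the $V_3$ analysis, and within it two points. First, the extraction of the explicit scalar $\operatorname{Re}\int N^3V\,\bar{\wtph_t}\bar{\wtph_t}\,k_t$ from the order-one terms must leave an operator remainder that is genuinely $O(\cN+1)$ rather than still carrying a power of $N$; this is the same mechanism (Neumann equation plus $\lambda_\ell\leq CN^{-3}$) that appears in Proposition \ref{prop:cEK} and requires careful tracking of the $\cN/N$-factors in (\ref{eq:comm-b}). Second, one must organize the infinitely many higher-order terms so that every pair of fields coupled by the singular kernel $N^3V(N(x-y))$ is absorbed into $\cV_N^{1/2}$, while the complementary factors — which carry at least one additional $\eta_t$-kernel whenever $n+m\geq2$ — remain bounded uniformly in $N$, so that the sum over $(n,m)$ is a convergent geometric series in $\|\eta_t\|$. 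Once this structural argument is in place, the remaining estimates (commutators with $\cN$ and with $a^*(g_1)a(g_2)$, and the time-derivative) are routine.
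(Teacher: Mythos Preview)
Your strategy matches the paper's almost exactly: Lemma \ref{lm:R} disposes of $V_1,V_2$; for $V_3$ one expands $e^{-B(\eta_t)}b_xb_ye^{B(\eta_t)}$ as a double series, isolates the $(0,0)$ and $(0,1)$ terms, extracts the scalar $\int N^3V\bar{\wtph}_t\bar{\wtph}_t k_t$ from the contracted part of the $(0,1)$ term, and bounds the rest via Lemma \ref{lm:prelim}(iv), using the improved estimates (\ref{eq:prelim-iv3})--(\ref{eq:prelim-iv4}) that are available once $(n,k)\neq(0,0),(0,1)$.

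One correction: the Neumann equation (\ref{eq:scatl}) and the bound $\lambda_\ell\leq CN^{-3}$ play \emph{no role} in extracting the scalar here. There is no Laplacian acting on $k_t$ in this term---that happens only in the kinetic-energy analysis (Proposition \ref{prop:cEK}) and later in $\cL^{(4)}$. What you actually need is much simpler: the prefactor $\tfrac{(N-1-\cN)(N-\cN)}{N^2}-1$ is of size $O((\cN+1)/N)$, while the integral $\int N^3V(N(x-y))\bar{\wtph}_t(x)\bar{\wtph}_t(y)k_t(x;y)$ is $O(N)$ by the pointwise bound $|k_t(x;y)|\leq CN|\wtph_t(x)||\wtph_t(y)|$ from Lemma \ref{lm:eta}; their product is $O(\cN+1)$. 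The $\mu_t$ correction is harmless for the same reason, since $|\mu_t(x;y)|\leq C|\wtph_t(x)||\wtph_t(y)|$.

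A minor point on the power counting for the higher-order terms: the gain supplied by (\ref{eq:prelim-iv4}) in the $\ell_1=\ell_2=0$ case is $N^{-1/2}$, not $N^{-1}$; combined with $\big(\int N^4V(N(x-y))|\wtph_t(x)|^2|\wtph_t(y)|^2\big)^{1/2}=O(N^{1/2})$ from your Cauchy--Schwarz split, this still gives the right $\|\cV_N^{1/2}\xi\|\,\|(\cN+1)^{1/2}\xi\|$ bound. Everything else in your outline is correct.
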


\begin{proof}
The conjugation of the first two terms on the r.h.s of (\ref{eq:L2-K}) can be controlled with Lemma \ref{lm:R}, taking $r$ to be multiplication operator with the convolution $N^3 V(N.) *|\wtph_t|^2$ in the first case 
(so that $\| r \|_\text{op} = \| N^3 V(N.) * |\wtph_t|^2 \|_\infty \leq C \| \wtph_t \|_\infty^2 \leq C e^{c|t|}$) and the operator with integral kernel $r (x;y) = N^3 V(N(x-y)) \wtph_t (x) \wtph_t (y)$ in the second case 
(then $\| r \|_\text{op} \leq \sup_x \int |r(x;y)| dy \leq C e^{c|t|}$, uniformly in $N$). Hence, to show Prop. \ref{prop:E2N} it is enough to prove the bounds (\ref{eq:bds-E2}), with $\cE^{(2)}_{N,t}$ replaced by 
\begin{equation}\label{eq:wtE2} \begin{split} 
\wt{\cE}^{(2)}_{N,t} = \; &\frac{1}{2} \int dx dy \, N^3 V(N(x-y)) \left[ \bar{\wtph}_t (x) \bar{\wtph}_t (y) e^{-B(\eta_t)} b_x b_y e^{B(\eta_t)} + \text{h.c.} \right]  \\ &- \text{Re } \int dx dy \, N^3 V(N(x-y)) \bar{\wtph}_t (x) \bar{\wtph}_t (y) k_t (x;y) \\ &- \frac{1}{2} \int dx dy N^3 V(N(x-y)) \left[ \bar{\wtph}_t (x) \bar{\wtph}_t (y) b_x b_y + \text{h.c.} \right] \end{split} \end{equation}
By Lemma \ref{lm:conv-series}, we can write
\begin{equation}\label{eq:ebbe} \begin{split} 
\int dx dy \, &N^3 V(N(x-y)) \bar{\wtph}_t (x) \bar{\wtph}_t (y) e^{-B(\eta_t)} b_x b_y e^{B(\eta_t)} \\ = \; & \sum_{n,k \geq 0} \frac{(-1)^{k+n}}{k!n!} \int dx dy N^3 V(N(x-y)) \bar{\wtph}_t (x) \bar{\wtph}_t (y) \text{ad}^{(n)}_{B(\eta_t)} (b_x) \text{ad}^{(k)}_{B(\eta_t)} (b_y) \\ =\;& \int dx dy N^3 V(N(x-y)) \bar{\wtph}_t (x) \bar{\wtph}_t (y) b_x b_y \\&- \int dx dy N^3 V(N(x-y)) \bar{\wtph}_t (x) \bar{\wtph}_t (y) b_x [B(\eta_t) , b_y] \\ &+ \sum_{n,k}^* \frac{(-1)^{k+n}}{k!n!} \int dx dy N^3 V(N(x-y)) \bar{\wtph}_t (x) \bar{\wtph}_t (y) \text{ad}^{(n)}_{B(\eta_t)} (b_x) \text{ad}^{(k)}_{B(\eta_t)} (b_y) \end{split} \end{equation}
where we isolated the terms with $(n,k) = (0,0)$ and $(n,k) = (0,1)$ and the sum $\sum^*$ runs over all other pairs $(n,k) \in \bN \times \bN$. The first term on the r.h.s. of (\ref{eq:ebbe}) (the one associated with $(k,n) = (0,0)$) is subtracted in (\ref{eq:wtE2}) and does not enter the error term $\wt{\cE}_{N,t}^{(2)}$. The second term on the r.h.s. of (\ref{eq:ebbe}), on the other hand, is given by
\[ \begin{split} \text{P} := \; & -\int dx dy \, N^3 V(N(x-y)) \bar{\wtph}_t (x) \bar{\wtph}_t (y) b_x [B(\eta_t) , b_y] \\ = &\; \frac{N-1-\cN}{N} \int dx dy \, N^3 V(N(x-y)) \bar{\wtph}_t (x) \bar{\wtph}_t (y) \,  b_x  b^* (\eta_{y})  
\\ &- \frac{1}{N} \int dx dy dw dz \, N^3 V(N (x-y)) \bar{\wtph}_t (x) \bar{\wtph}_t (y) \, \eta_t (z;w) \, b_x b^*_z a_w^* a_y
\end{split}\]
Commuting in both terms the annihilation field $b_x$ to the right, we find 
\begin{equation}\label{eq:Pbd} \begin{split} \text{P} = &\; 
\frac{N-1-\cN}{N} \frac{N-\cN}{N} \int dx dy \, N^3 V(N(x-y)) \bar{\wtph}_t (x) \bar{\wtph}_t (y) \, \eta_t (x;y)
\\ & +\frac{N-1-\cN}{N} \int dx dy dz\, N^3 V(N(x-y)) \bar{\wtph}_t (x) \bar{\wtph}_t (y) \, \left[ b^* (\eta_{y}) b_x - \frac1N a^* (\eta_{y}) a_x \right]  \\
&- 2\frac{N-\cN}{N^2} \int dx dy dz\, N^3 V(N(x-y)) \bar{\wtph}_t (x) \bar{\wtph}_t (y) \,  a^* (\eta_{y}) a_x   \\
&-\frac{N-\cN}{N^2} \int dx dy dz dw \, N^3 V(N(x-y)) \bar{\wtph}_t (x) \bar{\wtph}_t (y) \,\eta_t (z;w)  
a_w^* a_z^* a_x a_y \\ =: & \; \text{P}_1 + \text{P}_2 + \text{P}_3+\text{P}_4  \end{split} \end{equation}
Writing $\eta_t = k_t + \mu_t$, and using the pointwise bounds $|\mu_t (x;y)| \leq C |\wtph_t (x)| |\wtph_t (y)|$ and $|k_t (x;y)| \leq CN |\wtph_t (x)| |\wtph_t (y)|$ from Lemma \ref{lm:eta}, we obtain that 
\[ \Big| \langle \xi, \text{P}_1 \xi \rangle - \int dx dy N^3 V(N(x-y)) \bar{\wtph}_t (x) \bar{\wtph}_t (y) k_t (x;y) \Big| \leq C \| (\cN+1)^{1/2} \xi \|^2 \]

The expectation of the operator $\text{P}_2$, and analogously the expectation of the operator $\text{P}_3$, can be bounded by
\[ \begin{split} \left| \langle \xi , \text{P}_2 \xi \rangle \right| \leq  \; &\| (\cN+1)^{1/2} \xi \| \int dx dy N^3 V(N(x-y)) |\wtph_t (x)| |\wtph_t (y)| \| \eta_{y} \|  \| b_x \xi \|  \\
\leq \; &\| \wtph_t \|_\infty^2 \, \| (\cN+1)^{1/2} \xi \| \left[ \int dx dy N^3 V(N(x-y)) \| \eta_{y} \|^2 \right]^{1/2} \\ &\hspace{4cm} \times \left[  \int dx dy N^3 V(N(x-y)) \| b_x \xi \|^2 \right]^{1/2} \\ \leq \; &C e^{c|t|} \| \eta_t \| \| (\cN+1)^{1/2} \xi \|^2\end{split} \]
As for the last term on the r.h.s. of (\ref{eq:Pbd}), its expectation is estimated by 
\[ \begin{split} \left| \left\langle \xi, \text{P}_3 \xi \right\rangle \right| &\leq \| \eta_t \| \| (\cN+1) \xi \| \int dx dy N^2 V(N(x-y)) |\wtph_t (x)| |\wtph_t (y)| \| a_x a_y \xi \| 
\\ &\leq \| \eta_t \| \| (\cN+1) \xi \| \left[ \int dx dy N^2 V(N(x-y)) \| a_x a_y \xi \|^2 \right]^{1/2} \\ &\hspace{2cm} \times  \left[ \int dx dy N^2 V(N(x-y)) |\wtph_t (x)|^2 |\wtph_t (y)|^2 \right]^{1/2} \\ 
&\leq C \| \eta_t \| \| (\cN+1)^{1/2} \xi \| \| \cV_N^{1/2} \xi \| \end{split} \]
We conclude that
\begin{equation}\label{eq:P} \begin{split} \Big| \langle \xi, \text{P} \xi \rangle - &\int dx dy N^3 V(N(x-y))\bar{\wtph}_t (x) \bar{\wtph}_t (y) k_t (x;y) \Big| \\ &\leq    C e^{c|t|} \| \eta_t \| \| (\cN+1)^{1/2} \xi \| \| (\cV_N + \cN + 1)^{1/2} \xi \| 
\end{split} \end{equation}
Let us now consider the terms in the sum on the last line of (\ref{eq:ebbe}), where we excluded the pairs $(k,n) = (0,0)$ and $(k,n) = (0,1)$. By Lemma \ref{lm:indu}, the operator 
\begin{equation}\label{eq:term-nk}\int dx dy N^3 V(N(x-y)) \wtph_t (x) \wtph_t (y) \text{ad}^{(n)}_{B(\eta_t)} (b_x) \text{ad}^{(k)}_{B(\eta_t)} (b_y) \end{equation}
can be expressed as the sum of $2^{n+k} n!k!$ terms having the form
\begin{equation}\label{eq:typbb} 
\begin{split} \text{E} = \int dx dy N^3 V(N(x-y)) \wtph_t (x) &\wtph_t (y) \Lambda_1 \dots \Lambda_{i_1} N^{-k_1} \Pi^{(1)}_{\sharp,\flat} (\eta^{(j_1)}_{t, \natural_1} , \dots \eta^{(j_{k_1})}_{t,\natural_{k_1}} ; \eta_{x,\lozenge}^{(\ell_1)}) \\ & \times \Lambda'_1 \dots \Lambda'_{i_2} N^{-k_2} \Pi^{(2)}_{\sharp',\flat'} (\eta^{(m_1)}_{t, \natural'_1} , \dots \eta^{(m_{k_2})}_{t,\natural'_{k_2}} ; \eta_{y,\lozenge'}^{(\ell_2)})
\end{split} \end{equation}
where $k_1, k_2, i_1, i_2 \geq 0$, $j_1, \dots , j_{k_1}, m_1, \dots , m_{k_2} >0$ and where each $\Lambda_i$ and $\Lambda'_i$ is either a factor $(N-\cN)/N$ or $(N+1-\cN)/N$ or a $\Pi^{(2)}$-operator of the form
\begin{equation}\label{eq:Pi2bb} N^{-p} \, \Pi^{(2)}_{\underline{\sharp}, \underline{\flat}} (\eta_{t,\underline{\natural}_1}^{(q_1)} , \dots , \eta_{t,\underline{\natural}_{p}}^{(q_p)}) \end{equation}
With Lemma \ref{lm:prel1}, we obtain
\[ \begin{split} 
| \langle \xi, \text{E} \xi \rangle | \leq \; &\| (\cN+1)^{1/2} \xi \| \int dx dy  N^3 V(N(x-y)) \wtph_t (x) \wtph_t (y) \\ & \hspace{1cm} \times  \Big\| (\cN+1)^{-1/2} \Lambda_1 \dots \Lambda_{i_1} N^{-k_1} \Pi^{(1)}_{\sharp,\flat} (\eta^{(j_1)}_{t,\natural_1}, \dots , \eta^{(j_{k_1})}_{t,\natural_{k_1}} ; \eta_{x,\lozenge}^{(\ell_1)}) \\  & \hspace{2cm} \times \Lambda'_1 \dots \Lambda'_{i_2} N^{-k_2} \Pi^{(1)}_{\sharp,\flat} (\eta^{(m_1)}_{t,\natural'_1}, \dots , \eta^{(m_{k_2})}_{t,\natural'_{k_2}} ; \eta_{y,\lozenge'}^{(\ell_2)}) \xi \Big\|  
\\ \leq \; &C^{k+n} \| \eta_t \|^{n+k-2}\| (\cN+1)^{1/2} \xi \| \int dx dy  N^3 V(N(x-y)) \wtph_t (x) \wtph_t (y) \\ &\hspace{1cm} \times \Big\{ n \| \eta_x \| \| \eta_y \| \| (\cN+1)^{1/2} \xi \| + \| \eta_t \| \| \eta_y \| \| a_x \xi \| \\ &\hspace{2cm} + C e^{c|t|} \| \eta_t \| \| (\cN+1)^{1/2} \xi \| + N^{-1/2} \| \eta_t \|^2 \| a_x a_y \xi \|  \Big\} \end{split} \]
where (in the last term in the parenthesis) we used the pointwise bound $N^{-1} |\eta_t (x;y)| \leq C e^{c|t|}$ from Lemma \ref{lm:eta}. The contribution of the first three terms in the parenthesis can be bounded by Cauchy-Schwarz, since $\| \wtph_t \|_\infty \leq C e^{c|t|}$. We find
\[ |\langle \xi ,\text{E} \xi \rangle | \leq C^{k+n} n e^{c|t|} \| \eta_t \|^{k+n-1} \| (\cN+1)^{1/2} \xi \| \| (\cV_N + \cN + 1)^{1/2} \xi \| \]

Since the expectation of (\ref{eq:term-nk}) is the sum of $2^{n+k}k!n!$ such contributions, inserting in (\ref{eq:ebbe}) and taking into account also (\ref{eq:P}), we conclude that 
\[ \left| \langle \xi , \wt\cE_{N,t}^{(2)} \xi \rangle \right| \leq \, C e^{c|t|} \, \| (\cN+1)^{1/2} \xi \| \| (\cV_N + \cN+ 1)^{1/2} \xi \| \]
if $\sup_t \| \eta_t \|$ is small enough. As usual, we can prove similarly that the same bounds hold true 
for the expectation of the commutators of $\wt{\cE}_{N,t}^{(2)}$ with the number of particles operator $\cN$ and with $a^* (g_1) a(g_2)$, for arbitrary $g_1, g_2 \in H^2 (\bR^3)$ (this assumption allows us to extract $\| g_j \|_{\infty} \leq C \| g_j \|_{H^2}$) and also for the time derivative of $\wt{\cE}_{N,t}^{(2)}$. 
\end{proof}

\subsection{Analysis of $e^{-B(\eta_t)} \cL^{(3)}_{N,t} e^{B(\eta_t)}$}
\label{sec:L3}

Recall from (\ref{eq:L0-4}) that
\[ \cL^{(3)}_{N,t} = \int dx dy\, N^{5/2} V(N(x-y))\wtph_t(y)\left[ b_x^* a_y^* a_x + \text{h.c.} \right] 
\]
We conjugate $\cL^{(3)}_{N,t}$ with the unitary operator $e^{B(\eta_t)}$. We define the error term $\cE^{(3)}_{N,t}$ through the equation
\begin{equation}\label{eq:eL3e-in} e^{-B(\eta_t)} \cL_{N,t}^{(3)} e^{B(\eta_t)} = -\sqrt{N} \, \left[ b (\cosh_{\eta_t} (h_{N,t})) + b^* (\sinh_{\eta_t} (\bar{h}_{N,t})) + \text{h.c.} \right]  + \cE^{(3)}_{N,t} \end{equation}
where we recall, from (\ref{eq:eL1e}) that, $h_{N,t} = (N^3 V(N.) w_\ell (N.) * |\wtph_t|^2) \wtph_t$. In the next proposition we collect the important properties of the error term $\cE^{(3)}_{N,t}$
\begin{prop}\label{prop:L3}
Under the same assumptions as in Theorem \ref{thm:gene}, there exist constants $C, c >0$ such that
\begin{equation}\label{eq:E3-bds} \begin{split} 
\left| \langle \xi, \cE^{(3)}_{N,t} \xi \rangle \right| 
&\leq C e^{c|t|} \| (\cN+1)^{1/2} \xi \| \| (\cV_N + \cN + 1)^{1/2} \xi \| \\
\left| \langle \xi, \left[ \cN, \cE^{(3)}_{N,t} \right] \xi \rangle \right| &\leq C e^{c|t|}  \| (\cN+1)^{1/2} \xi \| \| (\cV_N + \cN + 1)^{1/2} \xi \| \\
\left| \langle \xi , \left[ a^* (g_1) a(g_2) , \cE^{(3)}_{N,t} \right] \xi \rangle \right| &\leq C e^{c|t|} \| g_1 \|_{H^2} \| g_2 \|_{H^2} \| (\cN+1)^{1/2} \xi \| \| (\cV_N + \cN + 1)^{1/2} \xi \| \\
\left| \partial_t \langle \xi, \cE^{(3)}_{N,t} \xi \rangle \right| &\leq C e^{c|t|}  \| (\cN+1)^{1/2} \xi \| \| (\cV_N + \cN + 1)^{1/2} \xi \|
\end{split} \end{equation}
for all $\xi \in \cF^{\leq N}$. 
\end{prop}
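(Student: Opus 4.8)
The plan is to conjugate $\cL^{(3)}_{N,t}$ with $e^{B(\eta_t)}$ following the same scheme already used for $\cL^{(1)}_{N,t}$ and $\cL^{(2)}_{N,t}$, while keeping careful track of the extra power of $N$ carried by the cubic term. First I would write
\[ e^{-B(\eta_t)}\cL^{(3)}_{N,t}e^{B(\eta_t)} = \int dxdy\, N^{5/2}V(N(x-y))\wtph_t(y)\, e^{-B(\eta_t)} b_x^* a_y^* a_x e^{B(\eta_t)} + \text{h.c.} \]
and expand the conjugated monomial. Proceeding as in the proof of Lemma \ref{lm:R} for the operator $R_2$ (Duhamel formula together with the commutators in (\ref{2.2.Betacommutators})) and then applying Lemma \ref{lm:conv-series} to the conjugated $b$-fields that are produced, one rewrites $e^{-B(\eta_t)} b_x^* a_y^* a_x e^{B(\eta_t)}$ as an absolutely convergent double series---convergence being guaranteed by the smallness of $\sup_t\|\eta_t\|_2$---whose summands are products $\Lambda_1\cdots\Lambda_{i_1}N^{-k_1}\Pi^{(1)}_{\sharp,\flat}(\cdots)\,\Lambda'_1\cdots\Lambda'_{i_2}N^{-k_2}\Pi^{(1)}_{\sharp',\flat'}(\cdots)$ of the type analysed in Lemma \ref{lm:indu}, with the weight $N^3V(N\cdot)\wtph_t$ sitting in the outermost variables.

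Second, I would isolate the lowest-order summands of this series. Commuting annihilation operators to the right exactly as in (\ref{eq:comm01})--(\ref{eq:comm01-4}) and using the definition (\ref{eq:ktdef}) of $k_t$, one checks that the contraction of a single $\eta_t$-kernel against $N^3V(N\cdot)\wtph_t$ produces precisely the factor $N^{1/2}h_{N,t}$, $h_{N,t}=(N^3V(N\cdot)w_\ell(N\cdot)*|\wtph_t|^2)\wtph_t$; summing over all powers of $\eta_t$ as in (\ref{eq:ch-term}), these contributions reconstruct $-\sqrt N\,[b(\cosh_{\eta_t}(h_{N,t}))+b^*(\sinh_{\eta_t}(\bar{h}_{N,t}))+\text{h.c.}]$, so that what is left is exactly the error term $\cE^{(3)}_{N,t}$ of (\ref{eq:eL3e-in}). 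The leftover pieces here---obtained by replacing $\eta_t$ by $k_t+\mu_t$ (with $\mu_t$ bounded uniformly in $N$ by Lemma \ref{lm:eta}), by the deviation of the factors $(N-\cN)/N$ from $1$, and, most importantly, by using the Neumann equation (\ref{eq:scatl}) to split $N^2V(N\cdot)$ into $N^2V(N\cdot)f_\ell(N\cdot)\le N^2V(N\cdot)$ plus a remainder of size $\mathcal O(N^2\lambda_\ell)=\mathcal O(N^{-1})$ (Lemma \ref{3.0.sceqlemma})---are all controlled by $\|(\cN+1)^{1/2}\xi\|$ and $\|(\cV_N+\cN+1)^{1/2}\xi\|$ after a Cauchy--Schwarz in the measure $N^2V(N(x-y))\,dxdy$: one factor is bounded by $[\int N^2V(N(x-y))|\wtph_t(x)|^2|\wtph_t(y)|^2\,dxdy]^{1/2}\le Ce^{c|t|}$ and the other by $\|\cV_N^{1/2}\xi\|$.

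Third, the remaining summands---those carrying a surplus of $\eta_t$-kernels beyond the ones used to build the main term---are estimated by Lemma \ref{lm:prelim}, parts iii) and iv), distinguishing cases according to the final-kernel indices $\ell_1,\ell_2$ exactly as in Cases 1--6 of the proof of Proposition \ref{prop:cEK}: whenever a commutator generates a singular factor ($\simeq\|\nabla\eta_t\|_2$ or the pointwise kernel $\eta_t(x;y)$) one extracts a compensating $N^{-1/2}$ from a surplus $\eta_t$-kernel, and then closes the bound with the same weighted Cauchy--Schwarz, producing $\|\cV_N^{1/2}\xi\|$ and $\|(\cN+1)^{1/2}\xi\|$. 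The commutators of $\cE^{(3)}_{N,t}$ with $\cN$ and with $a^*(g_1)a(g_2)$, and its time derivative, are handled by the same expansion: by Lemma \ref{lm:prel1} each such operation turns a summand into a sum of finitely many summands of the same form, with one $\eta_t$-kernel or the outer wave function replaced according to (\ref{eq:hell-poss}), (\ref{eq:hellk-poss}), or differentiated; for the $a^*(g_1)a(g_2)$ commutator one uses $\|g_j\|_\infty\le C\|g_j\|_{H^2}$, and for the time derivative the bounds of Lemma \ref{lm:eta} iv)--vi) and of Proposition \ref{prop:phph} on $\partial_t\wtph_t$.

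The main obstacle will be the bookkeeping of the $\sqrt N$-order terms: since $\cL^{(3)}_{N,t}$ by itself is only bounded by $CN^{1/2}e^{c|t|}\|\cN^{1/2}\xi\|\,\|\cV_N^{1/2}\xi\|$, the cancellation of the $\sqrt N$ scale against the explicit term in (\ref{eq:eL3e-in}) must be exact, and one has to verify that every leftover from the low-order part of the series is genuinely $\mathcal O(1)$ relative to $(\cN+1)^{1/2}$ and $(\cV_N+\cN+1)^{1/2}$. This hinges on the splitting via the Neumann equation (\ref{eq:scatl}) that converts the dangerous $N^{5/2}V$-weight into the bounded potential energy $\cV_N$ together with the negligible $\lambda_\ell$-term (here $\lambda_\ell\le CN^{-3}$ by Lemma \ref{3.0.sceqlemma}).
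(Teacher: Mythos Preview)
Your overall plan is essentially the paper's: expand $e^{-B(\eta_t)}a_y^*a_xe^{B(\eta_t)}$ by one Duhamel step (producing $a_y^*a_x$ plus conjugated quadratic $b$-expressions), then expand the conjugated $b$-fields via Lemma~\ref{lm:conv-series}, isolate the low-order summands from which the $\sqrt N\,[b(\cosh_{\eta_t}h_{N,t})+b^*(\sinh_{\eta_t}\bar h_{N,t})]$ term is extracted, and bound the rest by Lemma~\ref{lm:prelim} with a weighted Cauchy--Schwarz in $N^2V(N(x-y))\,dxdy$. That is exactly what the paper does, and your handling of the commutators with $\cN$, $a^*(g_1)a(g_2)$ and of $\partial_t$ via Lemma~\ref{lm:prel1} is the right endgame.

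There is, however, one concrete misconception you should drop: the Neumann equation (\ref{eq:scatl}) is \emph{not} used anywhere in this proof, and the $\sqrt N$--cancellation does not ``hinge'' on it. The dangerous $N^{5/2}V$-weight is tamed by a purely algebraic mechanism: the commutator $[b(\eta_x),b_y^*]$ (arising when you push the $b(\eta_x)$ produced by $[B(\eta_t),b_x^*]$ past $b_y^*$) yields the scalar $\eta_t(x;y)$; writing $\eta_t=k_t+\mu_t$ and using the \emph{definition} $k_t(x;y)=-Nw_\ell(N(x-y))\wtph_t(x)\wtph_t(y)$, the combination $N^{5/2}V(N(x-y))\wtph_t(y)\,k_t(x;y)$ collapses exactly to $-\sqrt N\,N^3V(N\cdot)w_\ell(N\cdot)|\wtph_t|^2\wtph_t=-\sqrt N\,h_{N,t}$ after integrating in $y$. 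The $\mu_t$-piece and the $(N-\cN)/N$-corrections are then $\mathcal O(1)$ because $|\mu_t(x;y)|\le Ce^{c|t|}$ and $N^{-1}|k_t(x;y)|\le Ce^{c|t|}$. No scattering identity enters. Relatedly, your pointers to (\ref{eq:comm01})--(\ref{eq:comm01-4}) are to the kinetic-energy analysis, where the Neumann problem \emph{is} needed to control $-\Delta_xk_t$; here there is no Laplacian and that machinery is irrelevant. For the higher-order remainders the paper argues directly with parts iii) and iv) of Lemma~\ref{lm:prelim} (using in particular the improved bounds (\ref{eq:prelim-iv3}), (\ref{eq:prelim-iv4}) once $k\ge1$ or $k+n\ge2$), rather than by mimicking the six cases of Proposition~\ref{prop:cEK}.
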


\begin{proof}
We start by writing 
\[ \begin{split} 
e^{-B(\eta_t)} a_y^* a_x e^{B(\eta_t)} &= a_y^* a_x + \int_0^1 ds \, e^{-sB (\eta_t)} [a_y^* a_x , B(\eta_t)] e^{sB(\eta_t)} \\ &= a_y^* a_x + \int_0^1 e^{-sB(\eta_t)}  \left[ b_y^* b^* (\eta_x) + b(\eta_y) b_x \right] e^{sB(\eta_t)} \end{split} \]
{F}rom Lemma \ref{lm:conv-series}, we conclude that 
\[ \begin{split} e^{-B(\eta_t)} a_y^* a_x e^{B(\eta_t)} = \; &a_y^* a_x + \sum_{k,r \geq 0} \frac{(-1)^{k+r}}{k! r! (k+r+1)}\\ &\hspace{1.5cm} \times \left[ \text{ad}^{(k)}_{B(\eta_t)} (b_y^*) \text{ad}^{(r)}_{B(\eta_t)} (b^* (\eta_x)) + \text{ad}^{(k)}_{B(\eta_t)} (b(\eta_y)) \text{ad}^{(r)}_{B(\eta_t)} (b_x) \right] \end{split} \]
Inserting in the expression for $\cL^{(3)}_{N,t}$, we conclude that 
\[ \begin{split} 
e^{-B(\eta_t)} & \cL^{(3)}_{N,t} e^{B(\eta_t)} \\ = \; &\sum_{n \geq 0} \frac{(-1)^{n}}{n!} \int dx dy N^{5/2} V(N (x-y)) \wtph_t (y)  \text{ad}^{(n)}_{B(\eta_t)} (b^*_x) a_y^* a_x \\
&+ \sum_{n,k,r \geq 0} \frac{(-1)^{n+k+r}}{n! k! r! (k+r+1)} \int dx dy N^{5/2} V(N (x-y)) \wtph_t (y) \, \text{ad}^{(n)}_{B(\eta_t)} (b^*_x) \\ &\hspace{2cm} \times \left[ \text{ad}^{(k)}_{B(\eta_t)} (b_y^*) \text{ad}^{(r)}_{B(\eta_t)} (b^* (\eta_x)) + \text{ad}^{(k)}_{B(\eta_t)} (b(\eta_y)) \text{ad}^{(r)}_{B(\eta_t)} (b_x) \right] \\ &+\text{h.c.} \end{split} \]
We divide the triple sum in several parts. We find 
\[ \begin{split} 
e^{-B(\eta_t)} & \cL^{(3)}_{N,t} e^{B(\eta_t)} \\ = \; &\sum_{n \geq 0} \frac{(-1)^{n}}{n!} \int dx dy N^{5/2} V(N (x-y)) \wtph_t (y) \text{ad}^{(n)}_{B(\eta_t)} (b^*_x) a_y^* a_x \\
&+ \sum_{n,r \geq 0} \frac{(-1)^{n+r}}{n! (r+1)!} \int dx dy N^{5/2} V(N (x-y)) \wtph_t (y) \text{ad}^{(n)}_{B(\eta_t)} (b^*_x) b_y^* \text{ad}^{(r)}_{B(\eta_t)} (b^* (\eta_x)) \\
&+ \sum_{n,r \geq 0, k \geq 1} \frac{(-1)^{n+k+r}}{n!k! r! (k+r+1)} \int dx dy N^{5/2} V(N (x-y)) \wtph_t (y)  \\ &\hspace{6.5cm} \times \text{ad}^{(n)}_{B(\eta_t)} (b^*_x) \text{ad}^{(k)} (b_y^*) \text{ad}^{(r)}_{B(\eta_t)} (b^* (\eta_x)) \\ 
&+ \sum_{n,r \geq 0, k \geq 1} \frac{(-1)^{n+k+r}}{n!k! r! (k+r+1)}\int dx dy N^{5/2} V(N (x-y)) \wtph_t (y)   \\ &\hspace{6.5cm} \times  \text{ad}^{(n)}_{B(\eta_t)}(b_x^*)\text{ad}^{(k)}_{B(\eta_t)} (b(\eta_y)) \text{ad}^{(r)}_{B(\eta_t)} (b_x) \\ &+\text{h.c.} \end{split}  \]
In the terms with $k=0$, we distinguish furthermore the case $n=1$ from $n\not = 1$. We find
\begin{equation}\label{eq:L3-sum} \begin{split} e^{-B(\eta_t)} & \cL^{(3)}_{N,t} e^{B(\eta_t)} \\ = \; &- 
\int dx dy N^{5/2} V(N (x-y)) \wtph_t (y) [B(\eta_t), b^*_x] a_y^* a_x  \\ 
&-\sum_{r \geq 0} \frac{(-1)^r}{(r+1)!}   \int dx dy N^{5/2} V(N (x-y)) \wtph_t (y) [ B(\eta_t) , b^*_x] b_y^* \, \text{ad}^{(r)}_{B(\eta_t)} (b^* (\eta_x)) \\
&+ \sum_{n \not = 1} \frac{(-1)^n}{n!} \int dx dy N^{5/2} V(N(x-y)) \wtph_t (y) \text{ad}^{(n)}_{B(\eta_t)} (b_x^*) a_y^* a_x \\
&+ \sum_{n \not = 1, r \geq 0} \frac{(-1)^{n+r}}{n! (r+1)!} \int dx dy N^{5/2} V(N (x-y)) \wtph_t (y) \text{ad}_{B(\eta_t)}^{(n)}  (b_x^*) b_y^* \text{ad}^{(r)}_{B(\eta_t)} (b^* (\eta_x)) \\
&+\sum_{n,r \geq 0, k \geq 1} \frac{(-1)^{n+k+r}}{n!k! r! (k+r+1)} \int dx dy N^{5/2} V(N (x-y)) \wtph_t (y)   \\ &\hspace{6.5cm} \times  \text{ad}^{(n)}_{B(\eta_t)} (b^*_x) \text{ad}^{(k)}_{B(\eta_t)} (b_y^*) \text{ad}^{(r)}_{B(\eta_t)} (b^* (\eta_x)) \\ 
&+ \sum_{n,r \geq 0, k \geq 1} \frac{(-1)^{n+k+r}}{n!k! r! (k+r+1)}\int dx dy N^{5/2} V(N (x-y)) \wtph_t (y) 
\\ &\hspace{6.5cm} \times  \text{ad}^{(n)}_{B(\eta_t)}(b_x^*) \text{ad}^{(k)}_{B(\eta_t)} (b(\eta_y)) \text{ad}^{(r)}_{B(\eta_t)} (b_x) \\ &+\text{h.c.} 
\end{split}  \end{equation}

We start by estimating the contribution of the last term on the r.h.s. of (\ref{eq:L3-sum}). We are interested in the expectation  
\[ \begin{split} &\left| \int dx dy \, N^{5/2} V(N (x-y)) \wtph_t (y) \, \langle \xi, \text{ad}^{(n)}_{B(\eta_t)} (b_x^*) \text{ad}^{(k)}_{B(\eta_t)} (b(\eta_y)) \text{ad}^{(r)}_{B(\eta_t)} (b_x) \xi \rangle \right| \\ &\hspace{1cm} \leq \int dx dy \, N^{5/2} V(N(x-y)) |\wtph_t (y)| \| \text{ad}^{(n)}_{B(\eta_t)} (b_x) \xi \| \| \text{ad}^{(k)}_{B(\eta_t)} (b(\eta_y)) \text{ad}^{(r)}_{B(\eta_t)} (b_x) \xi \| \end{split} \]
for $n,r \geq 0$ and $k \geq 1$. According to Lemma \ref{lm:conv-series}, the norm $\| \text{ad}^{(n)}_{B(\eta_t)} (b_x) \xi \|$ is bounded by the sum of $2^n n!$ terms of the form  
\[ \text{P}_1 = \| \Lambda_1 \dots \Lambda_{i} N^{-k} \Pi^{(1)}_{\sharp,\flat} (\eta^{(j_1)}_{t,\natural_1}, \dots , \eta^{(j_k)}_{t,\natural_k} ; \eta^{(s)}_{x,\lozenge}) \xi \| \]
for $i,k,s \geq 0$, $j_1, \dots, j_k \geq 1$, where each $\Lambda_i$ is either a factor $(N-\cN)/N$ or $(N+1-\cN)/N$ or a $\Pi^{(2)}$-operator of the form
\begin{equation}\label{eq:Pi2-L3} N^{-p} \Pi^{(2)}_{\sharp',\flat'} (\eta^{(q_1)}_{t,\natural'_1}, \dots , \eta^{(q_p)}_{t,\natural'_p}) \end{equation}
{F}rom Lemma \ref{lm:prel1}, we find 
\begin{equation}\label{eq:P1} \text{P}_1 \leq \left\{ \begin{array}{ll} C^{n} \| \eta \|^{n-1} \| \eta_x \| \| (\cN+1)^{1/2} \xi \| \quad &\text{if $s \geq 1$} \\ C^n \| \eta \|^n \| a_x \xi \| \quad & \text{if $s =0$} \end{array} \right. \end{equation}
Similarly, the norm $\| \text{ad}^{(k)}_{B(\eta_t)} (b(\eta_y)) \text{ad}^{(r)}_{B(\eta_t)} (b_x) \xi \|$ is bounded by the sum of $2^{k+r} k!r!$ terms having the form
\[\begin{split} & \text{P}_2 = \left\| \Lambda_1 \dots \Lambda_{i_1} N^{-k_1} \Pi^{(1)}_{\sharp,\flat} (\eta^{(j_1)}_{t,\natural_1}, \dots , \eta^{(j_{k_1})}_{t,\natural_{k_1}} ; \eta^{(\ell_1+1)}_{y,\lozenge}) \right. \\  &\hspace{4cm} \left. \times \Lambda'_1 \dots \Lambda'_{i_2} N^{-k_2} \Pi^{(1)}_{\sharp',\flat'} (\eta^{(m_1)}_{t,\natural'_1} , \dots , \eta^{(m_{k_2})}_{t,\natural'_{k_2}} ; \eta^{(\ell_2)}_{x,\lozenge'})  \xi \right\| \end{split}  \]
which can be estimated (again with Lemma \ref{lm:prel1}) by 
\[ \text{P}_2 \leq \left\{ \begin{array}{ll} C^{k+r} \| \eta_t \|^{k+r-2} \| \eta_x \| \| \eta_y \| \| (\cN +1) \xi \| \quad &\text{if } \ell_2 \geq 1 \\ C^{k+r} \| \eta_t \|^{k+r-1} \| \eta_y \| \| a_x (\cN+1)^{1/2} \xi \| \quad &\text{if } \ell_2 = 0 \end{array} \right.  
\]
Combining this estimate with (\ref{eq:P1}), distinguishing different cases depending on the values of $s$ and $\ell_2$, and using the estimate $\sup_y \| \eta_{y} \| \leq  C e^{c|t|} < \infty$ from Lemma \ref{lm:eta}, we easily find by Cauchy-Schwarz that
\begin{equation}\label{eq:L3-sixf} \begin{split}
&\left| \int dx dy \, N^{5/2} V(N (x-y)) \wtph_t (y) \, \langle \xi, \text{ad}^{(n)}_{B(\eta_t)} (b_x^*) \text{ad}^{(k)}_{B(\eta_t)} (b(\eta_y)) \text{ad}^{(r)}_{B(\eta_t)} (b_x) \xi \rangle \right| \\ &\hspace{3cm} \leq n!k!r! \, C^{n+k+r} N^{-1/2} \| \eta_t \|^{k+r-1} \| (\cN+1)^{1/2} \xi \| \| (\cN+1) \xi \| \\ &\hspace{3cm} \leq n!k!r!\, C^{n+k+r} \| \eta_t \|^{k+r-1} \|(\cN+1)^{1/2} \xi \|^2
\end{split} \end{equation}
for all $\xi \in \cF^{\leq N}$. 
	
Let us now consider the fifth sum on the r.h.s. of (\ref{eq:L3-sum}). The expectation of every term in this sum is bounded by 
\begin{equation}\label{eq:L3-fif} \begin{split} &\left| \int dx dy \, N^{5/2} V(N(x-y)) \wtph_t (y) \langle \xi , \text{ad}^{(n)}_{B(\eta_t)} (b_x^*) \text{ad}^{(k)}_{B(\eta_t)} (b_y^*) \text{ad}^{(r)}_{B(\eta_t)} (b^*(\eta_{x})) \xi \rangle \right| 
\\ &\hspace{.5cm} \leq \int dx dy \, N^{5/2} V(N(x-y)) |\wtph_t (y)| \, \| \text{ad}^{(k)}_{B(\eta_t)} (b_y) \,  \text{ad}^{(n)}_{B(\eta_t)} (b_x) \xi \| \, \| \text{ad}^{(r)}_{B(\eta_t)} (b^* (\eta_x)) \xi \| \end{split} \end{equation} 
where we assume $k \geq 1$, $n,r \geq 0$. According to Lemma \ref{lm:indu}, $\| \text{ad}^{(r)}_{B(\eta_t)} (b^* (\eta_x)) \xi \|$ is bounded by the sum of $2^r r!$ terms of the form
\begin{equation*} \text{Q}_1 = \| \Lambda_1 \dots \Lambda_{i_1} N^{-k_1} \Pi^{(1)}_{\sharp,\flat} (\eta_{t,\natural_1}^{(j_1)}, \dots , \eta_{t,\natural_{k_1}}^{(j_{k_1})} ; \eta^{(\ell_1 + 1)}_x ) \, \xi \| \end{equation*}
for a $i_1, k_1, \ell_1 \geq 0$ and $j_1, \dots , j_{k_1} \geq 1$. Each $\Lambda_i$ is either a factor $(N-\cN)/N$, a factor $(N+1-\cN)/N$ or a $\Pi^{(2)}$-operator of the form (\ref{eq:Pi2-L3}). 
{F}rom Lemma \ref{lm:prel1}, we have  
\[  \text{Q}_1 \leq C^r \| \eta_t \|^r \| \eta_x \| 
\| (\cN+1)^{1/2} \xi\|  \]

On the other hand, using again Lemma \ref{lm:indu} the norm $\| 
\text{ad}^{(k)}_{B(\eta_t)} (b_y) \,  \text{ad}^{(n)}_{B(\eta_t)} 
(b_x) \xi \|$ is bounded by the sum of $2^{n+k} k!n!$ terms having the form
\[ \begin{split} \text{Q}_2 &= \| \Lambda_1 \dots \Lambda_{i_1} N^{-k_1} \Pi^{(1)}_{\sharp, \flat} (\eta_{t,\natural_1}^{(j_1)} , \dots , \eta_{t,\natural_{k_2}}^{(j_{k_1})} ; \eta_{y,\lozenge}^{(\ell_1)}) \\ &\hspace{3cm} \times \Lambda'_1 \dots \Lambda'_{i_2} N^{-k_2} \Pi^{(1)}_{\sharp', \flat'} (\eta_{t,\natural'_1}^{(m_1)},  \dots , \eta_{t,\natural'_{k_2}}^{(m_{k_2})} ; \eta_{x,\lozenge'}^{(\ell_2)}) \xi \| \end{split} \]
where $i_1, i_2, k_1, k_2, \ell_1, \ell_2 \geq 0$ and $j_1, \dots , j_{k_1}, m_1, \dots, m_{k_2} \geq 1$ and where each $\Lambda_i$ and $\Lambda'_i$ operator is either a factor $(N-\cN)/N$, $(N-\cN+1)/N$ or a $\Pi^{(2)}$-operator of the form (\ref{eq:Pi2-L3}). 
Using part iv) of Lemma \ref{lm:prel1}, we obtain (using the assumption $k \geq 1$ to apply (\ref{eq:prelim-iv3}) and using (\ref{eq:prelim-iv4a}) with $\alpha = 1$)  
\[ \begin{split} \text{Q}_2 \leq C^{n+k}& \| \eta_t \|^{n+k-2} \Big\{ \left[ (n+1) \| \eta_x \| \| \eta_y \| + \| \eta_t \| N^{-1} |\eta_t (x;y) \right] \| (\cN+1) \xi \| \\ & \hspace{3cm} + \| \eta_y \| \| \eta_t \|  \| a_x (\cN+1)^{1/2} \xi \| + \| \eta_t \|^2 \| a_x a_y \xi \| \Big\} \end{split} \]
With the bound $\sup_x \| \eta_x \|, \sup_{x,y} N^{-1} |\eta_t (x;y)| \leq C e^{c|t|}$ from Lemma \ref{lm:eta}, we conclude that 
\begin{equation}\label{eq:L3-fiff}  \begin{split} &\left| \int dx dy \, N^{5/2} V(N(x-y)) \wtph_t (y) \langle \xi , \text{ad}^{(n)}_{B(\eta_t)} (b_x^*) \text{ad}^{(k)}_{B(\eta_t)} (b_y^*) \text{ad}^{(r)}_{B(\eta_t)} (\eta_{x}) \xi \rangle \right| 
\\ &\hspace{.5cm} \leq n!k!r! \, C^{n+k+r} e^{c|t|} \| \eta_t \|^{n+k+r}
\| (\cN+1)^{1/2} \xi \| \| (\cV_N + \cN + 1)^{1/2} \xi \|  \end{split} \end{equation}
for all $\xi \in \cF^{\leq N}$. 

Let us now study the fourth term on the r.h.s. of (\ref{eq:L3-sum}). As we did for the other terms, we bound the expectation 
\begin{equation}\label{eq:L3-four} \begin{split} &\left| \int dx dy \, N^{5/2} V(N (x-y)) \wtph_t (y) \langle \xi, \text{ad}^{(n)}_{B(\eta_t)} (b_x^*) b_y^* \text{ad}_{B(\eta_t)}^{(r)} (b^* (\eta_x)) \xi \rangle \right| \\ &\hspace{1cm} \leq \int dx dy \, N^{5/2} V(N(x-y)) |\wtph_t (y)| \| b_y \text{ad}^{(n)}_{B(\eta_t)} (b_x) \xi \| \| \text{ad}^{(r)}_{B(\eta_t)} (b^* (\eta_x)) \xi \| \end{split} 
\end{equation}
where we assume that $n \not = 1$, $r \geq 0$. According to Lemma \ref{lm:indu}, $\|\text{ad}^{(r)}_{B(\eta_t)} (b^* (\eta_x)) \xi \|$ can be bounded by the sum of $2^r r!$ terms of the form
\[ \text{R}_1 = \| \Lambda_1 \dots \Lambda_{i_1} N^{-k_1} \Pi^{(1)}_{\sharp, \flat} (\eta_{t,\natural_1}^{(j_1)}, \dots , \eta^{(j_{k_1})}_{t,\natural_{k_1}} ; \eta^{(\ell_1 + 1)}_{x,\lozenge}) \xi \| \]
for $i_1, k_1, \ell_1 \geq 0$ and $j_1, \dots , j_{k_1} \geq 1$. According to Lemma \ref{lm:prel1}, such a term can always be estimated by 
\begin{equation}\label{eq:R1} \text{R}_1 \leq C^r \| \eta_t \|^{r} \| \eta_{x} \| \| (\cN+1)^{1/2} \xi \| \end{equation}
On the other hand, the norm $\| b_y \text{ad}^{(n)}_{B(\eta_t)} (b_x) \xi \|$ can be bounded by the sum of $2^n n!$ contributions having the form
\begin{equation}\label{eq:typL3-4} \text{R}_2 = \| b_y \Lambda_1 \dots \Lambda_{i_1} \Pi^{(k_1)}_{\sharp,\flat} (\eta^{(j_1)}_{t,\natural_1} , \dots, \eta^{(j_{k_1})}_{t,\natural_{k_1}} ; \eta^{(\ell_1)}_{x,\lozenge}) \xi \| \end{equation}
for $i_1, k_1, \ell_1 \geq 0$ and $j_1, \dots, j_{k_1} \geq 1$. With Lemma \ref{lm:prel1}, we find that
\[ \begin{split}
 \text{R}_2 &\leq C^n \| \eta_t \|^{n-2} \Big\{ \left[ (1+n/N)  \| \eta_x \| \| \eta_y \| + \| \eta_t \| N^{-1} |\eta_t (x;y)| \right] \| (\cN+1) \xi \|  \\ &\hspace{2.5cm} + \| \eta_t  \| \| \eta_x \| \| a_y (\cN+1)^{1/2} \xi \| + (n/N) \| \eta_t \| \| \eta_y \|  \| a_x (\cN+1)^{1/2} \xi \| \\ &\hspace{2.5cm} + \| \eta_t \|^2 \| a_x a_y \xi \| \Big\} \end{split} \]
With $\| \wtph_t \|_{\infty} \leq C e^{c|t|}$ and $\sup_{x,y} N^{-1} |\eta_t (x;y)| \leq C e^{c|t|}$ we conclude, similarly to (\ref{eq:L3-fiff}), that
\begin{equation}\label{eq:L3-fourf}  \begin{split} &\left| \int dx dy \, N^{5/2} V(N (x-y)) \wtph_t (y) \langle \xi, \text{ad}^{(n)}_{B(\eta_t)} (b_x^*) b_y^* \text{ad}_{B(\eta_t)}^{(r)} (b^* (\eta_x)) \xi \rangle \right| \\ &\hspace{2cm} \leq (n+1)!r! C^{n+r} e^{c|t|} \| \eta_t \|^{r+n} \| (\cN+1)^{1/2} \xi \| \| (\cV_N + \cN+1)^{1/2} \xi \| \end{split} \end{equation}

The expectation of terms in the third sum on the r.h.s. of (\ref{eq:L3-sum}) are bounded by 
\[ \begin{split} 
&\left|  \int dx dy \, N^{5/2} V(N(x-y)) \wtph_t (y) \langle \xi , \text{ad}^{(n)}_{B(\eta_t)} (b_x^*) a_y^* a_x \xi \rangle \right| \\ &\hspace{2cm} 
\leq \int dx dy \, N^{5/2} V(N(x-y)) |\wtph_t (y)| \| a_y \text{ad}^{(n)}_{B(\eta_t)} \xi \| \| a_x \xi \| \end{split} \]
which is similar to the r.h.s. of (\ref{eq:L3-four}), the only difference being that instead of $\| \text{ad}^{(r)}_{B(\eta_t)} (b^* (\eta_x)) \xi \|$ we have $\| a_x \xi \|$ (and the fact that in the other norm, we have the field $a_y$ instead of $b_y$; it is clear, however, that both fields can be treated similarly). Analogously to (\ref{eq:L3-fourf}), we conclude that
\begin{equation}\label{eq:L3-thirdf} \begin{split} 
&\left|  \int dx dy \, N^{5/2} V(N(x-y)) \wtph_t (y) \langle \xi , \text{ad}^{(n)}_{B(\eta_t)} (b_x^*) a_y^* a_x \xi \rangle \right| \\ &\hspace{2cm} \leq (n+1)! \,  C^{n} e^{c|t|} \| \eta_t \|^{n-1} \| (\cN+1)^{1/2} \xi \| \| (\cV_N + \cN+1)^{1/2} \xi \| \end{split} 
\end{equation}

Let us now switch to the second term on the r.h.s. of (\ref{eq:L3-sum}) (the sum over $r \geq 0$). First of all, we compute the commutator
\[ [ B(\eta_t) , b_x^* ] = -  b (\eta_{x})\left( 1 - \frac{\cN}{N} \right) + \frac{1}{N} \int dz dw \bar{\eta} (z;w) a_x^* a_w b_z  \]
Hence the $r$-th term in the sum is proportional to 
\begin{equation} \label{eq:S1S2} \begin{split} -\int dx dy &N^{5/2} V(N(x-y)) \wtph_t (y) \frac{(N-1-\cN)}{N} b (\eta_{x}) b_y^* \text{ad}^{(r)}_{B(\eta_t)} (b^* (\eta_x)) \\ &+ \int dx dy N^{5/2} V(N(x-y)) \wtph_t (y) N^{-1} \Pi^{(1)}_{(*,\cdot),*} (\eta_t , \delta_x)^* b_y^* \text{ad}^{(r)}_{B(\eta_t)} (b^* (\eta_x)) \\ =: & \;  \text{S}_1 + \text{S}_2 \end{split} \end{equation}
The expectation of $\text{S}_2$ can be bounded as follows.
\[ |\langle \xi , \text{S}_2 \xi \rangle | \leq \int dx dy \, N^{5/2} V(N(x-y)) |\wtph_t (y)| \| b_y N^{-1} \Pi^{(1)}_{(*,\cdot),*} (\eta_t , \delta_x) \xi \| \| \text{ad}^{(r)}_{B(\eta_t)} (b^* (\eta_x)) \xi\| \]
As in (\ref{eq:R1}), we find   
\[ \| \text{ad}^{(r)}_{B(\eta_t)} (b^* (\eta_x)) \xi\| \leq C^r r! \| \eta_t \|^r \| \eta_x \| \| (\cN+1)^{1/2} \xi \| \]
Since, on the other hand,
\[ \|  b_y N^{-1} \Pi^{(1)}_{(*,\cdot),*} (\eta_t , \delta_x) \xi \| \leq C N^{-1} \| \eta_{y} \| \| a_x (\cN+1)^{1/2} \xi \| + C\| \eta_{t} \| \| a_xa_y\xi \| \]
we conclude that
\[  |\langle \xi , \text{S}_2 \xi \rangle | \leq C^r e^{c|t|} \| \eta_t \|^{r+1} \| (\cN+1)^{1/2} \xi \| \| (\mathcal{V}_N+\cN+1)^{1/2} \xi \| \]
for all $\xi \in \cF^{\leq N}$. We are left with the operator $\text{S}_1$ defined in (\ref{eq:S1S2}). Commuting $b (\eta_{x})$ with $b_y^*$ we write it as
\[ \begin{split} \text{S}_1 = \; &-\int dx dy N^{5/2} V(N(x-y)) \eta_t (x;y) \wtph_t (y) \frac{(N-\cN)(N-\cN-1)}{N^2} \text{ad}^{(r)}_{B(\eta_t)} (b^* (\eta_x))\\
&- \int dx dy N^{5/2} V(N(x-y)) \wtph_t (y) \frac{(N-\cN-1)}{N} \left[b_y^* b (\eta_{x})-\frac1N a_y^* a (\eta_{x})\right]  \text{ad}^{(r)}_{B(\eta_t)} (b^* (\eta_x)) \\
=: \; & \text{S}_{11} + \text{S}_{12} \end{split} \]
The expectation of $\text{S}_{12}$ is estimated by 
\[ |\langle \xi ,\text{S}_{12} \xi \rangle | \leq C^r e^{c|t|} \| \eta_t \|^{r+1} \| (\cN+1)^{1/2} \xi \|^2 \]
As for $\text{S}_{11}$, we decompose
\[ \begin{split} 
\text{S}_{11} = &-\int dx dy N^{5/2} V(N(x-y)) k_t (x;y) \wtph_t (y) \frac{(N-\cN)(N-\cN-1)}{N^2} \text{ad}^{(r)}_{B(\eta_t)} (b^* (\eta_x)) \\ &-\int dx dy N^{5/2} V(N(x-y)) \mu_t (x;y) \wtph_t (y) \frac{(N-\cN)(N-\cN-1)}{N^2} \text{ad}^{(r)}_{B(\eta_t)} (b^* (\eta_x)) \\ =: &\; \text{S}_{111} + \text{S}_{112} \end{split} \]
Since $|\mu_t (x;y)| \leq C e^{c|t|}$ from Lemma \ref{lm:eta}, it is easy to  estimate the expectation of the term $\text{S}_{112}$ by 
\[ |\langle \xi , \text{S}_{112} \xi \rangle | \leq C^r e^{c|t|}  \| \eta_t \|^{r+1} \| (\cN+1)^{1/2} \xi \|^2  \]
As for the term $\text{S}_{111}$, we use the fact that, by Lemma \ref{lm:indu}, the nested commutator $\text{ad}^{(r)}_{B(\eta_t)} (b^* (\eta_x))$ is given by 
\[ \left( 1-\frac{\cN-1}{N} \right)^{m} \left( 1- \frac{\cN-2}{N} \right)^{m}b^* ((\eta_t \bar{\eta}_t)^m \eta_{x}) \]
if $r = 2m$ is even and by 
\[ - \left( 1-\frac{\cN+1}{N} \right)^{m+1} \left( 1- \frac{\cN}{N} \right)^{m} b ((\eta_t \bar{\eta}_t)^{m+1}_x) \]
if $r= 2m+1$ is odd, up to terms ($2^r r!-1$ of them) having the form
\[ \Lambda_1 \dots \Lambda_{i_1} N^{-k_1} \Pi^{(1)}_{\sharp,\flat} (\eta^{(j_1)}_{t,\natural_1}, \dots , \eta^{(j_{k_1})}_{t,\natural_{k_1}} ; \eta^{(\ell_1+1)}_{x,\lozenge} ) \]
where either $k_1 \geq 1$ or at least one of the $\Lambda$-operators is a $\Pi^{(2)}$-operator of the form (\ref{eq:Pi2-L3}). We conclude that, if $r = 2m$ is even,
\begin{equation}\label{eq:S11-even} \text{S}_{111} =  \sqrt{N} \int dx dy N^3 V(N(x-y))  w_\ell (N(x-y)) |\wtph_t (y)|^2 \wtph_t (x) b^* ((\eta_t \bar{\eta}_t)^m \eta_{x}) + \text{S}_{1112} \end{equation}
while, if $r = 2m+1$ is odd, 
\begin{equation}\label{eq:S11-odd} \text{S}_{111} = -\sqrt{N} \int dx dy N^{3}  V(N(x-y)) w_\ell (N(x-y)) |\wtph_t (y)|^2 \wtph_t (x) b^* ((\eta_t \bar{\eta}_t)_{x}^{m+1} ) + \text{S}_{1112} \end{equation}
where, in both cases, the expectation of the error term $\text{S}_{1112}$ is bounded by 
 \[\begin{split}  |\langle \xi , \text{S}_{1112} \xi \rangle | &\leq C^r \| \eta_t \|^r \int dx dy \, N^{3/2} V(N(x-y)) \, |k_t (x;y)| \| \eta_x \| \| (\cN+1)^{1/2} \xi \| \| (\cN+1) \xi \| \\ &\leq C^r \| \eta_t \|^{r+1} \| (\cN+1)^{1/2} \xi \|^2 \end{split} \]
for all $\xi \in \cF^{\leq N}$. Here, once again, we used the fact that $N^{-1} |\eta_t (x;y)| \leq C$. Summing over all $r \geq 0$, we conclude that
\[ \begin{split} 
- \sum_{r \geq 0} \frac{(-1)^r}{(r+1)!} &\int dx dy \, N^{5/2} V(N(x-y)) \wtph_t (y) [B(\eta_t), b_x^*] b_y^* \text{ad}^{(r)}_{B(\eta_t)} (b^* (\eta_{x})) \\ &= -\sqrt{N} \left[ b ((\cosh_{\eta_t} -1) (h_{N,t})) + b^* (\sinh_{\eta_t} (h_{N,t})) \right] + \text{S} \end{split} \]
where 
\begin{equation}\label{eq:exp-S} |\langle \xi, S \xi \rangle | \leq e^{c|t|} \sum_{r \geq 0} (C \| \eta_t \|)^r \| (\cN+1)^{1/2} \xi \|^2 \leq 
C e^{c|t|} \| (\cN+1)^{1/2} \xi \|^2 \end{equation}
for all $\xi \in \cF^{\leq N}$. 

Finally, we consider the first term on the r.h.s. of (\ref{eq:L3-sum}). This term can be handled similarly as we did with the second term (the sum over $r \geq 0$). We obtain that 
\[ - \int dx dy N^{5/2} V(N(x-y)) \wtph_t (y) [B(\eta_t),b^*_x] a_y^* a_x = -\sqrt{N} b(h_{N,t}) + \wt{S} \]
where the expectation of $\wt{S}$ can be bounded as we did with the expectation of $S$ in (\ref{eq:exp-S}). 

Recalling the definition of $\cE^{(3)}_{N,t}$ in (\ref{eq:eL3e-in}), it follows from (\ref{eq:L3-sixf}), (\ref{eq:L3-fiff}), (\ref{eq:L3-fourf}), (\ref{eq:L3-thirdf}) and (\ref{eq:exp-S}) that 
\[ |\langle \xi , \cE^{(3)}_{N,t} \xi \rangle | \leq C e^{c|t|} \| (\cN+1)^{1/2} \xi \| \| (\cV_N + \cN+1)^{1/2} \xi \| \]
The bounds in (\ref{eq:E3-bds}) for the expectation of the commutators $[ \cN , \cE^{(3)}_{N,t}]$, $[a^* (g_1) a(g_2), \cE^{(3)}_{N,t}]$ and of the time-derivative $\partial_t \cE^{(3)}_{N,t}$ can be proven analogously. We omit the details. 
\end{proof}

\subsection{Analysis of $e^{-B(\eta_t)} \cL^{(4)}_{N,t} e^{B(\eta_t)}$}
\label{sec:L4}

Recall from (\ref{eq:L0-4}) that
\[ \cL^{(4)}_{N,t} = \cV_N = \frac{1}{2} \int dx dy \, N^2 V(N(x-y)) a_x^* a_y^* a_y a_x  \]
We conjugate $\cL^{(4)}_{N,t}$ with the unitary operator $e^{B(\eta_t)}$. We define the error term $\cE^{(4)}_{N,t}$ through the equation

\begin{equation}\label{eq:eL4e-in} 
\begin{split} 
e^{-B(\eta_t)} \cL_{N,t}^{(4)} e^{B(\eta_t)} 
= \; &\cV_N + \frac{1}{2} \int dx dy \, N^2 V(N(x-y)) |k_t (x;y)|^2 \\ &+ \frac{1}{2} \int dx dy \, N^2 V(N(x-y)) \left[ k_t (x;y) b_x^* b_y^* + \text{h.c.} \right]\\
&+\cE_{N,t}^{(4)} \end{split} \end{equation}
In the next proposition we collect some important properties of the operator $\cE^{(4)}_{N,t}$.
\begin{prop}\label{prop:L4}
Under the same assumptions as in Theorem \ref{thm:gene}, there exist constants $C, c >0$ such that
\begin{equation}\label{eq:E4-bds} \begin{split} 
\left| \langle \xi, \cE^{(4)}_{N,t} \xi \rangle \right| 
&\leq C e^{c|t|} \| (\cN+1)^{1/2} \xi \| \| (\cV_N + \cN + 1)^{1/2} \xi \| \\
\left| \langle \xi, \left[ \cN, \cE^{(4)}_{N,t} \right] \xi \rangle \right| &\leq C e^{c|t|}  \| (\cN+1)^{1/2} \xi \| \| (\cV_N + \cN + 1)^{1/2} \xi \| \\
\left| \langle \xi , \left[ a^* (g_1) a(g_2) , \cE^{(4)}_{N,t} \right] \xi \rangle \right| &\leq C e^{c|t|} \| g_1 \|_{H^2} \| g_2 \|_{H^2} \| (\cN+1)^{1/2} \xi \| \| (\cV_N + \cN + 1)^{1/2} \xi \| \\
\left| \partial_t \langle \xi, \cE^{(4)}_{N,t} \xi \rangle \right| &\leq C e^{c|t|}  \| (\cN+1)^{1/2} \xi \| \| (\cV_N + \cN + 1)^{1/2} \xi \|
\end{split} \end{equation}
for all $\xi \in \cF^{\leq N}$. 
\end{prop}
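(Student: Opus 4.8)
The plan is to follow the template used for the conjugation of the kinetic energy in Proposition~\ref{prop:cEK} and of the cubic term $\cL^{(3)}_{N,t}$ in Proposition~\ref{prop:L3}. Writing $B=B(\eta_t)$ and using the first-order Duhamel identity
\[ e^{-B}\cV_N e^{B} = \cV_N + \int_0^1 ds\, e^{-sB}\,[\cV_N,B]\,e^{sB}, \]
I would first compute $[\cV_N,B]$ explicitly. Since $\cV_N$ commutes with $\cN$, one finds $[\cV_N,b_z^*]=(\int dx\, N^2V(N(x-z))\,a_x^*a_z^*a_x)\sqrt{(N-\cN)/N}$ and an analogous formula for $[\cV_N,b_z]$; inserting these into $B(\eta_t)$ and using the symmetry of $\eta_t$, the commutator $[\cV_N,B]$ is a finite sum of (i) a quadratic term $\tfrac12\int dxdy\,N^2V(N(x-y))\,\eta_t(x;y)\,b_x^*b_y^*$ together with its $b\,b$-counterpart, and (ii) cubic terms carrying \emph{both} a potential $N^2V(N(x-y))$ and an $\eta_t$-kernel, of the schematic form $\int N^2V(N(x-y))\,\eta_t(u;y)\,a_x^*a_y^*a_u^*a_x$ times bounded $\cN$-dependent factors. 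Rewriting the $a$-monomials in terms of $b$-operators, up to corrections of relative order $\cN/N$ which are of lower order, prepares these terms for the conjugation by $e^{\pm sB}$.

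Next, inside the Duhamel integral I would expand the conjugation of each $b$-field by $e^{\pm sB}$ using Lemma~\ref{lm:conv-series}, express each nested commutator as a sum of $\Lambda$- and $\Pi^{(1)}$-operators by Lemma~\ref{lm:indu}, and collect the leading contributions. The quadratic term in (i) reproduces, after the decomposition $\eta_t=k_t+\mu_t$ and after one further contraction of its $b$-fields against the $b\,b$-part of $B$ (equivalently, from the second order of the Baker--Campbell--Hausdorff series), exactly the constant $\tfrac12\int N^2V(N(x-y))|k_t(x;y)|^2$ and the quadratic operator $\tfrac12\int N^2V(N(x-y))[k_t(x;y)b_x^*b_y^*+\mathrm{h.c.}]$ appearing in (\ref{eq:eL4e-in}); no use of the Neumann scattering equation (\ref{eq:scatl}) is needed here, the cancellation of these terms against the $\Delta w_\ell$-term of $\cL^{(2)}_{N,t}$ and against $\cL^{(0)}_{N,t}$ being carried out later, in the assembly of $C_{N,t}$. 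The $\mu_t$-parts, by the pointwise bound $|\mu_t(x;y)|\le C|\wtph_t(x)||\wtph_t(y)|$ of Lemma~\ref{lm:eta}, contribute to $\cE^{(4)}_{N,t}$ with expectation bounded by $Ce^{c|t|}\|(\cN+1)^{1/2}\xi\|^2$ via Cauchy--Schwarz.

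The remaining, genuinely error, terms are the cubic remainders of (ii) together with their images under conjugation. These I would bound by first normally reordering them (as in (\ref{eq:comm01})), then applying Lemma~\ref{lm:prelim} for the $\Pi$-operators and Lemma~\ref{lm:Npow} for the $\cN$-dependent factors. The delicate point, and the main obstacle of the proof, is precisely that these terms carry simultaneously a factor $N^2V(N(x-y))$ and an $\eta_t$-kernel: one must pair the potential with two of the four field operators, by Cauchy--Schwarz, so as to extract a factor $\|\cV_N^{1/2}\xi\|$, and then control the remaining $\eta_t$-kernel using its smallness together with the pointwise bound $|k_t(x;y)|\le CN|\wtph_t(x)||\wtph_t(y)|$ of Lemma~\ref{lm:eta}, absorbing the extra power of $N$ into the potential to produce the $L^1$-integrable weight $N^3V(N(x-y))$. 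This mirrors the treatment of $\mathrm{M}_{111}$ in Proposition~\ref{prop:cEK} and of $\mathrm{S}_{111}$ in Proposition~\ref{prop:L3}, and, after summing the resulting geometric series in $\sup_t\|\eta_t\|_2$ (which converges for $\ell>0$ small, by (\ref{eq:etato0})), yields $|\langle\xi,\cE^{(4)}_{N,t}\xi\rangle|\le Ce^{c|t|}\|(\cN+1)^{1/2}\xi\|\,\|(\cV_N+\cN+1)^{1/2}\xi\|$.

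Finally, the bounds for $[\cN,\cE^{(4)}_{N,t}]$, for $[a^*(g_1)a(g_2),\cE^{(4)}_{N,t}]$ and for $\partial_t\cE^{(4)}_{N,t}$ follow by the same scheme: by Lemma~\ref{lm:prel1}, commuting $\cN$ or $a^*(g_1)a(g_2)$ through each term of the expansion, or differentiating it in $t$, produces a finite sum of terms of the same structure with a single $\eta_t$-kernel or $\wtph_t$-factor replaced according to (\ref{eq:hell-poss})--(\ref{eq:hellk-poss}), resp. differentiated (using $\|\partial_t\eta_t\|_2,\|\partial_t k_t\|_2,\|\partial_t\mu_t\|_2\le Ce^{c|t|}$ from Lemma~\ref{lm:eta} and $\|\partial_t\wtph_t\|_{H^2}\le Ce^{c|t|}$ from Proposition~\ref{prop:phph}), so that the previous estimates apply verbatim; in the commutator with $a^*(g_1)a(g_2)$ one uses $\|g_i\|_\infty\le C\|g_i\|_{H^2}$, which is why the assumption $g_1,g_2\in H^2(\bR^3)$ enters.
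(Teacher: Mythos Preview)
Your overall strategy---first-order Duhamel, computing $[\cV_N,B]$, extracting the leading constant and quadratic pieces, and controlling the remainder through Lemmas~\ref{lm:conv-series}, \ref{lm:indu}, \ref{lm:prelim}---is the paper's, and your identification of where the constant $\tfrac12\int N^2V|k_t|^2$ and the operator $\tfrac12\int N^2V\,k_t\,b_x^*b_y^*$ come from is correct. There is, however, a genuine gap in how you propose to handle the \emph{quartic} (not cubic) remainder $b_x^*b_y^*\big(a^*(\eta_y)a_x+a^*(\eta_x)a_y\big)$ of $[\cV_N,B]$. Your plan to ``rewrite the $a$-monomials in terms of $b$-operators, up to corrections of relative order $\cN/N$'' does not give a workable expansion: the $\cN$-dependent correction factors do not commute with $e^{-sB}$, and Lemma~\ref{lm:conv-series} only expands conjugated $b$-fields, not $a$-fields or functions of $\cN$.

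The paper resolves this by applying a \emph{second} Duhamel expansion to the factor $a_xa^*(\eta_y)+a^*(\eta_x)a_y$ inside the first integral, using that its commutator with $B(\eta_t)$ is $2b^*(\eta_x)b^*(\eta_y)+b(\eta^{(2)}_y)b_x+b(\eta^{(2)}_x)b_y$. This yields the decomposition $\cV_N+\mathrm{W}_1+\mathrm{W}_2+\mathrm{W}_3+\mathrm{W}_4$ of~(\ref{eq:L4-sum}): $\mathrm{W}_1$ (from the contraction $\eta_t(x;y)$) contains the main terms after isolating $(n,k)=(0,0),(0,1)$ and splitting $\eta_t=k_t+\mu_t$; $\mathrm{W}_2$ carries the \emph{unexpanded} $a^*(\eta_x)a_y$ on the right, bounded directly by $\|(\cN+1)^{-1/2}a^*(\eta_x)a_y\xi\|\le\|\eta_x\|\|a_y\xi\|$ together with part~(iv) of Lemma~\ref{lm:prelim} for the two $\mathrm{ad}$-factors; and $\mathrm{W}_3,\mathrm{W}_4$ contain \emph{four} nested commutators each, bounded again via Lemma~\ref{lm:prelim}(iv). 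It is precisely the $\|a_xa_y\xi\|$ case of Lemma~\ref{lm:prelim}(iv) that produces the $\cV_N^{1/2}$ factor after Cauchy--Schwarz with the potential; the analogy you draw to $\mathrm{M}_{111}$ and $\mathrm{S}_{111}$ is slightly off, since those required the Neumann equation~(\ref{eq:scatl}) to produce a $V$-factor, whereas here the potential is already explicit and no scattering equation enters.
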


\begin{proof}
We start by writing 
\[ e^{-B(\eta_t)} a_x^* a_y^* a_y a_x e^{B(\eta_t)} = a_x^* a_y^* a_y a_x + \int_0^1 ds \, e^{-s B(\eta_t)} \left[ a_x^* a_y^* a_y a_x , B(\eta_t) \right] e^{s B(\eta_t)} \]
A straighforward computation gives 
\begin{equation}\label{eq:L4in} \begin{split}
e^{-B(\eta_t)} &a_x^* a_y^* a_y a_x e^{B(\eta_t)} \\ = \; &a_x^* a_y^* a_y a_x + \int_0^1 ds \, e^{-s B(\eta_t)} \left[ b_x^* b_y^* \, \left( a_x a^* (\eta_y) +  a^* (\eta_x) a_y \right) + \text{h.c.} \right] e^{sB(\eta_t)} \end{split} \end{equation}
Now we observe that
\[ \begin{split} 
e^{-sB(\eta_t)}  & \left[ a_x a^* (\eta_y) +  a^* (\eta_x) a_y \right] e^{sB(\eta_t)} \\ = \; &a_x a^* (\eta_y) +  a^* (\eta_x) a_y + \int_0^s d\tau \, e^{-\tau B(\eta_t)} \left[ a_x a^* (\eta_y) +  a^* (\eta_x) a_y, B(\eta_t) \right] e^{\tau B(\eta_t)} \\ = \; & \eta_t (x;y) + a^* (\eta_y) a_x +  a^* (\eta_x) a_y  \\ &+ \int_0^s d\tau \, e^{-\tau B(\eta_t)} \left[ 2b^* (\eta_x) b^* (\eta_y) + b (\eta^{(2)}_y) b_x + b(\eta^{(2)}_x) b_y \right] e^{\tau B(\eta_t)}
\end{split} \]
Inserting in (\ref{eq:L4in}), expanding as in Lemma \ref{lm:conv-series}, and integrating over $s,\tau$, we obtain 
\begin{equation}\label{eq:L4-sum} 
e^{-B(\eta_t)} \cL^{(4)}_{N,t} e^{B(\eta_t)}  = \cV_N + \text{W}_1 + \text{W}_2 + \text{W}_3 + \text{W}_4 \end{equation}
where 
\[ \begin{split}  
\text{W}_1 &=  \frac{1}{2} \sum_{n,k \geq 0} \frac{(-1)^{n+k}}{n!k!(n+k+1)}  \int dxdy \, N^2 V(N(x-y)) \eta_t (x;y) \,  \text{ad}^{(n)}_{B(\eta_t)} (b_x^*) \text{ad}^{(k)}_{B(\eta_t)} (b_y^*)  \\
\text{W}_2 &= \sum_{n,k \geq 0} \frac{(-1)^{n+k}}{n!k!(n+k+1)}  \int dxdy \, N^2 V(N(x-y)) \, \text{ad}^{(n)}_{B(\eta_t)} (b_x^*) \text{ad}^{(k)}_{B(\eta_t)} (b_y^*)  a^* (\eta_x) a_y  \\
\text{W}_3 &= \sum_{n,k,m,r \geq 0} \frac{(-1)^{n+k+m+r}}{n!k!m!r!(m+r+1)(n+k+m+r+2)} \\ &\hspace{.5cm} \times  \int dx dy \, N^2 V(N(x-y))  \, \text{ad}^{(n)}_{B(\eta_t)} (b_x^*) \text{ad}^{(k)}_{B(\eta_t)} (b_y^*) \text{ad}^{(m)}_{B(\eta_t)} (b (\eta^{(2)}_x)) \text{ad}^{(r)}_{B(\eta_t)} (b_y) \\
\text{W}_4 &= \sum_{n,k,m,r \geq 0} \frac{(-1)^{n+k+m+r}}{n!k!m!r!(m+r+1)(m+r+n+k+2)} \\ &\hspace{.5cm} \times \int dx dy N^2 V(N(x-y)) \, \text{ad}^{(n)}_{B(\eta_t)} (b_x^*) \text{ad}^{(k)}_{B(\eta_t)} (b_y^*) \text{ad}^{(m)}_{B(\eta_t)} (b^* (\eta_x)) \text{ad}^{(r)}_{B(\eta_t)} (b^* (\eta_y)) \end{split} \]
Let us now estimate the expectation of $\text{W}_2$. By Cauchy-Schwarz, we have
\[ \begin{split} 
&\left| \int dx dy N^2 V(N(x-y)) \langle \xi,  \text{ad}^{(n)}_{B(\eta_t)} (b_x^*) \text{ad}^{(k)}_{B(\eta_t)} (b^*_y)  a^* (\eta_x) a_y \xi \rangle \right| \\ &\hspace{.5cm} \leq  \int dx dy N^2 V(N(x-y)) \\ &\hspace{2cm} \times  \| (\cN+1)^{1/2} \text{ad}^{(k)}_{B(\eta_t)} (b_y) \text{ad}^{(n)}_{B(\eta_t)} (b_x) \xi \| \| (\cN+1)^{-1/2} a^* (\eta_x) a_y \xi \| \end{split} \]
We bound 
\begin{equation}\label{eq:W2-1} \| (\cN+1)^{-1/2} a^* (\eta_x) a_y \xi \| \leq \| \eta_x \| \| a_y \xi \|   
\end{equation}
On the other hand, according to Lemma \ref{lm:conv-series}, 
$\| (\cN+1)^{1/2} \text{ad}^{(k)}_{B(\eta_t)} (b_y) \text{ad}^{(n)}_{B(\eta_t)} (b_x) \xi \|$ is bounded by the sum of $2^{n+k} n!k!$ contributions having the form
\begin{equation}\label{eq:norm-T} \begin{split} 
\text{T} = &\left\| (\cN+1)^{1/2} \Lambda_1 \dots \Lambda_{i_1} N^{-k_1} \Pi^{(1)}_{\sharp, \flat} (\eta^{(j_1)}_{t,\natural_1}, \dots , \eta^{(j_{k_1})}_{t,\natural_{k_1}} ; \eta_{y,t,\lozenge}^{(\ell_1)}) \right. \\ &\hspace{4cm} \left. \times  \Lambda'_1 \dots \Lambda'_{i_2} N^{-k_2} \Pi^{(1)}_{\sharp,\flat} (\eta^{(m_1)}_{t,\natural'_1} , \dots, \eta^{(m_{k_2})}_{t,\natural'_{k_2}} ; \eta^{(\ell_2)}_{x,\lozenge'}) \xi \right\| \end{split} \end{equation}
with $i_1, i_2, k_1, k_2, \ell_1, \ell_2 \geq 0$, $j_1, \dots , j_{k_1}, m_1, \dots , m_{k_2} \geq 0$ and where each $\Lambda_i$ and $\Lambda'_i$ operator is either a factor $(N-\cN)/N$, $(N-\cN+1)/N$ or a $\Pi^{(2)}$-operator of the form 
\begin{equation}\label{eq:Pi2-L4}
N^{-p} \Pi^{(2)}_{\underline{\sharp}, \underline{\flat}} (\eta^{(q_1)}_{t,\underline{\natural}_1} ,\dots , \eta^{(q_{p})}_{t,\underline{\natural}_p}) 
\end{equation}
According to Lemma \ref{lm:prel1}, part iv), we have
\begin{equation}\label{eq:Tf} \begin{split} \text{T} &\leq (n+1) C^{k+n} \| \eta_t \|^{k+n-2} \Big\{ \| \eta_x \| \| \eta_y \| \| (\cN+1)^{3/2} \xi \| \\ &\hspace{3cm} + \| \eta_t \| \| \eta_x \| \| a_y (\cN+1) \xi \| + \| \eta_t \| \| \eta_y \| \| a_x (\cN+1) \xi \| \\ &\hspace{3cm} + \| \eta_t \| |\eta_t (x;y)| \| (\cN+1)^{1/2} \xi \| + \| \eta_t \|^2\sqrt N \| a_x a_y \xi \| \Big\}  \end{split} \end{equation}
For $\xi \in \cF^{\leq N}$, we obtain 
\[ \begin{split} 
&\left| \int dx dy N^2 V(N(x-y)) \eta_t (x;y) \langle \xi,  \text{ad}^{(n)}_{B(\eta_t)} (b_x^*) \text{ad}^{(k)}_{B(\eta_t)} (b^*_y)  a^* (\eta_x) a_y \xi \rangle \right| \\ &\hspace{1cm} \leq  (n+1)!k! \, C^{n+k} \| \eta_t \|^{n+k-2}  \int dx dy N^2 V(N(x-y))  \| \eta_x \| \| a_y \xi \| \\ &\hspace{2cm} \times \Big\{ \left[ N \| \eta_x \| \| \eta_y \| + \| \eta_t \| |\eta_t (x;y)| \right] \| (\cN+1)^{1/2} \xi \|  \\ &\hspace{2.8cm} + N \| \eta_t \| \| \eta_y \|  \| a_x \xi \| + N \| \eta_t \| \| \eta_x \| \| a_y \xi \| + N^{1/2}  \|a_x a_y \xi \| \Big\} 
\\ &\hspace{1cm} \leq (n+1)! k! \, C^{n+k} \| \eta_t \|^{n+k} \| (\cN+1)^{1/2} \xi \| \| (\cV_N + \cN + 1 )^{1/2} \xi \| \end{split} \] 
and therefore
\[ |\langle \xi , \text{W}_2 \xi \rangle | \leq C e^{c|t|} \| (\cN+1)^{1/2} \xi \| \| (\cV_N + \cN + 1)^{1/2} \xi \| \]
if $\sup_t \| \eta_t \|$ is small enough.

Now, let us consider the expectation of the term $\text{W}_3$. By Cauchy-Schwarz, we have
\[ \begin{split} & \left| \int dx dy N^2 V(N(x-y)) \langle \xi , \text{ad}^{(n)}_{B(\eta_t)} (b_x^*) \text{ad}^{(k)}_{B(\eta_t)} (b_y^*) \text{ad}^{(m)}_{B(\eta_t)} (b (\eta^{(2)}_x)) \text{ad}^{(r)} (b_y) \xi \rangle \right| \\
&\hspace{2cm} \leq \int N^2 V(N(x-y)) \, \| (\cN+1)^{1/2} \text{ad}^{(k)}_{B(\eta_t)} (b_y)  \text{ad}^{(n)}_{B(\eta_t)} (b_x) \xi \| \\ &\hspace{5cm} \times  \|  (\cN+1)^{-1/2} 
\text{ad}^{(m)}_{B(\eta_t)} (b (\eta^{(2)}_x)) \text{ad}^{(r)} (b_y) \xi \| 
\end{split} \]
Expanding $\text{ad}^{(m)}_{B(\eta_t)} (b (\eta^{(2)}_x)) \text{ad}_{B(\eta_t)} ^{(r)} (b_y)$ as in Lemma \ref{lm:indu} and using Lemma \ref{lm:prel1}, we obtain  
\begin{equation}\label{eq:W3-1} \begin{split} 
\| (\cN+1)^{-1/2} &\text{ad}^{(m)}_{B(\eta_t)} (b (\eta^{(2)}_x)) \text{ad}_{B(\eta_t)} ^{(r)} (b_y) \xi \| \\ &\leq m!r! \, C^{m+r} \| \eta_t \|^{m+r} \left[ \| \eta_x \| \| \eta_y \| \| (\cN+1)^{1/2} \xi \| +  \| \eta_t \| \| \eta_x \| \| a_y \xi \| \right] \end{split} \end{equation}
As for the norm $\| (\cN+1)^{1/2} \text{ad}^{(k)}_{B(\eta_t)} (b_y)  \text{ad}^{(n)}_{B(\eta_t)} (b_x) \xi \|$, we can estimate it as the sum of $2^{n+k} n!k!$ contributions of the form (\ref{eq:norm-T}). Using  (\ref{eq:Tf}) and integrating over $x,y$, we conclude 
\[ |\langle \xi , \text{W}_3 \xi \rangle | \leq C e^{c|t|} \| (\cN+1)^{1/2} \xi \| \| (\cV_N + \cN + 1)^{1/2} \xi \| \]
if $\sup_t \| \eta_t \|$ is small enough. 

Let us now switch to $\text{W}_4$. We proceed analogously as we did for $\text{W}_3$. The only difference is that, instead of (\ref{eq:W3-1}), we need to bound
\[ \begin{split} \| (\cN+1)^{-1/2} \text{ad}^{(m)}_{B(\eta_t)} &(b (\eta_x)) \text{ad}^{(r)}_{B(\eta_t)}  (b (\eta_y)) \xi \| \\ &\leq m!r! \, C^{m+r} \| \eta_t \|^{m+r} \| \eta_x \| \| \eta_y \| \| (\cN+1)^{1/2} \xi \| \end{split} \]
We find 
\[ |\langle \xi , \text{W}_4 \xi \rangle | \leq C e^{c|t|} \| (\cN+1)^{1/2} \xi \| \| (\cV_N + \cN + 1)^{1/2} \xi \| \]
if $\sup_t \| \eta_t \|$ is small enough.

Finally, we consider the term $\text{W}_1$ in (\ref{eq:L4-sum}). We extract from the sum over $n,k \geq 0$ the terms with $(n,k) = (0,0)$ and $(n,k) = (0,1)$. We obtain that 
\begin{equation}\label{eq:W1} \begin{split} \text{W}_1 = \; &\frac{1}{2} \int dx dy N^2 V(N(x-y)) \eta_t (x;y) b_x^* b_y^* \\ &- \frac{1}{4}\int dx dy \, N^2 V(N(x-y)) \eta_t (x;y) [B(\eta_t) , b_x^*] b_y^* + \wt{\text{W}}_1 \end{split} \end{equation}
with 
\begin{equation}\label{eq:wtW10} \wt{\text{W}}_1 =  \;\frac{1}{2} \sum_{n,k}^* \frac{(-1)^{n+k}}{n!k!(n+k+1)} \int dx dy N^2 V(N(x-y)) \eta_t (x;y) \text{ad}^{(n)}_{B(\eta_t)} (b_x^*) \text{ad}^{(k)}_{B(\eta_t)} (b_y^*) \end{equation}
where $\sum^*$ excludes the terms $(n,k) = (0,0), (1,0)$. 
We bound the expectation of $\wt{W}_1$ by 
\[\begin{split} &\left| \int dx dy \, N^2 V(N(x-y)) \eta_t (x;y) \langle \xi, \text{ad}^{(n)}_{B(\eta_t)} (b_x^*) \text{ad}^{(k)}_{B(\eta_t)} (b_y^*) \xi \rangle \right|  \\ &\hspace{.5cm} \leq \int dx dy N^2 V(N(x-y)) |\eta_t (x;y)| \\ &\hspace{2cm} \times \left\| (\cN+1)^{-1/2} \text{ad}^{(k)}_{B(\eta_t)} (b_y) \text{ad}^{(n)}_{B(\eta_t)} (b_x) \xi \right\| \| (\cN+1)^{1/2} \xi \| \end{split} \]
Following Lemma \ref{lm:conv-series}, we can bound the norm $\| (\cN+1)^{-1/2} \text{ad}^{(k)}_{B(\eta_t)} (b_y) \text{ad}^{(n)}_{B(\eta_t)} (b_x) \xi \|$ by the sum of $2^{n+k} n!k!$ terms of the form 
\begin{equation}\label{eq:wtT} \begin{split} \wt{\text{T}} = &\left\| (\cN+1)^{-1/2} \Lambda_1 \dots \Lambda_{i_1} N^{-k_1} \Pi^{(1)}_{\sharp, \flat} (\eta^{(j_1)}_{t,\natural_1}, \dots , \eta^{(j_{k_1})}_{t,\natural_{k_1}} ; \eta_{y,t,\lozenge}^{(\ell_1)}) \right. \\ &\hspace{4cm} \left. \times  \Lambda'_1 \dots \Lambda'_{i_2} N^{-k_2} \Pi^{(1)}_{\sharp,\flat} (\eta^{(m_1)}_{t,\natural_1} , \dots, \eta^{(m_{k_2})}_{t,\natural_{k_2}} ; \eta^{(\ell_2)}_{x,\lozenge'}) \xi \right\| \end{split} \end{equation}
with $i_1, i_2, k_1, k_2, \ell_1, \ell_2 \geq 0$, $j_1, \dots , j_{k_1}, m_1, \dots , m_{k_2} \geq 0$ and where each $\Lambda_i$ and $\Lambda'_i$ operator is either a factor $(N-\cN)/N$, $(N-\cN+1)/N$ or a $\Pi^{(2)}$-operator of the form (\ref{eq:Pi2-L4}). With Lemma \ref{lm:prel1} we find 
\[ \begin{split} 
\wt{\text{T}} &\leq (n+1) C^{k+n} \| \eta_t \|^{k+n-2} \\ &\hspace{1.5cm} \times \Big\{ \| \eta_x \| \| \eta_y \| \| (\cN+1)^{1/2} \xi \| + \| \eta_t \| \| \eta_x \| \| a_y \xi \| + \| \eta_t \| \eta_y \| \| a_x \xi \|  \\ &\hspace{2.5cm} + \| \eta_t \| N^{-1} |\eta_t (x;y)| \| (\cN+1)^{1/2} \xi \| + \| \eta_t \|^2 \| a_x a_y \xi \| \Big\} \end{split} \] 
The important difference with respect to (\ref{eq:Tf}) is that here, when we consider the cases $\ell_1=\ell_2 =0$ and $\ell_1=0,\ell_2 =1$ we can apply (\ref{eq:prelim-iv3}) and (\ref{eq:prelim-iv4}), rather than (\ref{eq:prelim-iv3a}) and (\ref{eq:prelim-iv4a}), because the assumption $(n,k) \not = (0,0), (1,0)$ implies that $k+n \geq 2$ (the case $(n,k) = (0,1)$ is not compatible with $\ell_2 =1$). Using $\sup_{x,y} N^{-1} |\eta_t (x;y)| \leq C e^{c|t|}$ from Lemma \ref{lm:eta}, we conclude that 
\[ |\langle \xi , \wt{W}_1 \xi \rangle| \leq C e^{c|t|} \| (\cN+1)^{1/2} \xi \| \| (\cV_N + \cN + 1)^{1/2} \xi \| \]
if $\sup_t \| \eta_t \|$ is small enough. 

As for the second term on the r.h.s. of (\ref{eq:W1}), we have
\[ [ B(\eta_t) , b_x^* ] = -b (\eta_x) \frac{N-\cN}{N} + \frac{1}{N} \int dz dw a_x^* a_z b_w \, \eta_t (z;w) \]
Hence
\[ \begin{split} -\int dx dy &N^2 V(N(x-y)) \eta_t (x;y) [B(\eta_t), b_x^* ]  b_y^* \\ = \; & 
\int dx dy N^2 V(N(x-y)) \eta_t (x;y) b (\eta_x) b^*_y \frac{N-\cN+1}{N} \\ &- N^{-1} \int dx dy dz dw N^2 V(N(x-y)) \eta_t (x;y) \eta_t (z;w) \,  a_x^* a_z b_w b_y^*  \\ = \; & \int dx dy N^2 V(N(x-y)) |\eta_t (x;y)|^2 \frac{N-\cN}{N} \frac{N-\cN+1}{N} \\ &- N^{-1} \int dx dy dz N^2 V(N(x-y)) \eta_t (x;y) \eta_t (x;z) a_y^* a_z \frac{N-\cN+1}{N} \\ &- N^{-1} \int dx dy dz dw N^2 V(N(x-y)) \eta_t (x;y) \eta_t (z;w) \,  a_x^* a_z b_w b_y^* \end{split} \]
We conclude that 
\[ -\int dx dy N^2 V(N(x-y)) \eta_t (x;y) [B(\eta_t), b_x^* ]  b_y^* = \int dx dy N^2 V(N(x-y)) |k_t (x;y)|^2 + \text{W}_{12} \]
where 
\[ | \langle \xi , \text{W}_{12} \xi \rangle |  \leq C e^{c|t|} \| (\cN+1)^{1/2} \xi \| \| (\cV_N + \cN+1)^{1/2} \xi \| \]
Similarly, the first term on the r.h.s. of (\ref{eq:W1}) can be decomposed as 
\[ \int dx dy N^2 V(N(x-y)) \eta_t (x;y) b_x^* b_y^* = \int dx dy N^2 V(N(x-y)) k_t (x;y) b_x^* b_y^* + \text{W}_{11} \]
where
\[ \text{W}_{11} = \int dx dy N^2 V(N(x-y)) \mu_t (x;y) b_x^* b_y^* \]
is such that 
\[ |\langle \xi , \text{W}_{11} \xi \rangle | \leq C e^{c|t|} \| (\cN+1)^{1/2} \xi \| \| \cV_N^{1/2} \xi \| \]
since $|\mu (x;y)| \leq C e^{c|t|}$ uniformly in $N$.
\end{proof}

\subsection{Analysis of $(i\partial_t e^{-B(\eta_t)}) e^{B(\eta_t)}$} \label{subsec:partialB}

This subsection is devoted to the study of the first term in the generator $\cG_{N,t}$ in (\ref{eq:GNt2}). The properties of $(i\partial_t e^{-B(\eta_t)}) e^{B(\eta_t)}$ are collected in the next proposition.
\begin{prop}\label{prop:partialB} 
Under the same assumptions as in Theorem \ref{thm:gene}, there exist constants $C, c >0$ such that
\begin{equation}\label{eq:bds-partial}
\begin{split}
|\langle \xi , (i\partial_t e^{-B(\eta_t)}) e^{B(\eta_t)} \xi \rangle| &\leq C  \| (\cN+1)^{1/2} \xi \|^2 \\
|\langle \xi , \left[ \cN ,  (i\partial_t e^{-B(\eta_t)}) e^{B(\eta_t)} \right] \xi \rangle| &\leq C  \| (\cN+1)^{1/2} \xi \|^2 \\
|\langle \xi , \left[ a^* (g_1) a(g_2) , (i\partial_t e^{-B(\eta_t)}) e^{B(\eta_t)} \right] \xi \rangle| &\leq C \| g_1 \| \| g_2 \| \| (\cN+1)^{1/2} \xi \|^2 \\
|\langle \xi , \left[ \partial_t (i\partial_t e^{-B(\eta_t)}) e^{B(\eta_t)} \right] \xi \rangle| &\leq C e^{c|t|} \| (\cN+1)^{1/2} \xi \|^2 
\end{split} \end{equation}
for all $\xi \in \cF^{\leq N}$.
\end{prop}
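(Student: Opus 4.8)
The plan is to write $(i\partial_t e^{-B(\eta_t)}) e^{B(\eta_t)}$ in a form amenable to the expansion machinery of Section~\ref{sec:Bog}. Using the Duhamel-type identity
\[ (\partial_t e^{-B(\eta_t)}) e^{B(\eta_t)} = - \int_0^1 ds \, e^{-sB(\eta_t)} \, \big(\partial_t B(\eta_t)\big) \, e^{sB(\eta_t)} \, , \]
and recalling from (\ref{eq:defB}) that $B(\eta_t) = \tfrac12 [B_{*,*}(\eta_t) - B_{\cdot,\cdot}(\bar\eta_t)]$ is quadratic in the $b$-fields, we have $\partial_t B(\eta_t) = B(\partial_t \eta_t)$, i.e. the time derivative just replaces the kernel $\eta_t$ by $\partial_t \eta_t$. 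Thus the operator we must control is an integral over $s \in [0;1]$ of $e^{-sB(\eta_t)} B(\partial_t\eta_t) e^{sB(\eta_t)}$, which, after writing $B(\partial_t \eta_t)$ in terms of $b^*_x b^*_y$, $b_x b_y$ and using Lemma~\ref{lm:conv-series}, becomes an absolutely convergent double series of operators of exactly the type analyzed in Lemma~\ref{lm:prelim}. Concretely, expanding both $e^{-sB(\eta_t)} b^\sharp_x e^{sB(\eta_t)}$ and $e^{-sB(\eta_t)} b^\flat_y e^{sB(\eta_t)}$ via Lemma~\ref{lm:indu}, the term is a sum over $n,k \geq 0$ of $2^{n+k} n! k!$ contributions of the form
\[ \int dx dy \, (\partial_t \eta_t)(x;y) \, \Lambda_1 \dots \Lambda_{i_1} N^{-k_1} \Pi^{(1)}_{\sharp,\flat}(\dots ; \eta^{(\ell_1)}_{x,\lozenge}) \, \Lambda'_1 \dots \Lambda'_{i_2} N^{-k_2} \Pi^{(1)}_{\sharp',\flat'}(\dots ; \eta^{(\ell_2)}_{y,\lozenge'}) \, , \]
precisely the shape handled in part~iv) of Lemma~\ref{lm:prelim}.

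\textbf{Key steps.} First I would establish $\partial_t B(\eta_t) = B(\partial_t\eta_t)$ and the Duhamel identity above, noting that $\partial_t \eta_t \in L^2(\bR^3 \times \bR^3)$ with $\|\partial_t \eta_t\|_2 \leq C e^{c|t|}$ and, crucially, $\sup_x \int |\partial_t \eta_t(x;y)|^2 dy \leq C e^{c|t|}$ by Lemma~\ref{lm:eta}, parts iv) and vi). Second, I would extract the leading term $n=k=0$, which is simply $\tfrac{i}{2}[B_{*,*}(\partial_t\eta_t) - B_{\cdot,\cdot}(\overline{\partial_t\eta_t})]$ up to the $(N-\cN)/N$ factors; by Lemma~\ref{lm:Bbds} this is bounded by $C\|\partial_t\eta_t\|_2 \|(\cN+1)\Psi\|$ — but one needs the $(\cN+1)^{1/2}$ bound, so I would use the pointwise/one-sided estimate $\sup_x \int |\partial_t\eta_t(x;y)|^2 dy \leq Ce^{c|t|}$ to get $|\langle\xi, B_{*,*}(\partial_t\eta_t)\xi\rangle| \leq \int dx \|b(\overline{(\partial_t\eta_t)_x})\xi\|\|b_x\xi\| \leq Ce^{c|t|}\|(\cN+1)^{1/2}\xi\|^2$ — wait, actually this naive estimate is not uniform in $N$ in the exponent of $(\cN+1)$; the correct way is to observe that $B_{*,*}(\partial_t\eta_t)$ creates exactly two excitations and its off-diagonal matrix elements between $\cF^{\leq N}$ states are controlled by $\|\partial_t\eta_t\|_2$ times $(\cN+1)$, but against states in a \emph{fixed} sector count the factor is harmless; more carefully, I would just invoke the second form of the bound in Lemma~\ref{lm:Bbds} combined with $\cN \leq N$ only where genuinely needed, and otherwise rely on the structural cancellation that both $b$-operators are creation (or both annihilation) so conjugation by $e^{sB(\eta_t)}$ followed by Lemma~\ref{lm:prelim} part~iv) with $\alpha = 1$ yields the clean $\|(\cN+1)^{1/2}\xi\|^2$ bound. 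Third, for $n+k \geq 1$, I would apply Lemma~\ref{lm:prelim} iv) directly — using (\ref{eq:prelim-iv4}) and (\ref{eq:prelim-iv3}) (whose improved forms are available precisely because the $\partial_t\eta_t$-kernel in front plus the hypothesis $n+k\geq1$ guarantee at least two $\eta$-type kernels, giving the extra $N^{-1}$ or $N^{-1/2}$) — to bound each contribution by $C^{n+k}\|\eta_t\|^{n+k} e^{c|t|}\|(\cN+1)^{1/2}\xi\|^2$, and then sum the geometric series in $n,k$ for $\|\eta_t\|$ small, i.e. $\ell$ small. Fourth, the commutator bounds with $\cN$ and $a^*(g_1)a(g_2)$, and the $\partial_t$-bound, follow exactly the same template: by Lemma~\ref{lm:prel1} the commutator of each $\Pi^{(1)}$-operator with $\cN$ returns the same operator (times a bounded constant), with $a^*(g_1)a(g_2)$ returns a sum of $\leq 2n+1$ operators of the same form with one kernel replaced by something of comparable norm times $\|g_1\|\|g_2\|$, and $\partial_t$ of the whole expression (recalling $\|\partial_t^2\eta_t\|_2 \leq Ce^{c|t|}$, $\|\partial_t\eta_t^{(n)}\|_2 \leq Cn e^{c|t|}\|\eta_t\|^{n-1}$ from Lemma~\ref{lm:eta} iv)) produces $O(n+k)$ terms of the same form with one kernel differentiated. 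Each resulting series still converges.

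\textbf{Main obstacle.} The only delicate point is the treatment of the $n=k=0$ term (and more generally the terms where $\ell_1 = \ell_2 = 0$ without any additional $\eta$-kernel or $N^{-1}$ factor): here no $N^{-1}$ gain is available, and the two $a$-fields act directly on $\xi$, so one must avoid estimates like $\|a_x a_y \xi\| \leq \|(\cN+1)\xi\|$ that would cost a full power of $\cN$ rather than $\cN^{1/2}$. The resolution is that for $n=k=0$ the operator is literally quadratic with a square-integrable kernel $\partial_t\eta_t$ (not the singular $\nabla\eta_t$ of Section~\ref{sec:L2}), so Lemma~\ref{lm:Bbds} — or equivalently Lemma~\ref{lm:prelim} iv) in the borderline case $\ell_1=\ell_2=0$, $n=k=0$, which is explicitly \emph{not} excluded by the improved bound but handled by (\ref{eq:prelim-iv4a}) — gives exactly $\|(\cN+1)^{1/2}\xi\|^2$ because a quadratic operator with both fields on the same side is bounded by $\|\partial_t\eta_t\|_2 (\cN+1)$, and using one of the two fields to absorb $(\cN+1)^{1/2}$ on each side of the inner product suffices; no extra $N$-power is incurred. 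Once this borderline case is correctly accounted for, everything else is a routine, uniformly convergent resummation, and the four bounds in (\ref{eq:bds-partial}) follow.
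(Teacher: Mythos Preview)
Your approach is correct and follows the same overall strategy as the paper (Duhamel identity plus convergent series expansion of the conjugated $b$-fields). The one substantive difference is organizational: you keep the kernel $(\partial_t\eta_t)(x;y)$ external and expand both $b_x$ and $b_y$ separately, which lands you in Lemma~\ref{lm:prelim}~iv) and forces the case analysis you flag as the ``main obstacle.'' The paper instead absorbs the kernel into one field from the start, writing
\[ B(\partial_t\eta_t) = \frac{1}{2}\int dx\,\big[b((\partial_t\eta_t)_x)\,b_x - \text{h.c.}\big] \]
and then expanding $\text{ad}^{(k)}_{B(\eta_t)}(b((\partial_t\eta_t)_x))\,\text{ad}^{(n)}_{B(\eta_t)}(b_x)$. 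This matches exactly the shape of Lemma~\ref{lm:prelim}~iii), specifically the bound (\ref{eq:prelim-iiib}), which handles \emph{all} $n,k\geq 0$ uniformly: the $\ell_2=0$ branch simply gives $C^{n+k}\|\eta_t\|^{n+k}\|(\partial_t\eta_t)_x\|\,\|a_x\xi\|$, and Cauchy--Schwarz in $x$ then yields $\|\partial_t\eta_t\|_2\,\|(\cN+1)^{1/2}\xi\|^2$ directly. With that repackaging there is no borderline $n=k=0$ case to isolate, and the discussion in your ``Main obstacle'' paragraph --- which is correct in spirit but somewhat tangled (Lemma~\ref{lm:Bbds} as stated gives a full power of $(\cN+1)$, not $(\cN+1)^{1/2}$; what you actually want is the elementary Cauchy--Schwarz splitting $|\langle\xi,b^*_xb^*(j_x)\xi\rangle|\leq\|b_x\xi\|\,\|b^*(j_x)\xi\|$) --- becomes unnecessary. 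Both routes reach the same conclusion; the paper's is shorter by one non-trivial step.
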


\begin{proof}
As in Section 6.5 of \cite{BDS}, we expand $(i\partial_t e^{-B(\eta_t)}) e^{B(\eta_t)}$ as 
\begin{equation}\label{eq:partial1} \begin{split} 
(i\partial_t e^{-B(\eta_t)}) e^{B(\eta_t)} &= - \int_0^1 ds \, e^{-s B(\eta_t)} \left[ i\partial_t B(\eta_t) \right] e^{s B(\eta_t)}  \\ &= \frac{i}{2} \sum_{k,n \geq 0} \frac{(-1)^{n+k}}{k!n! (n+k+1)} \int dx \, \text{ad}^{(k)}_{B(\eta_t)} (b ((\partial_t \eta_t)_x)) \text{ad}^{(n)}_{B(\eta_t)} (b_x) + \text{h.c.} 
\end{split} \end{equation}
We bound the expectations 
\[ \begin{split}\Big|  \int dx \, \langle \xi , & \text{ad}^{(k)}_{B(\eta_t)} (b ((\partial_t \eta_t)_x)) \text{ad}^{(n)}_{B(\eta_t)} (b_x) \xi \rangle \Big| \\ &\leq \| (\cN+1)^{1/2} \xi \|  \int dx \, \| (\cN+1)^{-1/2}  \text{ad}^{(k)}_{B(\eta_t)} (b ((\partial_t \eta_t)_x)) \text{ad}^{(n)}_{B(\eta_t)} (b_x) \xi \| \end{split} \]
According to Lemma \ref{lm:conv-series}), the norm $\| (\cN+1)^{-1/2}  \text{ad}^{(k)}_{B(\eta_t)} (b ((\partial_t \eta_t)_x)) \text{ad}^{(n)}_{B(\eta_t)} (b_x) \xi \|$ is bounded by the sum of $2^{n+k} n!k!$ terms of the form
\begin{equation}\label{eq:Z} \begin{split} \text{Z} = & \, \| (\cN+1)^{-1/2} \Lambda_1 \dots \Lambda_{i_1} N^{-k_1} \Pi^{(1)}_{\sharp,\flat} (\eta^{(j_1)}_{t,\natural_1}, \dots , \eta^{(j_{k_1})}_{t,\natural_{k_1}} ; (\eta_{t,\lozenge}^{(\ell_1)} \partial_t \eta_t)_x ) \\ 
&\hspace{3cm} \times \Lambda'_1 \dots \Lambda'_{i_2} N^{-k_2} \Pi^{(1)}_{\sharp', \flat'} (\eta^{(m_1)}_{t,\natural'_1} , \dots , \eta^{(m_{k_2})}_{t,\natural'_{k_2}} ; \eta_{x,\lozenge'}^{(\ell_2)}) \xi \| \end{split} \end{equation}
with integers $i_1, k_1, \ell_1, i_2, k_2 , \ell_2 \geq 0$, $j_1, \dots , j_{k_1}, m_1, \dots, m_{k_2} \geq 1$ and where each $\Lambda_i$ and $\Lambda'_i$ is either a factor $(N-\cN)/N$ or $(N+1-\cN)/N$ or a $\Pi^{(2)}$-operator of the form 
\[ N^{-p} \Pi^{(2)}_{\underline{\sharp}, \underline{\flat}} (\eta^{(q_1)}_{t,\underline{\natural}_1}, \dots , \eta^{(q_p)}_{t,\underline{\natural}_p}) \]
{F}rom Lemma \ref{lm:prel1}, part iii), we conclude that
\[ \text{Z} \leq \left\{ \begin{array}{ll} C^{n+k} \| \eta_t \|^{n+k-1} \| (\partial_t \eta_t)_x \| \| \eta_x \| \| (\cN+1)^{1/2} \xi \|  \quad &\text{if } \ell_2  > 0 \\ 
C^{n+k} \| \eta_t \|^{n+k} \| (\partial_t \eta_t)_x \| \| a_x \xi \|  \quad &\text{if } \ell_2  = 0 \end{array} \right. \]
With Cauchy-Schwarz, we obtain
\[ \Big|  \int dx \, \langle \xi , \text{ad}^{(k)}_{B(\eta_t)} (b ((\partial_t \eta_t)_x)) \text{ad}^{(n)}_{B(\eta_t)} (b_x) \xi \rangle \Big| \leq n!k!\,  C^{n+k} \| \eta_t \|^{n+k} \| \partial_t \eta_t \| \| (\cN+1)^{1/2} \xi \|^2 \]
{F}rom (\ref{eq:partial1}), we conclude that, if $\sup_t \| \eta_t \|$ is sufficiently small, 
\[ |\langle \xi , (i\partial_t e^{-B(\eta_t)}) e^{B(\eta_t)} \xi \rangle | \leq C \| (\cN+1)^{1/2} \xi \|^2 \]
The other bounds in (\ref{eq:bds-partial}) can be proven analogously, first expanding $(i\partial_t e^{-B(\eta_t)}) e^{B(\eta_t)}$ as in (\ref{eq:partial1}), then using Lemma \ref{lm:conv-series} and Lemma \ref{lm:indu} to write the nested commutators on the r.h.s. of (\ref{eq:partial1}) as sums of factors like in (\ref{eq:Z}), and then commuting each of these factors with $\cN$, with $a^*(g_1) a(g_2)$, or taking its time-derivative; we omit the details.
\end{proof}

\subsection{Proof of Theorem \ref{thm:gene}}

Combining the results of Subsections \ref{subsec:L0}-\ref{subsec:partialB} and using the scattering equation (\ref{eq:scatlN}), we conclude that
\begin{equation}\label{eq:cGN-1}
\begin{split}
 \cG_{N,t} = \; &C_{N,t} + \cH_N + \wt{\cE}_{N,t} \\ &+ N \int dx dy \left[ -\Delta + \frac{1}{2} N^2 V(N(x-y))  \right] (1-w_\ell (N(x-y))) \wtph_t (x) \wtph_t (y) b_x^* b_y^* + \text{h.c.} \\
 =\; &C_{N,t} + \cH_N  + \wt{\cE}_{N,t} + \text{A}
 \end{split} 
\end{equation}
with 
\[ \text{A} = N^3 \lambda_\ell \int dx dy \, f_\ell (N(x-y)) \chi (|x-y| \leq \ell) \left[ \wtph_t (x) \wtph_t (y) b_x^* b_y^* + \text{h.c.} \right] \]
and where $C_{N,t}$ is defined as in (\ref{eq:CNt}). The error term $\wt{\cE}_{N,t}$ is such that
\begin{equation}\label{eq:bds-cE} 
\begin{split}  \left| \langle \xi , \wt{\cE}_{N,t} \xi \rangle \right|  &\leq C e^{c|t|} \| (\cH_N+\cN+1)^{1/2} \xi \| \| (\cN+1)^{1/2} \xi \|  \\
\left| \langle \xi , \left[ \wt{\cE}_{N,t} , \cN \right] \xi \rangle \right| &\leq C e^{c|t|} \| (\cH_N+\cN+1)^{1/2} \xi \| \| (\cN+1)^{1/2} \xi \| \\
\left| \langle \xi , \left[ \wt{\cE}_{N,t} , a^* (g_1) a(g_2) \right] \xi \rangle \right| &\leq C e^{c|t|} \| g_1 \|_{H^2} \| g_2 \|_{H^2} \| (\cH_N+\cN+1)^{1/2} \xi \| \| (\cN+1)^{1/2} \xi \| \\
\left| \langle \xi , \left[ \partial_t  \wt{\cE}_{N,t}\right] \xi \rangle \right| &\leq  C e^{c|t|} \| (\cH_N+\cN+1)^{1/2} \xi \| \| (\cN+1)^{1/2} \xi \| \end{split} 
\end{equation}
Since $N^3\lambda_\ell \leq C$ and $f_\ell (N(x-y)) \leq 1$, we have , with Lemma \ref{lm:bbds}, 
\[ | \langle \xi, \text{A} \xi \rangle | \leq C \| (\cN+1)^{1/2} \xi \|^2 \]
and similarly, $\pm [\cN , \text{A}], \pm [ a^* (g_1) a(g_2) , \text{A}], \pm \partial_t \text{A} \leq C (\cN+1)$. Setting $\cE_{N,t} = \text{A} + \wt{\cE}_{N,t}$, we conclude that
\[ \cG_{N,t} = C_{N,t} + \cH_N + \cE_{N,t} \]
where $\cE_{N,t}$ satisfies the same bounds (\ref{eq:bds-cE}) as $\wt{\cE}_{N,t}$. This immediately implies that, in the sense of forms on $\cF^{\leq N}_{\perp \wtph_t} \times \cF^{\leq N}_{\perp \wtph_t}$,  
\[ \begin{split} 
\frac{1}{2} \cH_N - C e^{c|t|} (\cN+1) &\leq \cG_{N,t} - C_{N,t}  \leq 2 \cH_N + C e^{c|t|} (\cN+1) \\ \pm i\left[ \cG_{N,t} , \cN \right] &\leq \cH_N + C e^{c|t|}(\cN + 1) \\ 
\partial_t  \left[ \cG_{N,t} - C_{N,t} \right] &\leq \cH_N + C e^{c|t|} (\cN+1)  
 \end{split} \]
Moreover, since
\[ \begin{split} 
[\cH_N , a^* (g_1) a(g_2) ] = \; &\int dx \nabla g_1 (x) \nabla_x a_x^* a (g_2) - \int dx a^* (g_1) \nabla \bar{g}_2 (x) \nabla_x a_x \\ &+ \int dx dy \, N^2 V(N(x-y)) g_1 (y) a_x^* a_y^* a_x a (g_2) \\ &- \int dx dy \, N^2 V(N(x-y)) \, \bar{g}_2 (x) a^* (g_1) a_y^* a_y a_x \end{split} \]
we obtain that
\[ \begin{split} 
|\langle \xi , &[\cH_N , a^* (g_1) a(g_2) ] \xi \rangle | \\ \leq & \left[ \| \nabla g_1 \| \| g_2 \| + \| g_1 \| \| \nabla g_2 \| \right]  \, \| \cK^{1/2} \xi \| \| \cN^{1/2} \xi \| \\ &+ \left[ \| g_2 \| \| g_1 \|_\infty + \| g_1 \| \| g_2 \|_\infty \right] 
\left[ \int dx dy N^2 V(N(x-y)) \| a_x a_y \xi \|^2 \right]^{1/2}
\\ & \hspace{3cm} \times \left[ \int dx dy N^2 V(N(x-y)) \| a_y (\cN+1)^{1/2} \xi \|^2 \right]^{1/2} \\ \leq \; & \| g_1 \|_{H^2} \| g_2 \|_{H^2} \| \cH_N^{1/2} \xi \| \| (\cN+1)^{1/2} \xi \| \end{split} \]
for all $\xi \in \cF^{\leq N}$. Combining with (\ref{eq:bds-cE}), and choosing $g_1 = \partial_t \wtph_t$ and $g_2 = \wtph_t$, we find 
\[ \pm \text{Re }  \left[ \cG_{N,t} , a^* (\partial_t \wtph_t) a(\wtph_t) \right]  \leq \cH_N + C e^{K|t|} (\cN+1) \]
This concludes the proof of Theorem \ref{thm:gene}.

\section{Bounds on the Growth of Fluctuations} 
\label{sec:main}

In this section, we are going to complete the proof of Theorem \ref{thm:main} and of Theorem \ref{thm:main2}. 
The main ingredient to reach this goal is a bound on the growth of the expectation of the number of particles operator with respect to the fluctuation dynamics $\cW_{N,t}$, that we prove in the next proposition using the properties of the generator $\cG_{N,t}$ established in Theorem \ref{thm:gene}. 
\begin{prop}\label{prop:growN}
Under the same assumptions as in Theorem \ref{thm:gene}, there exist constants $C,c > 0$ such that 
\begin{equation}\label{eq:growN} \begin{split} \langle \cW_{N,t} \, \xi , \cN \cW_{N,t} \xi \rangle &\leq C \, \langle \xi , (\cG_{N,0} - C_{N,0}) + (\cN+1)) \xi \rangle \exp (c \exp (c|t|) \\ 
\langle \cW_{N,t} \, \xi , \cH_N \cW_{N,t} \xi \rangle &\leq C \, \langle \xi , (\cG_{N,0} - C_{N,0}) + (\cN+1)) \xi \rangle \exp (c \exp (c|t|)
\end{split} \end{equation}
for all $\xi \in \cF^{\leq N}_{\perp \varphi}$. Here $\cH_N$ is the Hamilton operator defined in (\ref{eq:HN-thm}). 
\end{prop}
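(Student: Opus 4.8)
The plan is to run a Gronwall estimate for the pair of quantities $\langle\xi_t,(\cN+1)\xi_t\rangle$ and $\langle\xi_t,(\cG_{N,t}-C_{N,t})\xi_t\rangle$, where $\xi_t:=\cW_{N,t}\,\xi$ and $i\partial_t\xi_t=\cG_{N,t}\xi_t$. To make all manipulations legitimate I would first replace $\cG_{N,t}$ by the equivalent expression $(i\partial_t e^{-B(\eta_t)})e^{B(\eta_t)}+\sum_{j=0}^4 e^{-B(\eta_t)}\cL^{(j)}_{N,t}e^{B(\eta_t)}$, which is an honest symmetric operator on all of $\cF^{\leq N}$ and agrees with the fluctuation generator on $\cF^{\leq N}_{\perp\wtph_t}$. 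One then computes
\[ \frac{d}{dt}\langle\xi_t,(\cN+1)\xi_t\rangle = i\langle\xi_t,[\cG_{N,t},\cN]\xi_t\rangle, \qquad \frac{d}{dt}\langle\xi_t,(\cG_{N,t}-C_{N,t})\xi_t\rangle = \langle\xi_t,\partial_t(\cG_{N,t}-C_{N,t})\xi_t\rangle, \]
the commutator $[\cG_{N,t},\cG_{N,t}-C_{N,t}]$ vanishing since $C_{N,t}$ is a scalar; the fact that $\cW_{N,t}$ maps into the time-dependent spaces $\cF^{\leq N}_{\perp\wtph_t}$ enters through the $a^*(\delt\wtph_t)a(\wtph_t)$-type terms that appear when one keeps careful track of the motion of these spaces, and these are precisely controlled by the last line of (\ref{eq:gene-bds}) (together with the identity $[a^*(\delt\wtph_t)a(\wtph_t),\cG_{N,t}]=a^*(\delt\wtph_t)a(\wtph_t)\cG_{N,t}$ on $\cF^{\leq N}_{\perp\wtph_t}$ noted in Theorem \ref{thm:gene}).

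Applying the operator inequalities of Theorem \ref{thm:gene}, namely $\pm i[\cG_{N,t},\cN]\le\cH_N+Ce^{c|t|}(\cN+1)$, $\pm\partial_t(\cG_{N,t}-C_{N,t})\le\cH_N+Ce^{c|t|}(\cN+1)$ and the $a^*(\delt\wtph_t)a(\wtph_t)$-bound for the moving-space contributions, I get that both $|\tfrac{d}{dt}\langle\xi_t,(\cN+1)\xi_t\rangle|$ and $|\tfrac{d}{dt}\langle\xi_t,(\cG_{N,t}-C_{N,t})\xi_t\rangle|$ are bounded by $\langle\xi_t,\cH_N\xi_t\rangle+Ce^{c|t|}\langle\xi_t,(\cN+1)\xi_t\rangle$. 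Since $\cH_N\le 2(\cG_{N,t}-C_{N,t})+Ce^{c|t|}(\cN+1)$, the natural Lyapunov functional is $\Phi(t):=\langle\xi_t,(\cG_{N,t}-C_{N,t})\xi_t\rangle+Me^{c|t|}\langle\xi_t,(\cN+1)\xi_t\rangle$ for a large fixed constant $M$: for $M$ large it satisfies $\Phi(t)\ge\tfrac12\langle\xi_t,\cH_N\xi_t\rangle+\tfrac{M}{2}e^{c|t|}\langle\xi_t,(\cN+1)\xi_t\rangle\ge0$, so that $\langle\xi_t,\cH_N\xi_t\rangle\le 2\Phi(t)$ and $\langle\xi_t,(\cN+1)\xi_t\rangle\le\tfrac{2}{M}e^{-c|t|}\Phi(t)$. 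Differentiating $\Phi$, using $\tfrac{d}{dt}e^{c|t|}=c\,\mathrm{sgn}(t)\,e^{c|t|}$ and the two estimates above, and re-expressing the resulting $\langle\xi_t,\cH_N\xi_t\rangle$ and $\langle\xi_t,(\cN+1)\xi_t\rangle$ through $\Phi(t)$ (the exponential prefactors are compatible since $e^{c|t|}\ge1$), one arrives at a differential inequality $|\Phi'(t)|\le C_1 e^{c|t|}\Phi(t)$ with $C_1$ independent of $N$ and $t$.

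Gronwall's inequality then gives $\Phi(t)\le\Phi(0)\exp\!\big(\tfrac{C_1}{c}(e^{c|t|}-1)\big)\le\Phi(0)\exp(c\exp(c|t|))$, and $\Phi(0)\le C\,\langle\xi,((\cG_{N,0}-C_{N,0})+(\cN+1))\xi\rangle$ by the lower bound in (\ref{eq:gene-bds}) at $t=0$, which makes $(\cG_{N,0}-C_{N,0})+(\cN+1)$ comparable to $\cH_N+\cN+1$. Combining with $\langle\xi_t,(\cN+1)\xi_t\rangle\le\tfrac{2}{M}\Phi(t)$ and $\langle\xi_t,\cH_N\xi_t\rangle\le 2\Phi(t)$ yields both inequalities in (\ref{eq:growN}).

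The main difficulty I anticipate is technical rather than conceptual: a priori $t\mapsto\langle\xi_t,\cG_{N,t}\xi_t\rangle$ need not be differentiable, since $\cG_{N,t}$ contains the unbounded $\cH_N$ and $\xi_t$ need not lie in a fixed operator domain. I would therefore first run the whole argument on a dense class of initial data (for instance, cutting $\xi$ off in the number of particles, or restricting to $\xi$ for which $\langle\xi,(\cG_{N,0}-C_{N,0})^2\xi\rangle$ and analogous quantities are finite and propagated by the dynamics), and then remove the cutoff by a limiting argument, using that the bounds (\ref{eq:growN}) are uniform. The bookkeeping of the $a^*(\delt\wtph_t)a(\wtph_t)$-correction terms and the verification that the weight $Me^{c|t|}$ in $\Phi$ can indeed be chosen so that the differential inequality closes with an $N$-independent constant is the other point requiring care, and it is exactly here that all four bounds of Theorem \ref{thm:gene} are used.
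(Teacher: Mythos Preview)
Your strategy is essentially the paper's: run Gronwall on the Lyapunov functional $\langle\xi_t,[(\cG_{N,t}-C_{N,t})+Ce^{c|t|}(\cN+1)]\xi_t\rangle$, using all four bounds of Theorem~\ref{thm:gene} to close the differential inequality. The one point where you are imprecise is exactly the one you flag as ``requiring care'': the bookkeeping of the moving-space corrections. Your claim that the symmetric expression $(i\partial_t e^{-B(\eta_t)})e^{B(\eta_t)}+\sum_j e^{-B(\eta_t)}\cL^{(j)}_{N,t}e^{B(\eta_t)}$ ``agrees with the fluctuation generator on $\cF^{\leq N}_{\perp\wtph_t}$'' is only true as a quadratic form, not as an operator; since $\cG_{N,t}$ is \emph{not} self-adjoint on $\cF^{\leq N}$ (the piece $(i\partial_t U_{N,t})U_{N,t}^*$ fails to be, precisely because $U_{N,t}U_{N,t}^*=\Gamma_t$ depends on $t$), the naive cancellation of $[\cG_{N,t},\cG_{N,t}-C_{N,t}]$ does not by itself show that the energy derivative reduces to $\langle\xi_t,\partial_t(\cG_{N,t}-C_{N,t})\xi_t\rangle$.

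The paper resolves this cleanly by inserting the projection $\Gamma_t$ onto $\cF^{\leq N}_{\perp\wtph_t}$: since $\Gamma_t\xi_t=\xi_t$, one rewrites the energy as $\langle\xi_t,\Gamma_t\cG_{N,t}\Gamma_t\,\xi_t\rangle$ and differentiates; the commutator with the generator then vanishes trivially, and the moving-space contribution appears transparently through $\partial_t\Gamma_t=-a^*(\wtph_t)a(q_t\dot{\wtph}_t)\Gamma_t-\Gamma_t a^*(q_t\dot{\wtph}_t)a(\wtph_t)$, producing exactly the term $2\,\mathrm{Re}\,\langle\xi_t,[a^*(q_t\dot{\wtph}_t)a(\wtph_t),\cG_{N,t}]\xi_t\rangle$ that the last inequality in (\ref{eq:gene-bds}) is designed to control. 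This is the mechanism you gesture at but do not compute; once you do, the rest of your argument goes through and matches the paper.
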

{\it Remark:} From (\ref{eq:gene-bds}), we also have 
\[ \begin{split} \langle \cW_{N,t} \, \xi , \cN \cW_{N,t} \xi \rangle &\leq C \, \langle \xi , (\cH_N + \cN+1) \xi \rangle \exp (c \exp (c|t|) \\ 
\langle \cW_{N,t} \, \xi , \cH_N \cW_{N,t} \xi \rangle &\leq C \, \langle \xi , (\cH_N + \cN+1) \xi \rangle \exp (c \exp (c|t|)
\end{split} \]

\begin{proof}
First of all, we observe that, from the first bound in (\ref{eq:gene-bds}), 
\begin{equation}\label{eq:HN-G} \frac{1}{2} \cH_N + \cN \leq (\cG_{N,t} - C_{N,t}) + C e^{K|t|} (\cN+1) \end{equation}
Hence, it is enough to control the growth of the expectation of the operator on the right hand side. We follow here the approach of \cite{LNS}. We define $q_t = 1-|\wtph_t \rangle \langle \wtph_t|$ as the orthogonal projection onto $L^2_{\perp \wtph_t} (\bR^3)$. We define moreover $\Gamma_t : \cF^{\leq N} \to \cF^{\leq N}_{\perp \ph_t}$ by imposing that  $\Gamma_t |_{\cF_j} = q_t^{\otimes j}$ for all $j=1, \dots , N$ ($\cF_j$ is the sector of $\cF^{\leq N}$ with exactly $j$ particles). We have, restricting our attention to $t \geq 0$ (the case $t < 0$ can be handled very similarly) 
\[ \begin{split} 
&\left\langle  \cW_{N,t} \, \xi , \left[ (\cG_{N,t} - C_{N,t}) + C e^{K t} (\cN+1) \right] \cW_{N,t} \, \xi \right\rangle \\ &\hspace{4cm} = \left\langle  \cW_{N,t} \, \xi , \left[ (\Gamma_t \cG_{N,t} \Gamma_t - C_{N,t}) + C e^{Kt} (\cN+1) \right] \cW_{N,t} \, \xi \right\rangle \end{split} \]
Hence, since $\cN$ commutes with $\Gamma_t$, 
\begin{equation}\label{eq:idelW} \begin{split}  i\partial_t &\left\langle  \cW_{N,t} \, \xi , \left[ (\cG_{N,t} - C_{N,t}) + C e^{Kt} (\cN+1) \right] \cW_{N,t} \, \xi \right\rangle \\  &\hspace{1cm} = \left\langle \cW_{N,t} \, \xi , \left[ \Gamma_t \cG_{N,t} \Gamma_t , (\Gamma_t \cG_{N,t} \Gamma_t - C_{N,t}) + C e^{Kt} (\cN+1) \right] \cW_{N,t} \, \xi \right\rangle
\\ &\hspace{1.4cm} + \left\langle \cW_{N,t} \, \xi, \partial_t \left[ (\Gamma_t \cG_{N,t} \Gamma_t - C_{N,t}) + C e^{Kt} (\cN+1) \right] \cW_{N,t} \, \xi \right\rangle\\  &\hspace{1cm} = C e^{Kt} \left\langle \cW_{N,t} \, \xi , \left[ \cG_{N,t}, \cN  \right] \cW_{N,t} \, \xi \right\rangle
\\ &\hspace{1.4cm} + \left\langle \cW_{N,t} \, \xi, \partial_t \left[ (\Gamma_t \cG_{N,t} \Gamma_t - C_{N,t}) + C e^{Kt} (\cN+1) \right] \cW_{N,t} \, \xi \right\rangle 
 \end{split} \end{equation}
We observe that
\[ 0 = \partial_t \| \wtph_t \|_2^2 = \langle \dot{\wtph}_t, \wtph_t \rangle + \langle \wtph_t , \dot{\wtph}_t  \rangle \]
This implies that
\[ \dot{q}_t = - |\wtph_t \rangle \langle \dot{\wtph}_t| - |\dot{\wtph}_t \rangle \langle \wtph_t| = - |\wtph_t \rangle\langle q_t \dot{\wtph}_t | - |q_t \dot{\wtph}_t \rangle \langle \wtph_t| \]
Therefore  
\[\begin{split} \partial_t \Gamma_t^{(j)} &= - \sum_{i=1}^j q_t \otimes \dots \otimes \left[ |\wtph_t \rangle \langle q_t \dot{\wtph}_t| q_t + q_t | q_t \dot{\wtph}_t \rangle \langle \wtph_t| \right] \otimes \dots \otimes q_t \\ & = - \sum_{i=1}^j \left[ |\wtph_t \rangle \langle q_t \dot{\wtph}_t |_i  \Gamma_t^{(j)} - \Gamma_t^{(j)} |q_t \dot{\wtph}_t \rangle \langle \wtph_t|_i \right]  \end{split}
\]
We conclude that
\[ \partial_t \Gamma_t = - a^* (\wtph_t) a(q_t \dot{\wtph}_t) \Gamma_t - \Gamma_t a^* (q_t \dot{\wtph}_t) a (\wtph_t) \]
Thus 
\[ \begin{split}
&\left\langle \cW_{N,t} \, \xi, \partial_t \left[ (\Gamma_t \cG_{N,t} \Gamma_t - C_{N,t}) + C e^{Kt} (\cN+1) \right] \cW_{N,t} \, \xi \right\rangle  \\ &\hspace{2cm} = \left\langle \cW_{N,t} \, \xi, \left[ (\partial_t \Gamma_t) (\cG_{N,t} - C_{N,t}) + (\cG_{N,t} - C_{N,t}) (\partial_t \Gamma_t) \right] \cW_{N,t} \, \xi \right\rangle \\ &\hspace{2.4cm} + \left\langle \cW_{N,t} \, \xi,  \left[ \partial_t (\cG_{N,t} - C_{N,t}) + C K e^{Kt} (\cN+1) \right] \cW_{N,t} \, \xi \right\rangle  \\ &\hspace{2cm} = 2 \text{Re} \left\langle \cW_{N,t} \, \xi, \left[ a^* (q_t \dot{\wtph}_t) a (\wtph_t) , \cG_{N,t} \right] \cW_{N,t} \, \xi \right\rangle \\ 
 &\hspace{2.4cm} + \left\langle \cW_{N,t} \, \xi,  \left[ \partial_t (\cG_{N,t} - C_{N,t}) + C K e^{Kt} (\cN+1) \right] \cW_{N,t} \, \xi \right\rangle 
\end{split} \] 
where we used the fact that $a(\wtph_t) \cW_{N,t} \xi = 0$, for all $t \in \bR$. Together with (\ref{eq:idelW}), we find
\[ \begin{split} 
i\partial_t &\left\langle  \cW_{N,t} \, \xi , \left[ (\cG_{N,t} - C_{N,t}) + C e^{Kt} (\cN+1) \right] \cW_{N,t} \, \xi \right\rangle\\  &\hspace{1cm} = C e^{Kt} \left\langle \cW_{N,t} \, \xi , \left[ \cG_{N,t}, \cN  \right] \cW_{N,t} \, \xi \right\rangle \\ &\hspace{1.4cm} + \left\langle \cW_{N,t} \, \xi,  \left[ \partial_t (\cG_{N,t} - C_{N,t}) + C K e^{Kt} (\cN+1) \right] \cW_{N,t} \, \xi \right\rangle\\ &\hspace{1.4cm} +2 \text{Re} \left\langle \cW_{N,t} \, \xi, \left[ a^* (q_t \dot{\wtph}_t) a (\wtph_t) , \cG_{N,t} \right] \cW_{N,t} \, \xi \right\rangle  \end{split} \]
{F}rom Theorem \ref{thm:gene}, we obtain that 
\[ \begin{split} &\left|\partial_t  \left\langle  \cW_{N,t} \, \xi , \left[ (\cG_{N,t} - C_{N,t}) + C e^{Kt} (\cN+1) \right] \cW_{N,t} \, \xi \right\rangle \right| \\ &\hspace{3cm} \leq \wt{C} e^{K|t|} \left\langle \cW_{N,t} \, \xi , \left[ \cH_N + C e^{Kt} (\cN + 1) \right] \cW_{N,t} \xi \right\rangle 
 \\ &\hspace{3cm} \leq \wt{C} e^{K|t|} \left\langle \cW_{N,t} \, \xi , \left[  (\cG_{N,t} - C_{N,t}) + C e^{K|t|} (\cN+1) \right] \cW_{N,t} \xi \right\rangle  \end{split} \]
Applying Gronwall's inequality, we find a constant $c > 0$ such that 
\[ \begin{split} 
&\left\langle \cW_{N,t} \, \xi , \left[  (\cG_{N,t} - C_{N,t}) + C e^{K|t|} (\cN+1) \right] \cW_{N,t} \xi \right\rangle \\ &\hspace{3cm} \leq \left\langle \xi, \left[ (\cG_{N,0} - C_{N,0}) + C (\cN + 1) \right] \xi \right\rangle \, \exp (c \exp (c |t|)) 
\end{split} \]
With (\ref{eq:HN-G}), we conclude that
\[ \begin{split}
\langle \cW_{N,t} \xi, \cN \cW_{N,t} \xi \rangle &\leq
C \left\langle \xi, \left[ (\cG_{N,0} - C_{N,0}) + (\cN + 1) \right] \xi \right\rangle \, \exp (c \exp (c |t|)) \\
\langle \cW_{N,t} \xi, \cH_N \cW_{N,t} \xi \rangle &\leq
C \left\langle \xi, \left[ (\cG_{N,0} - C_{N,0}) + (\cN + 1) \right] \xi \right\rangle \, \exp (c \exp (c |t|))
\end{split} \]
as claimed.
\end{proof}

To apply Prop. \ref{prop:growN} to the proof of Theorems \ref{thm:main} and \ref{thm:main2}, we need to control the expectation on 
the r.h.s. of (\ref{eq:growN}) for vectors $\xi \in \cF_{\perp \ph}^{\leq N}$ describing orthogonal excitations around the condensate wave function $\ph$ for initial $N$-particle wave functions $\psi_N$ satisfying (\ref{eq:assN}). To this end, we use the next lemma.
\begin{lemma}\label{lm:CN}
As in (\ref{eq:CNt}), let
\[ \begin{split} C_{N,t} =& \; \frac{1}{2} \left\langle \wtph_t, \left( [N^3V(N.) (N-1-2N f_{\ell} (N.))] * |\wtph_t|^2 \right)\wtph_t \right\rangle \\
	& + \int dxdy\, \left|\nabla_x k_t (x;y) \right|^2+ \frac{1}{2}\int dx dy\, N^2 V(N(x-y)) |k_t (x;y)|^2 \\
	&+ \operatorname{Re} \int dxdy\,  N^3V(N(x-y))\bar{\wtph_t} (x)\bar{\wtph_t} (y)k_t (x;y).
	\end{split}\]
where $\wtph_t$ is the solution of the modified Gross-Pitaevskii equation (\ref{eq:GPmod}), with initial data $\wtph_{t=0} = \ph$ (we assumed in the construction of the fluctuation dynamics that $\ph \in H^4 (\bR^3)$; in this lemma, we only need $\ph \in H^1 (\bR^3)$). Then there is a constant $C > 0$, independent of $N$ and $t$, such that 
\[ \left|  \left[ C_{N,t} + N \langle i\partial_t \wtph_t , \wtph_t \rangle\right] -  N \cE_\text{GP} (\ph) \right| \leq C  \]
with the translation invariant Gross-Pitaevskii energy functional $\cE_\text{GP}$ defined in (\ref{eq:GP-func}).
\end{lemma}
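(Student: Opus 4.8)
The plan is to compute $C_{N,t} + N\langle i\partial_t\wtph_t,\wtph_t\rangle$ explicitly, identify the terms that converge to $N\cE_\text{GP}(\ph)=N\int(|\nabla\wtph_t|^2+4\pi a_0|\wtph_t|^4)$ up to lower-order corrections, and bound the remainders by a constant uniform in $N$ and $t$. First I would use the modified Gross-Pitaevskii equation \eqref{eq:GPmod} to evaluate $\langle i\partial_t\wtph_t,\wtph_t\rangle = \langle -\Delta\wtph_t,\wtph_t\rangle + \langle (N^3V(N.)f_\ell(N.)*|\wtph_t|^2)\wtph_t,\wtph_t\rangle = \|\nabla\wtph_t\|_2^2 + \int N^3 V(N(x-y))f_\ell(N(x-y))|\wtph_t(x)|^2|\wtph_t(y)|^2\,dxdy$. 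So $N\langle i\partial_t\wtph_t,\wtph_t\rangle = N\|\nabla\wtph_t\|_2^2 + N\int N^3 V f_\ell |\wtph_t|^2|\wtph_t|^2$, which is a divergent quantity whose singular part must be cancelled by $C_{N,t}$.

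Next I would examine the first line of $C_{N,t}$: the term $\tfrac12\langle\wtph_t,([N^3V(N.)(N-1-2Nf_\ell(N.))]*|\wtph_t|^2)\wtph_t\rangle$. Expanding, the $N\cdot N^3 V(N.)$ piece gives $\tfrac{N}{2}\int N^3 V|\wtph_t|^2|\wtph_t|^2$ and the $-2N\cdot N^3V(N.)f_\ell(N.)$ piece gives $-N\int N^3 Vf_\ell|\wtph_t|^2|\wtph_t|^2$, while the $-1$ piece gives $-\tfrac12\int N^3 V|\wtph_t|^2|\wtph_t|^2$, which is bounded by $C\|\wtph_t\|_4^4\le C$ using Proposition \ref{prop:phph}(i). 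The point is that adding $N\langle i\partial_t\wtph_t,\wtph_t\rangle$ to this line, the divergent terms $N\int N^3 Vf_\ell|\wtph_t|^4$ cancel and we are left with $N\|\nabla\wtph_t\|_2^2 + \tfrac{N}{2}\int N^3 V|\wtph_t|^4 + O(1)$. I would then combine the remaining contributions to $C_{N,t}$ — namely $\int|\nabla_x k_t|^2$, $\tfrac12\int N^2 V(N(x-y))|k_t(x;y)|^2$, and $\mathrm{Re}\int N^3 V(N(x-y))\bar\wtph_t(x)\bar\wtph_t(y)k_t(x;y)$ — with the surviving piece $\tfrac{N}{2}\int N^3 V|\wtph_t|^4$. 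Using the definition $k_t(x;y)=-Nw_\ell(N(x-y))\wtph_t(x)\wtph_t(y)$ and the Neumann scattering equation \eqref{eq:scatlN}, i.e. $(-\Delta+\tfrac12 N^2V(N.))f_\ell(N.)=N^2\lambda_\ell f_\ell(N.)$ so that $-\Delta w_\ell(N.) = \tfrac12 N^2 V(N.)(1-w_\ell(N.)) - N^2\lambda_\ell(1-w_\ell(N.))$, an integration by parts shows $\int|\nabla_x k_t(x;y)|^2\,dxdy$ combines with $\tfrac12\int N^2V|k_t|^2$ and $\mathrm{Re}\int N^3 V\bar\wtph_t\bar\wtph_t k_t$ to reproduce, up to controlled errors of size $O(1)$, the quantity $-\tfrac{N}{2}\int N^3 V(N(x-y))|\wtph_t(x)|^2|\wtph_t(y)|^2\,dxdy + \tfrac{N}{2}\int N^3 V(N(x-y))f_\ell(N(x-y))|\wtph_t|^2|\wtph_t|^2 + \dots$; crucially, adding back $\tfrac{N}{2}\int N^3 V|\wtph_t|^4$ kills the remaining divergence and leaves $\tfrac{N}{2}\int N^3 V f_\ell|\wtph_t|^4 = N\int (4\pi a_0 + O(N^{-1}))|\wtph_t|^4 = 4\pi a_0 N\|\wtph_t\|_4^4 + O(1)$ by Lemma \ref{3.0.sceqlemma}(ii).

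Assembling: after all cancellations, $C_{N,t} + N\langle i\partial_t\wtph_t,\wtph_t\rangle = N\|\nabla\wtph_t\|_2^2 + 4\pi a_0 N\|\wtph_t\|_4^4 + O(1) = N\cE_\text{GP}(\wtph_t) + O(1)$. Since $\cE_\text{GP}(\wtph_t)$ is conserved along the modified flow — or at least equals $\cE_\text{GP}(\ph)$ up to $O(N^{-1})$, which follows from Proposition \ref{prop:phph}(iv) comparing $\wtph_t$ with $\ph_t$ together with conservation of $\cE_\text{GP}$ along \eqref{eq:GPtd}, or more directly from a Grönwall estimate on $\partial_t\cE_\text{GP}(\wtph_t)$ using that the convolution kernel $N^3Vf_\ell(N.)$ integrates to $8\pi a_0 + O(N^{-1})$ — we get $N\cE_\text{GP}(\wtph_t) = N\cE_\text{GP}(\ph) + O(1)$, which gives the claim. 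I would need the bounds $\|\wtph_t\|_{H^1}\le C$ and $\|\wtph_t\|_4\le C$ uniformly, which hold by Proposition \ref{prop:phph}(i) and Sobolev embedding. The main obstacle is the bookkeeping in the integration-by-parts step: correctly tracking the error terms produced by $\int\nabla_x w_\ell(N(x-y))\cdot\nabla_x(\wtph_t(x)\wtph_t(y))$ (which involve $\nabla\wtph_t$ against a kernel like $N\nabla w_\ell(N.)$, controlled by $\|N\nabla w_\ell(N.)\|_{L^1}\le C$ via Lemma \ref{3.0.sceqlemma}(iii)) and the Neumann eigenvalue contribution proportional to $N^3\lambda_\ell\|\wtph_t\|^4\le C$ via Lemma \ref{3.0.sceqlemma}(i), and verifying that every such remainder is genuinely $O(1)$ rather than $O(N)$ or $O(\log N)$. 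This is essentially the same computation carried out in Theorem 3.1 / Section 6 of \cite{BDS}, with the single modification that here $\int N^3 Vf_\ell(N.) = 8\pi a_0 + O(N^{-1})$ instead of exactly $8\pi a_0$, and the discrepancy only feeds into $O(1)$ error terms, so the bound survives unchanged.
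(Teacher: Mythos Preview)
Your computation matches the paper's: evaluate $N\langle i\partial_t\wtph_t,\wtph_t\rangle$ from \eqref{eq:GPmod}, add it to the first line of $C_{N,t}$ so that the $N\int N^3Vf_\ell|\wtph_t|^4$ pieces cancel, then use the Neumann scattering equation and integration by parts on $\int|\nabla_x k_t|^2$ to arrive at
\[
C_{N,t}+N\langle i\partial_t\wtph_t,\wtph_t\rangle
= N\Big[\|\nabla\wtph_t\|_2^2+\tfrac12\textstyle\int N^3V(N(x-y))f_\ell(N(x-y))|\wtph_t(x)|^2|\wtph_t(y)|^2\,dxdy\Big]+O(1),
\]
with the $O(1)$ controlled by $\|\wtph_t\|_{H^1}$ and hence uniform in $t$ and $N$.

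The gap is in your last step. The bracket above is precisely the conserved energy of the \emph{modified} Gross--Pitaevskii flow \eqref{eq:GPmod}, so it equals its value at $t=0$ \emph{exactly}; only then does one convert $\tfrac12\int N^3Vf_\ell|\ph|^4$ to $4\pi a_0\|\ph\|_4^4+O(N^{-1})$ via \eqref{eq:Vfa0}. This is what the paper does. You instead pass first to $N\cE_\text{GP}(\wtph_t)$ and then try to argue $\cE_\text{GP}(\wtph_t)=\cE_\text{GP}(\ph)+O(N^{-1})$. Neither justification you offer gives this uniformly in $t$: Proposition~\ref{prop:phph}(iv) only provides $\|\wtph_t-\ph_t\|_2\le CN^{-1}\exp(c\exp(c|t|))$, which controls neither $\big|\|\nabla\wtph_t\|_2^2-\|\nabla\ph_t\|_2^2\big|$ nor $\big|\|\wtph_t\|_4^4-\|\ph_t\|_4^4\big|$ with a $t$-independent constant; and a Gr\"onwall bound on $\partial_t\cE_\text{GP}(\wtph_t)$ integrates at best to $C|t|N^{-1}$ (and needs $\|\partial_t\wtph_t\|_2<\infty$, hence $\ph\in H^2$, while the lemma assumes only $H^1$). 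Since the statement demands a constant independent of $t$, you must invoke exact conservation of the modified energy rather than approximate conservation of $\cE_\text{GP}$.
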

\begin{proof}
We have
\[ N\product{i\partial_t \wtph_t}{\wtph_t} = N\product{\wtph_t}{-\Delta\wtph_t} + N \product{\wtph_t}{\lb(N^3V(N.) f_\ell (N.) *|\wtph_t|^2\rb)\wtph_t}.  \]
    Therefore  
    \begin{equation}\label{eq:CNt-bd}  \begin{split} 
    C_{N,t} + N&\langle i\partial_t \wtph_t , \wtph_t \rangle \\ = 
&\; N \| \nabla \wtph_t \|^2 + \frac{(N-1)}{2} \langle \wtph_t , \left[ N^3 V(N.) * |\wtph_t|^2 \right] \wtph_t \rangle 
     \\ &+ \int dxdy\, \left|\nabla_x k_t (x;y) \right|^2+ \frac{1}{2}\int dx dy\, N^2 V(N(x-y)) |k_t (x;y)|^2 \\
	&+ \operatorname{Re} \int dxdy\,  N^3V(N(x-y))\bar{\wtph_t} (x)\bar{\wtph_t} (y)k_t (x;y).
	\end{split}\end{equation}
Obviously,
\begin{equation}\label{eq:aux1} 	\frac{(N-1)}{2} \langle \wtph_t , \left[ N^3 V(N.) * |\wtph_t|^2 \right] \wtph_t \rangle =  \frac{N}{2} \left\langle \wtph_t , \left[ N^3 V(N.) * |\wtph_t|^2 \right] \wtph_t \right\rangle  + \mathcal{O} (1) \end{equation}
where $\mathcal{O} (1)$ denotes a quantity with absolute value bounded by a constant, independent of $N$ and of $t$. 
Furthermore 
  \begin{equation}\label{A1.auxiliary3eq2}
    \begin{split}
    \frac12 \int dxdy\, &N^2V(N(x-y)) | k_t (x,y)|^2 \\
    &=\frac N2 \int dxdy\, N^3V(N(x-y)) w_{\ell} (N(x-y))^2 |\wtph_t (x)|^2 |\wtph_t (y)|^2 
     \end{split} 
    \end{equation}
Finally, we consider the third term on the r.h.s. of (\ref{eq:CNt-bd}), the one with $\nabla_x k_t$. We recall that $k_t (x;y) = -N w_\ell (N(x-y))\wtph_t (x) \wtph_t (y)$. Hence, we find
    \begin{equation}\label{A1.auxiliary3eq5}
    \begin{split}
    -\Delta_x k_t (x;y) = \;& N^3 (\Delta w_\ell)(N(x-y))\wtph_t (x)\wtph_t (y)+Nw_\ell (N(x-y))\Delta\wtph_t (x) \wtph_t (y) \\
    &  + 2 N^2 (\nabla w_\ell) (N(x-y))\cdot\nabla\wtph_t (x)\wtph_t (y).
    \end{split}
    \end{equation}
Since, by (\ref{eq:scatl}), $\Delta w_\ell = - \Delta f_\ell = -(1/2) V f_\ell  + \lambda_\ell f_\ell$ we have
\begin{equation}\label{A1.auxiliary3eq7}
    \begin{split}
     &\int dxdy \, \bar{k}_t (x;y) (-\Delta_x  k_t)(x;y) \\
     &= -\frac N2 \int dx dy\, N^3 V(N(y-x))(w_\ell (N(x-y))-1) w_\ell (N(x-y)) |\wtph_t (x)|^2 |\wtph_t (y)|^2 \\
     &\quad - N^3 \lambda_\ell \int dx dy\,f_\ell (N(x-y)) N w_\ell (N(x-y))\ |\wtph_t (x)|^2 |\wtph_t (y)|^2 \\
     & \quad + 2 \int dx dy\, N w_\ell (N(y-x))N^2 (\nabla w_\ell)(N(y-x))\cdot\nabla \bar{\wtph}_t (x) \wtph_t (x) |\wtph_t (y)|^2 \\
     & \quad -\int dx dy\, N^2 w_\ell^2 (N(x-y)) 
     (\Delta\wtph_t )(x)\wtph_t (x) |\wtph_t (y)|^2\\
     &= \frac N2 \int dx dy\, N^3 V(N(y-x))(1-w_\ell (N(x-y))) w_\ell (N(x-y)) |\wtph_t (x)|^2 |\wtph_t (y)|^2 \\
     & \quad+ 2 \int dx dy\, N w_\ell (N(y-x)) N^2 (\nabla w_\ell)(N(y-x))\cdot\nabla\bar{\wtph}_t (x)\wtph_t (x) |\wtph_t(y)|^2 + \mathcal{O} (1).
    \end{split}
    \end{equation}
In the last step, we used the bounds $N^3 \lambda_\ell = \mathcal{O} (1)$, $Nw_\ell (N(x-y))\leq C|x-y|^{-1}$ and  $0 \leq f_\ell (N(x-y)) \leq 1$. Integrating by parts in the last term, we find 
\[ \begin{split}
     & 2 \int dx dy\, N^2 (\nabla w_\ell)(N(y-x))\cdot\nabla\bar{\wtph}_t (x) Nw_\ell (N(y-x))\wtph_t (x) |\wtph_t (y)|^2\\
     & = - \int dx dy\, \nabla_x (N^2 w_\ell (N(y-x))^2)\cdot\nabla\bar{\wtph}_t (x)\wtph_t (x) |\wtph_t (y)|^2\\
     & = \int dx dy\, N^2w_\ell (N(x-y))^2\Delta\bar{\wtph}_t (x)\wtph_t (x)  |\wtph_t (y)|^2\\
     &\quad+ \int dx dy\, N^2 w_\ell (N(x-y))^2\nabla\bar{\wtph}_t (x)\cdot\nabla\wtph_t (x)  |\wtph_t (y)|^2\end{split} 
\]    
With (\ref{A1.auxiliary3eq7}), this leads us (using again the bound $Nw_\ell (N (x-y)) \leq C|x-y|^{-1}$) to 
\[ \begin{split} \int dx &dy \, \bar{k}_t (x;y) (-\Delta_x k_t) (x;y) \\ = &\; \frac{N}{2} \int dx dy \, N^3 V(N(y-x))(1-w_\ell (N(x-y))) w_\ell (N(x-y)) |\wtph_t (x)|^2 |\wtph_t (y)|^2  \\ &+ \mathcal{O} (1) \end{split} \]
Combining this bound with (\ref{eq:aux1}) and (\ref{A1.auxiliary3eq2}), we find 
\[ \begin{split} C_{N,t} & + N \langle i\partial_t \wtph_t, \wtph_t \rangle \\ = \; &N \left[ \int |\nabla \wtph_t (x)|^2 dx + \frac{1}{2} \int dx dy N^3 V(N(x-y)) f_\ell (N(x-y)) |\wtph_t (x)|^2 |\wtph_t (y)|^2 \right] \\ &+ \mathcal{O} (1) \end{split} \]
The expression in the parenthesis on the r.h.s. is exactly the energy functional associated with the time-dependent modified Gross-Pitaevskii equation (\ref{eq:GPmod}). By energy conservation, we conclude that
\begin{equation}\label{eq:CNt-GP} \begin{split} C_{N,t} & + N \langle i\partial_t \wtph_t, \wtph_t \rangle \\ = \; &N \left[ \int |\nabla \ph (x)|^2 dx + \frac{1}{2} \int dx dy N^3 V(N(x-y)) f_\ell (N(x-y)) |\ph (x)|^2 |\ph (y)|^2 \right] \\ &+ \mathcal{O} (1) \end{split} \end{equation}
Observe that, with (\ref{eq:Vfa0}),  
\begin{equation}\label{eq:GPfun-mod} \begin{split} \int dx &dy \, N^3 V(N(x-y)) f_\ell (N(x-y)) |\ph (x)|^2 |\ph (y)|^2 \\ =\; & \int dx dy V(y) f_\ell (y) |\ph (x)|^2 |\ph (x+y/N)|^2 \\ = \; &\left[ 8\pi a_0 + \mathcal{O} (N^{-1}) \right] \, \int |\ph (x)|^4 dx \\ &+ \int dx dy V(y) f_\ell (y) |\ph (x)|^2 \left[ |\ph (x+y/N)|^2 - |\ph (x)|^2 \right] \end{split} \end{equation}
where
\[\begin{split} \Big| \int dx dy &\, V(y) f(y)  |\ph (x)|^2 \left[ |\ph (x+y/N)|^2 - |\ph (x)|^2 \right] \Big| \\
&\leq N^{-1} \int_0^1 ds \int dx dy \, V(y) f(y) |\ph(x)|^2 |\nabla \ph (x+sy/N)| |\ph (x+y/N)| |y| \\
&\leq C N^{-1} \end{split} \]
for a constant $C > 0$ depending only on the $H^1$-norm of $\ph$. Inserting the last bound and (\ref{eq:GPfun-mod}) in (\ref{eq:CNt-GP}), we conclude that 
\[ C_{N,t}  + N \langle i\partial_t \wtph_t, \wtph_t \rangle = N\cE_{GP} (\ph) + \mathcal{O} (1) \]
as claimed.
\end{proof}

With Proposition \ref{prop:growN} and Lemma \ref{lm:CN}, we can now conclude the proof of our main theorems. 

\begin{proof}[Proof of Theorems \ref{thm:main} and \ref{thm:main2}]
We observe, first of all, that, by Proposition \ref{prop:phph}, 
\begin{equation}\label{eq:wtph-ph}
\left| \langle \ph_t , \gamma^{(1)}_{N,t} \ph_t \rangle - \langle \wtph_t , \gamma_{N,t}^{(1)} \wtph_t \rangle \right| \leq 2 \| \ph_t - \wtph_t \| \leq C N^{-1} \exp (c \exp (c|t|)) 
\end{equation}
Hence, it is enough to compute 
\[ \begin{split} \langle \wtph_t , \gamma^{(1)}_{N,t} \wtph_t \rangle &= \frac{1}{N} \langle e^{-iH_N t} \psi_N , a^* (\wtph_t) a(\wtph_t) e^{-iH_N t} \psi_N \rangle \\
&= \frac{1}{N} \langle U_{N,t} e^{-iH_N t} \psi_N, (N-\cN) U_{N,t} e^{-iH_N t} \psi_N \rangle \\
&= 1 - \frac{1}{N} \langle U_{N,t} e^{-iH_N t} \psi_N, \cN U_{N,t} e^{-iH_N t} \psi_N \rangle
\end{split} \]
We define $\xi = e^{-B(\eta_0)} U_{N,0} \psi_N \in \cF_{\perp \ph}^{\leq N}$. Then we have $\psi_N = U_{N,0}^* e^{B(\eta_0)} \xi$ and therefore
\[ \begin{split} 1-\langle \wtph_t , \gamma^{(1)}_{N,t} \wtph_t \rangle &= \frac{1}{N} \langle \cW_{N,t} \xi, e^{-B(\eta_t)} \cN e^{B(\eta_t)} \cW_{N,t} \xi \rangle \leq \frac{C}{N} \langle \cW_{N,t} \xi, \cN \cW_{N,t} \xi \rangle
\end{split} \]
where we applied Lemma \ref{lm:Npow}. By Prop. \ref{prop:growN}, we conclude that 
\begin{equation}\label{eq:1-pgp} \begin{split} 1-\langle \wtph_t , \gamma^{(1)}_{N,t} \wtph_t \rangle  \leq N^{-1} \exp (c \exp (c |t|)) \, \langle \xi, \left[ (\cG_{N,0} - C_{N,0}) + C (\cN+1) \right] \xi \rangle \end{split} \end{equation}
In order to apply Prop. \ref{prop:growN}, we used here the assumption (valid in the proof of both Theorem \ref{thm:main} and Theorem \ref{thm:main2}) that $\wt{\ph}_{t=0} = \ph \in H^4 (\bR^3)$.

Recalling from (\ref{eq:assN}) the definition $a_N = 1 - \langle \ph, \gamma^{(1)}_N \ph \rangle$, we bound, with the above definition of $\xi$, 
\[ \begin{split} \langle \xi, \cN \xi \rangle &= \langle U_{N,0} \psi_N , e^{B(\eta_0)} \cN e^{-B(\eta_0)} U_{N,0} \psi_N \rangle \\ &\leq C\langle U_{N,0} \psi_N , \cN U_{N,0} \psi_N \rangle \\ &= C \langle \psi_N, (N- a^* (\ph) a(\ph)) \psi_N \rangle \\& = C N (1 - \langle \ph , \gamma^{(1)}_{N} \ph \rangle ) = C N a_N \end{split} \] 
We still have to bound the expectation of $(\cG_{N,0} - C_{N,0})$ in the state 
$\xi$. We have
\[ \cG_{N,0} = i\partial_t e^{-B(\eta_t)} |_{t=0} e^{B(\eta_0)} + e^{-B (\eta_0)} \left[ (i\partial_t U_{N,t})|_{t=0} U_{N,0}^* + U_{N,0} H_N U_{N,0}^* \right] e^{B(\eta_0)} \]
With Proposition \ref{prop:partialB}, we find  
\begin{equation}\label{eq:bg-GN1} \left| \langle \xi, i\partial_t e^{-B(\eta_t)} |_{t=0} e^{B(\eta_0)} \xi \rangle \right| \leq C \langle \xi,  (\cN+1)\xi \rangle \leq CNa_N + C \end{equation}
From Eq. (\ref{eq:delUU}), we obtain  
\[ \begin{split} \langle e^{B(\eta_0)} \xi, &\, (i\partial_t U_{N,t}) |_{t=0} \, U_{N,0}^* \, e^{B(\eta_0)} \xi \rangle \\ = \; &- \langle (i\partial_t \wtph_t)|_{t=0} , \ph \rangle \langle U_{N,0} \psi_N , (N-\cN) U_{N,0} \psi_N \rangle \\ &- 2 \text{Re} \langle U_{N,0} \psi_N, \sqrt{N-\cN} a(q_0 (i\partial_t \wtph_t) |_{t=0}) U_{N,0} \psi_N \rangle \\ = \; & -N \langle (i\partial_t \wtph_t)|_{t=0} , \ph \rangle + N \langle (i\partial_t \wtph_t)|_{t=0} , \ph \rangle (1 - \langle \ph, \gamma_N^{(1)} \ph \rangle ) \\ &-2N\text{Re } \langle \ph, \gamma_N^{(1)} q_0 (i\partial_t \wtph_t)|_{t=0} \rangle \end{split} \]
Combining this identity with the bound (\ref{eq:bg-GN1}) and with the observation that, by definition of $\xi$, 
\[ \langle \xi , e^{-B(\eta_0)} U_{N,0} H_N U_{N,0}^* e^{B(\eta_0)} \xi \rangle = \langle \psi_N, H_N \psi_N \rangle \]
we conclude that 
\[ \begin{split} \langle \xi, (\cG_{N,0} - C_{N,0}) \xi \rangle \leq \; &\big[ \langle \psi_N, H_N \psi_N \rangle - (C_{N,0} + N \langle (i\partial_t \wt{\ph}_t)|_{t=0} , \ph \rangle  ) \big]  \\ & -2N \text{Re } \langle \ph, \gamma_N^{(1)} q_0 (i\partial_t \wt{\ph}_t)|_{t=0} \rangle +C N a_N + C \end{split} \]
Hence, with Lemma \ref{lm:CN}, we get
\begin{equation}\label{eq:G0-C0f} \begin{split} \langle \xi, (\cG_{N,0} - C_{N,0}) \xi \rangle \leq \; &\big[ \langle \psi_N, H_N \psi_N \rangle - N \cE_\text{GP} (\ph) \big]  -2N \text{Re } \langle \ph, \gamma_N^{(1)} q_0 (i\partial_t \wt{\ph}_t)|_{t=0} \rangle \\ & +C N a_N + C \end{split} \end{equation}
where $\cE_\text{GP}$ denotes the translation invariant Gross-Pitaevskii functional defined in (\ref{eq:GP-func}). 

To bound the second term on the r.h.s. of the last equation, we proceed differently depending on whether we want to show Theorem \ref{thm:main} or Theorem \ref{thm:main2}. To prove Theorem~\ref{thm:main2}, we notice that
\[ \begin{split} \langle \ph,\gamma_N^{(1)} q_0 (i\partial_t \wtph_t)|_{t=0} \rangle  =&\; \langle \ph,\gamma_N^{(1)} (i\partial_t \wtph_t)|_{t=0} \rangle - \langle \ph,\gamma_N^{(1)} \ph \rangle \langle \ph , (i\partial_t \wtph_t)|_{t=0} \rangle \\ 
 = &\; \langle \ph, (i\partial_t \wtph_t)|_{t=0} \rangle (1-\langle \ph, \gamma_N^{(1)} \ph \rangle) + \langle \ph, (\gamma^{(1)} - |\ph \rangle \langle \ph|) (i\partial_t \wtph_t)|_{t=0} \rangle 
\end{split} \]
With $\wt{a}_N = \tr \, |\gamma^{(1)}_N - |\ph \rangle \langle \ph| |$, we obtain that
\[ |\langle \ph,\gamma_N^{(1)} q_0 (i\partial_t \wtph_t)|_{t=0} \rangle| \leq C (a_N + \wt{a}_N) \]
Since $a_N \leq \wt{a}_N$, we conclude from (\ref{eq:G0-C0f}) that 
\[ \langle \xi , (\cG_{N,0} - C_{N,0}) \xi \rangle \leq C \left[ N\wt{a}_N + N \wt{b}_N + 1 \right] \] 
Inserting in (\ref{eq:1-pgp}) and using (\ref{eq:wtph-ph}), we arrive at
\[ 1 - \langle \ph_t, \gamma^{(1)}_N \ph_t \rangle \leq C \big[ \wt{a}_N + \wt{b}_N + N^{-1} \big] \exp (c \exp (c |t|)) \, . \]
This concludes the proof of Theorem \ref{thm:main2}.

To show Theorem \ref{thm:main}, we use instead the fact that 
\[ i\partial_t \wt{\ph}_t |_{t=0} = -\Delta \ph + (N^3 V(N.) f_\ell (N.) * |\ph|^2) \ph \]
Since here we assume that the initial data $\ph = \phi_\text{GP}$ is the minimizer of the Gross-Pitaevskii energy functional (\ref{eq:GPen-functr}), it must satisfy the Euler-Lagrange equation
\[ -\Delta \ph + V_\text{ext} \ph + 8\pi a_0 |\ph|^2 \ph = \mu \ph \]
for some $\mu \in \bR$. We find
\[ i\partial_t \wt{\ph}_t |_{t=0} = \mu \ph - V_\text{ext} \ph + \big[  (N^3 V(N.) f_\ell (N.) * |\ph|^2) - 8 \pi a_0 |\ph|^2 \big] \ph \]
Using (\ref{eq:Vfa0}) the fact that the minimizer $\ph$ of (\ref{eq:GPen-functr}) is continuously differentiable and vanishes at infinity (see \cite[Theorem 2.1]{LSY}), we obtain
\[ \Big\| \big[  (N^3 V(N.) f_\ell (N.) * |\ph|^2) - 8 \pi a_0 |\ph|^2 \big] \ph \Big\|_2 \leq C N^{-1} \]
and therefore
\[ -2 N \text{Re }  \langle \ph,\gamma_N^{(1)} q_0 (i\partial_t \wtph_t)|_{t=0} \rangle \leq 2N \text{Re } \langle \ph, \gamma_N^{(1)} q_0 (V_\text{ext} + \kappa) \ph \rangle + C \]
for any constant $\kappa \in \bR$. Choosing $\kappa \geq 0$ so that $V_\text{ext} + \kappa \geq 0$ (from the assumptions, $V_\text{ext}$ is bounded below), we find 
\[ \begin{split} -2N  \text{Re }  \langle \ph,\gamma_N^{(1)} q_0 &(i\partial_t \wtph_t)|_{t=0} \rangle \\ &\leq 2N \text{Re } \langle \ph, \gamma_N^{(1)} (V_\text{ext} + \kappa) \ph \rangle -2N \langle \ph, \gamma_N^{(1)} \ph \rangle \langle \ph, (V_\text{ext} + \kappa) \ph \rangle + C \\ &\leq 2N \text{Re } \langle \ph, \gamma_N^{(1)} (V_\text{ext} + \kappa) \ph \rangle -2N \langle \ph, (V_\text{ext} + \kappa) \ph \rangle +C (N a_N +1) \end{split} \]
With Cauchy-Schwarz and since $0 \leq \gamma^{(1)}_N \leq 1$ implies that $(\gamma^{(1)}_N)^2 \leq \gamma_N^{(1)}$, we get 
\[ \begin{split} -2N  \text{Re }  \langle \ph, &\gamma_N^{(1)} q_0 (i\partial_t \wtph_t)|_{t=0} \rangle\\ &\leq N \langle \ph, \gamma_N^{(1)} (V_\text{ext} + \kappa) \gamma_N^{(1)} \ph \rangle - N \langle \ph, (V_\text{ext} + \kappa) \ph \rangle  + C (N a_N + 1) \\ &\leq N \tr \, \gamma_N^{(1)} V_\text{ext} - N \langle \ph, V_\text{ext} \ph \rangle + C (N a_N + 1) \end{split} \]
Inserting back in (\ref{eq:G0-C0f}) we conclude that, under the assumptions of Theorem \ref{thm:main}, 
\[\langle \xi, (\cG_{N,0} - C_{N,0}) \xi \rangle \leq \; \big[ \langle \psi_N, H^\text{trap}_N \psi_N \rangle - N \cE^\text{trap}_\text{GP} (\ph) \big] + C N a_N + C \leq C \big[ N a_N + N b_N + 1\big] \]
With (\ref{eq:1-pgp}) and (\ref{eq:wtph-ph}), we find now 
\[ 1- \langle \ph_t , \gamma^{(1)}_{N,t} \ph_t \rangle \leq C \big[  a_N + b_N + N^{-1} \big] \exp (c \exp (c |t|)) \]
This concludes the proof of Theorem \ref{thm:main}. 
\end{proof}

\medskip

{\it Acknowledgement.} B.S. gratefully acknowledge support from the NCCR SwissMAP and from the Swiss National Foundation of Science through the SNF Grants ``Effective equations from quantum dynamics'' and ``Dynamical and energetic properties of Bose-Einstein condensates''.

\end{document}